\definecolor{Purple}{rgb}{1, 0., 0.}
\def\H{{\mathcal H}}
\def\gsvar{\Psi} 
\def\gs{\gsvar^{\mbox{\tiny GS}}} 
\def\cU{\gsvar^{\Sgsa}} 
\def\gsp{\gsvar^\perp} 
\def\gsps{\gsp_{1}} 
\def\gspl{\gsp_{2}} 
\def\cE{\Phi} 
\def\cEp{\cE^{3}_{*}}
\def\cEm{\cE^{1}_*}
\def\cEpp{\cE^{2}_*}
\def\cEi{\cE^{i}_*}
\def\Sgs{u} 
\def\Sgsa{{\Sgs_\alpha}} 
\def\Sgso{{\Sgs_1}}
\def\fF{\Phi^{{\Sgsa}}_{*}}
\def\fFz{\Phi^{{\Sgsa}}_{\sharp}}
\def\dvar{e} 
\def\gvar{\Phi}
\def\gvara{\Phi}
\def\Proj{\pi}
\def\Psivar{\Psi}
\def\Psz{\Psi_0}
\def\Psp{\widetilde\Psi_0^\perp}
\def\rvar{\Delta^\perp_\sharp}
\def\Rvar{\Delta^{\Sgsa}_*}
\def\Rnvar{\Delta^{0}_*}
\def\muvar{\kappa}
\def\vac{\Omega_f}  
\def\Aa{A^-} 
\def\Ac{A^+} 
\def\ttH{{K}} 
\newcommand{\gH}{{\mathfrak H}}  
\newcommand{\gF}{{\mathfrak F}}  
\def\C{{\mathbb{C}}} 
\def\N{{\mathbb{N}}} 
\def\R{{\mathbb{R}}} 
\def\cno{c_{\mathrm{n.o.}}} 
\def\1{{\mathbf{1}}}
\def\la{\langle}
\def\ra{\rangle}
\renewcommand\d{\mathrm{d}}
\renewcommand\Re{\mathrm{Re}\,}
\newtheorem{theorem}{Theorem}[section]
\newtheorem{definition}{Definition}[section]
\newtheorem{corollary}{Corollary}[section]
\newtheorem{lemma}{Lemma}[section]
\newtheorem{proposition}{Proposition}[section]
\newtheorem{remark}{Remark}[section]
\begin{document}

\title[Hydrogen binding energy in QED]
{Quantitative estimates on the binding energy for Hydrogen in
non-relativistic QED}

\author{J.-M. Barbaroux}

\address{
 Centre de Physique Th\'eorique, Luminy Case 907, 13288
 Marseille Cedex~9, France and D\'epartement de Math\'ematiques,
 Universit\'e du Sud Toulon-Var, 83957 La
 Garde Cedex, France.
} \email{barbarou@univ-tln.fr}

\author{T. Chen}

\address{Department of Mathematics, University of Texas at Austin,
1 University Station C1200, Austin, TX 78712, U.S.A. }
\email{tc@math.utexas.edu}

\author{V. Vougalter}

\address{University of Toronto,
                Department of Mathematics,
                Toronto, ON, M5S 2E4,
                Canada
} \email{vitali@math.toronto.edu}

\author{S. Vugalter}

\address{Mathematisches Institut, Ludwig-Maximilians-Universit\"at
M\"unchen, Theresienstrasse 39, 80333 M\"unchen and Institut f\"ur
Analysis, Dynamik und Modellierung, Universit\"at Stuttgart. }
\email{wugalter@mathematik.uni-muenchen.de}

\maketitle

\begin{abstract}
In this paper, we determine the exact expression for the hydrogen
binding energy in the Pauli-Fierz model up to the order
$O(\alpha^5\log\alpha^{-1})$, where $\alpha$ denotes the
finestructure constant, and prove rigorous bounds  on the
remainder term of the order $o(\alpha^5\log\alpha^{-1})$. As a
consequence, we prove that the binding energy is not a real
analytic function of $\alpha$, and verify the existence of
logarithmic corrections to the expansion of the ground state
energy in powers of $\alpha$, as conjectured in the recent
literature.
\end{abstract}


\section{Introduction}

For a hydrogen-like atom consisting of an electron interacting
with a static nucleus of charge $eZ$, described by the
Schr\"odinger Hamiltonian $-\Delta -\frac{\alpha Z}{|x|}$,
\begin{equation}\nonumber
 \inf\sigma(-\Delta) - \inf\sigma(-\Delta -\frac{\alpha Z }{|x|} ) \, = \,
 \frac{(Z \alpha)^2}{4}\ .
\end{equation}
corresponds to the binding energy necessary to remove the electron
to spatial infinity.

The interaction of the electron with the quantized electromagnetic
field is accounted for by adding to $-\Delta -\frac{\alpha
Z}{|x|}$ the photon field energy operator $H_f$, and an operator
$I(\alpha)$ which describes the coupling of the electron to the
quantized electromagnetic field; the small parameter $\alpha$ is
the fine structure constant. Thereby, one obtains the Pauli-Fierz
Hamiltonian described in detail in Section~\ref{the-model}. The
systematic study of this operator, in a more general case
involving more than one electron, was initiated by Bach,
Fr\"ohlich and Sigal \cite{Bachetal1995, Bachetal1999,
Bachetal1999bis}.

In this case, determining the binding energy
\begin{equation}\label{eq:def-binding}
 \inf\sigma\big(\, -\Delta + H_f + I(\alpha)\,\big)
 - \inf\sigma\big(\,-\Delta - \frac{\alpha Z}{|x|} + H_f + I(\alpha)\,\big)\
\end{equation}
is a very hard problem. A main obstacle emerges from the fact that
the ground state energy is not an isolated eigenvalue of the
Hamiltonian, and can not be determined with ordinary perturbation
theory. Furthermore, the photon form factor in the quantized
electromagnetic vector potential occurring in the interaction term
$I(\alpha)$ contains a critical frequency space singularity that
is responsible for the infamous infrared problem in quantum
electrodynamics. As a consequence, quantities such as the ground
state energy do not a priori exist as a convergent power series in
the fine structure constant $\alpha$ with coefficients independent
of $\alpha$.

In recent years, several rigorous results addressing the
computation of the binding energy have been obtained, \cite{HVV,
Hainzletal2005, Chenetal2003}. In particular, the coupling to the
photon field has been shown to increase the binding energy of the
electron to the nucleus, and that up to normal ordering, the
leading term is $\frac{(\alpha Z)^2}{4}$ \cite{Hainzl2001, HVV,
Chenetal2003}.

Moreover, for a model with scalar bosons, the binding energy is
determined in \cite{Hainzletal2005}, in the first subleading order
in powers of $\alpha$, up to $\alpha^3$, with error term
$\alpha^3\log\alpha^{-1}$. This result has inspired the question
of a possible emergence of logarithmic terms in the expansion  of
the binding energy; however, this question has so far remained
open.

In \cite{Bachetal2006}, a sophisticated rigorous renormalization
group analysis is developed in order to determine the ground state
energy (and the renormalized electron mass) up to any arbitrary
precision in powers of $\alpha$, with an expansion of the form
\begin{equation}\label{eq:bfp1}
 \varepsilon_0 + \sum_{k=1}^{2N}
 \varepsilon_k(\alpha)\alpha^{k/2} + o(\alpha^N)\ ,
\end{equation}
(for any given $N$) where the coefficients $\varepsilon_k(\alpha)$
diverge as $\alpha\rightarrow0$, but are smaller in magnitude than
any positive power of $\alpha^{-1}$. The recursive algorithms
developed in \cite{Bachetal2006} are highly complex, and
explicitly computing the ground state energy to any subleading
order in powers of $\alpha$ is an extensive task. While it is
expected that the rate of divergence of these coefficient
functions is proportional to a power of $\log\alpha^{-1}$,
this is not explicitly exhibited in the current literature; 
for instance,
it can a priori not be ruled out 
that terms involving logarithmic corrections cancel mutually.

The choice of atomic units in \cite{Bachetal2006}, inherited from
\cite{Bachetal1999}, and motivated by the necessity to keep the
$\alpha$-dependence only in the interaction term but not in the
Coulomb term, introduces a dependence between the ultraviolet
cutoff and the fine structure constant. This makes it challenging
to compare the result derived for \eqref{eq:bfp1} to our estimate
of the binding energy stated in Theorem~\ref{thm:main}, since we
employ different units with an ultraviolet cutoff independent of
$\alpha$. However, such comparison would require to apply the
procedure used in this paper to the model described in
\cite{Bachetal2006}.

The goal of the current paper is to develop an alternative method
(as a continuation of \cite{BCVVi}) that determines the binding
energy up to several subleading orders in powers of $\alpha$, with
rigorous error bounds, and proving the presence of terms
logarithmic in $\alpha$.

The main result established in the present paper (for $Z=1$)
states that the binding energy can be estimated as
\begin{equation}\label{eq:mrbe}
 \frac{\alpha^2}{4} + \dvar^{(1)}\alpha^3 + \dvar^{(2)}\alpha^4
 + \dvar^{(3)}\alpha^5\log\alpha^{-1} + o(\alpha^5\log\alpha^{-1})\ ,
\end{equation}
where $\dvar^{(i)}$ ($i=1,2,3$) are independent of $\alpha$,
$\dvar^{(1)}>0$, and $\dvar^{(3)}\neq 0$. Their explicit values
are given in Theorem~\ref{thm:main}.

As a consequence, we conclude that the binding energy is {\em not
analytic} in $\alpha$. In addition, our proof clarifies how the
logarithmic factor in $\alpha^5\log\alpha^{-1}$ is linked to the
infrared singularity of the photon form factor in the interaction
term $I(\alpha)$. We note that for some models with a mild
infrared behavior, \cite{GrHa}, the ground state energy is proven
to be analytic in $\alpha$.

Notice that on the basis of the present computations, it is
impossible to determine whether the $\alpha^5\log\alpha^{-1}$ term
comes from $\inf\sigma(-\Delta + H_f + I(\alpha))$, or
$\inf\sigma(-\Delta - \frac{\alpha Z}{|x|} + H_f +
I(\alpha))$ or both. However, this question has been
recently answered in \cite{BarbarouxWugalter2010}.\\

\subsection*{Organization of the proof}

The strategy consists mainly in an iteration procedure based on
variational estimates. The derivation of the main result
\eqref{eq:mrbe} is accomplished in two main steps, with first an
estimate up to the order $\alpha^3$ with an error of the order
$\mathcal{O}(\alpha^4)$, and then up to the order
$\alpha^5\log\alpha$. Each of these steps requires both an upper
and a lower bound on the binding energy. From the knowledge of an
approximate ground state for $-\Delta + H_f + I(\alpha)$ (see
\cite{BCVVi}), the remaining hard task consists mainly in
establishing a lower bound for the ground state energy for $H$
yielding an upper bound for the binding energy. A lower bound for
the binding energy can be obtained by choosing a bona fide trial
state.

It is easy to see that the first term in the expansion for the
binding energy is not smaller than the Coulomb energy
$\frac{\alpha^2}4$ by taking a trial state, which is a product
state of the electronic ground state $\Sgsa$ \eqref{eq:gs-schrod}
of the Schr\"odinger-Coulomb operator $-\Delta
-\frac{\alpha}{|x|}$ and the photon ground state of the
self-energy operator given by \eqref{def:T(P)} with zero total
momentum. Indeed, as shown in \cite{Chenetal2003} there is an
increase of the binding energy. However the first term is exactly
the Coulomb energy \cite{Hainzletal2005}

The proof of the upper bound, up to the order $\alpha^3$ and to
the order $\alpha^5\log\alpha^{-1}$ reduces to a minimization
process.

Because of the soft photon problem, the set of minimizing
sequences does not have a clear structure which helps to find the
minimizer, or approximate minimizers. Therefore, prior to the
above two steps, we have to appropriately restrict the set of
states which we are looking at. For that purpose, we first
establish, in Lemma~\ref{Nf-exp-lemma-1}, an a priori bound on the
expected photon number in the ground state $\gs$ of the
Pauli-Fierz operator $(-\Delta -\frac{\alpha}{|x|} + H_f +
I(\alpha))$, namely $\langle \gs,\, N_f\gs\rangle =
\mathcal{O}(\alpha^2\log\alpha^{-1})$, where $N_f$ is the photon
number operator. Moreover, in Theorem~\ref{lem:4-1}, we estimate
the contribution to the binding energy stemming from states
orthogonal to the ground state $\Sgsa$ of the Schr\"odinger
operator. It enables us to show that the projection of the ground
state $\gs$ onto the subspace of functions orthogonal to $\Sgsa$
is small in several norms.

\underline{Estimate up to the order $\alpha^3$:} To find a lower
bound for the binding energy (Theorem~\ref{thm:apriori-bound}), we
pick a trial function $\cE^{\rm trial}$ which is the sum of two
states. The first state is the product of the ground state
$u_\alpha$ of the Schr\"odinger-Coulomb operator with the
approximate ground state $\Psi_0$ of the self-energy operator
(given by \eqref{def:T(P)} below) with zero total momentum which was
derived in \cite{BCVVi}. The second state $\fFz$
\eqref{def:def-F0} is a term orthogonal to $\Sgsa$ and has been
chosen to minimize, up to the order $\alpha^3$, the sum of the
cross term $\la (-\Delta -\frac{\alpha}{|x|} + H_f + I(\alpha))
\Sgsa\Psi_0,\, \fFz\ra$ and the term $\langle\, ( H_f + P_f^2
-\Delta -\frac{\alpha}{|x|} + \frac{\alpha^2}4) \fFz,\,
\fFz\rangle$ stemming from the quadratic form for $H$ on the trial
state $\cE^{\rm trial}=\Sgsa \Psi_0 + \fFz$.

At that stage, we emphasize that the contribution of these two
states yields not only an $\alpha^3$ term, but also an
$\alpha^5\log\alpha^{-1}$ term, which is in fact the only 
contribution to the binding energy for this order as the detailed
proof of the next order estimate shall show.

To recover the upper bound, we take an arbitrary state $\Psi$
satisfying the condition on the expected photon number. Then we
consider $\Sgsa \cU$, the projection of this state onto
$u_\alpha$, and $\gsp$ the projection onto the subspace orthogonal
to $u_\alpha$. We estimate in Lemma~\ref{lem:lem-direct-fpsi} the
quadratic form of $H$ on $u_\alpha \cU$, in
Lemma~\ref{thm:thm-direct-gg} the quadratic form for $H$ on
$\gsp$, and in Lemmata~\ref{lem:lem-cross-1}-\ref{lem:lem-cross-2}
the cross term. Collecting this estimates in
\eqref{eq:thm:thm-bis-1} below yields that the function $\Psi$ can be an
approximate minimizer up to the order $\alpha^3$, with error term
$\mathcal{O}(\alpha^4\log\alpha^{-1})$ of the functional $\langle
H \Psi,\, \Psi\rangle$ only if the difference $R$ between this
function $\Psi$ and the previous trial state $\cE^{\rm trial}$
satisfies the condition $\|H_f^\frac12 R\|^2 =
\mathcal{O}(\alpha^4\log\alpha^{-1})$ (see
\eqref{eq:add-est-w-log})

According to Lemma~\ref{qmqm}, the last estimate itself allows to
improve the expected photon number estimate for the ground state
of $H$ given by Lemma~\ref{Nf-exp-lemma-1}, replacing there
$\alpha^2\log\alpha^{-1}$ by $\alpha^2$. Then, repeating the above
steps for the upper bound with this improved expected photon
number yields the upper bound of the binding energy with an error
$\mathcal{O}(\alpha^4)$ instead of
$\mathcal{O}(\alpha^4\log\alpha^{-1})$.

\underline{Estimate up to the order $\alpha^5\log\alpha^{-1}$}: In
order to derive the lower bound for the binding energy, we improve
the previous trial function by adding several terms that where
irrelevant for the estimate up to the order $\alpha^3$. Due to the
improved estimates in Lemma~\ref{qmqm}, the expected photon number
for the difference of the true ground state and this new trial
function can not exceed $\mathcal{O}(\alpha^{\frac{33}{16}})$.
Furthermore, we infer from Theorem~\ref{thm:thm-main2} that this
difference satisfies several smallness conditions. Using these
conditions we estimate once more the quadratic form and minimize
it with respect to this difference in a similar way to what is
done in section~\ref{section-S5} for the previous step.

The paper is organized as follows: In section~\ref{the-model}, we give a
detailed definition of the model and state the main result. In
section~\ref{S-photon-bound} we establish an a priori bound on the
expected photon number in the ground state of the Pauli-Fierz
operator. In section~\ref{section-S4} we estimate the contribution
to the binding energy stemming from states orthogonal to the
ground state of the Schr\"odinger operator. The
section~\ref{section-S5} is devoted to the statements of the lower
and the upper bounds for the binding energy up to the order
$\alpha^3$, as well as to the proof of the lower bound. The
difficult part of the proof, namely the upper bound on the binding
energy, is presented in four parts and postponed to
section~\ref{S7}. In subsections~\ref{S5.1}-\ref{S5.3}, we
estimate separately the terms according to the splitting of the
variational state. We collect these results in
subsection~\ref{S5.4} and establish then the proof of the upper
bound up to the order $\alpha^3$. In section~\ref{section-S6}, we
establish the main steps of the proof of the upper bound for the
binding energy up to the order $\alpha^5\log\alpha^{-1}$. We start
with some useful definitions, the statement for the upper bound,
and two propositions (proved in appendices
subsections~\ref{subsection:6-1} and \ref{prf-hgg}), that gives
estimates of the contributions to the binding energy according to
a refined splitting of the variational state. Details of the
remaining straightforward computations are given in
section~\ref{subsection:6-3}. Finally, subsection~\ref{S6.4}
provide the proof for a lower bound on the binding energy.
In  appendix~\ref{app-Thm21prf-1}, we provide some
technical lemmata whose proof are straightforward but rather long.

\section{The model}
\label{the-model}

We study a scalar electron interacting with the quantized
electromagnetic field in the Coulomb gauge, and with the
electrostatic potential generated by a fixed nucleus. The Hilbert
space accounting for the Schr\"odinger electron is given by
$\gH_{el}=L^2(\R^3)$. The Fock space of photon states is given by
 $$
   \gF \; = \; \bigoplus_{n \in \N} \gF_n ,
 $$
where the $n$-photon space $\gF_n =
\bigotimes_s^n\left(L^2(\R^3)\otimes\C^2\right)$ is the symmetric
tensor product of $n$ copies of one-photon Hilbert spaces
$L^2(\R^3)\otimes\C^2$. The factor $\C^2$ accounts for the two
independent transversal polarizations of the photon. On $\gF$, we
introduce creation- and annihilation operators $a_\lambda^*(k)$,
$a_\lambda(k)$ satisfying the distributional commutation relations
 $$
  [ \, a_{\lambda}(k) \, , \, a^\ast_{\lambda'}(k') \, ] \; = \;
  \delta_{\lambda, \lambda'} \, \delta (k-k')
  \; \;   ,
  \quad [ \, a_\lambda^\sharp(k) \, ,
  \, a_{\lambda'}^\sharp(k') \, ] \; = \; 0 ,
 $$
where $a^\sharp_\lambda$ denotes either $a_\lambda$ or
$a_\lambda^*$. There exists a unique unit ray $\vac\in\gF$, the
Fock vacuum, which satisfies $a_\lambda(k) \, \vac=0$ for all
$k\in\R^3$ and $\lambda\in\{1,2\}$.

The Hilbert space of states of the system consisting of both the
electron and the radiation field is given by
$$
    \gH \; := \; \gH_{el} \, \otimes \, \gF .
$$
We use units such that $\hbar = c = 1$, and where the mass of the
electron equals $m=1/2$. The electron charge is then given by
$e=\sqrt{\alpha}$, where the fine structure constant $\alpha$ will
here be considered as a small parameter.

Let $x\in\R^3$ be the position vector of the electron and let
$y_i\in\R^3$ be the position vector of the $i$-th photon.

We consider the normal ordered Pauli-Fierz Hamiltonian on $\gH$
for Hydrogen,
\begin{equation}\label{rpf}
 : \left(i\nabla_{\! x}\otimes I_f -
 \sqrt{\alpha} A(x)\right)^2\! : +V(x) \otimes I_f
 + I_{el}\otimes H_f .
\end{equation}
where $:\cdots:$ denotes normal ordering, corresponding to the
subtraction of a normal ordering constant $\cno\alpha$, with $\cno
I_f := [\Aa(x) , \Ac(x)]$ is independent of $x$.

The electrostatic potential $V(x)$ is the Coulomb potential for a
static point nucleus of charge $e = \sqrt{\alpha}$ (i.e., $Z=1$)
 $$
 V(x) = -\frac{\alpha}{|x|} .
 $$

We will describe the quantized electromagnetic field by use of the
Coulomb gauge condition.

The operator that couples an electron to the quantized vector
potential is given by
\begin{equation}\nonumber
\begin{split}
  A(x) & = \sum_{\lambda = 1,2} \int_{\R^3}
  \frac{\chi_\Lambda(|k|)}{2\pi|k|^{1/2}}
  \varepsilon_\lambda(k)\Big[ e^{ikx} \otimes a_\lambda(k)  +
  e^{-ikx} \otimes a_\lambda^\ast
  (k) \Big] \d k
\end{split}
\end{equation}
where by the Coulomb gauge condition, ${\rm div}A =0$.

The vectors $\varepsilon_\lambda(k)\in\R^3$ are the two
orthonormal polarization vectors perpendicular to $k$,
 $$
  \varepsilon_1(k) = \frac{(k_2, -k_1, 0)}
  {\sqrt{k_1^2 + k_2^2}}\qquad
 {\rm and} \qquad
   \varepsilon_2(k) = \frac{k}{|k|}\wedge \varepsilon_1(k).
 $$
The function $\chi_\Lambda$ implements an {\it ultraviolet cutoff}
on the momentum $k$. We assume $\chi_\Lambda$ to be of class
$C^1$, with compact support in $\{ |k|\leq\Lambda\}$,
$\chi_\Lambda\leq 1$ and $\chi_\Lambda = 1$ for $|k|\leq
\Lambda-1$. For convenience, we shall write
 $$
  A(x) \; = \; \Aa(x) \, + \, \Ac(x) ,
 $$
where
 $$
  \Aa(x) \; = \; \sum_{\lambda=1,2} \int_{\R^3}  \, \frac{\chi_\Lambda(|k|)}
  {2\pi |k|^{1/2}} \, \varepsilon_\lambda(k) \, \mathrm{e}^{i k
  x} \, \otimes \, a_\lambda(k) \, \mathrm{d}k
 $$
is the part of $A(x)$ containing the annihilation operators, and
$\Ac(x)=(\Aa(x))^*$.

The photon field energy operator $H_f$ is given by
\begin{equation}\nonumber
H_f = \sum_{\lambda= 1,2} \int_{\R^3} |k| a_\lambda^\ast (k)
a_\lambda (k) \d k.
\end{equation}

We will, with exception of our discussion in
Section~\ref{S-photon-bound}, study the unitarily equivalent
Hamiltonian
\begin{equation}\label{eq-H-Utrsf-1}
  H = U\Big( : \left(i\nabla_{\! x}\otimes I_f -
 \sqrt{\alpha} A(x)\right)^2\! : +V(x) \otimes I_f
 + I_{el}\otimes H_f  \Big)U^*\ ,
\end{equation}
where the unitary transform $U$ is defined by
\begin{equation}\nonumber
  U = \mathrm{e}^{i P_f.x} ,
\end{equation}
and
\begin{equation}\nonumber
 P_f = \sum_{\lambda =1,2} \int  k \,
 a^\ast_\lambda(k) a_\lambda(k) \d k
\end{equation}
is the photon momentum operator.
We have
 $$
  U i\nabla_x U^* = i\nabla_x + P_f \quad\mbox{and}\quad
  U A(x)U^* = A(0)\ .
 $$
In addition, the Coulomb operator $V$, the photon field energy
$H_f$, and the photon momentum $P_f$ remain unchanged under the
action of $U$. Therefore, in this new system of variables, the
Hamiltonian reads as follows
\begin{equation}\label{def:unitary-tH}
\begin{split}
  H =\ \ : \left( (i\nabla_x\otimes I_f - I_{el}\otimes P_f)
 - \sqrt{\alpha} A(0)\right)^2 :
 + H_f -\frac{\alpha}{|x|} \ ,
\end{split}
\end{equation}
where $:...:$ denotes again the normal ordering. Notice that the
operator $H$ can be rewritten, taking into account the normal
ordering and omitting, by abuse of notations, the operators
$I_{el}$ and $I_f$,
\begin{equation}\nonumber
\begin{split}
 H  =  &\ (-\Delta_x - \frac{\alpha}{|x|}) + (H_f + P_f^2)
  - 2\Re \left(i\nabla_x . P_f\right)\\
 &\ - 2 \sqrt{\alpha} (i\nabla_x
 - P_f). A(0) + 2 \alpha \Ac(0). \Aa(0) + 2\alpha\Re
 (\Aa(0))^2\ .
\end{split}
\end{equation}



For a free spinless electron coupled to the quantized
electromagnetic field, the self-energy operator $T$ is given by
 $$
    T = \ \ :\left(i\nabla_{\! x}\otimes I_f
    - \sqrt{\alpha} A(x)\right)^2 \! : + I_{el}\otimes H_f.
 $$
We note that this system is translationally invariant; that is,
$T$ commutes with the operator of total momentum
 $$
 P_{tot} = p_{el}\otimes I_f + I_{el}\otimes P_f ,
 $$
where $p_{el}$ and $P_f$ denote respectively the electron and the
photon momentum operators.

Let $\C \otimes\mathfrak{F}$ be the fiber Hilbert space
corresponding to conserved total momentum $p\in\R^3$.

For fixed value $p$ of the total momentum, the restriction of $T$
to the fibre space is given by (see e.g. \cite{Chen2008})
\begin{equation}\label{def:T(P)}
  T(p) = \ \ :(p - P_f
  - \sqrt{\alpha} A(0))^2:  + H_f
\end{equation}
where by abuse of notation, we again dropped all tensor products
involving the identity operators $I_f$ and $I_{el}$. Henceforth,
we will write
 $$
 A^\pm\equiv A^\pm(0)\ .
 $$
Moreover, we denote
 $$
 \Sigma_0 = \inf\sigma(T) \quad\mbox{and}
 \quad \Sigma = \inf\sigma(H) = \inf\sigma(T+V).
 $$
It is proven in \cite{BCFS2,Chen2008} that $\Sigma_0 = \inf\sigma
(T(0))$ is an eigenvalue of the operator $T(0)$.

Our main result is the following theorem.
\begin{theorem}\label{thm:main}
The binding energy fulfills
\begin{equation}\label{eq:main1}
 \begin{split}
   \Sigma_0 -\Sigma =  & \frac14\alpha^2 +
   \dvar^{(1)} \alpha^3
   +
   \dvar^{(2)} \alpha^4 +
   \dvar^{(3)} \alpha^5\log\alpha^{-1}
   + o(\alpha^{5}\log\alpha^{-1}),
 \end{split}
\end{equation}
where
 $$
  \dvar^{(1)} = \frac2{\pi} \int_0^\infty
  \frac{\chi_\Lambda^2(t)}{1+t} \mathrm{d}t ,
 $$
 \begin{equation}\nonumber
 \begin{split}
  & \dvar^{(2)} =  \frac{2}{3} \, \Re \sum_{i=1}^3 \la (\Aa)^i (H_f+P_f^2)^{-1}\Ac
   . \Ac \vac, (H_f + P_f^2)^{-1} (\Ac)^i \vac \ra
   \\
   & + \frac13
   \sum_{i=1}^3 \|(H_f + P_f^2)^{-\frac12}
   \Big(
   2 \Ac.P_f (H_f + P_f^2)^{-1} (\Ac)^i
   - P_f^i (H_f+P_f^2)^{-1}\Ac. \Ac \Big)\vac\|^2
   \\
   & - \frac23 \sum_{i=1}^3 \| A^- (H_f + P_f^2)^{-1} (A^+)^i
   \vac\|^2
   + 4 a_0^2 \|  Q_1^\perp (-\Delta -\frac{1}{|x|}+\frac14 )^{-\frac12}
  \Delta \Sgso\|^2 ,
 \end{split}
 \end{equation}
\begin{equation}\nonumber
  a_0 = \int
  \frac{k_1^2+k_2^2}{4\pi^2|k|^3} \frac{2}{|k|^2
  +|k|} \chi_\Lambda (|k|)\, \d k_1 \d k_2 \d k_3 ,
\end{equation}
\begin{equation}\nonumber
 \dvar^{(3)} = - \frac{1}{3\pi}  \| (-\Delta -\frac{1}{|x|} +\frac14)^{\frac12} \nabla
 \Sgso\|^2 ,
\end{equation}
and $Q_1^\perp$ is the projection onto the orthogonal complement
to the ground state $\Sgso$ of the Schr\"odinger operator $-\Delta
-\frac{1}{|x|}$ (for $\alpha=1$).
\end{theorem}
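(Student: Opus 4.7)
The plan is to prove matching lower and upper bounds on the binding energy $\Sigma_0-\Sigma$ by two successive rounds of a variational scheme: a first round giving the expansion up to order $\alpha^3$ with remainder $O(\alpha^4)$, and a second round refining this to order $\alpha^5\log\alpha^{-1}$ with remainder $o(\alpha^5\log\alpha^{-1})$. Before any variational estimate I would gather a priori information on the true ground state $\gs$ of $H$: the expected photon number bound $\langle \gs, N_f\gs\rangle = O(\alpha^2\log\alpha^{-1})$ of Lemma~\ref{Nf-exp-lemma-1}, obtained from the pull-through formula controlled against the infrared behavior $|k|^{-1/2}$ of the form factor, and the smallness of the projection $\gsp$ of $\gs$ onto the orthogonal complement of $\Sgsa$ provided by Theorem~\ref{lem:4-1}, which exploits the electronic spectral gap above the Schr\"odinger-Coulomb ground state.

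For the lower bound on the binding energy, I would construct the explicit trial state $\cE^{\rm trial}=\Sgsa \Psz + \fFz$, where $\Psz$ is the approximate zero-momentum ground state of the fiber self-energy operator $T(0)$ from \cite{BCVVi}. The product $\Sgsa\Psz$ already captures the leading $\alpha^2/4$ and part of the $\alpha^3$ contribution; the correction $\fFz$, orthogonal to $\Sgsa$, is chosen to minimize the sum of the cross term $\langle H\,\Sgsa\Psz,\fFz\rangle$ and the quadratic form $\langle (H_f+P_f^2-\Delta-\alpha/|x|+\alpha^2/4)\fFz,\fFz\rangle$. A careful evaluation of the resulting Rayleigh quotient, splitting the photon integrals by energy scales, isolates the logarithmic divergence $\int_0^{\Lambda}|k|^{-1}\mathrm{d}|k|$ and produces both $\dvar^{(1)}\alpha^3$ and the full $\dvar^{(3)}\alpha^5\log\alpha^{-1}$ term, as indicated in the introduction.

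For the upper bound on the binding energy, which is the harder direction, I would take an arbitrary normalized candidate $\Psivar$ satisfying the a priori photon bound, decompose it as $\Psivar=\Sgsa\cU+\gsp$, and estimate the three resulting pieces of $\langle\Psivar,H\Psivar\rangle$ separately: the diagonal parallel part by Lemma~\ref{lem:lem-direct-fpsi}, the diagonal perpendicular part by Lemma~\ref{thm:thm-direct-gg} (where the electronic gap absorbs the leading cost of exciting $\Sgsa$), and the cross terms by Lemmata~\ref{lem:lem-cross-1}--\ref{lem:lem-cross-2}, essentially by completing the square against the same minimizer $\fFz$ used in the lower bound. Assembling these bounds one concludes that any near-minimizer must satisfy $\|H_f^{1/2}R\|^2=O(\alpha^4\log\alpha^{-1})$ for $R:=\Psivar-\cE^{\rm trial}$. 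Feeding this back through Lemma~\ref{qmqm} upgrades the a priori bound to $\langle \gs,N_f\gs\rangle=O(\alpha^2)$, and rerunning the previous estimate with the improved photon control replaces the error $O(\alpha^4\log\alpha^{-1})$ by $O(\alpha^4)$, completing the order-$\alpha^3$ analysis (Theorem~\ref{thm:apriori-bound}).

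To push to $\alpha^5\log\alpha^{-1}$ precision I would enrich $\cE^{\rm trial}$ with further correction terms that were negligible in the previous round but now generate the coefficient $\dvar^{(2)}$. The bootstrap via Lemma~\ref{qmqm} then forces any near-minimizer to obey the stronger bound $\|H_f^{1/2}R\|^2=O(\alpha^{33/16})$ together with the additional smallness conditions collected in Theorem~\ref{thm:thm-main2}; with these in hand one repeats the parallel-perpendicular-cross decomposition, this time keeping one further order in every photon integral. The main obstacle is precisely this last step: one must simultaneously track each contribution of size $\alpha^5\log\alpha^{-1}$, distinguish the genuine logarithmic terms from purely $O(\alpha^5)$ ones, and verify that the various logarithmic contributions do not cancel — the latter being exactly what guarantees $\dvar^{(3)}\neq 0$ and hence the announced non-analyticity in $\alpha$.
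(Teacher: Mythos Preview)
Your proposal is correct and follows essentially the same two-pass variational strategy as the paper, invoking the same key lemmata in the same order (a priori photon bound, electronic gap estimate, parallel/perpendicular/cross decomposition, bootstrap via Lemma~\ref{qmqm}, then a refined second pass). One minor slip: in the second round it is $\|N_f^{1/2}R\|^2$ and $\|R\|^2$ that are $O(\alpha^{33/16})$ via Lemma~\ref{qmqm}, while the sharper bound $\|H_f^{1/2}R\|^2=O(\alpha^4)$ comes from Theorem~\ref{thm:thm-main2}; also, it is the combination of Theorems~\ref{thm:apriori-bound} and~\ref{thm:thm-main2} (not Theorem~\ref{thm:apriori-bound} alone) that closes the order-$\alpha^3$ step.
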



\section{Bounds on the expected photon
number}\label{S-photon-bound}

\begin{lemma}
\label{Nf-exp-lemma-1} Let
\begin{equation}\nonumber
    \ttH \; = \; {(i\nabla-\sqrt\alpha A(x))^2}
    \, + \, H_f \, - \, \frac{\alpha}{|x|}
    \; = \; v^2 + H_f -\frac{\alpha}{|x|} \;,
\end{equation}
be the Pauli-Fierz operator defined without normal ordering, where
$v = i\nabla - \sqrt{\alpha} A(x)$. Let $\Psi   \in \H$ denote the
ground state of $K$,
\begin{equation}\nonumber
     \ttH \, \Psi   \; = \; E \, \Psi   \;,
\end{equation}
normalized by
\begin{equation}\nonumber
    \| \, \Psi   \, \| \; = \; 1 \;.
\end{equation}
Let
\begin{equation}\nonumber
    N_f \; := \; \sum_{\lambda=1,2}\int
    \, a_\lambda^*(k) \, a_\lambda(k) \, \d k
\end{equation}
denote the photon number operator. Then, there exists a constant
$c$ independent of $\alpha$,  such that for any sufficiently small
$\alpha>0$, the estimate
\begin{equation}\nonumber
     \la\Psi   \, , \, N_f \,\Psi  \ra \; \leq \; c \, \alpha^{2 } \,
     \log\alpha^{-1}
\end{equation}
is satisfied.
\end{lemma}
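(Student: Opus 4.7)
The plan is to use the \emph{pull-through formula} to express $a_\lambda(k)\Psi$ via a resolvent applied to $v\Psi$, to prove $\|v\Psi\|^2 = \mathcal{O}(\alpha)$ from the eigenvalue equation (crucially using that $\ttH$ is not normal ordered), and to integrate the resulting pointwise estimate over $k$ with some care in the infrared. Computing $[\ttH, a_\lambda(k)]$ via the canonical commutation relations---using that the Coulomb potential commutes with $a_\lambda(k)$, that $[H_f, a_\lambda(k)] = -|k|\,a_\lambda(k)$, and that the Coulomb gauge condition $\varepsilon_\lambda(k)\cdot k = 0$ makes the ``divergence'' piece $\sum_j [v_j, G_\lambda^j(k,x)]$ in the commutator with $v^2$ vanish---yields $[v^2, a_\lambda(k)] = 2\, G_\lambda(k,x)\cdot v$ with
\[ G_\lambda(k,x) \;:=\; \sqrt{\alpha}\,\frac{\chi_\Lambda(|k|)}{2\pi |k|^{1/2}}\,\varepsilon_\lambda(k)\,\mathrm{e}^{-ikx}. \]
Combined with $\ttH\Psi = E\Psi$ this gives the pull-through identity $a_\lambda(k)\Psi = 2\,(\ttH - E + |k|)^{-1}\,G_\lambda(k,x)\cdot v\,\Psi$, and the operator bound $\|(\ttH-E+|k|)^{-1}\|\leq|k|^{-1}$ (valid since $\ttH\geq E$) delivers
\[ \|a_\lambda(k)\Psi\|^2 \;\leq\; \frac{\alpha\,\chi_\Lambda^2(|k|)}{\pi^2\,|k|^3}\,\|v\Psi\|^2. \]

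The key quantitative input is $\|v\Psi\|^2 = \mathcal{O}(\alpha)$. From $\ttH = v^2 + H_f - \alpha/|x|$ and $\ttH\Psi = E\Psi$,
\[ \|v\Psi\|^2 \;=\; E + \alpha\,\la\Psi,|x|^{-1}\Psi\ra - \la\Psi, H_f\Psi\ra \;\leq\; E + \alpha\,\la\Psi,|x|^{-1}\Psi\ra. \]
Because $\ttH$ is \emph{not} normal ordered, $\ttH = {:}v^2{:} + \alpha\,\cno + H_f - \alpha/|x|$, and the variational trial $u_\alpha\otimes\vac$ yields $E \leq \alpha\,\cno - \alpha^2/4 = \mathcal{O}(\alpha)$. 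The term $\alpha\,\la\Psi,|x|^{-1}\Psi\ra$ is also $\mathcal{O}(\alpha)$, obtained from the Coulomb inequality $-\Delta-\alpha/|x|\geq-\alpha^2/4$ together with standard form bounds relating $p^2$, $v^2$ and $H_f$. Hence $\|v\Psi\|^2\leq C\alpha$, and the pointwise estimate becomes $\|a_\lambda(k)\Psi\|^2 \leq C\,\alpha^2\,\chi_\Lambda^2(|k|)/|k|^3$.

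Integrating over $|k|\in[\alpha,\Lambda]$ yields exactly the claimed logarithmic factor on that region,
\[ \sum_\lambda \int_{|k|\geq\alpha} \|a_\lambda(k)\Psi\|^2 \,\d k \;\leq\; C\alpha^2 \int_\alpha^\Lambda \frac{\d|k|}{|k|} \;\leq\; C\alpha^2\log\alpha^{-1}. \]
The main obstacle is the infrared region $|k|<\alpha$, where the pointwise integrand is no longer integrable at the origin. I would treat it by splitting $G_\lambda\cdot v\,\Psi = P_\Psi G_\lambda\cdot v\,\Psi + P_\Psi^\perp G_\lambda\cdot v\,\Psi$ with $P_\Psi = |\Psi\ra\la\Psi|$. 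For the parallel component, rotational invariance of the (simple) ground state forces $\la\Psi, v\Psi\ra = 0$ exactly; writing $\la\Psi, G_\lambda\cdot v\Psi\ra$ as a scalar multiple of $\varepsilon_\lambda(k)\cdot \la\Psi, (\mathrm{e}^{-ikx}-1)v\Psi\ra$ and using $|\mathrm{e}^{-ikx}-1|\leq |k||x|$ together with the Coulombic localization $\|x\Psi\| = \mathcal{O}(\alpha^{-1})$ produces a bound that is integrable on $|k|\leq\alpha$ and contributes $\mathcal{O}(\alpha^2)$. For the orthogonal component, the improved resolvent bound $(\ttH-E+|k|)^{-1}\leq (\ttH-E)^{-1}$ on $\{\Psi\}^\perp$ together with a direct form-bound estimate of $\|(\ttH-E)^{-1/2}P_\Psi^\perp G_\lambda\cdot v\Psi\|^2$ again yields $\mathcal{O}(\alpha^2)$. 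Summing these pieces completes the bound $\la\Psi, N_f\Psi\ra \leq c\,\alpha^2\log\alpha^{-1}$.
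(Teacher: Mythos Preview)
Your pull-through derivation, the bound $\|v\Psi\|^2=\mathcal{O}(\alpha)$, and the treatment of the region $|k|\geq\alpha$ are correct and essentially match the paper. The gap is in the infrared, specifically in your treatment of the component of $G_\lambda\cdot v\,\Psi$ orthogonal to $\Psi$. You invoke $(\ttH-E+|k|)^{-1}\leq(\ttH-E)^{-1}$ on $\{\Psi\}^\perp$, but $E$ is \emph{not} an isolated eigenvalue of $\ttH$: it lies at the bottom of the essential spectrum (this is precisely the infrared problem), so there is no spectral gap and $(\ttH-E)^{-1}$ is unbounded on $\{\Psi\}^\perp$. The quantity $\|(\ttH-E)^{-1/2}P_\Psi^\perp G_\lambda\cdot v\Psi\|^2$ that you propose to ``form-bound'' is not a priori finite, and you give no mechanism that would make it so; without a gap, projecting off $\Psi$ does not by itself improve on the crude $|k|^{-2}$ resolvent estimate.

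The paper avoids this difficulty altogether. Rather than decomposing, it invokes a second, independent pointwise bound from Griesemer--Lieb--Loss,
\[
\|a_\lambda(k)\Psi\| \;\leq\; \frac{c\sqrt{\alpha}\,\chi_\Lambda(|k|)}{|k|^{1/2}}\,\big\|\,|x|\Psi\big\|,
\]
which softens the infrared singularity from $|k|^{-3/2}$ to $|k|^{-1/2}$ at the cost of a spatial-localization factor. Combined with the GLL exponential-localization estimate---which yields only $\|\,|x|\Psi\|\leq c\,\alpha^{-5/4}$, not the $\mathcal{O}(\alpha^{-1})$ you asserted---this gives $\|a_\lambda(k)\Psi\|\leq c\,\alpha^{-3/4}|k|^{-1/2}$. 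The paper then interpolates the two pointwise bounds at the scale $|k|=\alpha^{15/8}$. Your parallel-component idea is close in spirit to how such a bound arises, but the orthogonal component carries the burden of your argument, and that step fails as written.
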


\begin{proof}

Using
\begin{equation}\nonumber
 [a_\lambda(k), H_f] = |k|,\quad\mbox{and}\quad[a_\lambda(k),v]
 =\frac{\epsilon(k)}{2\pi|k|^\frac12} \chi_\lambda(k) \mathrm{e}^{ik.x}
\end{equation}
and the pull-through formula,
\begin{eqnarray*}
    a_\lambda(k) E \Psi  &=&a_\lambda(k) \ttH  \Psi
    \nonumber\\
    &=& \Big[( H_f +|k|)a_\lambda(k) - \frac{1}{|x|}a_\lambda(k)
    + [v, a_\lambda(k)]v
    +v [v, a_\lambda(k)]\Big] \Psi  \;,
    \label{pull-through-1}
\end{eqnarray*}
we get
\begin{eqnarray}
    a_\lambda(k) \, \Psi
    &=&- \, \frac{\sqrt\alpha \chi_\Lambda(|k|)}{2\pi\sqrt{|k|}} \,
     \frac{2}{\ttH +|k|-E} \, \left( \Big( \, i\nabla-\sqrt\alpha A(x) \, \Big)\cdot
    \epsilon_\lambda(k) \mathrm{e}^{ik.x}\right)\, \Psi    \;.
    \label{alk-psi-GLL-1}
\end{eqnarray}
From \eqref{alk-psi-GLL-1}, we obtain
\begin{eqnarray*}
    \Big\| \, a_\lambda(k)\Psi   \, \Big\|^2
    &\leq&  \frac{ \alpha \, \chi_\Lambda(|k|)}{\pi^2 |k| }\,
    \Big\| \, \frac{1}{\ttH +|k|-E} \, \Big\|^2 \,
    \Big\| \, \Big( \, i\nabla-\sqrt\alpha A(x) \, \Big) \Psi   \, \Big\|^2
    \nonumber\\
    &\leq&\frac{ \alpha \,
    \chi_\Lambda(|k|)}{\pi^2  \,|k|^3 }\,
    \Big[\big\langle\Psi   \, , \, \ttH \, \Psi   \big\rangle
    \, + \, \big\langle\Psi   \, , \, \frac{\alpha}{|x|} \, \Psi   \big\rangle \, \Big] \;,
    \label{alk-psi-GLL-2}
\end{eqnarray*}
since $\ttH - E\geq0$, and $(i\nabla-\sqrt\alpha A(x) \big)^2\leq
(\ttH + \frac{\alpha}{|x|})$.

Since $\langle \Psi,\, :K:\Psi\rangle = -\alpha^2/4 + o(\alpha^2)$
(\cite{Hainzl2001, HVV, Chenetal2003}) and $:K: = K -
c_{\mathrm{n.o.}}\alpha$, we have
\begin{equation}\label{eq:est-1}
    \big\langle\Psi   \, , \, \ttH \, \Psi   \big\rangle \leq
    c_{\mathrm{n.o.}} \,\alpha .
\end{equation}
Moreover, for the normal ordered hamiltonian defined in
\eqref{rpf}, we have
\begin{equation}\label{eq:est-3}
\begin{split}
 :K:\ \  & = -\Delta_x
 - 4\sqrt{\alpha} \Re \left( i\nabla_x . \Aa(x)\right)
 + \alpha : A(x)^2 :
 +  H_f - \frac{\alpha}{|x|} \\
 & \geq (1-c\sqrt{\alpha})  (-\Delta_x)  +
 \alpha :A(x)^2: + (1-c\sqrt{\alpha}) H_f -\frac{\alpha}{|x|} \\
 & \geq -c\alpha^2 - \frac12\Delta_x -\frac{\alpha}{|x|}
 = -c\alpha^2 + 2(-\frac14 \Delta_x - \frac{\alpha}{|x|}) +
 \frac{\alpha}{|x|}\ .
\end{split}
\end{equation}
where in the last inequality we used (see e.g. \cite{GLL})
\begin{equation}\nonumber
  \alpha :A(x)^{2}: + (1-c \sqrt{\alpha})H_f \geq - c\alpha^2 .
\end{equation}
Since
\begin{equation}\nonumber
 - \frac14 \Delta_x -\frac{\alpha}{|x|} \geq - 4\alpha^2\ ,
\end{equation}
inequality \eqref{eq:est-3} implies
\begin{equation}\label{eq:est-2}
    \big\langle\Psi   \, , \, \frac{\alpha}{|x|} \Psi  \rangle
    \leq \langle -\Delta_x \Psi  , \Psi  \rangle
    + \frac14 \alpha^2 \|\Psi  \|^2 \leq
    c  \alpha^2 \|\Psi  \|^2.
\end{equation}
Collecting \eqref{eq:est-1}, \eqref{eq:est-2} and
\eqref{alk-psi-GLL-1} we find
\begin{equation}\label{a-psi-fund-1}
    \Big\| \, a_\lambda(k)\Psi   \, \Big\| \; \leq \; \frac{c \, \alpha \,
    \chi_\Lambda(|k|)}{|k|^{\frac32}} \;.
\end{equation}
This a priori bound exhibits the $L^2$-critical singularity in
frequency space. It does not take into consideration the
exponential localization of the ground state due to the confining
Coulomb potential, and appears in a similar form for the free
electron.

To account for the latter, we use the following two results from
the work of Griesemer, Lieb, and Loss, \cite{GLL}. Equation (58)
in \cite{GLL} provides the bound
\begin{equation}\nonumber
    \Big\| \, a_\lambda(k) \Psi   \, \Big\| \; < \;
    \frac{c \, \sqrt\alpha \, \chi_\Lambda(|k|)}{|k|^{\frac12} }
    \Big\| \, |x|\Psi   \, \Big\|\;.
\end{equation}
Moreover, Lemma 6.2 in \cite{GLL} states that
\begin{equation}\nonumber
    \Big\| \, \exp[\beta|x|] \Psi   \, \Big\|^2 \;
    \leq \; c \, \Big[1+\frac{1}{\Sigma_0-E-\beta^2}\Big]
    \, \| \, \Psi   \, \|^2\;,
\end{equation}
for any
\begin{equation}\nonumber
    \beta^2 \; < \; \Sigma_0-E \; = \; O(\alpha^2) \; .
\end{equation}
For the 1-electron case, $\Sigma_0$ is the infimum of the
self-energy operator, and $E$ is the ground state energy of $:K:
$. Choosing $\beta=O(\alpha)$,
\begin{eqnarray*}
    \|\, |x|\Psi   \, \|&\leq& \||x|^4 \Psi  \|^{\frac14}
    \, \|\Psi  \|^{\frac34}
    \leq \frac{(4!)^{\frac14}}{\beta} \, \Big\|
    \, \exp[\beta|x|]\Psi   \, \Big\|^\frac14 \|\Psi  \|^\frac34
    \nonumber\\
    &\leq&\frac{c}{\beta} \,
    \Big[1+\frac{1}{\Sigma_0-E-\beta^2}\Big]^{\frac18}
    \,\| \, \Psi   \, \|
    \nonumber\\
    &\leq& c_1\alpha^{-\frac54} \;.
\end{eqnarray*}
Notably, this bound only depends on the binding energy of the
potential.

Thus,
\begin{equation}\label{a-psi-fund-2}
    \Big\| \, a_\lambda(k) \Psi   \, \Big\|<\frac{c \alpha^{-\frac34}
    \chi_\Lambda(|k|)}{|k|^{\frac12} } \;.
\end{equation}
We see that binding to the Coulomb potential weakens the infrared
singularity by a factor $|k|$, but at the expense of a large
constant factor $\alpha^{-2}$. For the free electron, this
estimate does not exist.

Using \eqref{a-psi-fund-1} and \eqref{a-psi-fund-2}, we find
\begin{eqnarray*}
    \la \, \Psi   \, , \, N_f \, \Psivar  \, \ra&=&\int \d k \, \Big\| \, a_{\lambda}(k)\Psivar  \, \Big\|^2
    \nonumber\\
    &\leq&\int_{|k|<\delta} \d k \, \frac{c \, \alpha^{-\frac32}} {|k| }
    \, + \,
    \int_{\delta\leq |k|\leq\Lambda} \d k \, \frac{c \, \alpha^2}{|k|^3}
    \nonumber\\
    &\leq&c \, \alpha^{-\frac32} \, \delta^2 \, + \, c' \, \alpha^2 \, \log \frac1\delta
    \nonumber\\
    &\leq&c \, \alpha^{\frac94}  \,
    + \, c'' \, \alpha^2 \, \log \alpha^{-1} \;.
\end{eqnarray*}
for $\delta=\alpha^{\frac{15}{8}}$. This proves the lemma.
\end{proof}

\section{Estimates on the quadratic form for states orthogonal
to the ground state of the Schr\"odinger operator}
\label{section-S4}

Throughout this paper, we will denote by $\Proj^n$ the projection
onto the $n$-th photon sector (without distinction for the
$n$-photon sector of $\gF$ and the $n$-photon sector of
$\mathfrak{H}$). We also define $\Proj^{\geq n} =
1-\sum_{j=0}^{n-1}\Proj^j$.

Starting with this section, we study the Hamiltonian $H$ defined
in \eqref{def:unitary-tH}, written in relative coordinates. In
particular, $i\nabla_x$ now stands for the operator unitarily
equivalent to the operator of total momentum, which, by abuse of
notation, will be denoted by $P$.

Let
\begin{equation}\label{eq:gs-schrod}
  \Sgsa  (x)= \frac{1}{\sqrt{8\pi}} \alpha^{3/2}
  \mathrm{e}^{-\alpha|x|/2}
\end{equation}
be the normalized ground state of the Schr\"odinger operator
 $$
  h_\alpha:= -\Delta_x -\frac{\alpha}{|x|}.
 $$
We will also denote by $-e_0 = -\frac{\alpha^2}4$ and
$-e_1=-\frac{\alpha^2}{16}$ the two lowest eigenvalues of this
operator.
%
%
%
\begin{theorem}\label{lem:4-1}
Assume that $\gvara \in\mathfrak{H}$ fulfils $\la \Proj^k\gvara,
\Sgsa \ra_{L^2(\R^3, \mathrm{d}x)} =0$, for all $k\geq 0$. Then
there exists $1>\nu>0$ and $\alpha_0>0 $ such that for all
$0<\alpha<\alpha_0$
\begin{equation}\label{eq:lem-41}
  \la H\gvar,\,\gvar\ra \geq (\Sigma_0 - e_0)\|\gvara\| ^2 + \delta \|\gvara\| ^2
  + \nu\|H_f^{\frac12}\gvara\|^2,
\end{equation}
 where $\delta = (e_0 - e_1)/2 = \frac{3}{32} \alpha^2$.
\end{theorem}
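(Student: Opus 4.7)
The plan is to decompose $H=h_\alpha+\mathcal{I}$ with $\mathcal{I}:=H-h_\alpha=T+\Delta_x$, where $T$ is the self-energy operator from \eqref{def:unitary-tH} satisfying $T\geq\Sigma_0\,I$ as recalled at the end of Section~\ref{the-model}. The hypothesis $\la\Proj^k\gvar,\Sgsa\ra=0$ for every $k$ places $\gvar$ in $\{\Sgsa\}^\perp\otimes\gF$. Since $\Sgsa$ is the nondegenerate ground state of $h_\alpha$ with eigenvalue $-e_0$ and $-e_1$ is the next eigenvalue, the spectral gap yields $\la h_\alpha\gvar,\gvar\ra\geq-e_1\|\gvar\|^2=(-e_0+2\delta)\|\gvar\|^2$. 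Together with the crude bound $\la\mathcal{I}\gvar,\gvar\ra=\la T\gvar,\gvar\ra-\|\nabla_x\gvar\|^2\geq\Sigma_0\|\gvar\|^2-\|\nabla_x\gvar\|^2$, this would give the correct leading constant $\Sigma_0-e_0+2\delta$ modulo an unwanted penalty $-\|\nabla_x\gvar\|^2$ and without the $\nu\|H_f^{1/2}\gvar\|^2$ bonus; the margin $2\delta-\delta=\delta=3\alpha^2/32$ is what will absorb the remaining errors.

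To dispose of the kinetic penalty I introduce parameters $\kappa,\mu\in(0,1)$ and split $\la h_\alpha\gvar,\gvar\ra=\kappa\la h_\alpha\gvar,\gvar\ra+(1-\kappa)[\|\nabla_x\gvar\|^2-\alpha\la|x|^{-1}\gvar,\gvar\ra]$, applying the spectral gap to the $\kappa$-piece and the Hardy-type bound $\alpha/|x|\leq\mu(-\Delta_x)+\alpha^2/(4\mu)$ to the Coulomb part of the second. This yields
\[\la h_\alpha\gvar,\gvar\ra\geq-\kappa e_1\|\gvar\|^2+(1-\kappa)(1-\mu)\|\nabla_x\gvar\|^2-\tfrac{(1-\kappa)\alpha^2}{4\mu}\|\gvar\|^2,\]
providing a reservoir of positive kinetic energy. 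In parallel, I refine the bound on $\mathcal{I}$ by Cauchy--Schwarz on the coupling $-2\sqrt{\alpha}(i\nabla_x-P_f)\cdot A(0)$ with parameter of order $\sqrt{\alpha}$, combined with the $H_f^{1/2}$-boundedness $\|A^\pm\psi\|\leq c\|H_f^{1/2}\psi\|+c\|\psi\|$ and the Griesemer--Lieb--Loss positivity $\alpha :A(0)^2: +(1-c\sqrt{\alpha})H_f\geq -c\alpha^2$ used in the proof of Lemma~\ref{Nf-exp-lemma-1}, obtaining for some small $\nu\in(0,1)$
\[\la\mathcal{I}\gvar,\gvar\ra\geq\Sigma_0\|\gvar\|^2+\nu\|H_f^{1/2}\gvar\|^2-C\sqrt{\alpha}\,\|\nabla_x\gvar\|^2-C\alpha\|\gvar\|^2.\]

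Summing the two estimates, the coefficient of $\|\nabla_x\gvar\|^2$ becomes $(1-\kappa)(1-\mu)-C\sqrt{\alpha}$; choosing $\kappa,\mu$ of order $\sqrt{\alpha}$ (say $\kappa=c'\sqrt{\alpha}$ and $\mu=\sqrt{\alpha}$) makes this nonnegative for $\alpha$ small enough, so the kinetic term may be dropped. The remaining constant reduces to $\Sigma_0-\kappa e_1-(1-\kappa)\alpha^2/(4\mu)-C\alpha=\Sigma_0-e_0+2\delta-o(\alpha^2)\geq\Sigma_0-e_0+\delta$ for $\alpha<\alpha_0$ sufficiently small, yielding \eqref{eq:lem-41}. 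The main obstacle is the refined estimate on $\mathcal{I}$: recovering a $1-O(\sqrt{\alpha})$ fraction of the electron kinetic energy sacrificed by the naive $T\geq\Sigma_0$, together with the $\nu\|H_f^{1/2}\gvar\|^2$ bonus, requires a careful Cauchy--Schwarz on the electron--photon coupling calibrated against the Griesemer--Lieb--Loss positivity, while keeping the normal-ordering cost well below the available margin $\delta=3\alpha^2/32$.
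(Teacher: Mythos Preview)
The decomposition $H=h_\alpha+\mathcal I$ with $\mathcal I=T-P^2$ (here $P=i\nabla_x$ is the total momentum) is correct, and reserving a $(1-\kappa)$-fraction of $h_\alpha$ to manufacture a positive kinetic reservoir is a sound idea. The genuine gap is your refined lower bound
\[
\la\mathcal I\gvar,\gvar\ra\ \geq\ \Sigma_0\|\gvar\|^2+\nu\|H_f^{1/2}\gvar\|^2-C\sqrt\alpha\,\|\nabla_x\gvar\|^2-C\alpha\|\gvar\|^2,
\]
which is equivalent to the fibrewise inequality $T(p)\geq\Sigma_0+(1-C\sqrt\alpha)\,p^2+\nu H_f-C\alpha$. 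The tools you invoke (Cauchy--Schwarz on $-2\sqrt\alpha(P-P_f)\cdot A(0)$ and the GLL positivity for $\alpha:A(0)^2:$) only control the $O(\sqrt\alpha)$ and $O(\alpha)$ pieces of $\mathcal I$; they say nothing about the $\alpha$-\emph{independent} part $(P-P_f)^2-P^2=P_f^2-2P\cdot P_f$. To see that the claim is actually false, take $\gvar$ a one-photon state with total momentum concentrated at some fixed $p$ with $1<|p|<\Lambda-1$ and the photon carrying essentially all of it, so that $P-P_f\approx0$. Then $\la T\gvar,\gvar\ra\approx|p|\,\|\gvar\|^2$ (coming from $H_f$), hence $\la\mathcal I\gvar,\gvar\ra\approx(|p|-p^2)\|\gvar\|^2$, whereas your bound reads $\approx(\nu|p|-C\sqrt\alpha\,p^2)\|\gvar\|^2$; for $|p|=2$ this would force $-2\gtrsim 2\nu$, impossible for any $\nu\geq0$. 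The point is that $\inf\sigma(T(p))$ behaves like $\Sigma_0+p^2$ only for \emph{small} $|p|$ (this is \cite[Theorem~3.1]{Chen2008}, quoted in the paper as \eqref{eq:lem1-1}), and nothing in your argument restricts $|P|$.

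This is precisely why the paper localizes in total momentum: it writes $\gvar=\chi(|P|<p_c/2)\gvar+\chi(|P|\geq p_c/2)\gvar$ with a fixed $p_c=\tfrac13$. On the small-$|P|$ piece it uses $T(p)\geq(1-o_\alpha(1))p^2+\Sigma_0$, combines with $-\alpha/|x|$ to recover an $h_\alpha$-type operator, and exploits the (approximate) orthogonality to $\Sgsa$ to extract the gap $e_0-e_1$. On the large-$|P|$ piece it shows directly that either $(P-P_f)^2$ or $H_f\geq|P_f|$ furnishes a positive lower bound of size $\sim p_c^2$, which dominates the $O(\alpha^2)$ target. The extra $\nu\|H_f^{1/2}\gvar\|^2$ is obtained only afterwards, by a separate dichotomy on whether $\la H_f\gvar,\gvar\ra$ is large or small compared to $\alpha^2\|\gvar\|^2$, via Lemma~\ref{lem:estimate}. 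Your kinetic-reservoir device could be grafted onto the small-$|P|$ regime, but it cannot replace the momentum localization.
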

\begin{remark}\label{rem:A}
All photons with momenta larger than the ultraviolet cutoff do not
contribute to lower the energy. More precisely, due to the cutoff
function $\chi_\Lambda(|k|)$ in the definition of $A(x)$, and
since we have
 $$
  H = (i\nabla_x -P_f)^2 -2\sqrt{\alpha}\Re(i\nabla_x -P_f).A(0)
  + \alpha :A(0)^2: + H_f -\frac{\alpha}{|x|},
 $$
it follows that for any given normalized state $\Phi\in\gH$, there
exists a normalized state $\Phi_{\leq\Lambda}$ such that $\forall
x\in\R^3$, for all $n\in\{1,2,\ldots)$, for all $((k_1,
\lambda_1), \ (k_2,\lambda_2), \ldots,\ (k_n, \lambda_n))\in
\left((\R^3\setminus\{k, \ |k|\leq\Lambda+1\})\times\{1,2\}
\right)^n$, we have
\begin{equation}\label{eq:cutoff}
  \Proj^n\Phi_{\leq \Lambda}(x,(k_1, \lambda_1),
  \ (k_2,\lambda_2), \ldots,\ (k_n, \lambda_n)) = 0
\end{equation}
and
 $$
 \langle \Phi_{\leq \Lambda}, H \Phi_{\leq\Lambda} \rangle
 \leq \langle \Phi, H \Phi\rangle\quad\mbox{and}\quad
 \langle \Phi_{\leq \Lambda}, T \Phi_{\leq\Lambda} \rangle
 \leq \langle \Phi, T \Phi\rangle .
 $$

A key consequence of this remark is that throughout the paper, all
states will be implicitly assumed to fulfill condition
\eqref{eq:cutoff}. This is crucial for the proof of
Corollary~\ref{cor:cor-4-2}.
\end{remark}
%
%
%
%
%
%
To prove Theorem~\ref{lem:4-1}, we first need the following Lemma.
\begin{lemma}\label{lem:estimate}
There exists $c_0>0$ such that for all $\alpha$ small enough we
have
\begin{equation}\nonumber
 H  -\frac12 (P - P_f)^2 - \frac12 H_f \geq
 - c_0\alpha^2 .
\end{equation}
\end{lemma}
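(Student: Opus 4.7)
The plan is to expand the left-hand side as
\[
H - \tfrac12(P-P_f)^2 - \tfrac12 H_f = \tfrac{1}{2}(P-P_f)^2 - 2\sqrt{\alpha}(P-P_f).A(0) + \alpha:A(0)^2: + \tfrac{1}{2}H_f - \tfrac{\alpha}{|x|},
\]
and to bound the three nontrivial pieces so that only an $O(\alpha^2)$ error is produced. The crucial preparatory observation is that the Coulomb-gauge transversality identity $\sum_i[P_{fi}, A_i^{\pm}(0)]=0$ implies $(P-P_f).A^+(0) = ((P-P_f).A^-(0))^*$, so that
\[
-2\sqrt{\alpha}(P-P_f).A(0) = -4\sqrt{\alpha}\,\Re\bigl((P-P_f).A^-(0)\bigr),
\]
and hence the cross term involves only the annihilation part $A^-(0)$.

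I would then apply Cauchy--Schwarz/Young to this cross term. Because only $A^-(0)$ enters, the bound $\|A^-(0)\psi\|^2 \leq C_1\|H_f^{1/2}\psi\|^2$ carries \emph{no} additive $\|\psi\|^2$ constant, and balancing the Young parameter as $O(\sqrt{\alpha})$ yields $\bigl|4\sqrt{\alpha}\,\Re((P-P_f).A^-(0))\bigr| \leq c'\sqrt{\alpha}\bigl[(P-P_f)^2 + H_f\bigr]$ for some $c'>0$. Applying next the Griesemer--Lieb--Loss-type bound $\alpha:A(0)^2: + \beta H_f \geq -c_\beta\alpha^2$ (of the same form already used to establish \eqref{eq:est-3} in the proof of Lemma~\ref{Nf-exp-lemma-1}) with a fixed $\beta\in(0,1/2)$, say $\beta=1/4$, absorbs the $\alpha:A(0)^2:$ piece into a fraction of $H_f$ at the cost of $-c_\beta\alpha^2$, giving
\[
H - \tfrac12(P-P_f)^2 - \tfrac12 H_f \geq \bigl(\tfrac12 - c'\sqrt{\alpha}\bigr)(P-P_f)^2 - c_\beta\alpha^2 - \tfrac{\alpha}{|x|}
\]
for $\alpha$ sufficiently small. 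Finally, the Coulomb piece is absorbed into the residual kinetic energy via $\mu(P-P_f)^2 - \alpha/|x| \geq -\alpha^2/(4\mu)$, which holds because $(P-P_f)^2$ is unitarily equivalent fibrewise to $-\Delta_x$ (via $e^{-iP_f.x}$), reducing to the standard hydrogenic bound $-\Delta_x - \alpha'/|x| \geq -(\alpha')^2/4$. Taking $\mu = \tfrac12 - c'\sqrt{\alpha} \geq 1/4$ for small $\alpha$ then yields the desired $\geq -(c_\beta+1)\alpha^2 = -c_0\alpha^2$.

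The main obstacle is the appearance of the normal-ordering constant: in a naive Cauchy--Schwarz that uses $A(0) = A^-(0) + A^+(0)$ directly, the bound $\|A^+(0)\psi\|^2 \leq C_1\|H_f^{1/2}\psi\|^2 + c_{\mathrm{n.o.}}\|\psi\|^2$ would generate an irreducible $O(\alpha)$ error from $c_{\mathrm{n.o.}}\alpha$, which would defeat the target $O(\alpha^2)$ bound. The preparatory rewriting of the cross term as $-4\sqrt{\alpha}\,\Re((P-P_f).A^-(0))$ via Coulomb-gauge transversality is precisely what sidesteps this issue and permits the clean $O(\alpha^2)$ result.
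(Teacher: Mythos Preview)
Your proposal is correct and follows essentially the same route as the paper's proof. Both expand the difference, rewrite the cross term so that only $A^-(0)$ enters (the paper uses this implicitly in \eqref{eq:ii} via $|\langle (P-P_f).A(0)\psi,\psi\rangle|\le \|(P-P_f)\psi\|\,\|A^-\psi\|$), bound it by a small multiple of $(P-P_f)^2+H_f$, and absorb the Coulomb potential into the remaining kinetic energy via the hydrogenic bound. The only cosmetic difference is your handling of $\alpha:A(0)^2:$: you invoke the black-box bound $\alpha:A(0)^2:+\beta H_f\ge -c_\beta\alpha^2$, whereas the paper drops the positive part $2\alpha A^+\!\cdot\! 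A^-$ and controls $2\alpha\,\Re(A^-)^2$ by an explicit Cauchy--Schwarz (see \eqref{eq:5i}); both yield the required $O(\alpha^2)$ error.
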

A straightforward consequence of this lemma is the following
result.
\begin{corollary}\label{coro4.3}
Let $\gs$ be the normalized ground state of $H$. Then
\begin{equation}\label{eq:eq16}
  \la H_f\gs,\gs \ra \leq 2c_0\alpha^2\|\gs\|^2
\end{equation}
\end{corollary}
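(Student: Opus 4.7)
The plan is to take the expectation value of the operator inequality in Lemma~\ref{lem:estimate} against the normalized ground state $\gs$ and then discard the non-negative kinetic term. Applying Lemma~\ref{lem:estimate} to $\gs$ yields
\begin{equation}\nonumber
  \Sigma - \tfrac{1}{2}\la (P - P_f)^2 \gs,\gs\ra
  - \tfrac{1}{2}\la H_f \gs, \gs\ra \;\geq\; -c_0\alpha^2,
\end{equation}
where $\Sigma = \la H\gs, \gs\ra$. Since $(P - P_f)^2 \geq 0$, one can drop that term to obtain
\begin{equation}\nonumber
 \tfrac{1}{2}\la H_f \gs, \gs\ra \;\leq\; \Sigma + c_0\alpha^2.
\end{equation}

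It then suffices to establish the crude upper bound $\Sigma \leq 0$. For this I would use the trial state $\Sgsa \otimes \vac$: the field-dependent pieces $\Ac\cdot\Aa$, $\Re(\Aa)^2$, and $(i\nabla_x - P_f)\cdot A(0)$ all give zero expectation on the Fock vacuum (the $\Aa$ and $P_f$ components annihilate $\vac$, while the $\Ac$ parts produce states in higher photon sectors, orthogonal to $\Sgsa\otimes\vac$), so that only the Schr\"odinger--Coulomb piece survives, yielding
\begin{equation}\nonumber
  \Sigma \;\leq\; \la \Sgsa\otimes\vac,\, H\,\Sgsa\otimes\vac\ra
  \;=\; \la\Sgsa, h_\alpha \Sgsa\ra \;=\; -\tfrac{\alpha^2}{4} \;<\;0,
\end{equation}
by \eqref{eq:gs-schrod}. (Alternatively, one may invoke the much sharper bounds $\Sigma \leq \Sigma_0 - \alpha^2/4 + o(\alpha^2)$ already recorded in the references cited above, but only the sign of $\Sigma$ is needed here.)

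Combining the two displays above and using $\|\gs\|^2 = 1$ gives $\la H_f\gs, \gs\ra \leq 2c_0\alpha^2 = 2c_0\alpha^2\|\gs\|^2$, which is \eqref{eq:eq16}. There is no real obstacle once Lemma~\ref{lem:estimate} is available: the only point requiring a tiny argument is the negativity of $\Sigma$, handled by the product trial state as above.
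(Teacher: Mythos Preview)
Your proof is correct and follows essentially the same route as the paper: take the expectation of the operator inequality from Lemma~\ref{lem:estimate} in the ground state, drop the non-negative $(P-P_f)^2$ term, and use that $\Sigma<0$. The only cosmetic difference is that the paper phrases the negativity as $\Sigma\le\Sigma_0-e_0<0$ (using $\Sigma_0\le0$ for the normal-ordered free operator), whereas you obtain $\Sigma\le-\alpha^2/4$ directly from the trial state $\Sgsa\otimes\vac$; these are equivalent one-line observations.
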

\begin{proof}
This follows from $\la H\gs, \gs\ra \leq (\Sigma_0-e_0)\|\gs\|^2
<0$. The last inequality holds since $e_0=-\alpha^2/4$ and
$\Sigma_0$ is the infimum for the normal ordered Hamiltonian for
the free electron, and thus $\Sigma_0\leq0$.
\end{proof}

Moreover, from Theorem~\ref{lem:4-1} and Lemma~\ref{lem:estimate},
we obtain
\begin{corollary}\label{cor:cor-4-2}
Assume that $\gvara \in\mathfrak{H}$ is such that $\la
\Proj^n\gvara, \Sgsa  \ra_{L^2(\R^3, \mathrm{d}x)} =0$ holds  for
all $n\geq 0$. Then, for $\nu$ and $\delta$ defined in
Theorem~\ref{lem:4-1}, there exists $\zeta>0$, and $\alpha_0>0 $
such that for all $0<\alpha<\alpha_0$ we have
\begin{equation}\label{eq:added-m0}
  \la H\gvara,\,\gvara\ra \geq (\Sigma_0 - e_0)\| \gvara \| ^2
  + M[\gvara] ,
\end{equation}
where
\begin{equation}\label{eq:def-L0}
  M[\gvara] := \frac{\delta}{2} \| \gvara \| ^2
  + \frac{\nu}{2}\|H_f^{\frac12}\gvara\|^2 + \zeta \|(P-P_f)  \gvara \|^2 +\zeta \|\Proj^{n\leq
  4}P\gvara\|^2.
\end{equation}
\end{corollary}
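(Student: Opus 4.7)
The plan is to combine the lower bound of Theorem~\ref{lem:4-1}, which controls $\|\gvara\|^2$ and $\|H_f^{1/2}\gvara\|^2$, with that of Lemma~\ref{lem:estimate}, which supplies the missing electron-momentum term $\|(P-P_f)\gvara\|^2$, through a convex combination, and then to extract the low-photon total-momentum term $\|\Proj^{n\leq 4}P\gvara\|^2$ from the $H_f$- and $(P-P_f)$-contributions by exploiting the ultraviolet cutoff encoded in Remark~\ref{rem:A}.

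First I would write $\la H\gvara,\gvara\ra=(1-\epsilon)\la H\gvara,\gvara\ra+\epsilon\la H\gvara,\gvara\ra$ for some $\epsilon\in(0,1)$ to be fixed, apply Theorem~\ref{lem:4-1} to the first summand and Lemma~\ref{lem:estimate} to the second, obtaining
\begin{equation}\nonumber
\begin{split}
 \la H\gvara,\gvara\ra\geq\ & \bigl[(1-\epsilon)(\Sigma_0-e_0+\delta)-\epsilon c_0\alpha^2\bigr]\|\gvara\|^2\\
 & +\bigl[(1-\epsilon)\nu+\tfrac{\epsilon}{2}\bigr]\|H_f^{1/2}\gvara\|^2+\tfrac{\epsilon}{2}\|(P-P_f)\gvara\|^2.
\end{split}
\end{equation}
Since $|\Sigma_0|$, $e_0$, $\delta$ and $c_0\alpha^2$ are all of order $\alpha^2$ (using $|\Sigma_0|=O(\alpha^2)$ from prior work), one can fix $\epsilon$ as an absolute small constant so that the coefficient of $\|\gvara\|^2$ is at least $\Sigma_0-e_0+\delta/2$, the coefficient of $\|H_f^{1/2}\gvara\|^2$ remains strictly above $\nu/2$, and $\|(P-P_f)\gvara\|^2$ acquires a positive weight $\epsilon/2$.

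To produce the $\|\Proj^{n\leq 4}P\gvara\|^2$ contribution, I would invoke Remark~\ref{rem:A}: every admissible $\gvara$ has photon momenta supported in $|k|\leq\Lambda+1$, so $H_f\leq 4(\Lambda+1)$ as a multiplication operator on $\Proj^{n\leq 4}\gH$; together with the pointwise bound $|k_1+\cdots+k_n|\leq|k_1|+\cdots+|k_n|$, this yields $\|\Proj^{n\leq 4}P_f\gvara\|^2\leq 4(\Lambda+1)\|H_f^{1/2}\gvara\|^2$. Writing $P=(P-P_f)+P_f$ then gives
\begin{equation}\nonumber
 \|\Proj^{n\leq 4}P\gvara\|^2\leq 2\|(P-P_f)\gvara\|^2+8(\Lambda+1)\|H_f^{1/2}\gvara\|^2.
\end{equation}
Subtracting $\zeta$ times this inequality from the lower bound of the previous step lowers the coefficients of $\|H_f^{1/2}\gvara\|^2$ and $\|(P-P_f)\gvara\|^2$ only by amounts linear in $\zeta$; since both have strict positive margins above $\nu/2$ and $0$ respectively, choosing $\zeta>0$ sufficiently small (depending only on $\nu$, $\delta$, $c_0$ and $\Lambda$) leaves the remaining coefficients at least $\nu/2$ and $\zeta$, which gives \eqref{eq:added-m0}--\eqref{eq:def-L0}.

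The only real obstacle I anticipate is the bookkeeping: one has to check that $|\Sigma_0|=O(\alpha^2)$ propagates cleanly through the convex combination so that $\epsilon$ (and hence $\zeta$) may be fixed uniformly in $\alpha$, and that the implicit constant in $\|\Proj^{n\leq 4}P_f\gvara\|\lesssim\|H_f^{1/2}\gvara\|$ is genuinely $\alpha$-independent, which is precisely the reason the $\alpha$-independent ultraviolet cutoff from Remark~\ref{rem:A} must be invoked throughout.
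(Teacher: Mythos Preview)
Your argument is correct and relies on exactly the same two ingredients as the paper---Theorem~\ref{lem:4-1} and Lemma~\ref{lem:estimate}---together with Remark~\ref{rem:A} to bound $\|\Proj^{n\leq 4}P_f\gvara\|^2$ by $\|H_f^{1/2}\gvara\|^2$.  The only difference is organizational.  The paper first reduces to the quantity $\widetilde M[\gvara]$ in \eqref{eq-def-tM-1} via the same $P_f^2\Proj^{n\leq4}\leq\widetilde c\,H_f$ estimate, and then argues by a dichotomy: if $\|(P-P_f)\gvara\|^2\leq c_1\alpha^2\|\gvara\|^2$ with $c_1=\max\{8c_0,8\delta/\alpha^2\}$, Theorem~\ref{lem:4-1} alone already yields \eqref{eq:added-m0}; if instead $\|(P-P_f)\gvara\|^2>c_1\alpha^2\|\gvara\|^2$, Lemma~\ref{lem:estimate} alone suffices, since the large kinetic term $\frac12\|(P-P_f)\gvara\|^2$ absorbs the $-c_0\alpha^2\|\gvara\|^2$ loss and produces the remaining positive pieces directly.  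Your convex-combination route is slightly more streamlined in that it avoids the case distinction, while the paper's dichotomy makes it transparent why no interpolation constant is needed in either regime.  Both arguments ultimately rest on the same point you identified: every scalar in sight ($|\Sigma_0|$, $e_0$, $\delta$, $c_0\alpha^2$) is of order~$\alpha^2$ with $\alpha$-independent prefactors, so the auxiliary constants ($\epsilon$, $\zeta$, resp.\ $c_1$) can be fixed uniformly in~$\alpha$.
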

%
%
%
\begin{proof}
According to Remark~\ref{rem:A}, there exists $\widetilde{c}>1$
such that the operator inequality $P_f^2\Proj^{n\leq 4}\leq
\widetilde{c} H_f\Proj^{n\leq 4}$ holds on the set of states for
which \eqref{eq:cutoff} is satisfied. The value of $\widetilde{c}$
only depends on the ultraviolet cutoff $\Lambda$. Thus,
\begin{equation}\nonumber
 \| \Proj^{n\leq 4}P\gvara\|^2 \leq 2 \| (P-P_f)\gvara\|^2 + 2\|\Proj^{n\leq
 4}P_f  \gvara \|^2\leq 2\| (P-P_f)g\|^2 + 2\widetilde{c} \|H_f^\frac12 \Proj^{n\leq
 4} \gvara\|^2 ,
\end{equation}
which yields
 $$
  \| (P-P_f) \gvara\|^2 \geq \frac12 \| \Proj^{n\leq 4}P \gvara\|^2
  - \widetilde{c} \|H_f^\frac12 \Proj^{n\leq 4} \gvara\|^2\ .
 $$
Therefore, it suffices to prove Corollary~\ref{cor:cor-4-2} with
$M[\gvara] $ replaced by
\begin{equation}\label{eq-def-tM-1}
  \widetilde{M}[\gvara]  := \frac{\delta}{2} \| \gvara \| ^2 +\frac34 \nu
  \|H_f^\frac12 \gvara  \|^2
  + 2 \zeta  \| (P-P_f)  \gvara \|^2,
\end{equation}
and $\zeta$ small enough so that $\widetilde{c}\zeta < \frac\nu4$.

Now we consider two cases. Let $c_1:= \max \{ 8 c_0, 8
\delta/\alpha^2\}$.

If $\|(P-P_f)  \gvara \|^2 \leq c_1\alpha^2 \| \gvara \| ^2$,
Theorem~\ref{lem:4-1} and the above remark imply
\eqref{eq:added-m0}.

If $\|(P-P_f)  \gvara \|^2
>c_1\alpha^2 \|  \gvara \|^2$,
Lemma~\ref{lem:estimate} implies that
\begin{equation}\nonumber
\begin{split}
  & \langle H \gvara ,\gvar\rangle \geq \frac12 \langle (P-P_f)^2\gvara,  \gvara  \rangle
  + \frac12 \langle H_f\gvara,  \gvara  \rangle -c_0\alpha^2 \| \gvara \| ^2\\
  & \geq \frac{1}{4} \langle (P-P_f)^2\gvara,
   \gvara \rangle  +
  \frac12 \langle H_f\gvara,  \gvara  \rangle + \frac18 c_1\alpha^2 \| \gvara \| ^2 \  ,
\end{split}
\end{equation}
which concludes the proof since $\Sigma_0-e_0<0$.
\end{proof}
\noindent\textbf{Proof of Lemma~\ref{lem:estimate}.} Recall the
notation $A^\pm\equiv A^\pm(0)$. The following holds.
\begin{equation}\nonumber
 \begin{split}
 & H - \frac12 (P-P_f)^2-\frac12 H_f \\
 & = \frac12(P-P_f)^2
 -\frac{\alpha}{|x|} - 2 \sqrt{\alpha}\Re \left( (P-P_f) .
 A(0)\right)
 + 2\alpha \Re(\Aa)^2 + 2\alpha\Ac.\Aa + \frac12 H_f ,
 \end{split}
 \end{equation}
\begin{equation}\label{eq:i}
  \frac14(P-P_f)^2 -\frac{\alpha}{|x|} \geq -4 \alpha^2.
\end{equation}
and
\begin{equation}\label{eq:ii}
\begin{split}
 2\sqrt{\alpha} |\la (P-P_f) . A(0)\psi,\psi\ra|
 \leq 2\sqrt{\alpha} \|(P-P_f)\psi\|^2 +
 2\sqrt{\alpha}\|\Aa\psi\|^2\ .
\end{split}
\end{equation}
By the Schwarz inequality, there exists $c_1$ independent of
$\alpha$ such that
\begin{equation}\label{eq:iii}
  \|\Aa\psi\|^2 \leq
  c_1 \| H_f^{\frac12}\psi\|^2.
\end{equation}
Inequalities \eqref{eq:ii}-\eqref{eq:iii} imply that for small
$\alpha$,
\begin{equation}\label{eq:4i}
2\sqrt{\alpha} |\la(P-P_f) . A(0)\psi,\psi\ra|  \leq \frac14 \|
(P-P_f)\psi\|^2 + \frac14 \la H_f\psi, \psi\ra.
\end{equation}
Moreover using \eqref{eq:iii} and
\begin{equation}\label{gll2}
 \|\Ac\psi\|^2 \leq c_2
 \|\psi\|^2 + c_3 \|H_f^{\frac12}\psi\|^2 \,,
\end{equation}
we arrive at
\begin{equation}\label{eq:5i}
\begin{split}
 \alpha \la (\Aa)^2\psi,\psi\ra = \alpha \la \Aa
 \psi, \Ac\psi\ra \leq \epsilon\|\Aa\psi\|^2
 + \epsilon^{-1}\alpha^2
 \|\Ac\psi\|^2 \\
 \leq \epsilon c_1 \| H_f^{\frac12}\psi\|^2 +
 \epsilon^{-1} \alpha^2
 (c_2\|\psi\|^2 + c_3 \|H_f^{\frac12}\psi\|^2) \,.
\end{split}
\end{equation}
Collecting the inequalities \eqref{eq:i}, \eqref{eq:4i} and
\eqref{eq:5i} with $\epsilon < 1/(8c_1)$ and $\alpha$ small
enough, completes the proof. \qed

$\;$

\noindent\textbf{Proof of Theorem~\ref{lem:4-1}:} Let
$\gvar:=\gvar_1 + \gvar_2 := \chi(|P| < \frac{p_c}2) \gvar +
\chi(|P| \geq \frac{p_c}2) \gvar $, where $P=i\nabla_x$ is the
total momentum operator (due to the transformation
(\ref{eq-H-Utrsf-1})) and $p_c=\frac13$ is a lower bound on the
norm of the total momentum for which \cite[Theorem 3.2]{Chen2008}
holds.

Since $P$ commutes with the translation invariant operator $H
+\frac{\alpha}{|x|}$, we have for all $\epsilon\in (0,1)$,
 \begin{equation}
 \begin{split}\label{eq:eq}
  &\la H \gvar , \gvar\ra  =  \la H \gvar_1, \gvar_1 \ra + \la H \gvar_2,\gvar_2\ra
  - 2\Re \la \frac{\alpha}{|x|} \gvar_1, \gvar_2\ra \\
  & \geq  \la H \gvar_1, \gvar_1 \ra + \la H \gvar_2,\gvar_2\ra
  - \epsilon \la \frac{\alpha}{|x|} \gvar_1, \gvar_1\ra
  -\epsilon^{-1} \la \frac{\alpha}{|x|} \gvar_2, \gvar_2\ra .
 \end{split}
 \end{equation}
$\bullet$ First, we have the following estimate
\begin{equation}\nonumber
\begin{split}
 & :(P-P_f -\sqrt{\alpha} A(0))^2: + H_f \\
 & = (P-P_f)^2 - 2\Re (P-P_f) . \sqrt{\alpha} A(0) + \alpha :A(0)^2: +
 H_f  \\
 & \geq  (1-\sqrt{\alpha})(P-P_f)^2 + (\alpha
 -\sqrt{\alpha}):A(0)^2: + H_f - \cno \sqrt{\alpha}
   \\
 & \geq  (1-\sqrt{\alpha})(P-P_f)^2 +
 (1- \mathcal{O}(\sqrt{\alpha})) H_f - \mathcal{O}(\sqrt{\alpha})
\end{split}
\end{equation}
where in the last inequality we used \eqref{eq:iii} and
\eqref{gll2}. Therefore
\begin{equation}\label{eq:lem1-4}
\begin{split}
 \la (H - \epsilon^{-1} \frac{\alpha}{|x|})\gvar_2, \gvar_2\ra
 \geq \la (\frac{1-\sqrt{\alpha}}{2} (P-P_f)^2 -
 (1+\epsilon^{-1})\frac{\alpha}{|x|}) \gvar_2, \gvar_2\ra\\
 + \Big\la \left( \frac{1-\sqrt{\alpha}}{2} (P-P_f)^2
 + (1-\mathcal{O}(\sqrt{\alpha}))H_f - \mathcal{O}(\sqrt{\alpha})\right)
 \gvar_2,\gvar_2
 \Big\ra
\end{split}
\end{equation}
The lowest eigenvalue of the Schr\"odinger operator
$-(1-\mathcal{O}(\sqrt{\alpha})) \frac{\Delta}2 -
\frac{(1+\epsilon^{-1})\alpha}{|x|}$ is larger than
$-c_\epsilon\alpha^2$. Thus, using \eqref{eq:lem1-4} and denoting
 $$
  L:=\frac{1-\sqrt{\alpha}}{2} (P-P_f)^2 +
  (1-\mathcal{O}(\sqrt{\alpha}))H_f - \mathcal{O}(\sqrt{\alpha})
  -c_\epsilon\alpha^2\ ,
 $$
we get
\begin{equation}\label{eq:lem1-5}
 \la H\gvar_2, \gvar_2 ) -\epsilon^{-1} \la \frac{\alpha}{|x|}\gvar_2,\gvar_2\ra
 \geq (L\gvar_2, \gvar_2\ra .
\end{equation}
Now we have the following alternative: Either $|P_f|<
\frac{p_c}3$, in which case we have $\la L \gvar_2, \gvar_2\ra\geq
(\frac{p_c^2}{24} - \mathcal{O}(\sqrt{\alpha}))\|\gvar_2\|^2$, or
$|P_f| \geq \frac{p_c}3$, in which case, using $\gvar_2 =
\chi(|P|>\frac{p_c}{2})\gvar_2$ and $H_f\geq |P_f|$, we have
$L\geq (\frac{p_c}{6} -\mathcal{O}(\sqrt{\alpha}))\|\gvar_2\|^2$.
In both cases, for $\alpha$ small enough, this yields the bound
\begin{equation}\label{eq:lem1-6}
  \la L\gvar_2, \gvar_2\ra \geq \frac{p_c^2}{48} \|\gvar_2\|^2\geq
  (\Sigma_0 - e_0 + \frac78 (e_0-e_1)) \|\gvar_2\|^2
\end{equation}
since, for $\alpha$ small enough, the right hand side tends to
zero, whereas $p_c$ is a constant independent of $\alpha$.
Inequalities~\eqref{eq:lem1-5} and \eqref{eq:lem1-6} yield
\begin{equation}\label{eq:added2}
 \la H\gvar_2, \gvar_2\ra - \epsilon^{-1}\la \frac{\alpha}{|x|}\gvar_2, \gvar_2\ra
 \geq (\Sigma_0 - e_0 + \frac78 (e_0-e_1)) \|\gvar_2\|^2 .
\end{equation}

$\bullet$ For $T(p)$ being the self-energy operator with fixed
total momentum $p\in\R^3$ defined in \eqref{def:T(P)}, we have
from \cite[Theorem~3.1~(B)]{Chen2008}
\begin{equation}\nonumber
 \left|\inf\sigma(T(p)) - {p^2} -\inf\sigma(T(0))\right|
 \leq {c_0 \alpha p^2} .
\end{equation}
Therefore
\begin{equation}\label{eq:lem1-1}
    T(p) \geq (1- o_\alpha(1))p^2 + \Sigma_0  .
\end{equation}
Case 1: If $\|\gvar_2\|^2 \geq 8\|\gvar_1\|^2$, we first do the
following estimate, using \eqref{eq:lem1-1},
\begin{equation}\label{eq:added1}
\begin{split}
& \la H \gvar_1, \gvar_1\ra -\epsilon \la \frac{\alpha}{|x|} \gvar_1, \gvar_1\ra \\
& \geq (1- o_\alpha(1))(P^2\gvar_1, \gvar_1\ra - \la
(1+\epsilon)\frac{\alpha}{|x|}\gvar_1, \gvar_1\ra +  \Sigma_0 \|\gvar_1\|^2\\
& \geq \left(\Sigma_0- (1+ \mathcal{O}(\alpha) +
\mathcal{O}(\epsilon)) e_0\right) \|\gvar_1\|^2 .
\end{split}
\end{equation}
Therefore, together with $\|\gvar_2\|^2 \geq 8\|\gvar_1\|^2$ and
\eqref{eq:added2}, for $\alpha$ and $\epsilon$ small enough this
implies
\begin{equation}\label{eq:eq30}
  \la H \gvar, \gvar\ra \geq (\Sigma_0-e_0)
  \| \gvar\| ^2 + \frac34 (e_0-e_1)\| \gvar\| ^2 .
\end{equation}

Case 2: If $\|\gvar_2\|^2 < 8\|\gvar_1\|^2$, we write the estimate
\begin{equation}\label{eq:added3}
\begin{split}
 & \la H \gvar_1, \gvar_1\ra -\epsilon \la \frac{\alpha}{|x|} \gvar_1, \gvar_1\ra
  \\
 \geq & (1- o_\alpha(1))(P^2\gvar_1, \gvar_1\ra - \la
 (1+\epsilon)\frac{\alpha}{|x|}\gvar_1, \gvar_1\ra +  \Sigma_0 \|\gvar_1\|^2\\
 \geq & (1\!+\! o_\alpha(1)\!+\!\mathcal{O}(\epsilon))
 \Bigg(- e_0\! \sum_{k=0}^\infty \|\, \la \Proj^k \gvar_1,
 \Sgsa \ra_{L^2(\R^3, \mathrm{d}x)}\,\|^2 \!-\! e_1(\|\gvar_1\|^2 \\
 & - \! \sum_{k=0}^\infty
 \|\, \la \Proj^k \gvar_1, \Sgsa \ra_{L^2(\R^3, \mathrm{d}x)}
 \,\|^2) \Bigg)
 +  \Sigma_0 \|\gvar_1\|^2 .
\end{split}
\end{equation}
Now, by orthogonality of $\gvar $  and $\Sgsa $ in the sense that
for all $k$, $\la \Proj^k \gvar, \Sgsa  \ra_{L^2(\R^3,
\mathrm{d}x)} =0$, we get
\begin{equation}\label{eq:eq31}
\begin{split}
  & \sum_{k=0}^\infty \|\la \Sgsa , \Proj^k \gvar_1
  \ra_{L^2(\R^3, \mathrm{d}x)}\|^2
  = \sum_{k=0}^\infty
  \|\la \Sgsa  , \Proj^k
  \gvar_2\ra_{L^2(\R^3, \mathrm{d}x)}\|^2 \\
  & \leq \|\gvar_2\|^2
  \|  \chi(|P| \geq \frac{p_c}2)\Sgsa\|^2
  \leq 8 \|\gvar_1\|^2  \| \chi(|P| \geq \frac{p_c}2)\Sgsa\|^2
   \displaystyle\rightarrow_{\alpha\rightarrow 0} 0
\end{split}
\end{equation}
Thus, for $\alpha$ and $\epsilon$ small enough, \eqref{eq:added3},
\eqref{eq:eq31} and \eqref{eq:added2} imply also \eqref{eq:eq30}
in that case.

$\bullet$ Let $\widetilde{c} = \max\{ \delta, |c_0|\alpha^2\}$.

If $\la H_f \gvar, \gvar\ra < 8 \widetilde{c} \| \gvar\| ^2$,
\eqref{eq:lem-41} follows from \eqref{eq:eq30} with
$\nu=\delta/(16 \widetilde{c})$.

If $\la H_f \gvar, \gvar\ra \geq 8 \widetilde{c} \| \gvar\| ^2$,
using Lemma~\ref{lem:estimate}, we obtain
\begin{equation}
\begin{split}
  \la H \gvar, \gvar\ra& \geq  \frac12 \la H_f \gvar,
  \gvar\ra - c_0 \alpha^2
  \| \gvar \| ^2
  \geq \frac14 \la H_f \gvar, \gvar\ra + \widetilde{c} \| \gvar\| ^2 \\
  & \geq
  \frac14 \la H_f \gvar, \gvar\ra + \delta \| \gvar\| ^2 + (\Sigma_0-e_0)\| \gvar\| ^2,
\end{split}
\end{equation}
since $\Sigma_0-e_0\leq 0$, which yields \eqref{eq:lem-41} with
$\nu=\frac14$.

This concludes the proof of \eqref{eq:lem-41}. \hfill 
%
%
%
%
%

\section{Estimate of the binding energy up to $\alpha^3$ term}
\label{section-S5}

\begin{definition}\label{def:decomposition-1}
Let $\Sgsa $ be the normalized ground state of the Schr\"odinger
operator $h_\alpha$, as defined in \eqref{eq:gs-schrod}.

We define the projection $\cU\in\mathfrak{F}$ of the normalized
ground state $\gs$ of $H$, onto $\Sgsa $ as follows
\begin{equation}\nonumber
 \gs = \Sgsa  \cU +\gsp ,
\end{equation}
where for all $k\geq 0$,
\begin{equation}\label{eq:orthogo}
 \la \Sgsa , \Proj^k \gsp\ra_{L^2(\R^3, \mathrm{d}x)} =0 .
\end{equation}
\end{definition}
%
%
%
%
\begin{remark}
The definition implies that for all $m$
 $$
 (\Proj^m\Psi^{u_\alpha})(k_1,\lambda_1;\, k_2,\lambda_2;\,
 \ldots;\, k_m,\lambda_m) = \int_{\R^3}
 (\Proj^m \gs) (y;\, k_1,\lambda_1;\,
 \ldots;\, k_m,\lambda_m)
 \overline{u_\alpha(y)}\d y  .
 $$
\end{remark}

%
%
%
%
\begin{definition}\label{definition:phi123}
Let
\begin{eqnarray*}
 \cEpp  & : = & - (H_f +P_f^2)^{-1} \Ac \cdot \Ac \vac \\
 \cEp  & : = & - (H_f +P_f^2)^{-1} P_f \cdot \Ac \cEpp  \\
 \cEm  & : = & - (H_f +P_f^2)^{-1} P_f \cdot \Aa \cEpp
\end{eqnarray*}
where evidently, the state $\cEi$ contains $i$ photons.
\end{definition}
%
%
%
\begin{definition}
On $\gF$, we define the positive bilinear form
\begin{equation}\label{eq:def-scalar2}
 \la \, v \, , \, w \, \ra_* \; := \;
 \la \,  v \, , \, (H_f + P_f^2) \, w \, \ra ,
\end{equation}
and its associated semi-norm $\|v\|_* = \la v,v\ra_*^{1/2}$.

We will also use the same notation for this bilinear forms on
$\gF_n$, $\gH$ and $\gH_n$.

Similarly, we define the bilinear form
$\la\,.\,,\,.\,\ra_{\sharp}$ on $\gH$ as
\begin{equation}\nonumber
 \la \,u\,,\, v\, \ra_{\sharp} := \la\, u\,, \,(H_f + P_f^2 + h_\alpha +e_0)
 \,v\,\ra
\end{equation}
and its associated semi-norm $\|v\|_\sharp = \la
v,v\ra_\sharp^{1/2}$.
\end{definition}
%
%
%

\begin{definition}Let
\begin{equation}\label{def:def-F}
  \fF := 2\alpha^{\frac12} \nabla \Sgsa  . (H_f + P_f^2)^{-1} \Ac
  \vac \ ,
 \end{equation}
 and
 \begin{equation}\label{def:def-F0}
  \fFz  := 2\alpha^{\frac12}
  (H_f + P_f^2+h_\alpha +e_0)^{-1} \Ac .\nabla \Sgsa
  \vac  \ .
 \end{equation}
\end{definition}
\begin{remark}
The function $\fF$ is not a vector in the Hilbert space $\gH$
because of the infrared singularity of the photon form factor.
However, in the rest of the paper, we only used the vectors
$H_f^\gamma \fF$ or $P_f^\gamma \fF$, with some $\gamma>0$, which
are always well defined. In particular, all expressions involving
$(H_f+P_f^2)^{-1}$ are always well-defined in the sequel.
\end{remark}

The next theorem gives an upper bound on the binding energy up to
the term $\alpha^3$ with an error term $\mathcal{O}(\alpha^4)$.
\begin{theorem}[Lower bound on the binding
energy]\label{thm:apriori-bound} We have
\begin{equation}\label{eq:eq:apriori-1}
  \Sigma \leq \Sigma_0 - e_0 - \|\fFz \|_{\sharp}^2 + \mathcal{O}(\alpha^4)
\end{equation}
\end{theorem}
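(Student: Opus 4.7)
The plan is to obtain \eqref{eq:eq:apriori-1} by applying the min-max principle to a carefully designed trial state. Following the outline in the introduction, I would take
\begin{equation*}
  \cE^{\mathrm{trial}} \; := \; \Sgsa \, \Psi_0 \, + \, \fFz,
\end{equation*}
where $\Psi_0$ is the approximate ground state of the fibre self-energy operator $T(0)$ at zero total momentum constructed in \cite{BCVVi}, normalised so that $\|\Psi_0\|=1$ and $\la T(0)\Psi_0,\Psi_0\ra = \Sigma_0 + \mathcal{O}(\alpha^{N})$ for $N$ large enough, and $\fFz$ is the one-photon vector defined in \eqref{def:def-F0}. Since the $x$-dependent coefficients of $\fFz$ are orthogonal to $\Sgsa$ in each photon sector, $\la \Sgsa\Psi_0,\fFz\ra = 0$ and $\|\cE^{\mathrm{trial}}\|^2 = 1 + \|\fFz\|^2 = 1 + \mathcal{O}(\alpha^3\log\alpha^{-1})$. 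The min-max principle then gives $\Sigma \leq \la H\cE^{\mathrm{trial}}, \cE^{\mathrm{trial}}\ra / \|\cE^{\mathrm{trial}}\|^2$, and the task reduces to estimating the numerator.

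I would split the numerator into three pieces. For the diagonal contribution on $\Sgsa\Psi_0$, writing $H = T - \alpha/|x|$ and decomposing $T$ into fibres indexed by the total momentum $p$ (which in the present units coincides with $P = i\nabla_x$) gives
\begin{equation*}
  \la T(\Sgsa\Psi_0),\Sgsa\Psi_0\ra \; = \; \int |\hat\Sgsa(p)|^2 \, \la T(p)\Psi_0,\Psi_0\ra \, \d p .
\end{equation*}
Since $\hat\Sgsa$ concentrates on $|p|\lesssim\alpha$, I would expand $T(p)$ around $p=0$ using \cite[Theorem~3.1]{Chen2008} to obtain $\la T(p)\Psi_0,\Psi_0\ra = \Sigma_0 + (1+\mathcal{O}(\alpha))p^2 + \mathcal{O}(\alpha^N)$, where the $\mathcal{O}(p)$ contribution drops by rotational invariance of $\Psi_0$. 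Together with $\int p^2 |\hat\Sgsa(p)|^2\,\d p = \|\nabla\Sgsa\|^2 = e_0$ and the virial identity $\la -\alpha/|x|\,\Sgsa,\Sgsa\ra = -2e_0$, this yields
\begin{equation*}
  \la H(\Sgsa\Psi_0),\Sgsa\Psi_0\ra \; = \; \Sigma_0 - e_0 + \mathcal{O}(\alpha^4) .
\end{equation*}

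For the two remaining pieces, the defining identity $(H_f + P_f^2 + h_\alpha + e_0)\fFz = 2\sqrt{\alpha}\, A^+\!\cdot\!\nabla\Sgsa\,\vac$ shows that $\fFz$ is, up to an overall phase, the first-order perturbative correction to $\Sgsa\vac$ under the one-photon-creating part of the coupling $-2\sqrt\alpha\,i\nabla_x\!\cdot\! A^+$. Completing the square thus gives
\begin{equation*}
  \la H\fFz,\fFz\ra + 2\Re\la H(\Sgsa\Psi_0),\fFz\ra \; = \; -\|\fFz\|_\sharp^2 + \mathcal{O}(\alpha^4) .
\end{equation*}
The leading diagonal contribution $\|\fFz\|_\sharp^2$ in $\la H\fFz,\fFz\ra$ arises from replacing $H$ by $H_f + P_f^2 + h_\alpha + e_0$, while the leading cross contribution $-2\|\fFz\|_\sharp^2$ comes from pairing the source of $\fFz$ against the one-photon component $-2\sqrt\alpha(i\nabla_x)\!\cdot\! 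A^+\Sgsa\vac$ of $H(\Sgsa\Psi_0)$. All other contributions---from the extra operators in $H - (H_f + P_f^2 + h_\alpha + e_0)$, from $P_f\!\cdot\! A^+$ (which vanishes by transversality), from the non-vacuum content of $\Psi_0$, from the $A^+A^-$ and $(A^+)^2$ terms, and from the $-\alpha/|x|$-mixing between photon sectors---would be shown to be $\mathcal{O}(\alpha^4)$ by combining the $\sqrt\alpha$-prefactors in the interaction with quantitative bounds such as $\|H_f^{1/2}\fFz\|, \|P_f\fFz\|, \|(i\nabla_x - P_f)\fFz\| = \mathcal{O}(\alpha^{3/2})$. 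Summing the three pieces and dividing by $\|\cE^{\mathrm{trial}}\|^2 = 1 + \mathcal{O}(\alpha^3\log\alpha^{-1})$---which, since $|\Sigma_0 - e_0| = \mathcal{O}(\alpha^2)$, introduces only an $\mathcal{O}(\alpha^5\log\alpha^{-1}) = o(\alpha^4)$ error---yields \eqref{eq:eq:apriori-1}.

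The main obstacle I anticipate is the careful bookkeeping of the many $\mathcal{O}(\alpha^4)$ remainders in the $\fFz$-contributions. In particular, the subtraction term $-2\Re(i\nabla_x\!\cdot\!P_f)$ in $H - (H_f + P_f^2 + h_\alpha + e_0)$ is naively of size $\alpha^3$ when tested on $\fFz$, so one needs to exploit the specific structure of $\fFz$ or combine this contribution with the $P_f^2$-corrections in the fibre expansion of the $\Sgsa\Psi_0$-piece to recover cancellations down to $\mathcal{O}(\alpha^4)$. A secondary technical point is uniform-in-$p$ control of $\la T(p)\Psi_0,\Psi_0\ra$ on the support of $\hat\Sgsa$, which requires the regularity in $p$ of fibre ground states from \cite{Chen2008} together with the quantitative estimates on $\Psi_0$ and on $\Psi_0 - \vac$ from \cite{BCVVi}.
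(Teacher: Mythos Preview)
Your trial function and overall strategy coincide with the paper's: it takes $\cE^{\mathrm{trial}} = \Sgsa(\vac + 2\alpha^{3/2}\cEm + \alpha\cEpp + 2\alpha^{3/2}\cEp) + \fFz$, cites \cite[Theorem~3.1]{BCVVi} for the expansion of $\Sigma_0$, and declares the remaining computation straightforward via the splitting \eqref{eq:6.3}.

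Two remarks that simplify your outline. First, the fibre-regularity detour through \cite{Chen2008} is unnecessary: since $T(p) = T(0) + p^2 - 2p\cdot(P_f + \sqrt\alpha A(0))$ \emph{exactly}, one has $\la T(p)\Psi_0,\Psi_0\ra = \la T(0)\Psi_0,\Psi_0\ra + p^2$ once the linear term is dropped by parity of $|\hat\Sgsa|^2$, with no $(1+\mathcal{O}(\alpha))$ factor and no need for uniform-in-$p$ control (also note that \cite{BCVVi} only gives $\Sigma_0 + \mathcal{O}(\alpha^4)$, not $\mathcal{O}(\alpha^N)$, which is all that is needed here). Second, your flagged obstacle with $-2\Re\la P\cdot P_f\,\fFz,\fFz\ra$ dissolves: this diagonal term vanishes \emph{identically} by the radial symmetry of $\Sgsa$ in the $x$-variable, exactly as the paper observes in the proof of Lemma~\ref{thm:thm-direct-gg}, so no cancellation against the fibre expansion is required.
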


\begin{proof}
Using the trial function in $\gH$
  $$
  \cE^{\rm trial} :=
  \Sgsa  (\vac + 2 \alpha^{\frac32}\cEm  + \alpha \cEpp  + 2
  \alpha^{\frac32}\cEp ) + \fFz  \ ,
  $$
and from \cite[Theorem~3.1]{BCVVi} which states
  $$
   \Sigma_0 = - \alpha^2 \|\cEpp \|_*^2
   + \alpha^3 (2 \|\Aa \cEpp \|^2 - 4\|\cEm \|_*^2 -4 \| \cEp \|_*^2)
   +\mathcal{O}(\alpha^4)\ ,
  $$
the result follows  straightforwardly.
\end{proof}

We decompose the function $\cU $ defined in
Definition~\ref{def:decomposition-1} as follows.
\begin{definition}\label{def:decomposition-2}
Let $\eta_1$, $\eta_2$, $\eta_3$ and $\Rvar$ be defined by
 \begin{equation}\nonumber
 \cU  =: \, \Proj^0 \cU  + 2\eta_1 \alpha^{\frac32} \cEm  + \eta_2
 \alpha\cEpp  + 2\eta_3 \alpha^{\frac32} \cEp  +\Rvar ,
 \end{equation}
with the conditions $\Proj^0\Rvar=0$ and $\langle \cEi,
\Proj^i\Rvar\rangle_* =0$ (i=1,2,3), where $\cEm $, $\cEpp $,
$\cEp $ are given in Definition~\ref{definition:phi123}.
\end{definition}

We further decompose $ \gsp $  into two parts.
\begin{definition}\label{def:decomposition-3}
Let
\begin{equation}\nonumber
 \gsp =:  \muvar_1  \fFz  +\rvar,
\end{equation}
be defined by $\langle\fFz , \Proj^1\rvar\rangle_{\sharp} =0$.
\end{definition}

The following theorem provides an upper bound of the binding
energy with an error term of the order $\mathcal{O}(\alpha^4)$.
Together with Theorem~\ref{thm:apriori-bound}, it establishes an
estimate up to the order $\alpha^3$ with an error term
$\mathcal{O}(\alpha^4)$.

\begin{theorem}[Upper bound on the binding energy]\label{thm:thm-main2}
1) Let $\Sigma = \inf\sigma(H)$. Then
\begin{equation}\label{eq:eq-binding-alpha3}
 \Sigma \geq \Sigma_0 - e_0 -\| \fF \|_*^2 + \mathcal{O}(\alpha^4) ,
\end{equation}
\begin{equation}\nonumber
 \Sigma_0 = -\alpha^2 \|\cEpp \|_*^2 + \alpha^3 (2\|\Aa\cEpp \|^2
 - 4\|\cEm \|_*^2 - 4 \|\cEp \|_*^2) + \mathcal{O}(\alpha^4) ,
\end{equation}
and $\fF$ defined by \eqref{def:def-F}.

2) For the components $\rvar$, $\cU $, $\Rvar$ of the ground state
$\gs$, and the coefficients $\eta_1$, $\eta_2$, $\eta_3$ and
$\muvar_1$ defined in
Definitions~\ref{def:decomposition-1}-\ref{def:decomposition-2},
holds
\begin{eqnarray}
 & & \|\rvar\|^2 = \mathcal{O}(\alpha^{\frac{33}{16}}),\
 \|H_f^\frac12 \rvar\|^2 =\mathcal{O}(\alpha^4),\
 \|(P-P_f)\rvar\|^2 = \mathcal{O}(\alpha^4),
 \label{eq:thm:thm-main2-1} \\
 & &  |\Proj^0\cU |^2 \geq
 1-c\alpha^2
 , \label{eq:thm-thm-main2-1bis}\\
 & & \|\Rvar\|_*^2 = \mathcal{O}(\alpha^4),\ \|\Rvar\|^2 =
 \mathcal{O}(\alpha^{\frac{33}{16}}), \\
 & & |\eta_{1,3} - 1|^2 \leq c\alpha,\ |\eta_2-1|^2\leq c\alpha^2,\
 |\muvar_1-1|^2\leq c\alpha\ .\label{eq:thm:thm-main2-2}
\end{eqnarray}
\end{theorem}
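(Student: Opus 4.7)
The plan is to test the identity $\la H\gs,\gs\ra = \Sigma$ (after normalising $\|\gs\|=1$) against the orthogonal decomposition $\gs = \Sgsa\cU + \gsp$ of Definition~\ref{def:decomposition-1}, which yields
\[
 \Sigma \;=\; \la H\Sgsa\cU,\Sgsa\cU\ra \;+\; 2\,\Re\la H\Sgsa\cU,\gsp\ra \;+\; \la H\gsp,\gsp\ra\,,
\]
and to estimate each term separately using the lemmata of subsections~\ref{S5.1}--\ref{S5.3}. Because the orthogonality condition \eqref{eq:orthogo} is inherited by the remainder $\rvar$ of Definition~\ref{def:decomposition-3}, both Theorem~\ref{lem:4-1} and Corollary~\ref{cor:cor-4-2} are applicable to $\gsp$ and to $\rvar$. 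I would insert the further decompositions $\cU = \Proj^0\cU + 2\eta_1\alpha^{3/2}\cEm + \eta_2\alpha\cEpp + 2\eta_3\alpha^{3/2}\cEp + \Rvar$ and $\gsp = \muvar_1\fFz + \rvar$ into each term. Using the expansion of $\Sigma_0$ from \cite{BCVVi} recalled in the statement, the $\Sgsa\cU$-diagonal produces $(\Sigma_0 - e_0)\|\cU\|^2$ plus a non-negative quadratic form $Q_1[\eta_i-1,\Rvar]$; the orthogonality conditions built into Definition~\ref{def:decomposition-2} are precisely what remove all would-be linear terms. Combining the cross term with the $\muvar_1^2$-diagonal contribution of $\fFz$ produces a quadratic polynomial in $\muvar_1$ whose minimum is $-\|\fFz\|_\sharp^2 = -\|\fF\|_*^2 + \mathcal{O}(\alpha^4)$, attained at $\muvar_1 = 1 + \mathcal{O}(\alpha^{1/2})$; the identification of the two norms uses that $h_\alpha + e_0$ contributes at order $\alpha^2$ on the range of $(H_f+P_f^2)^{-1}\Ac\cdot\nabla\Sgsa\vac$ while $H_f + P_f^2$ contributes at order $1$. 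For the $\rvar$-diagonal, Corollary~\ref{cor:cor-4-2} gives $\la H\rvar,\rvar\ra \geq (\Sigma_0-e_0)\|\rvar\|^2 + M[\rvar]$, and the residual cross terms $\la H\Sgsa\cU,\rvar\ra$ are absorbed by Cauchy--Schwarz against $M[\rvar]$. Collecting yields
\[
 \Sigma \;\geq\; (\Sigma_0 - e_0) - \|\fF\|_*^2 + M[\rvar] + Q_1[\eta_i-1,\Rvar] + Q_2[\muvar_1-1] + \mathcal{O}(\alpha^4)\,,
\]
which establishes \eqref{eq:eq-binding-alpha3} since the remainders are non-negative.

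Part~(2) is extracted by comparing this lower bound with the upper bound $\Sigma \leq \Sigma_0 - e_0 - \|\fFz\|_\sharp^2 + \mathcal{O}(\alpha^4) = \Sigma_0 - e_0 - \|\fF\|_*^2 + \mathcal{O}(\alpha^4)$ provided by Theorem~\ref{thm:apriori-bound}. The comparison forces $M[\rvar] + Q_1[\cdots] + Q_2[\cdots] = \mathcal{O}(\alpha^4)$, from which one reads off directly via the definition \eqref{eq:def-L0} of $M$ the estimates $\|H_f^{1/2}\rvar\|^2 = \mathcal{O}(\alpha^4)$, $\|(P-P_f)\rvar\|^2 = \mathcal{O}(\alpha^4)$, $\|\Rvar\|_*^2 = \mathcal{O}(\alpha^4)$, $|\eta_{1,3}-1|^2 \leq c\alpha$, $|\eta_2-1|^2 \leq c\alpha^2$, $|\muvar_1 - 1|^2 \leq c\alpha$ (the differing rates reflect the $\alpha^{3/2}$ and $\alpha$ prefactors of the corresponding basis vectors). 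The bound $|\Proj^0\cU|^2 \geq 1 - c\alpha^2$ then follows from $\|\Sgsa\cU\|^2 + \|\gsp\|^2 = 1$ combined with the expansion $\sum_{k\geq 1}\|\Proj^k\cU\|^2 = \eta_2^2\alpha^2\|\cEpp\|^2 + \mathcal{O}(\alpha^3)$. The sharper $L^2$-estimates $\|\rvar\|^2, \|\Rvar\|^2 = \mathcal{O}(\alpha^{33/16})$ do not follow from the coercivity inequality alone (which only produces $\mathcal{O}(\alpha^2)$ via Poincar\'e-type control of $\|\rvar\|^2$ by $\|H_f^{1/2}\rvar\|^2 + \|(P-P_f)\rvar\|^2$); for these I would adapt the pull-through argument of Lemma~\ref{Nf-exp-lemma-1}. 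The function $\rvar$ satisfies an inhomogeneous eigenvalue-like relation $(H-\Sigma)\rvar = \mathrm{source}$, whose source is controlled by the already-established smallness of $H_f^{1/2}\rvar$; the pull-through formula for $a_\lambda(k)\rvar$ then gives pointwise bounds of the same type as \eqref{a-psi-fund-1}--\eqref{a-psi-fund-2}, and splitting the resulting $k$-integral at the infrared scale $\delta = \alpha^{15/8}$ produces exactly the exponent $33/16$.

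\textbf{Main obstacle.} The delicate point is obtaining the lower bound with an error of size $\mathcal{O}(\alpha^4)$ rather than $\mathcal{O}(\alpha^4\log\alpha^{-1})$, since this estimate feeds directly into the $\alpha^5\log\alpha^{-1}$-analysis of section~\ref{section-S6}. This requires precise bookkeeping of the cross terms between the $\alpha^{3/2}$-order self-energy corrections $\Sgsa\cEi$ of $\cU$ and the remainder $\rvar$, and in particular matching the coefficient of $\muvar_1$ in the cross-plus-diagonal quadratic so that the minimum is exactly $-\|\fF\|_*^2$ modulo $\mathcal{O}(\alpha^4)$ (and not $\mathcal{O}(\alpha^{7/2})$). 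This matching is what dictates the particular choice \eqref{def:def-F0} of $\fFz$ and the orthogonality conditions built into Definitions~\ref{def:decomposition-2}--\ref{def:decomposition-3}. A secondary difficulty is the $\alpha^{33/16}$ upgrade on $\|\rvar\|$ and $\|\Rvar\|$, which as noted cannot be read off from the quadratic-form analysis and requires the separate pull-through argument.
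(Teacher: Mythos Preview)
Your overall strategy matches the paper's: decompose $\gs = \Sgsa\cU + \gsp$, estimate the three pieces via Lemmata~\ref{lem:lem-direct-fpsi}--\ref{thm:thm-direct-gg}, collect into a lower bound $(\Sigma_0-e_0) - \|\fFz\|_\sharp^2$ plus non-negative remainders, and read off Part~(2) by comparison with Theorem~\ref{thm:apriori-bound}. That skeleton is correct.

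There is, however, a genuine gap in your handling of the $\mathcal{O}(\alpha^4)$ versus $\mathcal{O}(\alpha^4\log\alpha^{-1})$ issue. You attribute the stray logarithm to the cross terms between $\Sgsa\cEi$ and $\rvar$, and propose to remove it by careful bookkeeping. That is not where it comes from. The logarithm originates in the \emph{diagonal} term $\la H\Sgsa\cU,\Sgsa\cU\ra$: Lemma~\ref{lem:lem-direct-fpsi} is the lower bound of \cite[Theorem~3.1]{BCVVi}, whose error term depends on the expected photon number of the state, and at this stage the only a~priori bound available is Lemma~\ref{Nf-exp-lemma-1}, namely $\la N_f\ra = \mathcal{O}(\alpha^2\log\alpha^{-1})$. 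So a single pass through your argument can only produce
\[
 \Sigma \;\geq\; \Sigma_0 - e_0 - \|\fFz\|_\sharp^2 + \tfrac14 M[\rvar] + \tfrac14\|\Rvar\|_*^2 + (\text{quadratic in }\eta_i-\Proj^0\cU,\ \muvar_1-\Proj^0\cU) + \mathcal{O}(\alpha^4\log\alpha^{-1})\,,
\]
which is \eqref{eq:est-w-log-1}. No amount of cross-term bookkeeping removes that logarithm.

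The paper's resolution is a two-pass bootstrap that you do not mention. From the first pass one extracts $\|\Rvar\|_*^2 = \mathcal{O}(\alpha^4\log\alpha^{-1})$ and $\|H_f^{1/2}\rvar\|^2 = \mathcal{O}(\alpha^4\log\alpha^{-1})$; these feed into Lemma~\ref{qmqm}, which combines the pointwise bound \eqref{a-psi-fund-2} on $a_\lambda(k)\gs$ (inherited by $\cU$ and hence, after subtracting the explicit $\cEi$ pieces, by $\Rvar$) with the new $H_f^{1/2}$ bound, splitting the $k$-integral at $\alpha^{15/8}$ to obtain $\|N_f^{1/2}\Rvar\|^2, \|N_f^{1/2}\rvar\|^2 = \mathcal{O}(\alpha^{33/16})$. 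This improved photon-number control is then fed back into Lemma~\ref{lem:lem-direct-fpsi}, upgrading its error to $\mathcal{O}(\alpha^4)$, and a second pass through the whole argument gives \eqref{eq:eq-binding-alpha3}. Note in particular that your proposal to derive the $\alpha^{33/16}$ bounds by a pull-through on an ``inhomogeneous equation $(H-\Sigma)\rvar=\text{source}$'' is not what is done and would be awkward, since $\rvar$ is not an eigenvector; the paper instead restricts the already-known pointwise bounds on $a_\lambda(k)\gs$ to the relevant components.
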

%
%

To prove this theorem, we will compute $\la H \gs,\gs\ra$
according to the decomposition of $\gs$ introduced in
Definitions~\ref{def:decomposition-1} to
\ref{def:decomposition-3}. Using
\begin{equation}\nonumber
 H = (P^2 -\frac{\alpha}{|x|})\ +\
 :(P_f + \sqrt{\alpha}A(0))^2:\ + H_f\ -\ 2\Re P \cdot (P_f +
 \sqrt{\alpha} A(0)) ,
\end{equation}
and due to the orthogonality of $\Sgsa $ and $\gsp$ , we obtain
\begin{equation}\label{eq:6.3}
  \langle H \gs , \gs \rangle = \langle H \Sgsa  \cU , \Sgsa
  \cU \rangle + \langle H \gsp,\gsp\rangle -
  4\Re \langle P.(P_f+\sqrt{\alpha} A(0)) \Sgsa  \cU , \gsp\rangle
  .
\end{equation}
We will estimate separately each term in \eqref{eq:6.3} in
subsections~\ref{S5.1}-\ref{S5.3}. These estimates will be used to
establish in subsection~\ref{S5.4} the proof of
Theorem~\ref{thm:thm-main2}.
%
%

\section{Estimate of the binding energy up
to $o(\alpha^5\log\alpha^{-1})$ term}\label{section-S6}

We develop here the proof of the difficult part in
Theorem~\ref{thm:main} which is the upper bound in
\eqref{eq:main1}, and which is stated in
Theorem~\ref{thm:thm-upper-bound} below for convenience.

Some technical aspects of this proof are detailed in
section~\ref{subsection:6-3} and appendices~\ref{subsection:6-1}
and \ref{prf-hgg}.

\begin{theorem}[Upper bound up to the order
$\alpha^5\log\alpha^{-1}$ for the binding
energy]\label{thm:thm-upper-bound}

For $\alpha$ small enough, we have
\begin{equation}\label{eq:main1-prime}
 \begin{split}
   \Sigma_0 -\Sigma \geq   & \frac14\alpha^2 +
   \dvar^{(1)} \alpha^3
   +
   \dvar^{(2)} \alpha^4 +
   \dvar^{(3)} \alpha^5\log\alpha^{-1}
   + o(\alpha^{5}\log\alpha^{-1}),
 \end{split}
\end{equation}
where $\dvar^{(1)}$, $\dvar^{(2)}$ and $\dvar^{(3)}$ are defined
in Theorem~\ref{thm:main}.
\end{theorem}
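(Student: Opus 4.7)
The plan is to prove this lower bound on the binding energy by a variational argument. Since $\Sigma_0-\Sigma\ge X$ is equivalent to $\Sigma\le\Sigma_0-X$, it suffices to exhibit a normalized trial state $\cE^{\mathrm{new}}\in\gH$ whose energy $\langle H\cE^{\mathrm{new}},\cE^{\mathrm{new}}\rangle$ is no larger than $\Sigma_0-\frac{\alpha^2}{4}-\dvar^{(1)}\alpha^3-\dvar^{(2)}\alpha^4-\dvar^{(3)}\alpha^5\log\alpha^{-1}+o(\alpha^5\log\alpha^{-1})$. The trial state $\cE^{\mathrm{trial}}=\Sgsa(\vac+2\alpha^{3/2}\cEm+\alpha\cEpp+2\alpha^{3/2}\cEp)+\fFz$ of Theorem~\ref{thm:apriori-bound} already achieves the first two orders, so I would augment it by additional correction terms tuned to produce the $\dvar^{(2)}\alpha^4$ and $\dvar^{(3)}\alpha^5\log\alpha^{-1}$ contributions, using the smallness of $\gs-\cE^{\mathrm{new}}$ (controlled by Theorem~\ref{thm:thm-main2} and the improvement Lemma~\ref{qmqm} mentioned in the overview) to show that cross-terms not matching the target formula are of order $o(\alpha^5\log\alpha^{-1})$.

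Three classes of corrections are called for. First, the construction of the approximate self-energy ground state $\Psi_0$ of $T(0)$ from \cite{BCVVi} must be iterated one step further, retaining the new $3$- and $4$-photon contributions of order $\alpha^2$ and higher; this extends the expansion of $\Sigma_0$ to order $\alpha^4$ and provides the $\Sigma_0$-type share of $\dvar^{(2)}\alpha^4$. Second, a term of the form $\alpha^{3/2}a_0\,Q_1^\perp(h_\alpha+e_0)^{-1}\Delta\Sgsa\otimes\vac$, possibly further dressed through $(H_f+P_f^2+h_\alpha+e_0)^{-1}$ in the style of $\fFz$, is added; its cross term with the coupling $-2\sqrt\alpha(i\nabla_x-P_f)\cdot A(0)$ reproduces exactly the quantity $4a_0^2\|Q_1^\perp(-\Delta-|x|^{-1}+\frac14)^{-1/2}\Delta\Sgso\|^2$ appearing in $\dvar^{(2)}$. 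Third, a refinement of $\fFz$ in which the free resolvent $(H_f+P_f^2)^{-1}$ used at intermediate steps is replaced by $(H_f+P_f^2+h_\alpha+e_0)^{-1}$ must be incorporated to capture the infrared-sensitive piece yielding the $\alpha^5\log\alpha^{-1}$ term. The quadratic form $\langle H\cE^{\mathrm{new}},\cE^{\mathrm{new}}\rangle$ is then expanded using the splitting $H=(P^2-\alpha/|x|)+:(P_f+\sqrt\alpha A(0))^2:+H_f-2\Re P\cdot(P_f+\sqrt\alpha A(0))$ and the orthogonality of $\Sgsa$ to the $\gsp$-type additions; the resulting estimates are organized into the two propositions announced in the statement, proved in the appendices~\ref{subsection:6-1} and \ref{prf-hgg} and assembled in section~\ref{subsection:6-3}.

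The hardest step is the extraction of the logarithmic factor. It arises from an integral of the schematic form $\int_{|k|\le\Lambda}|k|^{-1}\chi_\Lambda^2(|k|)\,g(k,\alpha)\,\d k$, whose integrand is infrared-regularized by a denominator of order $|k|+\mathcal O(\alpha^2)$ coming from $(H_f+P_f^2+h_\alpha+e_0)^{-1}$; this produces an effective infrared cutoff at scale $\alpha^2$ and hence a factor $\log(\Lambda/\alpha^2)=2\log\alpha^{-1}+\mathcal O(1)$. Splitting the $k$-integration at $|k|=\alpha^{1+\varepsilon}$, the high-momentum region yields the finite leading coefficient while the low-momentum region is shown to be $o(\alpha^5\log\alpha^{-1})$ using the a priori bounds of Lemma~\ref{Nf-exp-lemma-1}. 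Matching the explicit constant $\dvar^{(3)}=-\frac{1}{3\pi}\|(-\Delta-|x|^{-1}+\frac14)^{1/2}\nabla\Sgso\|^2$ requires an angular average over the transverse polarization vectors $\varepsilon_\lambda(k)$ (producing the factor $1/3$), a dipole expansion $e^{ikx}=1+ikx+\cdots$ (extracting the gradient $\nabla\Sgso$), and careful sign bookkeeping. A secondary but nontrivial difficulty is to verify that every contribution not matching the target expansion is of order $o(\alpha^5\log\alpha^{-1})$; here the smallness conditions of Theorem~\ref{thm:thm-main2} on $\Rvar$ and $\rvar$, namely $\|\Rvar\|_*^2=\mathcal O(\alpha^4)$, $\|H_f^{1/2}\rvar\|^2=\mathcal O(\alpha^4)$, $\|(P-P_f)\rvar\|^2=\mathcal O(\alpha^4)$, together with the improved photon-number bound of Lemma~\ref{qmqm}, are decisive.
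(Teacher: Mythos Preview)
There is a direction mix-up at the root of your plan. Despite the $\geq$ displayed in~\eqref{eq:main1-prime} (evidently a typo), the theorem's title and the paper's own proof in Section~\ref{subsection:6-3} establish the \emph{upper} bound $\Sigma_0-\Sigma\le\frac{\alpha^2}{4}+\dvar^{(1)}\alpha^3+\dvar^{(2)}\alpha^4+\dvar^{(3)}\alpha^5\log\alpha^{-1}+o(\alpha^5\log\alpha^{-1})$, i.e.\ a \emph{lower} bound on $\Sigma=\langle H\gs,\gs\rangle$. A variational argument with a trial state $\cE^{\mathrm{new}}$ can only yield $\Sigma\le\langle H\cE^{\mathrm{new}},\cE^{\mathrm{new}}\rangle$; that is the content of Section~\ref{S6.4}, not of this theorem. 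No choice of trial state, however refined, produces a lower bound on the ground state energy.

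The paper's proof works directly with the true ground state. One writes $\gs=\Sgsa\cU+\gsp$, further splits $\gsp=\gsps+\gspl$ according to Definition~\ref{def:G-improved}, and bounds $\langle H\gs,\gs\rangle$ from below. The two propositions you cite are not evaluations of a trial state: Proposition~\ref{prop:prf-Hgpsi} bounds the cross term $2\Re\langle H\gsp,\Sgsa\cU\rangle$ and Proposition~\ref{prop:prop-hgg} bounds $\langle H\gsp,\gsp\rangle$, each from below, in terms of the unknown coefficients $\muvar_1,\muvar_{2,i},\muvar_3$ of the decomposition and the positive functional $M[\gspl]$ of Corollary~\ref{cor:cor-4-2}. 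Section~\ref{subsection:6-3} then \emph{minimizes} the resulting expression over these parameters (completing squares in $\muvar_1$, $\muvar_{2,i}$, $\muvar_3$ and absorbing the remainders into $M[\gspl]$), so that whatever values the actual ground state selects, its energy cannot lie below the minimum. The a~priori input is Theorem~\ref{thm:thm-main2} and Lemma~\ref{lem:improved-estimates}, which constrain the parameters tightly enough that the minimization is sharp to $o(\alpha^5\log\alpha^{-1})$. Your remark that one needs ``smallness of $\gs-\cE^{\mathrm{new}}$'' signals the confusion: a genuine variational bound needs no information about $\gs$, while the lower bound on $\Sigma$ proved here needs no trial state at all.
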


In order to prove this result, we need to refine the splitting of
the function $\gsp$ orthogonal to $\Sgsa\cU$ defined in
Definition~\ref{def:decomposition-1}. Therefore, we consider the
following decomposition

\begin{definition}\label{def:G-improved}
1) Pick
\begin{equation}\nonumber
 \muvar_2 = \left\{
 \begin{array}{ll}
   \alpha^{-1}
   \frac{\la\gsp,\cEpp \Proj^0 \gsp\ra_{\sharp}}
   {\la \cEpp  \Proj^0\gsp,\cEpp  \Proj^0 \gsp\ra_{\sharp}} & \mbox{ if }
   \| \Proj^0 \gsp\| >\alpha^{\frac32}, \\
   0 & \mbox{ if } \|\Proj^0 \gsp\|\leq \alpha^{\frac32}\ .
   \end{array}
   \right.
\end{equation}

\noindent 2) We split $\gsp$  into $\gsps $ and $\gspl $ as
follows: $\forall n\geq 0$, $\Proj^n\gsp = \Proj^n \gsps  +
\Proj^n \gspl $ and

\noindent for $n=0$,
\begin{equation}\nonumber
 \Proj^0 \gsps  = \Proj^0 \gsp\quad\mbox{and}\quad\Proj^0 \gspl =0  ,
\end{equation}
for $n=1$,
\begin{equation}\nonumber
 \Proj^1 \gsps  = \muvar_1\fFz
 \quad\mbox{and}\quad\la\Proj^1 \gspl , \fFz \ra_{\sharp}=0  ,
\end{equation}
for $n=2$,
\begin{equation}\nonumber
\begin{split}
 \Proj^2 \gsps  & =  \alpha\muvar_2\cEpp \Proj^0 \gsps
 + \sum_{i=1}^3 \alpha \muvar_{2,i} (H_f + P_f^2)^{-1} W_i
 \frac{\partial \Sgsa }{\partial x_i} ,\\
 & \mbox{with } W_i  = P_f^i \cEpp  - 2 \Ac . P_f (H_f + P_f^2)^{-1}
 (\Ac)^i \vac ,\\
 \Proj^2 \gspl  & =  \Proj^2\gsp - \Proj^2 \gsps ,\ \mbox{with }
 \la \Proj^2 \gspl , (H_f + P_f^2)^{-1} W_i
 \frac{\partial \Sgsa }{\partial x_i}\ra_{\sharp} =0 \ (i=1,2,3)
 ,\\
 & \mbox{and } \la\Proj^2 \gspl , \cEpp \Proj^0 \gsps \ra_{\sharp} =0 ,
\end{split}
\end{equation}
for $n=3$,
\begin{equation}\nonumber
\begin{split}
 \Proj^3 \gsps  & = \alpha \muvar_3 (H_f + P_f^2)^{-1} \Ac . \Ac \fFz  , \\
 \Proj^3 \gspl  & = \Proj^3\gsp - \Proj^3 \gsps ,\ \la \Proj^3 \gspl ,
 (H_f + P_f^2)^{-1} \Ac . \Ac \fFz \ra_{\sharp} =0 ,
\end{split}
\end{equation}
and for $n\geq 4$,
\begin{equation}\nonumber
 \Proj^n \gsps  = 0\quad\mbox{and}\quad \Proj^n \gspl  = \Proj^n\gsp .
\end{equation}
\end{definition}

The next step consists in establishing, in the next lemma, some a
priori estimates concerning the function $\gsp$ that give
additional information to those obtained in
\eqref{eq:thm:thm-main2-1} of Theorem~\ref{thm:thm-main2}.

\begin{lemma}\label{lem:improved-estimates}
The following estimates hold
\begin{equation}\nonumber
\begin{split}
 & \muvar_1 = 1 +\mathcal{O}(\alpha^\frac12) ,\\
 & |\muvar_2|\, \|\Proj^0 \gsps \| = \mathcal{O}(\alpha) ,\\
 & \muvar_{2,i} = \mathcal{O}(1),\ (i=1,2,3),\\
 & \|\Proj^0 \gsps \| =\mathcal{O}(\alpha), \\
 & \| P \Proj^0 \gsps \| = \mathcal{O}(\alpha^2) .
\end{split}
\end{equation}
\end{lemma}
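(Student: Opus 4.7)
The plan combines the a priori bounds of Theorem~\ref{thm:thm-main2} with the eigenvalue equation $H\gs=\Sigma\gs$ projected onto suitable subspaces, and elementary Cauchy-Schwarz estimates in the $\la\cdot,\cdot\ra_{*}$ and $\la\cdot,\cdot\ra_\sharp$ inner products.

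The first estimate $\muvar_1=1+\mathcal{O}(\alpha^{1/2})$ is just a restatement of $|\muvar_1-1|^2\le c\alpha$ in \eqref{eq:thm:thm-main2-2}. The bounds on $|\muvar_2|\,\|\Proj^0\gsps\|$ and on $\muvar_{2,i}$ follow from the defining orthogonality relations in Definition~\ref{def:G-improved}. For the nontrivial case $\|\Proj^0\gsp\|>\alpha^{3/2}$, the definition of $\muvar_2$ yields
\begin{equation*}
  |\muvar_2|\,\|\Proj^0\gsps\|
  = \alpha^{-1}\frac{|\la\gsp,\cEpp\Proj^0\gsp\ra_\sharp|}{\|\cEpp\Proj^0\gsp\|_\sharp^2}\,\|\Proj^0\gsp\|
  \le \alpha^{-1}\,\frac{\|\gsp\|_\sharp}{\|\cEpp\|_{*}},
\end{equation*}
where I used Cauchy-Schwarz in $\la\cdot,\cdot\ra_\sharp$ together with the lower bound $\|\cEpp\Proj^0\gsp\|_\sharp^2\ge \|\cEpp\|_{*}^2\|\Proj^0\gsp\|^2$, which follows from $h_\alpha+e_0\ge 0$ on $\Sgsa^\perp$. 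The norm $\|\gsp\|_\sharp$ is then controlled via $\gsp=\muvar_1\fFz+\rvar$, the explicit integral $\|\fFz\|_\sharp=\mathcal{O}(\alpha^{3/2})$ already computed for Theorem~\ref{thm:apriori-bound}, and $\|\rvar\|_\sharp=\mathcal{O}(\alpha^2)$ obtained from $\|H_f^{1/2}\rvar\|^2=\mathcal{O}(\alpha^4)$, $\|(P-P_f)\rvar\|^2=\mathcal{O}(\alpha^4)$ in \eqref{eq:thm:thm-main2-1} and Hardy's inequality. The bound $\muvar_{2,i}=\mathcal{O}(1)$ is established in the same way: the relevant normalization $\|(H_f+P_f^2)^{-1}W_i\partial_i\Sgsa\|_\sharp^2$ is of order $\alpha^2$ (since $\|\partial_i\Sgsa\|^2\sim\alpha^2$), balancing the $\alpha^{-1}$ prefactor and the $\mathcal{O}(\alpha^3)$ size of the numerator, which in turn follows from Corollary~\ref{coro4.3} and Theorem~\ref{thm:thm-main2}.

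The bounds $\|\Proj^0\gsps\|=\mathcal{O}(\alpha)$ and $\|P\Proj^0\gsps\|=\mathcal{O}(\alpha^2)$ constitute the main work. Since $\Proj^0\gsps=\Proj^0\gsp$, I project $(H-\Sigma)\gs=0$ onto the $0$-photon sector and then onto $\Sgsa^\perp$ in $L^2(\R^3)$. The only operators in $H$ that decrease photon number are $-2\sqrt\alpha(P-P_f).\Aa$ (by one) and $\alpha(\Aa)^2$ (by two); using that $P_f$ vanishes on $0$-photon states and that $P$ commutes with $\Aa$, I obtain
\begin{equation*}
  (h_\alpha-\Sigma)\,\Proj^0\gsps
  \;=\; 2\sqrt\alpha\,Q^\perp\Aa.P\,\Proj^1\gs
  \;-\; \alpha\,Q^\perp(\Aa)^2\,\Proj^2\gs,
\end{equation*}
where $Q^\perp$ is the $L^2$-projection onto $\Sgsa^\perp$. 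On $\mathrm{Ran}(Q^\perp)$ one has $h_\alpha-\Sigma\ge(e_0-e_1)+(e_0-\Sigma_0)+\mathcal{O}(\alpha^3)\ge\frac{3\alpha^2}{16}+\mathcal{O}(\alpha^3)$, so the resolvent is bounded in norm by $\mathcal{O}(\alpha^{-2})$. Using $\|\Aa\psi\|\le c\|H_f^{1/2}\psi\|$ from \eqref{eq:iii}, the photon-number estimate $\|H_f^{1/2}\gs\|=\mathcal{O}(\alpha)$ from Corollary~\ref{coro4.3}, and the explicit structure of $\Proj^1\gs$, $\Proj^2\gs$ from Definitions~\ref{def:decomposition-1}--\ref{def:decomposition-3}, each term on the right hand side is of order $\alpha^3$, which gives $\|\Proj^0\gsps\|=\mathcal{O}(\alpha)$. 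For $\|P\Proj^0\gsps\|=\mathcal{O}(\alpha^2)$ I then use $\|P\phi\|^2=\la\phi,h_\alpha\phi\ra+\alpha\la\phi,|x|^{-1}\phi\ra$ for $\phi=\Proj^0\gsps$, bound $\la\phi,h_\alpha\phi\ra\le\Sigma\|\phi\|^2+\|\phi\|\,\|(h_\alpha-\Sigma)\phi\|=\mathcal{O}(\alpha^4)$ via the displayed identity above, and absorb the Coulomb term by Hardy's inequality combined with AM-GM.

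The main obstacle will be the careful bookkeeping of orders of $\alpha$ through the multiple cross couplings between photon sectors, in particular tracking the infrared behavior of the photon form factor $\chi_\Lambda(|k|)/\sqrt{|k|}$ via the regularization automatically provided by the operators $H_f+P_f^2$ on the photon side and $h_\alpha+e_0$ on $\Sgsa^\perp$ on the electron side, which is exactly what makes each resolvent norm and each coupling term finite at the required order.
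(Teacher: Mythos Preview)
Your Cauchy--Schwarz bound for $|\muvar_2|\,\|\Proj^0\gsps\|$ loses a half power of $\alpha$. In the displayed inequality you bound the numerator by $\|\gsp\|_\sharp\,\|\cEpp\Proj^0\gsp\|_\sharp$ and then estimate $\|\gsp\|_\sharp\le|\muvar_1|\,\|\fFz\|_\sharp+\|\rvar\|_\sharp=\mathcal{O}(\alpha^{3/2})$, which after multiplying by $\alpha^{-1}$ only gives $\mathcal{O}(\alpha^{1/2})$. The fix is immediate: since $\cEpp\Proj^0\gsp$ lives entirely in the two-photon sector, the numerator equals $\la\Proj^2\gsp,\cEpp\Proj^0\gsp\ra_\sharp$, and Cauchy--Schwarz then gives $\|\Proj^2\gsp\|_\sharp=\|\Proj^2\rvar\|_\sharp=\mathcal{O}(\alpha^2)$, which is exactly the quantity you already control and yields the correct $\mathcal{O}(\alpha)$. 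The paper argues slightly differently but to the same effect: it first shows $\|\Proj^2\rvar\|_\sharp^2=\mathcal{O}(\alpha^4)$ from \eqref{eq:thm:thm-main2-1}, passes to $\|\Proj^2\gsps\|_*^2\le\|\Proj^2\gsps\|_\sharp^2\le\|\Proj^2\gsp\|_\sharp^2$ via the $\sharp$-orthogonality of $\gsps,\gspl$, and then reads off both $|\muvar_2|\,\|\Proj^0\gsps\|$ and $\muvar_{2,i}$ from the $\la\cdot,\cdot\ra_*$-orthogonal splitting of $\Proj^2\gsps$ (Lemma~\ref{lem:appendix-3}).

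For the last two estimates you are working much harder than necessary. Your pull-through of the eigenvalue equation onto the $0$-photon sector is correct, but turning it into $\|(h_\alpha-\Sigma)\Proj^0\gsps\|=\mathcal{O}(\alpha^3)$ is delicate: for instance $\alpha\,(\Aa)^2\Proj^2\gs$ is only $\mathcal{O}(\alpha^2)$ in norm unless you first observe that the dominant piece $\eta_2\alpha\Sgsa\cEpp$ is mapped to a multiple of $\Sgsa$ and hence killed by $Q^\perp$, and even then the remainders $\Sgsa\Proj^2\Rvar$ and $\Proj^2\rvar$ require further work. The paper bypasses all of this: since $\fFz$ has no $0$-photon component, $\Proj^0\gsps=\Proj^0\gsp=\Proj^0\rvar$, so $\|P\Proj^0\gsps\|=\|(P-P_f)\Proj^0\rvar\|\le\|(P-P_f)\rvar\|=\mathcal{O}(\alpha^2)$ and $\|\Proj^0\gsps\|\le\|\rvar\|=\mathcal{O}(\alpha^{33/32})=\mathcal{O}(\alpha)$, both directly from \eqref{eq:thm:thm-main2-1}.
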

%
%
%
\begin{proof}
To derive these estimates, we use Theorem~\ref{thm:thm-main2}.

The first equality is a consequence of \eqref{eq:thm:thm-main2-2}.

To derive the next two estimates, we first notice that
\eqref{eq:thm:thm-main2-1} yields
\begin{eqnarray*}
  \| P\Proj^2\rvar\|^2 &\leq& 2 \|(P-P_f)\rvar\|^2 + 2 \| P_f \Proj^2\rvar\|^2
  \\
  &\leq& 2 \|(P-P_f)\rvar\|^2 + 2 c \| H_f^\frac12 \Proj^2\rvar\|^2
  =\mathcal{O}(\alpha^4) ,
\end{eqnarray*}
therefore, using again \eqref{eq:thm:thm-main2-1}, we obtain
 $$
  \| (h_\alpha + e_0)^\frac12 \Proj^2\rvar\|^2 \leq \| P\Proj^2\rvar\|^2
  + e_0 \|\Proj^2\rvar\|^2 =\mathcal{O}(\alpha^4) ,
 $$
and thus, using from \eqref{eq:thm:thm-main2-1} that $\|
H_f^\frac12 \Proj^2\rvar\|^2 = \mathcal{O}(\alpha^4)$, we get
\begin{equation}\label{eq:estimate-*-1-1}
  \| \Proj^2\rvar\|_{\sharp}^2 = \|\Proj^2 \gsp\|_{\sharp}^2 = \mathcal{O}(\alpha^4) .
\end{equation}
We then write, using \eqref{eq:estimate-*-1-1} and the $\la \,
\cdot \, , \, \cdot \, \ra_\sharp$-orthogonality of $\Proj^2 \gsps
$ and $\Proj^2 \gspl $,
\begin{equation}
\| \Proj^2 \gsps \|_{\sharp}^2 \leq \| \Proj^2 \gsp\|_{\sharp}^2 =
\mathcal{O}(\alpha^4).
\end{equation}
Since $\| \Proj^2 \gsps \|_{\sharp} \leq \| \Proj^2 \gsps \|_{*}$,
and using \eqref{eq:0-with-Pf} of Lemma~\ref{lem:appendix-3}, we
obtain
\begin{equation}
\begin{split}
 & \mathcal{O}(\alpha^4) = \| \Proj^2 \gsps \|_{*}^2
 = \| \alpha\muvar_2\cEpp \Proj^0 \gsps \|_*^2
 + \|\alpha \sum_i \muvar_{2,i} (H_f+P_f^2)^{-1} W_i \frac{\partial
 \Sgsa }{\partial x_i} \|_*^2 \\
 & = \alpha^2 \| \cEpp  \|_*^2 |\muvar_2|^2\|\Proj^0 \gsps \|^2
 + \frac{\alpha^2}{3} \|\nabla \Sgsa \|^2 \sum_i |\muvar_{2,i}|^2
 \|(H_f + P_f^2)^{-1} W_i\|_*^2 .
\end{split}
\end{equation}
which gives
 $$
  \muvar_{2,i} = \mathcal{O}(1) \quad \mbox{and}\quad
  |\muvar_2|\, \|\Proj^0 \gsps \| = \mathcal{O}(\alpha) .
 $$

The last two estimates are consequences of
\eqref{eq:thm:thm-main2-1}.
\end{proof}


Eventually, to derive the lower bound on the quadratic form of
$\la H (\Sgsa \cU  + \gsp) , \Sgsa \cU  + \gsp\ra$, yielding the
upper bound \eqref{eq:main1-prime} of
Theorem~\ref{thm:thm-upper-bound}, we will follow the same
strategy as in Section~\ref{section-S5}, the only difference being
that now we have better a priori estimates on $\cU $ and $\gsp$.

The two main results to achieve this are stated below, with the
computation of the contribution to the ground state energy of the
cross term (Proposition~\ref{prop:prf-Hgpsi}) and of the direct
term $\langle H \gsp,\, \gsp\rangle$
(Proposition~\ref{prop:prop-hgg}). Equipped with this two
propositions, and using Theorem~\ref{thm:thm-main2} and
Lemma~\ref{lem:improved-estimates}, the proof of
Theorem~\ref{thm:thm-upper-bound} is only a straightforward
computation which is detailed in section~\ref{subsection:6-3}.

\begin{proposition}\label{prop:prf-Hgpsi}
We have
\begin{equation}\label{simple}
\begin{split}
 & 2 \Re \la H\gsp ,\Sgsa  \cU \ra
 \geq
 -\frac13 \alpha^4 \Re
 \sum_{i=1}^3 \overline{\muvar_{2,i}}
 \la (H_f + P_f^2)^{-1} P_f^i \cEpp , W_i\ra \\
 &  - 4\alpha \Re \la \nabla \Sgsa .   P_f \cEpp , \Proj^2
 \gspl \ra
 - 2\Re \overline{\muvar_1} \Proj^0\cU  \|\fFz \|_{\sharp}^2
 \\
 & -\frac23 \alpha^4 \Re
 \sum_{i=1}^3 \la\, (H_f+P_f^2)^{-\frac12} (\Aa)^i
 \cEpp , (H_f+P_f^2)^{-\frac12}
 (\Ac)^i \vac\ra \\
 & -\epsilon \alpha^2 \| (\gsps )^a \|^2 - \frac58 M[\gspl ]
 -\epsilon \alpha^5\log\alpha^{-1} -\epsilon \alpha^5|\muvar_3|^2 - |\muvar_1-1|
 c\alpha^4
 +\mathcal{O}(\alpha^5) .
\end{split}
\end{equation}
\end{proposition}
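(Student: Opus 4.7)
The plan is to expand $2\Re \la H\gsp , \Sgsa \cU \ra$ using the explicit form of $H$ in relative coordinates together with the refined splittings of $\cU$ from Definition~\ref{def:decomposition-2} and $\gsp = \gsps + \gspl$ from Definition~\ref{def:G-improved}, organizing the resulting terms by the photon numbers of the Fock-space components involved. A crucial first reduction is that, since $\la \Sgsa , \Proj^n \gsp\ra_{L^2(\R^3, \d x)} = 0$ for all $n$ by Definition~\ref{def:decomposition-1}, every contribution to $H\Sgsa \cU$ of the form $\Sgsa \cdot (\text{photon operator})\cU$ drops out of the pairing with $\gsp$. Only terms containing $P = i\nabla_x$ survive, giving (as already noted in \eqref{eq:6.3})
\[
2\Re \la H\gsp, \Sgsa \cU \ra \;=\; -4\Re \la P\cdot (P_f + \sqrt{\alpha}A(0))\Sgsa \cU ,\, \gsp\ra .
\]

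The dominant contribution, responsible for $-2\Re\,\overline{\muvar_1}\Proj^0\cU \|\fFz\|_\sharp^2$, comes from pairing the $\sqrt{\alpha}A(0)$ piece acting on the 0-photon component $\Sgsa \Proj^0\cU \vac$ of $\Sgsa \cU$ with the $\muvar_1 \fFz$ part of $\Proj^1\gsp$. Since $P\cdot \Ac \Sgsa \Proj^0\cU \vac$ is proportional to $(H_f + P_f^2 + h_\alpha + e_0)\fFz$ (by the very definition of $\fFz$), and the orthogonality $\la \fFz , \Proj^1\gspl \ra_\sharp = 0$ built into Definition~\ref{def:G-improved} eliminates the $\gspl$-contribution, the pairing collapses to the claimed target term after extracting the appropriate phase.

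The remaining explicit contributions (the $\muvar_{2,i}$ term, the $(\Aa)^i \cEpp$ term at order $\alpha^4$, and the residual non-sign-definite $-4\alpha\,\Re \la \nabla\Sgsa \cdot P_f \cEpp , \Proj^2\gspl \ra$ term) arise from the 2-photon sector. The ansatz for $\Proj^2 \gsps$ in Definition~\ref{def:G-improved} is designed precisely for this: the component $\alpha\muvar_{2,i}(H_f + P_f^2)^{-1} W_i \partial_i \Sgsa$ captures exactly the $\overline{\muvar_{2,i}}$-dependent term through pairing with the $P_f^i \cEpp$ content of $H$ acting on $\Sgsa \eta_2 \alpha \cEpp$, while the $-\tfrac{2}{3}\alpha^4 \Re \sum_i \la (\Aa)^i \cEpp, \ldots\ra$ term originates from the 3-photon channel $\Ac\cdot\Ac$ of the interaction.

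The hard part will be the bookkeeping of all remaining cross terms, which must be bounded below by the stated error quantities. For this I would apply Cauchy--Schwarz inequalities with weights chosen so that one factor absorbs into $\tfrac{5}{8}M[\gspl]$ (which, by Corollary~\ref{cor:cor-4-2}, controls $\|\gspl\|^2$, $\|H_f^{1/2}\gspl\|^2$, $\|(P-P_f)\gspl\|^2$, and $\|\Proj^{n\leq 4}P\gspl\|^2$) or into $\epsilon\alpha^2\|(\gsps)^a\|^2$, while the other factor is estimated by $\mathcal{O}(\alpha^5)$ using the a priori bounds from Theorem~\ref{thm:thm-main2} (in particular $|\muvar_1 - 1|^2 = \mathcal{O}(\alpha)$, $\|\Rvar\|_*^2 = \mathcal{O}(\alpha^4)$) and Lemma~\ref{lem:improved-estimates} ($|\muvar_2|\,\|\Proj^0 \gsps\| = \mathcal{O}(\alpha)$, $\muvar_{2,i} = \mathcal{O}(1)$, $\|P \Proj^0 \gsps\| = \mathcal{O}(\alpha^2)$). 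The logarithmic factor $\epsilon\alpha^5 \log\alpha^{-1}$ emerges from pairings involving the infrared-singular photon form factor $\chi_\Lambda(|k|)/|k|^{1/2}$, where integrals of the type $\int \chi_\Lambda^2(|k|)/|k|^3\,\d k$ must be cut off at the scale $|k|\sim\alpha$, producing a $\log\alpha^{-1}$ contribution that cannot be absorbed into $M[\gspl]$.
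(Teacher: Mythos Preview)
Your overall strategy matches the paper's: reduce via the orthogonality \eqref{eq:orthogo} to $-4\Re\la P\cdot(P_f + \sqrt{\alpha}A(0))\Sgsa\cU, \gsp\ra$, then estimate photon-sector by photon-sector using the refined decompositions and the a priori bounds from Theorem~\ref{thm:thm-main2} and Lemma~\ref{lem:improved-estimates}. The paper organizes this into three separate lemmas handling the $P\cdot P_f$, $P\cdot\Ac$, and $P\cdot\Aa$ contributions (Lemmata~\ref{lem:lem-PPf}, \ref{lem:lem-A(0)}, \ref{lem:lem-A(0)-bis}), which is essentially the plan you describe, and your identification of the $-2\Re\overline{\muvar_1}\Proj^0\cU\|\fFz\|_\sharp^2$ term and of the $\muvar_{2,i}$ and $\Proj^2\gspl$ terms is correct.

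However, you misattribute the origin of the term $-\tfrac{2}{3}\alpha^4\Re\sum_i\la(H_f+P_f^2)^{-1/2}(\Aa)^i\cEpp, (H_f+P_f^2)^{-1/2}(\Ac)^i\vac\ra$. There is no ``3-photon channel $\Ac\cdot\Ac$'' in the cross term at all: the orthogonality \eqref{eq:orthogo} kills every piece of $H$ that does not carry a factor $P=i\nabla_x$, so in particular the $\alpha:A(0)^2:$ part of $H$ contributes nothing to $2\Re\la H\gsp,\Sgsa\cU\ra$. The term in question comes instead from $-4\sqrt{\alpha}\Re\la P\cdot\Aa\Sgsa\cU, \gsp\ra$: one applies $\Aa$ to the 2-photon component $\eta_2\alpha\cEpp$ of $\cU$ and pairs the result with the 1-photon component $\muvar_1\fFz$ of $\gsps$, then replaces $\fFz$ by $\fF$ at the cost of $\mathcal{O}(\alpha^5)$ (see \eqref{eq:cross-lem2-7}--\eqref{eq:cross-lem2-8}). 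This is a $2\to1$ transition via $P\cdot\Aa$, not anything involving $\Ac\cdot\Ac$. Getting this sourcing right matters, because both the $\fFz\to\fF$ replacement and the extraction of the $|\muvar_1-1|c\alpha^4$ error (which you list in the target inequality but do not explain) hinge on exactly this computation.
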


The proof of this Proposition is detailed in
Appendix~\ref{subsection:6-1}

\begin{proposition}\label{prop:prop-hgg}
\begin{equation}\label{eq:prop-hgg-main}
\begin{split}
 & \la H\gsp , \gsp\ra \geq (\Sigma_0 - e_0) \|\gsp\| ^2
 - 4\alpha \|(h_\alpha +e_0)^{-\frac12} Q_\alpha^\perp P \Aa \fF\|^2
 \\
 & + |\muvar_1|^2 \|\fFz \|_{\sharp}^2 + 2\alpha \|\Aa \fF\|^2
 + \frac{\alpha^4}{12} \sum_{i=1}^3 |\muvar_{2,i}|^2 \| (H_f+P_f^2)^{-1}
 W_i \|_{*}^2 \\
 & + \frac23 \alpha^4  \Re \sum_{i=1}^3
  \muvar_{2,i} \la P_f.\Aa
  (H_f+P_f^2)^{-1} W_i, (H_f+P_f^2)^{-1} (\Ac)^i\vac \ra\\
 & + 4 \alpha^\frac12\Re \la \Proj^2 \gspl , \Ac. P_f \fF\ra
 + M_1[\gsp]  ,
\end{split}
\end{equation}
where
\begin{equation}\label{simple2}
\begin{split}
 M_1[\gsp] =
 & (1\!-\!c_0\alpha) \| (h_\alpha +e_0)^\frac12 \Proj^0 (\gsps )^a\|^2
 + \frac{|\muvar_3+1|^2}{2} \alpha^2 \|\cEpp \|_*^2 \|\fFz \|^2 \\
 & - |\muvar_1-1| c\alpha^4 +\frac34 M[\gspl ]
 + o(\alpha^5\log\alpha^{-1})\ ,
\end{split}
\end{equation}
and $Q_\alpha^\perp$ is the projection onto the orthogonal
complement to the ground state $\Sgsa $ of the Schr\"odinger
operator $h_\alpha=-\Delta -\frac{\alpha}{|x|}$
\end{proposition}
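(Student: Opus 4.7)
The plan is to split
\begin{equation*}
\la H\gsp,\gsp\ra \;=\; \la H\gsps,\gsps\ra + \la H\gspl,\gspl\ra + 2\,\Re\la H\gsps,\gspl\ra
\end{equation*}
according to Definition~\ref{def:G-improved}, compute the two diagonal parts sector by sector, and kill most of the cross term by the $\sharp$-orthogonality built into that decomposition.

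For the remainder piece $\la H\gspl,\gspl\ra$ I would first verify that every block of $\gsps$ -- namely $\muvar_1\fFz$, $\muvar_2\alpha\cEpp\Proj^0\gsps$, $\muvar_{2,i}\alpha(H_f+P_f^2)^{-1}W_i\,\partial_i\Sgsa$ and $\muvar_3\alpha(H_f+P_f^2)^{-1}\Ac\cdot\Ac\,\fFz$ -- is $L^2(\R^3,\d x)$-orthogonal to $\Sgsa$ in every photon sector. This uses only the identities $\int\Sgsa\,\nabla\Sgsa\,\d x=0$ and $h_\alpha\Sgsa=-e_0\Sgsa$, together with the fact that $(H_f+P_f^2)^{-1}$, $\Ac$ and $\cEpp$ act only on the photon variables. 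Since $\gsp$ itself satisfies the orthogonality hypothesis of Theorem~\ref{lem:4-1} by Definition~\ref{def:decomposition-1}, so does $\gspl=\gsp-\gsps$, and Corollary~\ref{cor:cor-4-2} delivers $\la H\gspl,\gspl\ra\geq(\Sigma_0-e_0)\|\gspl\|^2+M[\gspl]$; I would reserve $\tfrac{3}{4}M[\gspl]$ for $M_1[\gsp]$ and keep $\tfrac{1}{4}M[\gspl]$ available for absorbing cross terms.

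Next I would evaluate $\la H\gsps,\gsps\ra$ sector by sector using
\begin{equation*}
H = h_\alpha + (H_f+P_f^2) - 2\,\Re P\cdot P_f - 2\sqrt\alpha\,(P-P_f)\cdot A(0) + 2\alpha\,\Ac\cdot\Aa + 2\alpha\,\Re(\Aa)^2 .
\end{equation*}
The free-kinetic action on $\muvar_1\fFz$ produces $|\muvar_1|^2\|\fFz\|_\sharp^2$; the $(H_f+P_f^2)^{-1}W_i\,\partial_i\Sgsa$ piece of $\Proj^2\gsps$ gives, after the spherical factor $\tfrac{1}{3}$ from the rotational symmetry of $\Sgsa$, the term $\tfrac{\alpha^4}{12}\sum_i|\muvar_{2,i}|^2\|(H_f+P_f^2)^{-1}W_i\|_*^2$; the $2\sqrt\alpha\,P_f\cdot A(0)$ coupling between $\Proj^2\gsps$ and $\Proj^3\gsps$ supplies the explicit $\muvar_{2,i}$ cross term of~\eqref{eq:prop-hgg-main}; the diagonal action of $2\alpha\,\Ac\cdot\Aa$ on $\muvar_1\fFz$ yields $2\alpha\|\Aa\fF\|^2$ (up to higher-order corrections from $\muvar_1\to 1$ and $\fFz\to\fF$); and the off-diagonal action of $2\alpha\,\Ac\cdot\Aa$ between $\muvar_1\fFz$ and $\muvar_3\alpha(H_f+P_f^2)^{-1}\Ac\cdot\Ac\,\fFz$, together with the kinetic diagonal in the 3-photon sector, produces, after completion of the square in $\muvar_3$, the $\tfrac{|\muvar_3+1|^2}{2}\alpha^2\|\cEpp\|_*^2\|\fFz\|^2$ piece of $M_1[\gsp]$. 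Finally, the coupling of $\Proj^0\gsps$ to $\muvar_1\fFz$ via $-2\sqrt\alpha\,P\cdot\Ac$ is treated by writing $\Proj^0\gsps = \Proj^0(\gsps)^a - 2\sqrt\alpha\,(h_\alpha+e_0)^{-1}Q_\alpha^\perp P\cdot\Aa\fF$ and completing the square in $\Proj^0(\gsps)^a$: this simultaneously produces the $(1-c_0\alpha)\|(h_\alpha+e_0)^{1/2}\Proj^0(\gsps)^a\|^2$ term of $M_1$ and the negative contribution $-4\alpha\|(h_\alpha+e_0)^{-1/2}Q_\alpha^\perp P\Aa\fF\|^2$ in~\eqref{eq:prop-hgg-main}, with the $-|\muvar_1-1|c\alpha^4$ error absorbing the replacements $\muvar_1\mapsto 1$ and $\fFz\mapsto\fF$ where needed.

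For the cross term $2\Re\la H\gsps,\gspl\ra$ the $\sharp$-orthogonality relations built into Definition~\ref{def:G-improved} annihilate the $(H_f+P_f^2+h_\alpha+e_0)$-type intra-sector couplings for $n=1,2,3$; what survives is purely inter-sector, and the dominant surviving contribution is the coupling of $\muvar_1\fFz$ to $\Proj^2\gspl$ through the $(P-P_f)\cdot\Ac$ interaction, which, after using the orthogonality of $\Proj^2\gspl$ to the $W_i\,\partial_i\Sgsa$ and $\cEpp\Proj^0\gsps$ components together with $\la\Sgsa,\nabla\Sgsa\ra=0$, reduces to $4\alpha^{1/2}\Re\la\Proj^2\gspl,\Ac\cdot P_f\fF\ra$ up to an error absorbed in $-|\muvar_1-1|c\alpha^4$. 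Every remaining cross term is absorbed into $\epsilon\alpha^2\|(\gsps)^a\|^2+\tfrac{1}{4}M[\gspl]$ by Cauchy-Schwarz, using the a priori bounds of Theorem~\ref{thm:thm-main2} and Lemma~\ref{lem:improved-estimates}. The hard part, and the source of the residual $o(\alpha^5\log\alpha^{-1})$ inside $M_1[\gsp]$, is the fine bookkeeping: ensuring that each Cauchy-Schwarz remainder, each $|\muvar_j-1|$ mismatch, and each high-photon-number tail falls below the $\alpha^5\log\alpha^{-1}$ threshold. This is where the improved expected photon-number bound $\la\gs,N_f\gs\ra=\mathcal{O}(\alpha^2)$, obtained by feeding Theorem~\ref{thm:thm-main2} back into Lemma~\ref{Nf-exp-lemma-1}, becomes essential.
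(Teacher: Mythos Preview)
Your global strategy coincides with the paper's: split $\gsp=\gsps+\gspl$, apply Corollary~\ref{cor:cor-4-2} to $\gspl$, expand $\la H\gsps,\gsps\ra$ sector by sector, and exploit the $\sharp$-orthogonality in Definition~\ref{def:G-improved} for the cross terms. But three of your mechanism identifications are wrong, and one of them would actually block the argument.

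First, two photon-number bookkeeping errors. The operator $2\alpha\,\Ac\cdot\Aa$ preserves photon number, so it cannot couple $\Proj^1\gsps=\muvar_1\fFz$ to $\Proj^3\gsps$; the $\tfrac{|\muvar_3+1|^2}{2}\alpha^2\|\cEpp\|_*^2\|\fFz\|^2$ term comes instead from $2\alpha\,\Re(\Aa)^2$, via $\|\Proj^3\gsps\|_\sharp^2+2\alpha\,\Re\la\Aa\cdot\Aa\,\Proj^3\gsps,\Proj^1\gsps\ra$ and completion of the square in $\muvar_3$ (Lemma~\ref{lem:g1g1-3}). Similarly, the explicit $\muvar_{2,i}$ cross term in \eqref{eq:prop-hgg-main} is produced by $4\alpha^{1/2}\Re\la P_f\cdot\Aa\,\Proj^2\gsps,\muvar_1\fFz\ra$, i.e.\ the $P_f\cdot A(0)$ coupling between $\Proj^2\gsps$ and $\Proj^1\gsps$, not between $\Proj^2\gsps$ and $\Proj^3\gsps$ (Lemma~\ref{lem:g1g1-2}).

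The serious gap is your treatment of the zero-photon sector. By Lemma~\ref{appendix:lem-A0-ter} one has $P\cdot\Aa\fF=\sqrt\alpha\,a_0\,\Delta\Sgsa$, which is \emph{even} in $x$. Hence $-4\alpha^{1/2}\Re\la\Aa\fFz,\,P\,\Proj^0\gsps\ra$ pairs only with the even part $(\gsps)^s$, and the completion of the square (eq.~\eqref{eq:n-hgg-18-a}) must be carried out in $(\gsps)^s$: this is what consumes $(1-c_0\alpha)\|(h_\alpha+e_0)^{1/2}(\gsps)^s\|^2$ and produces the negative $-4\alpha\|(h_\alpha+e_0)^{-1/2}Q_\alpha^\perp P\Aa\fF\|^2$. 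The term $(1-c_0\alpha)\|(h_\alpha+e_0)^{1/2}\Proj^0(\gsps)^a\|^2$ survives in $M_1$ precisely because the \emph{odd} part is not coupled to anything here. Your proposed identity ``$\Proj^0\gsps=\Proj^0(\gsps)^a-2\sqrt\alpha\,(h_\alpha+e_0)^{-1}Q_\alpha^\perp P\cdot\Aa\fF$'' is not a decomposition of $\Proj^0\gsps$ (which is already fixed by the ground state), drops the even remainder, and a completion of the square in $(\gsps)^a$ cannot simultaneously yield both the positive quadratic term and the negative minimum --- you get one or the other, not both.
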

%
%
%
The proof of this Proposition is detailed in
Appendix~\ref{prf-hgg}.

\section{Proof of Theorem \ref{thm:thm-main2}}\label{S7}
%

We prove Theorem \ref{thm:thm-main2} by bounding the individual
terms in the expression for the binding energy.
%
\subsection{Estimate of the term $\langle H \Sgsa  \cU ,
\Sgsa  \cU \rangle$}\label{S5.1}
%
%
%
%

\begin{lemma}\label{lem:lem-direct-fpsi}
\begin{equation}\nonumber
 \begin{split}
   \langle H \Sgsa  \cU , \Sgsa  \cU \rangle \geq & -e_0
   \|\cU \|^2 - \alpha^2 \| \cEpp \|_*^2 \|\Proj^0 \cU \|^2 + \alpha^2
   |\eta_2 - \Proj^0 \cU |^2 \|\cEpp \|_*^2 \\
   & + \alpha^3 |\eta_2|^2 \left(2 \| \Aa\cEpp \|^2 - 4 \|\cEm \|_*^2
   - 4\|\cEp \|_*^2\right)\\
   & + 4 \alpha^3 \left( |\eta_1-\eta_2|^2 \|\cEm \|_*^2
   + |\eta_3-\eta_2|^2 \|\cEp \|_*^2\right) \\
   & + c \alpha^{4}\log\alpha^{-1} \left(|\eta_1|^2 + |\eta_2|^2 + |\eta_3|^2
   + \| \Proj^0 \cU \|^2\right) +\frac12 \|\Rvar\|_*^2 .
 \end{split}
\end{equation}
\end{lemma}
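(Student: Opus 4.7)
The plan is to reduce the estimate to a lower bound on $\la T(0)\cU,\cU\ra$ and then perform an explicit completion-of-squares using the variational structure of $\cEpp,\cEm,\cEp$ from \cite{BCVVi}. Writing $H = h_\alpha + T(0) - 2\Re(i\nabla_x)\cdot(P_f + \sqrt{\alpha}A(0))$, where $h_\alpha = -\Delta - \alpha/|x|$ acts only on the electron and $T(0)$ is the self-energy operator at zero total momentum from \eqref{def:T(P)}, one exploits that $\Sgsa$ is real and normalized with $h_\alpha\Sgsa = -e_0\Sgsa$, and that $\cU\in\gF$ is $x$-independent: the identity $\int_{\R^3}\Sgsa\nabla\Sgsa\,\d x = 0$ kills the cross term, which gives
\[
\la H\Sgsa\cU,\Sgsa\cU\ra = -e_0\|\cU\|^2 + \la T(0)\cU,\cU\ra.
\]
This factorization extracts the first term $-e_0\|\cU\|^2$ of the lemma and reduces the problem to a purely Fock-space estimate on $\la T(0)\cU,\cU\ra$.

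Substituting the decomposition of $\cU$ from Definition~\ref{def:decomposition-2} and expanding $\la T(0)\cU,\cU\ra$ sector by sector, the crucial algebraic inputs are the defining relations $(H_f+P_f^2)\cEpp = -\Ac\cdot\Ac\vac$, $(H_f+P_f^2)\cEm = -P_f\cdot\Aa\cEpp$, $(H_f+P_f^2)\cEp = -P_f\cdot\Ac\cEpp$, which are precisely the first-order variational equations causing the off-diagonal $T(0)$-inner products to telescope into squared $\la\cdot,\cdot\ra_*$-norms. Concretely, the 0--2 cross term generated by $2\alpha\Re(\Aa)^2$ between $\Proj^0\cU\,\vac$ and $\eta_2\alpha\cEpp$ combines with the 2--2 diagonal $\alpha^2|\eta_2|^2\|\cEpp\|_*^2$ to give, after completing the square, the block $-\alpha^2\|\cEpp\|_*^2|\Proj^0\cU|^2 + \alpha^2|\eta_2-\Proj^0\cU|^2\|\cEpp\|_*^2$. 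At the next order, the 2--1 and 2--3 cross terms between $\eta_2\alpha\cEpp$ and $2\eta_1\alpha^{3/2}\cEm$, $2\eta_3\alpha^{3/2}\cEp$ produced by the $2\sqrt{\alpha}\,P_f\cdot A(0)$ piece of $T(0)$, together with the 2--2 contribution $2\alpha\la\Ac\cdot\Aa\cEpp,\cEpp\ra = 2\alpha\|\Aa\cEpp\|^2$, assemble into the $\alpha^3|\eta_2|^2(2\|\Aa\cEpp\|^2 - 4\|\cEm\|_*^2 - 4\|\cEp\|_*^2)$ and $4\alpha^3(|\eta_1-\eta_2|^2\|\cEm\|_*^2 + |\eta_3-\eta_2|^2\|\cEp\|_*^2)$ blocks, reproducing the variational structure of Theorem~3.1 of \cite{BCVVi}.

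The residual contribution is handled via the orthogonality conditions $\Proj^0\Rvar = 0$ and $\la\cEi,\Proj^i\Rvar\ra_*=0$ ($i=1,2,3$), which suppress the leading $\Rvar$-cross terms with the other four pieces of $\cU$, combined with a coercivity estimate $T(0)\geq\tfrac12(H_f+P_f^2) - c\alpha$ obtained by applying the inequalities \eqref{eq:iii}--\eqref{gll2} directly to the normal-ordered form of $T(0)$; this produces the final $\tfrac12\|\Rvar\|_*^2$ piece. The main obstacle will be the bookkeeping of the subleading $O(\alpha^{5/2})$ and $O(\alpha^{7/2})$ cross terms -- for example $\Proj^0\cU\times 2\eta_1\alpha^{3/2}\cEm$ and $\Proj^0\cU\times 2\eta_3\alpha^{3/2}\cEp$ arising from $2\alpha\Ac\cdot\Aa$, and the couplings of $\Rvar$ to each of the other four pieces -- all of which need to be absorbed by Cauchy--Schwarz into the diagonal squares just constructed. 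The logarithm in the error $c\alpha^4\log\alpha^{-1}(|\eta_1|^2+|\eta_2|^2+|\eta_3|^2+|\Proj^0\cU|^2)$ appears when estimating quantities of the form $\|\Aa\cEm\|^2$ and $\|\Aa\cEp\|^2$ (arising in the cross terms involving the $2\alpha\Ac\cdot\Aa$ interaction between different $\cEi$): their integrands behave like $\chi_\Lambda(|k|)^2/|k|$ near $k=0$ and therefore yield $\log\alpha^{-1}$ once truncated at the $\alpha$-scale inherited from the a priori photon-number bound in Lemma~\ref{Nf-exp-lemma-1}.
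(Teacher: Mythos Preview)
Your approach matches the paper's exactly: the paper reduces to the Fock-space estimate $\la T(0)\cU,\cU\ra$ by the vanishing of $\la P\cdot(P_f+\sqrt\alpha A(0))\Sgsa\cU,\Sgsa\cU\ra$ (symmetry of $\Sgsa$), and then invokes \cite[Theorem~3.1]{BCVVi} for precisely the completion-of-squares structure you outline, with the $\la\cdot,\cdot\ra_*$-orthogonality of $\Rvar$ controlling the residual.

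One correction on the source of the $\log\alpha^{-1}$: it does not come from infrared behavior of $\|\Aa\cEm\|^2$ or $\|\Aa\cEp\|^2$ --- these are fixed $\alpha$-independent constants. The logarithm enters because the subleading $\Rvar$-cross terms are bounded via Cauchy--Schwarz by quantities involving $\|\Rvar\|^2$, and the only a priori control on $\|\Rvar\|^2$ at this stage is the photon-number bound of Lemma~\ref{Nf-exp-lemma-1}, which gives $\|\Rvar\|^2 \leq \la N_f\gs,\gs\ra + c\alpha^2(|\eta_1|^2+|\eta_2|^2+|\eta_3|^2) = \mathcal{O}(\alpha^2\log\alpha^{-1})$. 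The paper states this explicitly: the $\mathcal{O}(\alpha^4)$ error in \cite{BCVVi} becomes $\mathcal{O}(\alpha^4\log\alpha^{-1})$ here solely because the photon-number bound is weaker by a logarithm; once Lemma~\ref{qmqm} sharpens $\|\Rvar\|^2$ to $\mathcal{O}(\alpha^{33/16})$ the logarithm disappears (Step~2 of \S\ref{S5.4}).
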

%
%
%
\begin{proof}
The proof is a trivial modification of the one given for the lower
bound in \cite[Theorem~3.1]{BCVVi}. The only modification is that
we have a slightly weaker estimate in Lemma~\ref{Nf-exp-lemma-1}
on the photon number for the ground state. This is accounted for
by replacing the term of order $\alpha^4$ in
\cite[Theorem~3.1]{BCVVi} by a term of order
$\alpha^{4}\log\alpha^{-1}$. In addition, we need to use the
equality $\langle P.(P_f +\sqrt{\alpha} A(0))\Sgsa  \cU , \Sgsa
\cU \rangle =0$, due to the symmetry of $\Sgsa $.
\end{proof}

\subsection{Estimates for the cross term $- 4\Re \langle P.(P_f +
\sqrt{\alpha} A(0)) \Sgsa  \cU , \gsp\rangle$}\label{S5.2}
%
%
%
\begin{lemma}[$- 4\Re \langle P.P_f \Sgsa  \cU ,\gsp\rangle$ term]\label{lem:lem-cross-1}
\begin{equation}\nonumber
\begin{split}
  & - 4 \Re \langle P.P_f \Sgsa  \cU , \gsp\rangle \geq  - c\alpha^4
  (|\eta_1|^2+ |\eta_2|^2 +|\eta_3|^2 )  -
  \epsilon \|H_f^{\frac12}\rvar\|^2 -c  \alpha \| H_f^{\frac12}\Rvar\|^2.
\end{split}
\end{equation}
\end{lemma}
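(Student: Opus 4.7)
The plan is to split $-4\Re\la P\cdot P_f\Sgsa\cU,\gsp\ra$ into photon-sector contributions using Definitions~\ref{def:decomposition-2} and~\ref{def:decomposition-3}. Since $P_f$ preserves the photon number and annihilates $\vac$, the zero-photon part $\Proj^0\cU$ drops out; since $\fFz$ lives in the one-photon sector, it only pairs with the one-photon component of $\Sgsa\cU$. What remains is a finite list of inner products: the purely-explicit pairing $\la P\cdot P_f\Sgsa\cEm,\muvar_1\fFz\ra$, the three pairings $\la P\cdot P_f\Sgsa\cEm,\Proj^1\rvar\ra$, $\la P\cdot P_f\Sgsa\cEpp,\Proj^2\rvar\ra$, $\la P\cdot P_f\Sgsa\cEp,\Proj^3\rvar\ra$ (with respective prefactors $2\eta_1\alpha^{3/2}$, $\eta_2\alpha$, $2\eta_3\alpha^{3/2}$), plus the analogous pieces in which the explicit trial vector is replaced by $\Proj^n\Rvar$.

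On each such piece I would apply the tensor-product factorization $\|(P\cdot P_f)\Sgsa\phi\|\leq\|P\Sgsa\|\,\|P_f\phi\|$, combined with the virial identity $\|P\Sgsa\|=\alpha/2$. To arrive at the $\|H_f^{1/2}\cdot\|$ bounds of the statement (rather than the $\|P_f\cdot\|$ that appears naturally), I would insert $H_f^{1/2}H_f^{-1/2}$ on the side containing no vacuum component, and use the sector-wise inequality $|P_f|\leq\sqrt{n\Lambda}\,H_f^{1/2}$ on $n$-photon states, immediate from the ultraviolet cutoff of Remark~\ref{rem:A}. The explicit vectors $\cEm,\cEpp,\cEp$ have $\alpha$-independent $P_f$-norms and $\fFz$ has norm of order $\alpha^{3/2}$, so the purely-explicit pairing $\la P\cdot P_f\Sgsa\cEm,\muvar_1\fFz\ra$ contributes $O(|\eta_1||\muvar_1|\alpha^4)$, absorbed into $c\alpha^4(|\eta_1|^2+|\eta_2|^2+|\eta_3|^2)$.

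For pairings against $\rvar$, a Peter-Paul inequality with small parameter $\epsilon$ yields estimates of the form
\[
  -4\Re\la P\cdot P_f(\eta_2\alpha\Sgsa\cEpp),\Proj^2\rvar\ra
  \;\geq\; -c_\epsilon\alpha^4|\eta_2|^2 \;-\; \epsilon\|H_f^{1/2}\rvar\|^2,
\]
and analogously for the $\cEm,\cEp$ pieces; the $\alpha^4$ rate on the $\eta_i^2$ side arises from multiplying the explicit prefactor ($\alpha$ or $\alpha^{3/2}$) by the extra $\alpha$ coming from $\|P\Sgsa\|$. For the pairings in which the explicit part is replaced by $\Proj^n\Rvar$, there is no small prefactor, so the Peter-Paul weight is chosen of order $\sqrt\alpha$: the factor $\|P\Sgsa\|^2=O(\alpha^2)$ combines with this weight to produce exactly the $c\alpha\|H_f^{1/2}\Rvar\|^2$ term, while the dual side is absorbed into the $\eta_i$ or $\rvar$ contributions.

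The main technical obstacle will be the uniform conversion $\|P_f\cdot\|\leadsto\|H_f^{1/2}\cdot\|$ on multi-photon sectors: Remark~\ref{rem:A} is essential, since without the ultraviolet cutoff one cannot dominate $P_f^2$ by a manageable multiple of $H_f$, and on the support of $\Rvar$ the sector-wise loss of $\sqrt n$ must be controlled using the a priori $\|\Rvar\|_*$-bound from Theorem~\ref{thm:thm-main2}. Beyond this conversion, the proof is a systematic bookkeeping of Cauchy-Schwarz and Peter-Paul inequalities over the finite list of cross terms enumerated in the first paragraph.
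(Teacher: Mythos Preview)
Your plan has two genuine gaps.

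\textbf{The one-photon pairing with $\muvar_1\fFz$.} You claim that the explicit pairing $\la P\cdot P_f(2\eta_1\alpha^{3/2}\Sgsa\cEm),\muvar_1\fFz\ra$ is $O(|\eta_1||\muvar_1|\alpha^4)$ and is then ``absorbed into $c\alpha^4(|\eta_1|^2+|\eta_2|^2+|\eta_3|^2)$''. This absorption is impossible: the stated bound contains no $|\muvar_1|^2$ term, and $|\eta_1||\muvar_1|$ cannot be dominated by $|\eta_1|^2$ alone. (Moreover $\|\fFz\|^2=O(\alpha^3\log\alpha^{-1})$, not $O(\alpha^3)$, so your Cauchy--Schwarz bound is off by a logarithm.) The paper does not estimate this term --- it kills it. One replaces $\fFz$ by $\fF$ at cost $\|\fFz-\fF\|_*^2=O(\alpha^5)$ (Lemma~\ref{appendix:lem-A1}) and then uses that for \emph{any} one-photon $\varphi$,
\[
  \la P\cdot P_f\,\Sgsa\varphi,\ \fF\ra
  \ =\ c\,\|\partial_1\Sgsa\|^2\,\la\varphi,\ (H_f+P_f^2)^{-1}\,P_f\cdot\Ac\vac\ra\ =\ 0,
\]
since the orthogonality and equal norms of the $\partial_i\Sgsa$ reduce the sum over $i$ to a scalar multiple of $P_f\cdot\Ac\vac$, which vanishes by the Coulomb-gauge transversality $k\cdot\epsilon_\lambda(k)=0$. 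This structural cancellation is the missing idea; without it the $\Proj^1\Rvar$-vs-$\muvar_1\fFz$ cross term is equally problematic.

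\textbf{The $\sqrt n$ obstacle is a phantom, and your fix is circular.} The conversion $P_f\leadsto H_f^{1/2}$ needs no sector-dependent constant. From $|P_f|=|\sum_j k_j|\leq\sum_j|k_j|=H_f$ one gets $P_f\cdot H_f^{-1}P_f=H_f^{-1}|P_f|^2\leq H_f$, uniformly in photon number and without invoking the cutoff. Hence
\[
  |\la P\cdot P_f\,\Sgsa\Proj^n\Rvar,\ \Proj^n\rvar\ra|
  \ \leq\ c\,\|\nabla\Sgsa\|\,\|H_f^{1/2}\Proj^n\Rvar\|\,\|H_f^{1/2}\Proj^n\rvar\|
\]
directly, summable over $n$ to $\epsilon\|H_f^{1/2}\rvar\|^2+c\alpha\|H_f^{1/2}\Rvar\|^2$. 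No a~priori control of $\|\Rvar\|_*$ is needed --- and invoking Theorem~\ref{thm:thm-main2} here would be circular, since this lemma is used in its proof.
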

%
%
%
\begin{proof}
\noindent $\bullet$ For $n=1$ photon,
\begin{equation}
\begin{split}
 \la P.P_f \Proj^1 \Sgsa  \cU , \gsp\ra
 = \langle P.P_f (\eta_1 \alpha^{\frac32}\cEm  + \Proj^1\Rvar) \Sgsa ,
 \muvar_1  \fFz  + \Proj^1\rvar\rangle .
\end{split}
\end{equation}
Obviously
\begin{equation}
\begin{split}
 |\la P.P_f (\eta_1 \alpha^\frac32 \cEm  + \Proj^1\Rvar)\Sgsa  , \Proj^1\rvar\ra |
 \leq
 \epsilon \|H_f^{\frac12} \Proj^1\rvar\|^2 +
 c\alpha^5 |\eta_1|^2 +  c \alpha^2
 \|H_f^{\frac12} \Proj^1\Rvar\|^2 .
\end{split}
\end{equation}
Due to Lemma~\ref{appendix:lem-A1} holds $\| H_f^\frac12 (\fFz
-\fF)\|^2 = \mathcal{O}(\alpha^5)$, which implies
\begin{equation}\label{eq:c1}
\begin{split}
  & | \la  P.P_f (\eta_1\alpha^{\frac32} \cEm  + \Proj^1\Rvar) \Sgsa ,
  \muvar_1 \fFz  \ra |
  \leq |\muvar_1|^2 c\alpha^5 + c |\eta_1|^2 \alpha^5
  + c \alpha^2 \| H_f^\frac12 \Proj^1\Rvar\|^2 \\
  & + |\la  P.P_f (\eta_1\alpha^{\frac32} \cEm  + \Proj^1\Rvar) \Sgsa ,
  \muvar_1 \fF \ra | .
\end{split}
\end{equation}
For the last term on the right hand side of \eqref{eq:c1}, due to
the orthogonality of $\frac{\partial \Sgsa }{\partial x_i}$ and
$\frac{\partial \Sgsa }{\partial x_j}$, $i\neq j$, and the
equality $\|\frac{\partial \Sgsa }{\partial x_i}\| =
\|\frac{\partial \Sgsa }{\partial x_j} \|$, holds
\begin{equation}
\begin{split}
 & |\la  P.P_f (\eta_1\alpha^{\frac32} \cEm  + \Proj^1\Rvar) \Sgsa ,
  \muvar_1 \fF \ra | = \sum_{i=1}^3 \|\frac{\partial \Sgsa }{\partial
  x_i}\|^2
  |\la \eta_1 \alpha^\frac32 \cEm  + \Proj^1\Rvar , \muvar_1 P_f^i (\Ac
  \vac)^i \ra | \\
  & = c | \la \eta_1 \alpha^\frac32 \cEm  + \Proj^1\Rvar,\muvar_1 \Ac .
  P_f \vac\ra| = 0 .
\end{split}
\end{equation}

\noindent$\bullet$ For $n=2$ photons,
\begin{equation}\nonumber
\begin{split}
 & |\langle P.P_f \Proj^2 \Sgsa  \cU , \gsp\rangle| =
 |\langle P.P_f (\eta_2 \alpha\cEpp  + \Proj^2\Rvar) \Sgsa ,
 \Proj^2\rvar\rangle | \\
 & \leq c \alpha^4 |\eta_2|^2 +  \epsilon \|H_f^{\frac12} \Proj^2\rvar\|^2 +
  \alpha \|H_f^{\frac12} \Proj^2\Rvar\|^2  .
\end{split}
\end{equation}
\noindent$\bullet$ For $n=3$ photons, a similar estimate yields
\begin{equation}\nonumber
\begin{split}
 & |\langle P.P_f \Proj^3 \Sgsa  \cU , \gsp\rangle| =
 |\langle P.P_f (\eta_3 \alpha^{\frac32}\cEp  + \Proj^3\Rvar) \Sgsa ,
 \Proj^3\rvar\rangle | \\
 & \leq c \alpha^4 |\eta_3|^2 +  \epsilon \|H_f^{\frac12} \Proj^3\rvar\|^2 +
  \alpha \|H_f^{\frac12} \Proj^3\Rvar\|^2  .
\end{split}
\end{equation}
\noindent$\bullet$ For $n\geq 4$ photons,
\begin{equation}\nonumber
\begin{split}
 & |\langle P.P_f \Proj^{n\geq 4} \Sgsa  \cU , \gsp\rangle| \leq
 \epsilon \|H_f^{\frac12} \Proj^{n\geq 4}\rvar\|^2 +
  \alpha \|H_f^{\frac12} \Proj^{n\geq 4}\Rvar\|^2  .
\end{split}
\end{equation}
\end{proof}

\begin{lemma}[$-4 Re \langle \sqrt{\alpha} P. A(0) \Sgsa \cU ,\gsp\rangle$ term]
\label{lem:lem-cross-2}
\begin{equation}\nonumber
\begin{split}
 & -4 \Re \langle \sqrt{\alpha} P. A(0)) \Sgsa  \cU ,
 \gsp\rangle \geq -2 \Re \overline{\muvar_1} \Proj^0 \cU  \|\fFz \|_{\sharp}^2
 - \epsilon\|H_f^{\frac12}\rvar  \|^2 - \epsilon \alpha^2 \|\rvar\|^2 \\
 & -c\alpha \|H_f^{\frac12}\Rvar\|^2
 - c\alpha^4( |\eta_1|^2 + |\eta_2|^2 + |\eta_3|^2+
 |\muvar_1|^2)  + \mathcal{O}(\alpha^5\log\alpha^{-1}).
\end{split}
\end{equation}
\end{lemma}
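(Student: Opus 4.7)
My plan is to expand $A(0)=\Aa+\Ac$, insert the decompositions from Definitions~\ref{def:decomposition-1}--\ref{def:decomposition-3}, and classify the resulting contributions to the inner product by the photon-sector pairing. The creation part $\sqrt{\alpha}P\cdot\Ac$ pairs the $n$-photon sector of $\Sgsa\cU$ with the $(n+1)$-photon sector of $\gsp$; the annihilation part $\sqrt{\alpha}P\cdot\Aa$ pairs the $n$-photon sector with the $(n-1)$-photon sector and vanishes on the vacuum. This produces a finite collection of explicitly matched inner products together with remainder pairings against $\Rvar$ and $\rvar$, for which I would apply different estimates.

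The key step is to extract the leading contribution $-2\Re\overline{\muvar_1}\Proj^0\cU\,\|\fFz\|_\sharp^2$ from the pairing $-4\Re\la\sqrt{\alpha}\,P\cdot\Ac\,\Sgsa\,\Proj^0\cU,\muvar_1\fFz\ra$. Using the defining identity $(H_f+P_f^2+h_\alpha+e_0)\fFz=2\alpha^{1/2}\Ac\cdot\nabla\Sgsa\,\vac$, the vector $P\cdot\Ac(\Sgsa\,\vac)$ is, up to the phase and Hermitian-conjugation structure inherited from the Pauli--Fierz symmetrization, the source that generates $\fFz$, so the inner product collapses into a multiple of $\|\fFz\|_\sharp^2=\la\fFz,(H_f+P_f^2+h_\alpha+e_0)\fFz\ra$. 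The remaining matched pairings involve the $\cEi$-components weighted by the small coefficients $2\eta_i\alpha^{3/2}$ or $\eta_2\alpha$, and each is bounded by $\mathcal{O}(\alpha^4)$ using Cauchy--Schwarz together with the standard estimate $\|\Ac v\|^2\leq c\|H_f^{1/2}v\|^2+c\|v\|^2$; they collect into the $-c\alpha^4(|\eta_1|^2+|\eta_2|^2+|\eta_3|^2+|\muvar_1|^2)$ term.

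For the remainder pairings with $\rvar$ and $\Rvar$, I would apply sectorwise Cauchy--Schwarz followed by a Young-type splitting. The bounds $\|\Ac v\|^2\leq c\|H_f^{1/2}v\|^2+c\|v\|^2$ and, on sectors with the UV cutoff enforced by Remark~\ref{rem:A}, $\|P_f\Proj^{n\leq N}v\|\leq c\|H_f^{1/2}v\|$, produce the $-\epsilon\|H_f^{1/2}\rvar\|^2-\epsilon\alpha^2\|\rvar\|^2$ bound for the $\rvar$ cross-terms. The $\Rvar$ cross-terms yield the $-c\alpha\|H_f^{1/2}\Rvar\|^2$ bound, where the extra $\alpha$ absorbs the $\sqrt{\alpha}$ coupling constant combined with $\|\nabla\Sgsa\|=O(\alpha)$.

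The main obstacle will be extracting the $\mathcal{O}(\alpha^5\log\alpha^{-1})$ residual. This logarithmic factor originates from subleading matched pairings, notably the contribution of $\sqrt{\alpha}P\cdot\Aa$ acting on the 2-photon component $\eta_2\alpha\cEpp\Sgsa$ paired against $\muvar_1\fFz$. After using a pull-through to commute the annihilation operator through $(H_f+P_f^2)^{-1}$ and pairing against $\fFz$, whose infrared profile is controlled by $(H_f+P_f^2+h_\alpha+e_0)^{-1}$ and the Coulomb $\alpha^2$ mass gap, one is left with integrals of the form $\int_0^\Lambda\chi_\Lambda^2(|k|)\,|k|^{-1}(|k|+\alpha^2)^{-1}\,dk\sim\log\alpha^{-1}$. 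Careful bookkeeping of this infrared integral, rather than a naive $|k|^{-2}$ bound, both generates the logarithmic factor in the statement and confines the overall error to the $\alpha^5$ scale.
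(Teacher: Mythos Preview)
Your overall architecture---split $A(0)=\Aa+\Ac$, do a sector-by-sector pairing, isolate the leading $n=0$ creation term, and absorb the rest via Cauchy--Schwarz---is exactly what the paper does. Two points, however, need correction.

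First, the clean extraction of $-2\Re\overline{\muvar_1}\Proj^0\cU\,\|\fFz\|_\sharp^2$ at $n=0$ relies on the $\sharp$-orthogonality $\la\fFz,\Proj^1\rvar\ra_\sharp=0$ built into Definition~\ref{def:decomposition-3}. The identity $2\alpha^{1/2}\Ac\cdot\nabla\Sgsa\,\vac=(H_f+P_f^2+h_\alpha+e_0)\fFz$ turns the $n=0$ inner product into $\la\fFz,\muvar_1\fFz+\Proj^1\rvar\ra_\sharp$, and the $\rvar$-piece drops out by orthogonality rather than by an estimate. You should say this explicitly.

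Second, and more importantly, you have misidentified the source of the $\mathcal{O}(\alpha^5\log\alpha^{-1})$ residual. The pairing you single out, $\alpha^{1/2}\la P\cdot\Aa(\eta_2\alpha\cEpp\Sgsa),\muvar_1\fFz\ra$, is in fact $\mathcal{O}(\alpha^4)$: insert $H_f^{-1/2}H_f^{1/2}$ and use $H_f^{-1/2}\Aa\cEpp\in L^2$ together with $\|H_f^{1/2}\fFz\|=\mathcal{O}(\alpha^{3/2})$ and $\|\nabla\Sgsa\|=\mathcal{O}(\alpha)$. This term sits inside the $-c\alpha^4(|\eta_2|^2+|\muvar_1|^2)$ contribution, not in the logarithmic residual. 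The logarithm actually enters through the $\Ac$-type cross term $\alpha^{1/2}\la P\cdot\Ac\Sgsa\Proj^{n\ge1}\Rvar,\Proj^{n+1}\rvar\ra$: the natural Cauchy--Schwarz here produces $c\alpha^3\|\Rvar\|^2+\epsilon\|H_f^{1/2}\rvar\|^2$, and the $\|\Rvar\|^2$ factor cannot be replaced by $\|H_f^{1/2}\Rvar\|^2$ because the $\Ac$ on $\Rvar$ contributes a non-relatively-bounded piece. At this stage of the argument the only control on $\|\Rvar\|^2$ is the a priori photon-number bound of Lemma~\ref{Nf-exp-lemma-1}, which gives $\|\Rvar\|^2=\mathcal{O}(\alpha^2\log\alpha^{-1})$; this is precisely what turns $c\alpha^3\|\Rvar\|^2$ into the claimed $\mathcal{O}(\alpha^5\log\alpha^{-1})$. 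Your plan to bound all $\Rvar$ cross terms by $c\alpha\|H_f^{1/2}\Rvar\|^2$ alone would therefore leave an uncontrolled $\|\Rvar\|^2$, which is a genuine gap.
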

%
%
\begin{proof}
We first estimate the term
 $$
 \Re\alpha^{\frac12} \langle P.\Ac
 \Sgsa  \cU , \gsp\rangle = \alpha\Re \sum_{n=0}^\infty \la
 P. \Ac \Sgsa  \Proj^n \cU , \Proj^{n+1} \gsp\ra .
 $$

\noindent$\bullet$ For $n=0$ photon, using the orthogonality
$\langle \fFz , \Proj^1\rvar \rangle_{\sharp} =0$, yields
\begin{equation}\label{eq:alpha3-PA+}
\begin{split}
 & - \Re\alpha^{\frac12} \langle P.\Ac
 \Proj^0 \Sgsa  \cU , \muvar_1 \fFz  + \Proj^1\rvar\rangle
 = - \frac12\Re \left((\Proj^0\cU ) \langle \fFz , \muvar_1 \fFz  + \Proj^1\rvar\rangle_{\sharp}\right)\\
 & = - \frac12 \Re \left(\overline{\muvar_1} \Proj^0\cU  \langle
 \fFz ,  \fFz \rangle_{\sharp}\right)\ .
\end{split}
\end{equation}
\noindent$\bullet$ For $n\geq 1$ photons,
\begin{equation}\label{eq:alpha3-PA+-2}
\begin{split}
 & | \sum_{n\geq 2} \Re\alpha^{\frac12} \langle P.\Ac
 \Proj^n \Sgsa \Rvar,\Proj^{n+1}\rvar \rangle |
 \leq c\alpha^3 \|\Rvar\|^2 + \epsilon \|H_f^{\frac12}\rvar\|^2\\
 & \leq  \epsilon \|H_f^{\frac12}\rvar\|^2 +
 c\alpha^5 (|\eta_1|^2 + |\eta_2|^2
 + |\eta_3|^2) +\mathcal{O}(\alpha^5\log\alpha^{-1}) ,
\end{split}
\end{equation}
where we used from Lemma~\ref{Nf-exp-lemma-1} that $\|\Rvar\|^2
\leq \mathcal{O}(\alpha^2\log\alpha^{-1}) + c\alpha^3(|\eta_1|^2
+|\eta_3|^2) + c\alpha^2 |\eta_2|^2$. We also have
\begin{equation}
\begin{split}
  & | \Re\alpha^{\frac12} \langle P
  (2\eta_1 \alpha^{\frac32} \cEm + \eta_2 \alpha \cEpp  +
  2\eta_3\alpha^{\frac32}  \cEp ) \Sgsa ,
  \Aa\rvar \rangle | \\
  & \leq \epsilon \|H_f^{\frac12}\rvar\|^2 +
  c (\alpha^6 |\eta_1|^2 + \alpha^5|\eta_2|^2
  \!+\! \alpha^6 |\eta_3|^2) .
\end{split}
\end{equation}

We next estimate the term $2\Re\alpha^{\frac12} \langle P.\Aa
\Sgsa  \cU , \gsp\rangle $. We first get
\begin{equation}
\begin{split}
  | \alpha^{\frac12} \Re \langle P. \Aa\Rvar \Sgsa ,\rvar\rangle |
  \leq \epsilon \alpha^2 \|\rvar\|^2 + c\alpha \|H_f^{\frac12}\Rvar\|^2.
\end{split}
\end{equation}
Then we write
\begin{equation}
\begin{split}
 |\alpha^{\frac12} \Re \langle (2 \eta_1 \alpha^{\frac32} \Aa\cEm
 + 2 \eta_3 \alpha^{\frac32}\Aa\cEp ) \nabla \Sgsa  ,\rvar\rangle|
 \leq \epsilon \alpha^2 \|\rvar\|^2 + c\alpha^4(|\eta_1|^2 + |\eta_3|^2) .
\end{split}
\end{equation}
We also have
\begin{equation}
\begin{split}
 & |\alpha^{\frac12} \Re \langle \eta_2 \alpha \Aa \cEpp
   . \nabla \Sgsa  , \muvar_1 \fFz  \rangle| \\
 & = |\alpha^{\frac12} \Re \langle \eta_2 \alpha H_f^{-\frac12}\Aa \cEpp
   . \nabla \Sgsa  , H_f^{\frac12} \muvar_1 \fFz  \rangle| \leq
 c\alpha^4 (|\eta_2|^2 + |\muvar_1|^2) ,
\end{split}
\end{equation}
since $H_f^{-\frac12}\Aa\cEpp \in L^2(\R^3)$ and $\|H_f^{\frac12}
\fFz \| = \mathcal{O}(\alpha^{\frac32})$. Finally we get
\begin{equation}\label{eq:alpha3-PA+-2-bis}
\begin{split}
  & | \alpha^{\frac12} \Re \langle \eta_2 \alpha \Aa \cEpp
   . \nabla \Sgsa  , \Proj^1\rvar \rangle| \\
  & = |\alpha^{\frac12} \Re \langle \eta_2 \alpha H_f^{-\frac12}\Aa \cEpp
   . \nabla \Sgsa  , H_f^{\frac12} \Proj^1\rvar \rangle|
  \leq \epsilon\|H_f^{\frac12} \Proj^1\rvar\|^2 + c\alpha^5 |\eta_2|^2
  .
\end{split}
\end{equation}
Collecting \eqref{eq:alpha3-PA+} to \eqref{eq:alpha3-PA+-2-bis}
concludes the proof.
\end{proof}
%
%
%
%
\subsection{Estimate for the term $\langle H\gsp,\gsp \rangle$}\label{S5.3}
%

\begin{lemma}\label{thm:thm-direct-gg}
\begin{equation}\label{hgg-15.1}
 \la H \gsp,\gsp\ra \geq (\Sigma_0 -e_0) \|\rvar\|^2 - c\alpha^4 |\muvar_1|^2
 + \frac12 M[\rvar] + |\muvar_1|^2 \|\fFz \|_{\sharp}^2 ,
\end{equation}
where $M[\ .\ ]$ is defined in Corollary~\ref{cor:cor-4-2}
\end{lemma}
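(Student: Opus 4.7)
The starting point is the expansion
\begin{equation*}
\la H \gsp, \gsp\ra = |\muvar_1|^2 \la H\fFz, \fFz\ra
+ 2\Re \overline{\muvar_1}\la H\rvar, \fFz\ra
+ \la H\rvar, \rvar\ra,
\end{equation*}
coming from $\gsp = \muvar_1\fFz + \rvar$ and self-adjointness of $H$. The plan is to treat the three pieces separately: apply Corollary~\ref{cor:cor-4-2} to the last one, recognize $\|\fFz\|_\sharp^2$ in the first, and kill the leading part of the cross term via the $\sharp$-orthogonality $\la\fFz,\Proj^1\rvar\ra_\sharp = 0$.

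For the third term, I first verify that $\rvar$ inherits from $\gsp$ the orthogonality $\la \Sgsa, \Proj^k \rvar\ra_{L^2(\R^3,\d x)} = 0$ for every $k$: this is trivial for $k\neq 1$ since $\Proj^k\fFz = 0$, while for $k=1$ the $1$-photon wavefunction of $\fFz$ is, after the photon-momentum-dependent kernel $(h_\alpha + e_0 + |k| + |k|^2)^{-1}$, proportional to $\nabla\Sgsa$, whose $L^2(\R^3_x)$-inner product with the radial $\Sgsa$ vanishes. Corollary~\ref{cor:cor-4-2} then gives $\la H\rvar,\rvar\ra \geq (\Sigma_0 - e_0)\|\rvar\|^2 + M[\rvar]$. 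For the first term, I decompose $H = (h_\alpha + H_f + P_f^2) + H_{\mathrm{rest}}$, where $H_{\mathrm{rest}}$ collects $-2\Re P\cdot P_f$, $-2\sqrt{\alpha}\Re(P - P_f)\cdot A(0)$, $2\alpha\Ac\cdot\Aa$ and $2\alpha\Re(\Aa)^2$. The diagonal piece yields $\la (h_\alpha + H_f + P_f^2)\fFz,\fFz\ra = \|\fFz\|_\sharp^2 - e_0 \|\fFz\|^2$, and the remainder is bounded by $c\alpha^4$ via Schwarz together with the explicit $\sqrt\alpha$ prefactor of $\fFz$, the bound $\|\nabla \Sgsa\|^2 = \mathcal{O}(\alpha^2)$, the standard field estimates \eqref{eq:iii}--\eqref{gll2}, and $\|H_f^{1/2}\fFz\|^2 = \mathcal{O}(\alpha^3)$. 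Hence $\la H\fFz,\fFz\ra \geq \|\fFz\|_\sharp^2 - c\alpha^4$.

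The delicate step is the cross term $2\Re\overline{\muvar_1}\la H\rvar,\fFz\ra$. Rewriting $H = (h_\alpha + e_0 + H_f + P_f^2) + (H_{\mathrm{rest}} - e_0)$ and using that $\fFz$ is a $1$-photon state, the main piece collapses to $\la\Proj^1\rvar,\fFz\ra_\sharp = 0$ by Definition~\ref{def:decomposition-3}. The surviving pieces of $(H_{\mathrm{rest}} - e_0)\fFz$ live in photon sectors $\{0,1,2,3\}$ and each carries a $\sqrt\alpha$-factor or the small factor $e_0 = \mathcal{O}(\alpha^2)$; pairing them with the corresponding $\Proj^j\rvar$ and applying Schwarz yields, for arbitrary $\epsilon > 0$, a bound $\epsilon\bigl(\|\rvar\|^2 + \|H_f^{1/2}\rvar\|^2 + \|(P - P_f)\rvar\|^2 + \|\Proj^{n \leq 4}P\rvar\|^2\bigr) + c|\muvar_1|^2\alpha^4$. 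By \eqref{eq:def-L0} the bracket is dominated by a constant multiple of $M[\rvar]$, so choosing $\epsilon$ small absorbs it into $\frac12 M[\rvar]$ and assembles the inequality \eqref{hgg-15.1}. The main obstacle is precisely this bookkeeping: each subterm in $H_{\mathrm{rest}} - e_0$ must be matched with the correct $\Proj^j\rvar$ and estimated so that the $\rvar$-part embeds strictly into $M[\rvar]$ while the $\muvar_1$-part stays $\leq c\alpha^4|\muvar_1|^2$, and the key enabling ingredients are the $\sharp$-orthogonality and the $\sqrt\alpha$-factor hidden in $\fFz$.
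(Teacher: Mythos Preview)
Your approach is the paper's: expand along $\gsp = \muvar_1\fFz + \rvar$, apply Corollary~\ref{cor:cor-4-2} to $\la H\rvar,\rvar\ra$, kill the leading cross piece via $\la\fFz,\Proj^1\rvar\ra_\sharp = 0$, and absorb the remaining cross terms into $\tfrac12 M[\rvar]$ at the cost of $c\alpha^4|\muvar_1|^2$.

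One point needs tightening. Your sentence ``the bracket is dominated by a constant multiple of $M[\rvar]$'' is false as stated, because in \eqref{eq:def-L0} the coefficient of $\|\rvar\|^2$ is $\tfrac{\delta}{2}=\tfrac{3}{64}\alpha^2$, not an $\alpha$-independent constant. The fix is that you must not lump all cross terms under a single $\epsilon$: the only contribution that produces a bare $\|\rvar\|^2$ is the $-e_0\la\rvar,\fFz\ra$ piece, and since $e_0=\tfrac{\alpha^2}{4}$ it already carries the needed $\alpha^2$, so Schwarz gives $\epsilon\alpha^2\|\rvar\|^2 + c_\epsilon|\muvar_1|^2\alpha^5\log\alpha^{-1}$, which is absorbed by $\tfrac{\delta}{4}\|\rvar\|^2$ with $\epsilon$ a small constant. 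All other cross terms land on $\|H_f^{1/2}\rvar\|^2$, $\|(P-P_f)\rvar\|^2$, or $\|\Proj^{n\leq4}P\rvar\|^2$, whose coefficients $\tfrac{\nu}{2},\zeta$ in $M[\rvar]$ are $\alpha$-independent, so there a constant $\epsilon$ suffices directly. The paper makes this explicit by treating each cross term in \eqref{eq:hgg-1}--\eqref{eq:hgg-2} separately.
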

%
%
\begin{proof}
Recall that
\begin{equation}\nonumber
\begin{split}
 & H  =  (P^2 - \frac{\alpha}{|x|}) + (H_f + P_f^2)
 - 2\Re \left(P . P_f\right)\\
 & - 2 \sqrt{\alpha} (P
 - P_f). A(0) + 2 \alpha \Ac. \Aa + 2\alpha\Re
 (\Aa)^2\ .
\end{split}
\end{equation}
Due to the orthogonality $\la \fFz ,\rvar\ra_{\sharp} = 0$ we get
\begin{equation}\label{eq:hgg-1}
\begin{split}
  & \la H\gsp,\gsp\ra \\
  &= \la Hr,r\ra +|\muvar_1|^2 \|\fFz \|_{\sharp}^2
  -e_0 |\muvar_1|^2 \| \fFz \|^2 -e_0 \muvar_1 \la\rvar, \fFz \ra
  + 2\alpha |\muvar_1|^2 \|\Aa\fFz \|^2\\
  & -2\Re\la P.P_f\fFz , \fFz \ra
  + 2\alpha\Re \la \Aa . \Aa\rvar, \muvar_1 \fFz \ra - 4 \sqrt{\alpha}
  \Re \la P. \Aa\rvar, \muvar_1\fFz \ra\\
  &  - 4 \sqrt{\alpha} \Re\la P.\Ac\rvar,
  \muvar_1 \fFz \ra
  + 4\Re \sqrt{\alpha} \la P_f. A(0)\rvar,
  \muvar_1 \fFz \ra + 4\alpha \Re\la \Ac. \Aa\rvar, \muvar_1 \fFz \ra \\
  & -4\Re \la P.P_f \fFz ,\rvar\ra .
\end{split}
\end{equation}
For the first term on the right hand side of \eqref{eq:hgg-1}, we
have, from Corollary~\ref{cor:cor-4-2}
\begin{equation}
 \la H\rvar,\rvar\ra \geq (\Sigma_0 -e_0) \|\rvar\|^2 + M[\rvar] .
\end{equation}
According to Lemma~\ref{appendix:lem-A0}, we obtain
\begin{equation}
  - e_0\muvar_1 \la\rvar,\fFz \ra - e_0 |\muvar_1|^2 \|\fFz \|^2
  \geq -\epsilon \alpha^2 \|\rvar\|^2 - c| \muvar_1 |^2 \alpha^5\log\alpha^{-1} .
\end{equation}
The next term, namely $2\alpha|\muvar_1|^2 \|\Aa\fFz \|^2$, is
positive.

Due to the symmetry in $x$-variable,
\begin{equation}
 \la P.P_f \fFz , \fFz \ra =0 .
\end{equation}
The term $2\alpha\Re\la\Aa.\Aa\rvar, \muvar_1 \fFz \ra$ is
estimated as
\begin{equation}
 2\alpha\Re\la\Aa.\Aa\rvar, \muvar_1 \fFz \ra \geq - c\alpha^2
 \|\muvar_1\fFz \|^2 - \frac14 \nu \| H_f^{\frac12}\rvar\|^2
 = -\frac14 \nu \| H_f^{\frac12}\rvar\|^2 - c|\muvar_1|^2
 \alpha^5\log\alpha^{-1} .
\end{equation}
Due to Lemma~\ref{appendix:lem-A0} and \cite[Lemma~A4]{GLL},
\begin{equation}
 |\alpha^{\frac12} \la \Aa\rvar, P\muvar_1\fFz \ra|\leq
 \frac{\nu}{8} \|H_f^\frac12\rvar\|^2 + c |\muvar_1|^2
 \alpha^6\log\alpha^{-1}.
\end{equation}
The next term we have to estimate fulfils
\begin{equation}
 |\alpha^{\frac12} \la P. \Ac\rvar, \muvar_1\fFz \ra|
 \leq \epsilon \|P \Proj^0\rvar\|^2 + c\alpha |\muvar_1|^2 \|\Aa \fFz \|^2
 \leq \epsilon \|(P-P_f)\rvar\|^2 - c|\muvar_1|^2 \alpha^4 .
\end{equation}
We have
\begin{equation}
\begin{split}
  & \Re \sqrt{\alpha} \la P_f. A(0)\rvar,
  \muvar_1 \fFz \ra = \Re \sqrt{\alpha} \la P_f. \Aa \Proj^2\rvar,
  \muvar_1 \fFz \ra + \Re \sqrt{\alpha} \la P_f. \Ac \Proj^0\rvar,
  \muvar_1 \fFz \ra \\
  & = \Re \sqrt{\alpha} \la P_f. \Aa \Proj^2\rvar,
  \muvar_1 \fFz \ra + \Re \sqrt{\alpha} \la \Ac . P_f \Proj^0\rvar,
  \muvar_1 \fFz \ra \\
\end{split}
\end{equation}
Obviously, $\la \Ac . P_f \Proj^0\rvar, \muvar_1 \fFz \ra=0$, and
the first term is bounded by
\begin{equation}
\begin{split}
  & | \Re \sqrt{\alpha} \la P_f. \Aa \Proj^2\rvar,
  \muvar_1 \fFz \ra | \leq \epsilon \|H_f^\frac12 \Proj^2\rvar\|^2
  + c \alpha |\muvar_1|^2 \|P_f \fFz \|^2\\
  & \leq \epsilon \|H_f^\frac12 \Proj^2\rvar\|^2
  + c  |\muvar_1|^2 \alpha^4.
\end{split}
\end{equation}
For the term $\alpha \Re\la \Ac. \Aa\rvar, \muvar_1 \fFz \ra $ we
obtain
\begin{equation}
 \alpha \Re\la \Ac. \Aa\rvar, \muvar_1 \fFz \ra \geq
 -c\alpha^2 \|H_f^\frac12 \muvar_1 \fFz \|^2
 |\muvar_1|^2 -\epsilon \|H_f^{\frac12}\rvar\|^2
 = -c\alpha^5 |\muvar_1|^2 -\epsilon \|H_f^{\frac12}\rvar\|^2.
\end{equation}
According to \eqref{appendix:lem7-2} of
Lemma~\ref{appendix:lem-A0}, the last term we have to estimate
fulfils
\begin{equation}\label{eq:hgg-2}
\begin{split}
\Re\la P.P_f \muvar_1 \fFz ,\rvar\ra \leq \epsilon \|H_f^\frac12
\Proj^1 \rvar\|^2 + c\|P |P_f|^\frac12 \muvar_1 \fFz \|^2 \leq
\epsilon \|H_f^\frac12 \Proj^1\rvar\|^2 + c \alpha^5 |\muvar_1|^2
.
\end{split}
\end{equation}
Collecting the estimates \eqref{eq:hgg-1} to \eqref{eq:hgg-2}
yields \eqref{hgg-15.1}.
\end{proof}

\subsection{Upper bound on the binding energy
with error term $\mathcal{O}(\alpha^4)$}\label{S5.4}

We first establish a lemma that we shall need in the proof of
Theorem~\ref{thm:thm-main2} in order to improve the error term
from $\mathcal{O}(\alpha^4\log\alpha^{-1})$ to
$\mathcal{O}(\alpha^4)$.
\begin{lemma}\label{qmqm}
If $\|\Rvar\|_*^2 =\mathcal{O}(\alpha^4\log\alpha^{-1})$ and
$\|H_f^\frac12\rvar\|^2 =\mathcal{O}(\alpha^4\log\alpha^{-1})$
hold then we have
\begin{eqnarray}
  & & \|N_f^\frac12\Rvar\|^2 = \mathcal{O}(\alpha^{\frac{33}{16}})
 \label{eq:improved-photon-number-1}\\
  & & \|N_f^\frac12\rvar\|^2 = \mathcal{O}(\alpha^{\frac{33}{16}})
 \label{eq:improved-photon-number-2}
\end{eqnarray}
\end{lemma}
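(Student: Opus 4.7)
The plan is to bound $\|N_f^{1/2}X\|^2=\sum_\lambda\int\|a_\lambda(k)X\|^2\,\d k$ for $X\in\{\Rvar,\rvar\}$ by decomposing the $k$-integral into a soft region $|k|<\delta$ and a hard region $|k|>\delta$, then optimizing $\delta$. This is the same strategy used in the proof of Lemma~\ref{Nf-exp-lemma-1}, but here the hypothesis provides a much stronger control on $\|H_f^{1/2}X\|^2$ (of order $\alpha^4\log\alpha^{-1}$ rather than $\alpha$), which yields a correspondingly stronger final bound.

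For the hard piece, I would use the elementary inequality
\begin{equation*}
 \sum_\lambda\int_{|k|>\delta}\|a_\lambda(k)X\|^2\,\d k \;\leq\; \delta^{-1}\,\|H_f^{1/2}X\|^2 \;=\; \mathcal{O}(\alpha^4\log\alpha^{-1}/\delta),
\end{equation*}
applying the hypothesis directly for $X=\rvar$ and $\|H_f^{1/2}\Rvar\|^2\leq\|\Rvar\|_*^2$ for $X=\Rvar$. For the soft piece the aim is a pointwise bound $\|a_\lambda(k)X\|^2\leq c\alpha^{-3/2}\chi_\Lambda^2(|k|)/|k|$. Starting from $\Rvar=\cU-\Proj^0\cU-2\eta_1\alpha^{3/2}\cEm-\eta_2\alpha\cEpp-2\eta_3\alpha^{3/2}\cEp$, the key input is that $\cU$ inherits the Griesemer--Lieb--Loss estimate from $\gs$: the identity
\begin{equation*}
 (a_\lambda(k)\cU)(k_1,\lambda_1;\ldots) \;=\; \int\overline{\Sgsa(y)}\,(a_\lambda(k)\gs)(y;k_1,\lambda_1;\ldots)\,\d y
\end{equation*}
combined with Cauchy--Schwarz in $y$ yields $\|a_\lambda(k)\cU\|_\gF\leq\|a_\lambda(k)\gs\|_\gH$, and then \eqref{a-psi-fund-2} (whose bound is preserved under the unitary $U=\mathrm{e}^{iP_f\cdot x}$, which only multiplies $a_\lambda(k)\gs$ by $\mathrm{e}^{ik\cdot x}$) provides the required pointwise control. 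Each explicit state $\cEm,\cEpp,\cEp$ from Definition~\ref{definition:phi123} satisfies $\|a_\lambda(k)\cEi\|^2\leq c\chi_\Lambda^2(|k|)/|k|$ by direct inspection of its integral kernel, so the prefactors $\alpha^{3/2},\alpha,\alpha^{3/2}$ make these contributions subdominant. For $\rvar=\gs-\Sgsa\cU-\muvar_1\fFz$ the argument is parallel; one needs in addition $\|a_\lambda(k)\fFz\|^2\leq c\alpha^{-1}\chi_\Lambda^2(|k|)/|k|$, which follows from the explicit formula \eqref{def:def-F0} together with the resolvent bound $\|(h_\alpha+e_0)^{-1}|_{\Sgsa^\perp}\|=\mathcal{O}(\alpha^{-2})$ and the identity $\|\nabla\Sgsa\|^2=\alpha^2/4$.

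Integrating the pointwise bound over the ball $|k|<\delta$ gives $c\alpha^{-3/2}\delta^2$, so altogether
\begin{equation*}
 \|N_f^{1/2}X\|^2 \;\leq\; c\bigl(\alpha^{-3/2}\delta^2+\alpha^4\log\alpha^{-1}/\delta\bigr).
\end{equation*}
Taking $\delta=\alpha^{15/8}$ as in Lemma~\ref{Nf-exp-lemma-1} gives a soft contribution $c\alpha^{9/4}=c\alpha^{36/16}$ and a hard contribution $c\alpha^{17/8}\log\alpha^{-1}=c\alpha^{34/16}\log\alpha^{-1}$; both are $o(\alpha^{33/16})$ and the lemma follows. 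The main technical obstacle will be verifying the pointwise $\mathcal{O}(|k|^{-1})$ bounds for $\|a_\lambda(k)\cEi\|^2$ and $\|a_\lambda(k)\fFz\|^2$, which requires tracking carefully how the polarization vectors $\varepsilon_\mu(k)$, the form factor $\chi_\Lambda(|k|)/|k|^{1/2}$, and the resolvent factors $(H_f+P_f^2)^{-1}$ and $(h_\alpha+e_0)^{-1}$ combine to rule out any infrared singularity stronger than $|k|^{-1}$.
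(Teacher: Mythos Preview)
Your proposal is correct and matches the paper's proof almost verbatim: the same soft/hard splitting at $\delta=\alpha^{15/8}$, the same use of the GLL bound \eqref{a-psi-fund-2} inherited by $\cU$ (and by $\gsp$) from $\gs$, and the same pointwise estimates for the explicit pieces $\cEi$ and $\fFz$. One small correction to your stated bounds: the paper records $\|a_\lambda(k)\cEpp\|^2\leq c\,\chi_\Lambda(|k|)(1+|\log|k|\,|)/|k|$ (the two-photon resolvent generates a logarithm in the $k'$-integral), but this extra factor is harmless since the $\alpha\cEpp$ contribution is in any case subdominant to the $\alpha^{-3/2}/|k|$ term coming from $\cU$.
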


\begin{proof}
We note that from Definition~\ref{def:decomposition-1}, we have
\begin{equation}\nonumber
 \| a_\lambda(k) \cU \|^2 \leq \| a_\lambda(k) \gs\|^2
 \leq \frac{c\alpha^{-\frac32} \chi_\Lambda(|k|)}{|k|} ,
\end{equation}
where in the last inequality, we used \eqref{a-psi-fund-2}. Taking
into account that
\begin{equation}\nonumber
\Rvar =\cU -2 \eta_1\alpha^\frac32\cEm - \eta_2\alpha\cEpp -2
\eta_3\alpha^\frac32\cEp ,
\end{equation}
where
\begin{equation}\nonumber
    \| a_\lambda(k)\cEm \|^2\leq \frac{c\chi_\Lambda(|k|)}{|k|}
    \; , \;
    \| a_\lambda(k)\cEpp \|^2\leq
    \frac{c\chi_\Lambda(|k|)(1+\left|\log|k|\,\right|)}{|k|}
\end{equation}
\begin{equation}\nonumber
    \| a_\lambda(k) \cEp \|^2\leq \frac{c\chi_\Lambda(|k|)}{|k|} \,,
\end{equation}
and using \eqref{eq:bound-on-etas}, we arrive at
\begin{equation}\nonumber
 \| a_\lambda(k)\Rvar\|^2 \leq
 \frac{c\alpha^{-\frac32}\chi_\Lambda(|k|)(1+|\log|k|\,|)}{|k|} .
\end{equation}
For the expected photon number of $\Rvar$ thus holds
\begin{equation}\nonumber
\begin{split}
  \| N_f^\frac12\Rvar \|^2 &= \sum_\lambda \int \|a_\lambda(k)\Rvar\|^2 \d
  k\\
  & \leq \sum_\lambda \int_{|k|\leq \alpha^{\frac{15}{8}}}
  \frac{c\alpha^{-\frac32}(1+\left|\log|k|\,\right|)} {|k|} \d k +
  \int_{|k| > \alpha^{\frac{15}{8}}} |k|^{-1}\,
  |k|  \,
  \|a_\lambda(k)\Rvar\|^2 \d k\\
  & \leq c\alpha^{\frac{17}{8}} +
  c\alpha^{-\frac{15}{8}}
  \|H_f^\frac12
 \Rvar\|^2 \leq c\alpha^{\frac{33}{16}} ,
\end{split}
\end{equation}
using \eqref{eq:add-est-w-log} in the last inequality. The
relation \eqref{eq:improved-photon-number-2} can be proved
similarly, using
 $$
  \|a_\lambda(k) \fFz \|^2 \leq c \frac{\alpha^{-1}}{|k|}\ .
 $$
 This concludes the proof of the lemma.
\end{proof}

\subsection{Concluding the proof of Theorem~\ref{thm:thm-main2}}

The proof of Theorem~\ref{thm:thm-main2} is obtained in the
following two steps. We first show that the estimate holds with an
error term $\mathcal{O}(\alpha^4\log\alpha^{-1})$. In a second
step, using Lemma~\ref{qmqm}, we improve to an error term
$\mathcal{O}(\alpha^4)$. Then, we derive the estimates
\eqref{eq:thm:thm-main2-1}-\eqref{eq:thm:thm-main2-2} that shall
be used in the next section for the computation of the binding
energy to higher order.

\noindent $\bullet$ \underline{\textit{Step 1:}} We first show that
\eqref{eq:eq-binding-alpha3} holds with an error estimate of the
order $\alpha^4\log\alpha^{-1}$.

Collecting Lemmata~\ref{lem:lem-direct-fpsi},
\ref{lem:lem-cross-1}, \ref{lem:lem-cross-2} and
Lemma~\ref{thm:thm-direct-gg} yields
\begin{equation}\label{eq:thm:thm-bis-1}
 \begin{split}
   & \langle H \gs, \gs\rangle \\
   & \geq -e_0
   \|\cU \|^2 - \alpha^2 \| \cEpp \|_*^2 |\Proj^0 \cU |^2
   + \alpha^3 |\eta_2|^2 (2 \| \Aa\cEpp \|^2 - 4 \|\cEm \|_*^2 - 4\|\cEp \|_*^2)\\
   &   + \alpha^2
   |\eta_2 - \Proj^0 \cU  |^2 \|\cEpp \|_*^2
   + 4 \alpha^3 ( |\eta_1-\eta_2|^2 \|\cEm \|_*^2
   + |\eta_3-\eta_2|^2 \|\cEp \|_*^2) \\
   & - c \alpha^{4}\log\alpha^{-1} (|\eta_1|^2 + |\eta_2|^2 + |\eta_3|^2
   + | \Proj^0 \cU |^2) +\frac14 \|\Rvar\|_*^2 \\
   & + |\muvar_1|^2 \|\fFz \|_{\sharp}^2 + (\Sigma_0 - e_0)\|\rvar\|^2
   +\frac14
   M[\rvar] - c\alpha^4 |\muvar_1|^2 \\
   & -2\Re (\muvar_1
   \Proj^0\cU ) \|\fFz \|_{\sharp}^2 + \mathcal{O}(\alpha^5\log\alpha^{-1}).
 \end{split}
\end{equation}
We first estimate
\begin{equation}\label{eq:thm:thm-bis-2}
\begin{split}
  |\muvar_1|^2 \|\fFz \|_{\sharp}^2 -2\Re (\overline{\muvar_1}
   \Proj^0\cU ) \|\fFz \|_{\sharp}^2 - c\alpha^4 |\muvar_1|^2
   \\
   \geq -\|\fFz \|_{\sharp}^2 |\Proj^0\cU |^2 + \frac{ |\overline{\muvar_1} -
   \Proj^0\cU |^2}{2} \|\fFz \|_{\sharp}^2 +\mathcal{O}(\alpha^4).
\end{split}
\end{equation}
Moreover, since $|\Proj^0 \cU |\leq 1$, we replace in
\eqref{eq:thm:thm-bis-1} $-\alpha^2\|\cEpp \|_*^2|\Proj^0\cU |^2$
by $-\alpha^2\|\cEpp \|_*^2$ and in \eqref{eq:thm:thm-bis-2}
$-\|\fFz \|_{\sharp}^2 |\Proj^0\cU |^2$ by $-\|\fFz
\|_{\sharp}^2$. In addition, using the inequalities
\begin{equation}\nonumber
  |\eta_2 - \eta_j|^2 \geq \frac12 |\eta_j -\Proj^0\cU |^2 - |\eta_2
  -\Proj^0\cU |^2\quad\mbox{and}\quad |\eta_j|^2 \leq 2
  |\eta_j-\Proj^0\cU |^2 + 2
\end{equation}
for $j=1,2,3$ yields that for some $c>0$,
\begin{equation}\label{eq:est-w-log-1}
 \begin{split}
   & \langle H \gs, \gs\rangle \\
   & \geq -e_0
   \|\cU \|^2 - \alpha^2 \| \cEpp \|_*^2
   + \alpha^3 |\eta_2|^2 (2 \| \Aa\cEpp \|^2
   - 4 \|\cEm \|_*^2 - 4\|\cEp \|_*^2)
   - \|\fFz \|_{\sharp}^2 \\
   &   + c \alpha^2
   |\eta_2 - \Proj^0 \cU  |^2 \|\cEpp \|_*^2
   + c \alpha^3 ( |\eta_1-\Proj^0\cU  |^2  + |\eta_3-\Proj^0\cU |^2
   )+ c\alpha^3 |\muvar_1-\Proj^0\cU |^2\\
   &  +\frac14 \|\Rvar\|_*^2
   + (\Sigma_0 - e_0)\|\rvar\|^2 +\frac14
   M[\rvar] + \mathcal{O}(\alpha^{4}\log\alpha^{-1}).
 \end{split}
\end{equation}
Comparing this expression with \eqref{eq:eq:apriori-1} of
Theorem~\ref{thm:apriori-bound} gives
\begin{equation}\label{eq:bound-on-etas}
   \max \{ |\eta_1|,\  |\eta_2|,\,  |\eta_3 \} \leq 2\ .
\end{equation}
and
\begin{equation}\label{eq:est-w-log-2}
  \Sigma = \Sigma_0 -e_0 -\|\fFz \|_{\sharp}^2
  +\mathcal{O}(\alpha^4\log\alpha^{-1}) .
\end{equation}
Finally, by Lemma~\ref{appendix:lem-A1}, we can replace $\|\fFz
\|_{\sharp}$ by $\|\fF\|_*$ in the above equality, which proves
\eqref{eq:eq-binding-alpha3} with an error term
$\mathcal{O}(\alpha^4\log\alpha^{-1})$.
\\

\noindent $\bullet$ \underline{\textit{Step 2:}} We now show that the error
term does not contain any $\log\alpha^{-1}$ term.

From \eqref{eq:est-w-log-1} we obtain
\begin{equation}\label{eq:add-est-w-log}
 \|\Rvar\|_*^2 =\mathcal{O}(\alpha^4\log\alpha^{-1})
 \quad\mbox{and}\quad
 \|H_f^\frac12\rvar\|^2 =\mathcal{O}(\alpha^4\log\alpha^{-1}) .
\end{equation}
According to Lemma~\ref{qmqm}, this implies
\begin{equation}\nonumber
  \|N_f^\frac12\Rvar\|^2 =
  \mathcal{O}(\alpha^{\frac{33}{16}})\quad\mbox{and}\quad
  \|N_f^\frac12\rvar\|^2 = \mathcal{O}(\alpha^{\frac{33}{16}})\ .
\end{equation}

Thus, we have
\begin{equation}\nonumber
\begin{split}
   & \|N_f^\frac12 \cU \|^2 \\
   & \leq 4|\eta_1|^2 \alpha^3\| N^\frac12 \cEm \|^2
   + |\eta_2|^2 \alpha^2 \| N ^\frac12 \cEpp \|^2
   + 4 |\eta_3|^2 \alpha^3 \| N^\frac12 \cEp \|^2 + \| N^\frac12
   \Rvar\|^2\\
   & =\mathcal{O}(\alpha^2) ,
\end{split}
\end{equation}
which implies that in Lemma~\ref{lem:lem-direct-fpsi} we can
replace the term $c \alpha^4\log\alpha^{-1}(|\eta_1|^2 +
|\eta_2|^2 + |\eta_3|^2 + |\Proj^0\cU |^2)$ with $c \alpha^4
(|\eta_1|^2 + |\eta_2|^2 + |\eta_3|^2 + |\Proj^0\cU |^2)$ and
consequently in \eqref{eq:est-w-log-1} and \eqref{eq:est-w-log-2},
the term $\mathcal{O}(\alpha^4\log\alpha^{-1})$ can be replaced by
$\mathcal{O}(\alpha^4)$. This proves \eqref{eq:eq-binding-alpha3}.
The estimates
\eqref{eq:thm:thm-main2-1}-\eqref{eq:thm:thm-main2-2} follow from
\eqref{eq:eq-binding-alpha3} and \eqref{eq:est-w-log-1} with
$\mathcal{O}(\alpha^4\log\alpha^{-1})$ replaced with
$\mathcal{O}(\alpha^4)$.
We thus arrive at the proof of Theorem~\ref{thm:thm-main2}.
\qed

\newpage

\section{Proof of Theorem~\ref{thm:thm-upper-bound}}
\label{subsection:6-3}

In this section, we prove the
upper bound on the binding energy up to the order
$\alpha^5\log\alpha^{-1}$ provided in Theorem~\ref{thm:thm-upper-bound}.

We have
\begin{equation}\label{eq:collect-1}
 \la H \gs, \gs\ra  \geq  (\Sigma_0 -e_0)\|\cU \|^2 + 2\Re \la H \gsp,
 \Sgsa \cU \ra + \la H \gsp , \gsp\ra .
\end{equation}
The estimates for the last two terms in \eqref{eq:collect-1} are
given in Propositions~\ref{prop:prf-Hgpsi} and
\ref{prop:prop-hgg}. We will bound below this expression by
considering separately the terms involving the parameters
$\muvar_1$, $\muvar_{2,i}$, and $\muvar_3$.

$\bullet$ We first estimate the second term on the right hand side
of \eqref{simple} together with the seventh term on the right hand
side of \eqref{eq:prop-hgg-main}. We have
\begin{equation}
\begin{split}
 & - 4\alpha \Re \la \nabla \Sgsa .   P_f \cEpp , \Proj^2
 \gspl \ra + 4 \alpha^\frac12\Re \la \Proj^2 \gspl , \Ac. P_f \fF\ra \\
 & = -4 \alpha \Re \la \Proj^2 \gspl , (\sum_{i=1}^3 P_f^i \cEpp  -
 2 \Ac. P_f (H_f+P_f^2)^{-1} (\Ac)^i \vac ) \frac{\partial
 \Sgsa }{\partial x_i}\ra \\
 & = - 4 \alpha \sum_{i=1}^3 \Re \la \Proj^2 \gspl , (H_f+P_f^2)^{-1} W_i
 \frac{\partial \Sgsa }{\partial x_i}\ra_* .
\end{split}
\end{equation}
Using the $\la \, \cdot \, , \, \cdot \, \ra_\sharp$-orthogonality
of $\Proj^2 \gspl $ and $(H_f+P_f^2)^{-1} W_i \frac{\partial \Sgsa
}{\partial x_i}$, the last expression can be estimated as
\begin{equation}
\begin{split}
 & - 4 \alpha \sum_{i=1}^3 \Re \la \Proj^2 \gspl , (h_\alpha + e_0)(H_f+P_f^2)^{-1} W_i
 \frac{\partial \Sgsa }{\partial x_i}\ra .
\end{split}
\end{equation}
By the Schwarz inequality, this term is bounded below by
\begin{equation}
\begin{split}
 & - \epsilon \alpha^2 \|\Proj^2 \gspl \|^2
 - c \sum_{i=1}^3 \|(h_\alpha + e_0)
 \frac{\partial \Sgsa }{\partial x_i}\|^2
 = - \epsilon \alpha^2 \|\Proj^2 \gspl \|^2-\mathcal{O}(\alpha^6) .
\end{split}
\end{equation}

$\bullet$ Next, we collect all the terms involving $\muvar_1$ in
\eqref{simple} and \eqref{eq:prop-hgg-main}. This yields
\begin{equation}
\begin{split}
 & - 2\Re \overline{\muvar_1} \Proj^0\cU  \|\fFz \|_{\sharp}^2
 + |\muvar_1|^2 \|\fFz \|_{\sharp}^2 -|\muvar_1-1| c\alpha^4 \\
 & \geq -|\Proj^0\cU |^2  \|\fFz \|_{\sharp}^2
 + |\muvar_1 - \Proj^0\cU |^2 \|\fFz \|_{\sharp}^2 -|\muvar_1-1| c\alpha^4\\
\end{split}
\end{equation}
Notice that from Theorem~\ref{thm:thm-main2} we have $|\Proj^0\cU
|^2 = 1+\mathcal{O}(\alpha^2)$; moreover, we have $|\Proj^0\cU | =
1+\mathcal{O}(\alpha^2)$. This yields
\begin{equation}
\begin{split}
 & - 2\Re \overline{\muvar_1} \Proj^0\cU  \|\fFz \|_{\sharp}^2
 + |\muvar_1|^2 \|\fFz \|_{\sharp}^2 -|\muvar_1-1| c\alpha^4 \\
 & \geq - \|\fFz \|_{\sharp}^2
 + |\muvar_1 -1|^2 c'\alpha^3 -|\muvar_1-1| c\alpha^4
 + \mathcal{O}(\alpha^5) \geq - \|\fFz \|_{\sharp}^2 +
 \mathcal{O}(\alpha^5) .
\end{split}
\end{equation}

$\bullet$ We now collect and estimates the terms in \eqref{simple}
and \eqref{eq:prop-hgg-main} involving $\muvar_{2,i}$. We get
\begin{equation}
\begin{split}
 & -\frac13 \alpha^4 \Re
 \sum_{i=1}^3 \muvar_{2,i}
 \la (H_f+P_f^2)^{-1} W_i, P_f^i \cEpp \ra
 + \frac{\alpha^4}{12} \sum_{i=1}^3 |\muvar_{2,i}|^2 \| (H_f+P_f^2)^{-1}
 W_i \|_{*}^2 \\
 & + \frac23 \alpha^4  \Re \sum_{i=1}^3
  \muvar_{2,i} \la
   (H_f+P_f^2)^{-1} W_i,  P_f.\Ac (H_f+P_f^2)^{-1} (\Ac)^i\vac \ra \\
 & = -\frac13 \alpha^4 \Re\sum_{i=1}^3
 \muvar_{2,i} \|(H_f+P_f^2)^{-1} W_i \|_*^2
  + \frac{\alpha^4}{12} \sum_{i=1}^3 |\muvar_{2,i}|^2 \| (H_f+P_f^2)^{-1}
 W_i \|_{*}^2 \\
 & \geq - \frac{\alpha^4}{3} \sum_{i=1}^3 \| (H_f+P_f^2)^{-1} W_i \|_{*}^2
\end{split}
\end{equation}

$\bullet$ Collecting in \eqref{simple} and
\eqref{eq:prop-hgg-main} the terms containing $\muvar_3$ yields
\begin{equation}
\begin{split}
 \frac{|\muvar_3+1|^2}{2} \alpha^2 \| \cEpp \|_*^2 \|\fFz \|^2
 - \epsilon \alpha^5 |\muvar_3|^2 \geq c_1
 \alpha^5\log\alpha^{-1} |\muvar_3+1|^2-
 \epsilon \alpha^5 |\muvar_3|^2 \geq -c_2 \alpha^5 ,
\end{split}
\end{equation}
where $c_1$ and $c_2$ are positive constants.

$\bullet$ The fifth term on the right hand side of \eqref{simple}
and the first term on the right hand side of \eqref{simple2} are
estimated, for $\alpha$ small enough, as
\begin{equation}
\begin{split}
 (1-c_0\alpha) \| (h_\alpha + e_0)^\frac12 \Proj^0 (\gsps )^a\|^2
 - \epsilon \alpha^2 \|(\gsps )^a\|^2 \geq (\frac{\delta}{2} - \epsilon
 \alpha^2) \|(\gsps )^a\|^2 \geq 0 ,
\end{split}
\end{equation}
with $\delta = \frac{3}{32}\alpha^2$, and where we used that
$(\gsps )^a$ is orthogonal to $\Sgsa $.

$\bullet$ Substituting the above estimates in \eqref{eq:collect-1}
yields
\begin{equation}\label{eq:c-0}
\begin{split}
 & \la H \gs, \gs\ra \\
 & \geq
 (\Sigma_0 - e_0) \|\cU \|^2 + (\Sigma_0-e_0) \|\gsp\| ^2
 - \|\fFz \|_{\sharp}^2
 - \alpha^4 \sum_{i=1}^3 \| (H_f+P_f^2)^{-1} W_i \|_{*}^2 \\
 & -\frac23 \alpha^4 \Re
 \sum_{i=1}^3 \la\, (H_f+P_f^2)^{-\frac12} (\Aa)^i
 \cEpp , (H_f+P_f^2)^{-\frac12}
 (\Ac)^i \vac\ra \\
 & - 4\alpha \| (h_\alpha + e_0)^{-\frac12} Q_\alpha^\perp P \Aa \fF\|^2
 + 2\alpha \|\Aa \fF\|^2 + o(\alpha^5\log\alpha^{-1})\ ,
\end{split}
\end{equation}
where $Q_\alpha^\perp$ is the projection onto the orthogonal
complement to the ground state $\Sgsa $ of the Schr\"odinger
operator $h_\alpha=-\Delta -\frac{\alpha}{|x|}$.

 To complete the proof of
Theorem~\ref{thm:thm-upper-bound} we first note that
\begin{equation}\label{eq:c-1}
  \|\cU \|^2 + \|\gsp\|^2 = \| \gs\|^2.
\end{equation}
Moreover, according to Lemma~\ref{appendix:lem-A1}
\begin{equation}\label{eq:c-2}
\begin{split}
 - \| \fFz \|_{\sharp}^2 = - \| \fF\|_*^2 + \frac{1}{3\pi}
 \| (h_1+\frac14)^\frac12 \nabla \Sgso\|^2 \alpha^5 \log\alpha^{-1}
 + o(\alpha^5 \log\alpha^{-1}), \\
\end{split}
\end{equation}
and
\begin{equation}\label{eq:c-3}
  \| \fF\|_*^2 =
  \frac{\alpha^3}{2\pi} \int_0^\infty \frac{\chi_\Lambda(t)}{1+t} \d t
  = \dvar^{(1)} \alpha^3 .
\end{equation}
In addition, we have the following identities ($i=1,2,3$)
\begin{equation}\label{eq:c-4}
\begin{split}
 & \| (H_f + P_f^2)^{-1} W_i \|_{*}^2 = \\
 & \|(H_f + P_f^2)^{-\frac12}
  \Big(
  2 \Ac.P_f (H_f + P_f^2)^{-1} (\Ac)^i
  - P_f^i (H_f+P_f^2)^{-1}\Ac. \Ac \Big)\vac\|^2 ,
\end{split}
\end{equation}
and
\begin{equation}\label{eq:c-5}
\begin{split}
 & -\frac23 \alpha^4 \Re
 \sum_{i=1}^3 \la\, (H_f+P_f^2)^{-\frac12} \Aa \cEpp , (H_f+P_f^2)^{-\frac12}
 (\Ac)^i \vac\ra \\
 & = -\frac{2}{3} \alpha^4
 \Re \sum_{i=1}^3  \la \Aa (H_f+P_f^2)^{-1}\Ac
   . \Ac \vac, (H_f + P_f^2)^{-1} (\Ac)^i \vac \ra\ .
\end{split}
\end{equation}
We also have
\begin{equation}\label{eq:c-6}
 \begin{split}
   & -4 \alpha \| (h_\alpha + e_0)^{-\frac12}
   Q_\alpha^\perp P \Aa \fF\|^2
   =
   - 4 \alpha^4 a_0^2 \| (-\Delta -\frac{1}{|x|}+\frac14 )^{-\frac12}
   Q_1^\perp \Delta \Sgso\|^2 ,
 \end{split}
\end{equation}
with
\begin{equation}\nonumber
  a_0 = \int
  \frac{k_1^2+k_2^2}{4\pi^2|k|^3} \frac{2}{|k|^2
  +|k|} \chi_\Lambda (|k|)\, \d k_1 \d k_2 \d k_3 ,
\end{equation}
and
\begin{equation}\label{eq:c-7}
 2\alpha \| \Aa \fF \|^2 =
   \frac23 \alpha^4 \sum_{i=1}^3
 \| A^- (H_f + P_f^2)^{-1} (A^+)^i \vac\|^2 .
\end{equation}
Substituting \eqref{eq:c-1}-\eqref{eq:c-7} into \eqref{eq:c-0}
finishes the proof of Theorem~\ref{thm:thm-upper-bound} and thus
the proof of the upper bound in Theorem~\ref{thm:main}. \qed

$\;$\\

\section{Proof of Theorem~\ref{thm:main}: Lower bound up to $o(\alpha^5\log\alpha^{-1})$ for
the binding energy}\label{S6.4}

In this section, we prove a
lower bound for $\Sigma_0 - \Sigma$ in
Theorem~\ref{thm:main} which coincides with the upper bound given
in \eqref{eq:c-0}.
To this end, it suffices to compute
\begin{equation}\nonumber
  \frac{\la (H-\Sigma_0 + e_0) \widetilde\Phi^{\mathrm{trial}},\
  \widetilde\Phi^{\mathrm{trial}}\ra}{\| \widetilde\Phi^{\mathrm{trial}}\|^2},
\end{equation}
with the following trial function
 $$
  \widetilde\Phi^{\mathrm{trial}} = \Sgsa  \Psz + \Psp,
 $$
where $\Psz$ is  ground state of the operator $T(0)$ (defined in
\eqref{def:T(P)}), with the normalization $\Proj^0 \Psz =\vac$,
$\Sgsa $ is the normalized ground state of $h_\alpha=-\Delta -
\frac{\alpha}{|x|}$, and $\Psp$  is defined by
\begin{equation}
\begin{split}
  & \Proj^0\Psp=2{\alpha^\frac12}    (h_\alpha +e_0)^{-1}
  Q_\alpha^\perp P.\Aa\fF,\
  \Proj^1\Psp =\fFz ,\\
  & \Proj^2 \Psp=\alpha\cEpp  \Proj^0\Psp +\sum_{i=1}^3 2 \alpha
  (H_f+P_f^2)^{-1} W_i \frac{\partial \Sgsa }{\partial x_i},\\
  &\Proj^3\Psp = -\alpha (H_f + P_f^2)^{-1} \Ac. \Ac \fFz  .
\end{split}
\end{equation}
Where $\fF$, $\fFz $, $\cEpp $ and $W_i$ are defined as in
Sections~\ref{section-S5} and \ref{section-S6}.
Technical Lemmata used in this proof are given in Appendix~\ref{app-Thm21prf-1}.

We compute
\begin{equation}\label{eq:fin-9}
\begin{split}
  \la H \widetilde\Phi^{\mathrm{trial}},
  \widetilde\Phi^{\mathrm{trial}}\ra
  = \la H \Sgsa  \Psz, \Sgsa  \Psz\ra
  + 2\Re \la H \Sgsa  \Psz, \Psp\ra
  + \la H\Psp ,\Psp \ra ,
\end{split}
\end{equation}
and we recall
\begin{equation}\label{decomp-H}
   H = h_\alpha + (H_f+ P_f^2)
   - 2\Re P. P_f - 2\alpha^\frac12 P. A(0)
   + 2 \alpha^\frac12 P_f. A(0) + 2 \alpha \Ac . \Aa + 2\alpha
   (\Aa)^2 .
\end{equation}

 $\bullet$ For the first term in
\eqref{eq:fin-9}, a straightforward computation shows
\begin{equation}\label{eq:fin-9-1}
\begin{split}
 \la H \Sgsa  \Psz, \Sgsa  \Psz\ra
 = (\Sigma_0 - e_0) \| \Sgsa  \|^2 \|\Psz \|^2  .
\end{split}
\end{equation}

$\bullet$ We estimate the second term on the right hand side of
\eqref{eq:fin-9} by computing each term that occurs in the
decomposition \eqref{decomp-H}.

$\diamond$ Using the symmetry of $\Sgsa $, the only non zero terms
in $2\Re \la H \Sgsa  \Psz, \Psp\ra$ are given by
\begin{equation}\label{eq:u-1}
 2\Re \la H \Sgsa  \Psz, \Psp\ra =
 -4\Re \la P.P_f \Sgsa \Psz, \Psp\ra
 - 4\Re \la P. \Ac \Sgsa  \Psz, \Psp\ra
 - 4\Re \la P. \Aa \Sgsa  \Psz, \Psp\ra .
\end{equation}

$\diamond$ The first term on the right hand side of \eqref{eq:u-1}
is estimated with similar arguments as in
Lemma~\ref{lem:lem-cross-1}, and using
$\|\Rvar\|^2=\mathcal{O}(\alpha^3)$ (Lemma~\ref{appendix:lem-A10})
and $\|\Rvar\|_*^2=\mathcal{O}(\alpha^4)$
(\cite[Theorem~3.2]{BCVVi}). We obtain
\begin{equation}\label{eq:u-2}
 -4\Re \la P.P_f \Sgsa \Psz, \Psp\ra = -\frac{2}{3}\alpha^4
 \sum_{i=1}^3 \la P_f^i \cEpp , (H_f + P_f^2) W_i\ra
 + \mathcal{O}(\alpha^5 \sqrt{\log\alpha^{-1}}) .
\end{equation}

$\diamond$ The second and third terms on the right hand side of
\eqref{eq:u-1} are estimated as in Lemma~\ref{lem:lem-cross-2},
and using again $\|\Rvar\|^2=\mathcal{O}(\alpha^2)$ and
$\|\Rvar\|_*^2=\mathcal{O}(\alpha^4)$. This yields
\begin{equation}\label{eq:u-3}
\begin{split}
 -4\Re \la P.\Ac \Sgsa \Psz, \Psp\ra =
 -2\| \fFz \|_{\sharp}^2 +
 o(\alpha^5\log\alpha^{-1}) ,
\end{split}
\end{equation}
and
\begin{equation}\label{eq:u-4}
\begin{split}
 & -4\Re \la P.\Aa \Sgsa \Psz, \Psp\ra \\
 & = -\frac23 \alpha^4 \sum_{i=1}^3 \la (\Aa)^i\cEpp ,
 (H_f+P_f^2)^{-1} (\Ac)^i \vac\ra +
 \mathcal{O}(\alpha^5 \sqrt{\log\alpha^{-1}}) .
\end{split}
\end{equation}

$\bullet$ Next, we estimate the third term on the right hand side
of \eqref{eq:fin-9}. For that sake, we also use the decomposition
\eqref{decomp-H} for $H$.

$\diamond$ For the term involving $h_\alpha$, using $\|(h_\alpha +
e_0)\Proj^0 \Psp\| = \mathcal{O}(\alpha^3)$ (since $\|P. \Aa \fF\|
= \mathcal{O}(\alpha^\frac52))$, and $\|(h_\alpha + e_0)
\frac{\partial \Sgsa }{\partial x_i}\| = \mathcal{O}(\alpha^3)$,
we directly obtain
\begin{equation}\label{eq:u-5}
\begin{split}
  & \la h_\alpha \Psp,\Psp \ra = \la (h_\alpha+e_0) \Psp,\Psp\ra -e_0 \| \Psp\|^2 \\
  & = 4\alpha \|(h_\alpha +e_0)^{-\frac12} Q_\alpha^\perp P. \Aa  \fF\|^2
  + \la (h_\alpha + e_0) \fFz , \fFz \ra -e_0 \|\Psp\| ^2
  + \mathcal{O}(\alpha^5) .
\end{split}
\end{equation}

$\diamond$ For the term with $H_f + P_f^2$, we use the estimate
\eqref{appendix:lem9-1} of Lemma~\ref{appendix:lem-A4}, and the
$\la \, \cdot \, , \, \cdot \, \ra_*$-orthogonality (see
\eqref{eq:0-with-Pf} of Lemma~\ref{lem:appendix-3}) of the two
vectors $\alpha\cEpp \Proj^0 g$ and $\sum_{i=1}^3 2 \alpha
(H_f+P_f^2)^{-1}W_i \frac{\partial \Sgsa }{\partial x_i}$ that
occur in $\Proj^2 \Psp$. We therefore obtain
\begin{equation}\label{eq:u-6}
\begin{split}
 & \la (H_f + P_f^2) \Psp , \Psp\ra = \la (H_f+P_f^2) \fFz , \fFz \ra
 + \alpha^2 \|\cEpp  \Proj^0 \Psp\|_*^2 \\
 & + \|\sum_{i=1}^3 2 \alpha
 (H_f+P_f^2)^{-1}W_i \frac{\partial \Sgsa }{\partial x_i}\|_*^2
 + \|\cEpp \|_*^2 \|\fFz \|^2
 + o(\alpha^5 \log\alpha^{-1}) .
\end{split}
\end{equation}

$\diamond$ Using the symmetry of $\Sgsa $, all terms in $\Re \la
P.P_f \Psp , \Psp\ra$ are zero, except the expression $\Re \la
P.P_f \Proj^2 \Psp, \Proj^2 \Psp\ra$, which is estimated as
follows
\begin{equation}\nonumber
\begin{split}
 &  \Re \la P.P_f \Proj^2\Psp,\Proj^2 \Psp\ra
 = 2\Re \la P. P_f \alpha\cEpp \Proj^0\Psp ,
 \sum_{i=1}^3 2 \alpha (H_f+P_f^2)^{-1}
 W_i \frac{\partial \Sgsa }{\partial x_i}\ra
 = \mathcal{O}(\alpha^5) ,
\end{split}
\end{equation}
where we used Lemma~\ref{lem:appendix-3} in the first equality to
prove that only the crossed term remains. Therefore, we obtain
\begin{equation}\label{eq:u-7}
  -2\Re \la P.P_f \Psp , \Psp\ra = \mathcal{O}(\alpha^5) .
\end{equation}

$\diamond$ The terms involving $-2\alpha^\frac12 \Re \la P. A(0)
\Psp , \Psp \ra$ is estimated as in the proof of
Lemma~\ref{lem:g1g1-1}. This yields
\begin{equation}\label{eq:u-8}
\begin{split}
 & -2\alpha^\frac12\la P. A(0) \Psp , \Psp\ra = -4\Re \alpha^\frac12
 \la P. \Ac\Psp,\Psp \ra \\
 &  = -4\Re \la P.\Ac \Proj^0\Psp,\Proj^1 \Psp\ra +
 \mathcal{O}(\alpha^5\sqrt{\log\alpha^{-1}}) \\
 & = -8\alpha \| (h_\alpha +e_0)^{-\frac12} Q_\alpha^\perp P. \Aa \fF\|^2
 + \mathcal{O}(\alpha^5\sqrt{\log\alpha^{-1}}) .
\end{split}
\end{equation}

$\diamond$ For $2\alpha^\frac12 \Re \la P_f.A(0)\Psp,\Psp \ra$, we
proceed as in the proof of Lemma~\ref{lem:g1g1-2}, and obtain
\begin{equation}\label{eq:u-9}
\begin{split}
 & 2\alpha^\frac12\la P_f. A(0) \Psp , \Psp\ra = 4\Re \alpha^\frac12
 \la P_f. \Aa\Psp,\Psp \ra \\
 & = 4\alpha^\frac12 \la P_f.\Aa \alpha \sum_{i=1}^3 2(H_f +
 P_f^2)^{-1} W_i \frac{\partial \Sgsa }{\partial x_i}, \fF\ra
 + \mathcal{O}(\alpha^5\sqrt{\log\alpha^{-1}}) .
\end{split}
\end{equation}

$\diamond$ Using the symmetry of $\Sgsa $ and $\Proj^0 \Psp$, the
term $2\alpha \Re \la \Aa . \Aa \Psp , \Psp\ra$ is estimated as
follows,
\begin{equation}\label{eq:u-10}
\begin{split}
 & 2\alpha \Re \la \Aa .\Aa \Psp , \Psp\ra \\
 & = 2\alpha \Re \la \Aa. \Aa
 (\alpha \cEpp  \Proj^0\Psp + \sum_{i=1}^3 2\alpha (H_f+P_f^2)^{-1}
 W_i \frac{\partial \Sgsa }{\partial x_i}), \Proj^0 \Psp\ra \\
 & + 2\alpha \Re\la \Aa . \Aa (-\alpha (H_f+P_f^2)^{-1}\Ac .\Ac
 \fFz ), \fFz \ra \\
 & = -2\alpha^2 \|\Proj^0 \Psp\|^2 \|\cEpp \|_*^2
 -2 \alpha^2 \|(H_f+P_f^2)^{-1} \Ac. \Ac \fFz  \|^2 \\
 & = -2\alpha^2 \|\Proj^0 \Psp\|^2 \|\cEpp \|_*^2
 - 2\alpha^2 \|\cEpp \|_*^2 \|\fFz \|^2 + o(\alpha^5\log\alpha^{-1}) ,
\end{split}
\end{equation}
where in the last inequality we used \eqref{appendix:lem9-1} of
Lemma~\ref{appendix:lem-A4}.

$\diamond$ Finally, a straightforward computation yields
\begin{equation}\label{eq:u-11}
 2\alpha \Re \la \Aa. \Ac\Psp , \Psp\ra = 2\alpha \| \Aa \fFz \|^2 +
 \mathcal{O}(\alpha^5) = 2\alpha \|\Aa \fF\|^2 +
 \mathcal{O}(\alpha^5) ,
\end{equation}
where in the last equality, we used Lemma~\ref{appendix:lem-A1}.

$\bullet$ Before collecting \eqref{eq:u-1}-\eqref{eq:u-11}, we
show that gathering some terms yield simpler expressions. Namely,
we have
\begin{equation}\label{eq:u-12}
\begin{split}
  & -\frac{2}{3}\alpha^4 \sum_{i=1}^3
  \la P_f^i \cEpp , (H_f + P_f^2) W_i\ra
  + 4\alpha^\frac12 \la P_f.\Aa \alpha \sum_{i=1}^3 2(H_f +
  P_f^2)^{-1} W_i \frac{\partial \Sgsa }{\partial x_i}, \fF\ra
  \\
  & + \|\sum_{i=1}^3 2 \alpha
 (H_f+P_f^2)^{-1}W_i \frac{\partial \Sgsa }{\partial x_i}\|_*^2
  = -\frac13 \alpha^4 \sum_{i=1}^3 \|(H_f+P_f^2)^{-1}W_i \|_*^2 .
\end{split}
\end{equation}
We also have, using $-\alpha^2\|\cEpp \|_*^2 = \Sigma_0
+\mathcal{O}(\alpha^3)$ (see e.g. \cite{BCVVi})
\begin{equation}\label{eq:u-13}
\begin{split}
 & (\Sigma_0 - e_0)\|\Psz \|^2 - e_0\|\Psp\| ^2 -\alpha^2 \|\cEpp \|_*^2
 \| \Proj^0 \Psp\|^2 - \alpha^2 \|\cEpp \|_*^2 \|\fFz \|^2 \\
 & = (\Sigma_0 - e_0) (\|\Psz\|^2 + \|\Psp\| ^2) +
 \mathcal{O}(\alpha^5).
\end{split}
\end{equation}
Therefore, collecting \eqref{eq:u-1}-\eqref{eq:u-11}, and using
the two equalities \eqref{eq:u-12}-\eqref{eq:u-13}, we obtain
\begin{equation}\nonumber
\begin{split}
  & \la H(\Sgsa \Psz + \Psp), \Sgsa \Psz +\Psp \ra
  = (\Sigma_0 -e_0) (\|\Psz\|^2 + \|\Psp\| ^2)
  -\frac{1}{3}\alpha^4
 \sum_{i=1}^3 \|(H_f + P_f^2)^{-1} W_i\|_*^2 \\
  & -\frac23 \alpha^4 \sum_{i=1}^3 \la (\Aa)^i\cEpp ,
 (H_f+P_f^2)^{-1} (\Ac)^i \vac\ra -\|\fFz \|_{\sharp}^2
  -4\alpha \| (h_\alpha + e_0)^{-\frac12} Q_\alpha^\perp P.
 \Aa\fF\|^2 \\
 & + 2\alpha \|\Aa \fF\|^2 + o(\alpha^5\log\alpha^{-1}) .
\end{split}
\end{equation}

With the definition $\dvar^{(1)}$, $\dvar^{(2)}$, and
$\dvar^{(3)}$, of Theorem~\ref{thm:main} this expression can be
rewritten as
\begin{equation}\label{eq:u-15}
\begin{split}
  \la (H-\Sigma_0 + e_0) \widetilde\Phi^{\mathrm{trial}},\
  \widetilde\Phi^{\mathrm{trial}}\ra =
  \dvar^{(1)}\alpha^3 + \dvar^{(2)}\alpha^4 + \dvar^{(3)}\alpha^5\log\alpha^{-1} +
  o(\alpha^5\log\alpha^{-1}) .
\end{split}
\end{equation}
Using Lemma~\ref{appendix:lem-A10} yields $\|\Psz\|^2
=1+\mathcal{O}(\alpha^2)$, which implies, due to the orthogonality
of $\Psp$  and $\Sgsa $ in $L^2(\R^3, \d x)$,
\begin{equation}\nonumber
\begin{split}
 \| \widetilde\Phi^{\mathrm{trial}}\|^2 = \|\Sgsa \|^2 \|\Psz\|^2 +
 \|\Psp\| ^2 = 1 + \mathcal{O}(\alpha^2).
\end{split}
\end{equation}
Therefore, together with \eqref{eq:u-15}, this gives
\begin{equation}\nonumber
\begin{split}
  \frac{\la (H-\Sigma_0 + e_0) \widetilde\Phi^{\mathrm{trial}},\
  \widetilde\Phi^{\mathrm{trial}}\ra}{\|\widetilde\Phi^{\mathrm{trial}}\|^2 } =
  \dvar^{(1)}\alpha^3 + \dvar^{(2)}\alpha^4 + \dvar^{(3)}\alpha^5\log\alpha^{-1} +
  o(\alpha^5\log\alpha^{-1}) .
\end{split}
\end{equation}
which concludes the proof of the lower bound in
Theorem~\ref{thm:main}.
\qed

\newpage

\begin{appendix}


\section{Proof of Proposition~\ref{prop:prf-Hgpsi}}\label{subsection:6-1}

In this Appendix, we provide   proofs of results that have a
high level of technicality.
To begin with, we establish Proposition~\ref{prop:prf-Hgpsi}.

\begin{lemma}\label{lem:lem-PPf}
The following holds
\begin{equation}\label{eq:lem-PPf}
\begin{split}
 & - 4\Re\la P.P_f \Sgsa  \cU , \gsp\ra \geq  -\frac43 \alpha^2 \Re
 \sum_{i=1}^3 \overline{\muvar_{2,i}}\, \|\nabla \Sgsa \|^2\,
 \la (H_f + P_f^2)^{-1} P_f^i \cEpp , W_i\ra \\
 &  - 4\alpha \Re \la \nabla \Sgsa .   P_f \cEpp , \Proj^2
 \gspl \ra -\epsilon  \|H_f^{\frac12} \gspl \|^2
 - \epsilon \alpha^5 |\muvar_3|^2 + \mathcal{O}(\alpha^5)
\end{split}
\end{equation}
\end{lemma}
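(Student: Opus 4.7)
The plan is to decompose the inner product by photon number. Since $P\cdot P_f$ preserves $N_f$ (the electron factor $P$ is purely electronic, while $P_f$ commutes with $N_f$), I begin with
$$-4\Re\la P\cdot P_f\,\Sgsa\cU, \gsp\ra = -4\Re\sum_{n\geq 1}\la P\cdot P_f\,\Sgsa\,\Proj^n\cU, \Proj^n\gsp\ra,$$
the $n=0$ contribution vanishing because $P_f\Proj^0 = 0$. In each surviving sector I substitute the decomposition of $\Proj^n\cU$ from Definition~\ref{def:decomposition-2} and the refined splitting $\Proj^n\gsp = \Proj^n\gsps + \Proj^n\gspl$ from Definition~\ref{def:G-improved}. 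The electronic scalar products $\la P^j\Sgsa, \partial_i\Sgsa\ra_{L^2(\R^3,\d x)}$ vanish unless $i=j$, in which case the spherical symmetry of $\Sgsa$ yields $\la P^i\Sgsa, \partial_i\Sgsa\ra = -i\,\tfrac{1}{3}\|\nabla\Sgsa\|^2$; this is precisely what produces the $\tfrac{1}{3}\|\nabla\Sgsa\|^2$ prefactor in the main term.

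For the one-photon sector I pair $2\eta_1\alpha^{3/2}\cEm + \Proj^1\Rvar$ against $\muvar_1\fFz + \Proj^1\gspl$. Using Lemma~\ref{appendix:lem-A1} I first replace $\fFz$ by $\fF = 2\alpha^{1/2}\nabla\Sgsa\cdot(H_f+P_f^2)^{-1}\Ac\vac$ modulo $\mathcal{O}(\alpha^5)$; after contracting the electronic factors via the symmetry identity, the remaining photon part is essentially $\la\cEm, \Ac\cdot P_f\vac\ra = 0$, exactly as was used in Lemma~\ref{lem:lem-cross-1}. The cross pairings involving $\Proj^1\gspl$ are dispatched by Cauchy--Schwarz as $\epsilon\|H_f^{1/2}\Proj^1\gspl\|^2 + c\alpha^5$, using $|\eta_1-1|^2\leq c\alpha$ from Theorem~\ref{thm:thm-main2} and $\|H_f^{1/2}\Proj^1\Rvar\|^2 = \mathcal{O}(\alpha^4)$.

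The two-photon sector is where the main term originates. Writing $\Proj^2\cU = \eta_2\alpha\cEpp + \Proj^2\Rvar$ and using
$$\Proj^2\gsp = \alpha\muvar_2\cEpp\Proj^0\gsps + \alpha\sum_{i=1}^3\muvar_{2,i}(H_f+P_f^2)^{-1}W_i\frac{\partial\Sgsa}{\partial x_i} + \Proj^2\gspl,$$
the dominant contribution is the pairing of $\eta_2\alpha\cEpp\Sgsa$ with the second piece of $\Proj^2\gsps$. Applying the symmetry identity to the electronic factor, this term equals, modulo $\mathcal{O}(\alpha^5)$, precisely $-\tfrac{4}{3}\alpha^2\Re\sum_i\overline{\muvar_{2,i}}\|\nabla\Sgsa\|^2\la(H_f+P_f^2)^{-1}P_f^i\cEpp, W_i\ra$ after replacing $\eta_2$ by $1$ at a cost of $\mathcal{O}(\alpha^5)$ via $|\eta_2-1|^2\leq c\alpha^2$. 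The pairing of $\eta_2\alpha\cEpp\Sgsa$ with $\Proj^2\gspl$ yields analogously the auxiliary term $-4\alpha\Re\la\nabla\Sgsa\cdot P_f\cEpp, \Proj^2\gspl\ra$, again modulo $\mathcal{O}(\alpha^5)$. Pairings with the $\alpha\muvar_2\cEpp\Proj^0\gsps$ piece are controlled by Cauchy--Schwarz using $|\muvar_2|\|\Proj^0\gsps\| = \mathcal{O}(\alpha)$ from Lemma~\ref{lem:improved-estimates}, and all terms involving $\Proj^2\Rvar$ are absorbed into the $\mathcal{O}(\alpha^5)$ remainder through $\|H_f^{1/2}\Proj^2\Rvar\|^2 = \mathcal{O}(\alpha^4)$.

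For $n=3$, the pairing of $2\eta_3\alpha^{3/2}\cEp + \Proj^3\Rvar$ with $\alpha\muvar_3(H_f+P_f^2)^{-1}\Ac\cdot\Ac\fFz + \Proj^3\gspl$ is handled by Cauchy--Schwarz, producing precisely the errors $\epsilon\alpha^5|\muvar_3|^2$ and $\epsilon\|H_f^{1/2}\Proj^3\gspl\|^2$ modulo $\mathcal{O}(\alpha^5)$. For $n\geq 4$ only $\Proj^n\Rvar$ and $\Proj^n\gspl$ contribute, and Cauchy--Schwarz together with $\|H_f^{1/2}\Rvar\|^2 = \mathcal{O}(\alpha^4)$ gives $\epsilon\|H_f^{1/2}\Proj^{\geq 4}\gspl\|^2 + \mathcal{O}(\alpha^5)$. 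The main technical obstacle is the bookkeeping in the two-photon sector: one must verify both that the replacement $\eta_2 \to 1$ in the identification of the main term and of the $\Proj^2\gspl$ cross term costs only $\mathcal{O}(\alpha^5)$, and that the Cauchy--Schwarz estimate for the $\muvar_2\cEpp\Proj^0\gsps$ contribution is truly $\mathcal{O}(\alpha^5)$; both rely essentially on the refined a priori estimates of Theorem~\ref{thm:thm-main2} and the uniform bound $|\muvar_{2,i}| = \mathcal{O}(1)$ from Lemma~\ref{lem:improved-estimates}.
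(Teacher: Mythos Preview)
Your photon-number decomposition and the identification of the main term in the $n=2$ sector match the paper's argument. However, there is a genuine gap in your treatment of the pairing of $\eta_2\alpha\cEpp\Sgsa$ with the piece $\alpha\muvar_2\cEpp\,\Proj^0\gsps$. Cauchy--Schwarz together with $|\muvar_2|\,\|\Proj^0\gsps\| = \mathcal{O}(\alpha)$ yields only
\[
\bigl|4\alpha^2\eta_2\overline{\muvar_2}\,\la\nabla\Sgsa\cdot P_f\cEpp,\ \cEpp\,\Proj^0\gsps\ra\bigr|
\;\leq\; c\,\alpha^2\,\|\nabla\Sgsa\|\,|\muvar_2|\,\|\Proj^0\gsps\|
\;=\;\mathcal{O}(\alpha^4),
\]
one full order short of what the lemma requires. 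The term factorizes as $\sum_i\la\partial_i\Sgsa,\Proj^0\gsps\ra\,\la P_f^i\cEpp,\cEpp\ra$, and the paper kills it by invoking the exact vanishing $\la P_f^i\cEpp,\cEpp\ra = 0$ from Lemma~\ref{lem:appendix-3} (a parity identity for $\cEpp$). No size estimate on $\muvar_2$ or $\Proj^0\gsps$ can replace this cancellation, since $|\muvar_2|$ is not known to be bounded independently of $\|\Proj^0\gsps\|$.

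A secondary issue arises in the $n=3$ sector: a plain Cauchy--Schwarz bound on $\la P\cdot P_f\,2\eta_3\alpha^{3/2}\cEp\Sgsa,\ \Proj^3\gsps\ra$ picks up $\|(H_f+P_f^2)^{-1}\Ac\cdot\Ac\fFz\| = \mathcal{O}(\alpha^{3/2}(\log\alpha^{-1})^{1/2})$ (Lemma~\ref{appendix:lem-A4}), which after Young's inequality leaves an $\mathcal{O}(\alpha^5\log\alpha^{-1})$ remainder rather than $\mathcal{O}(\alpha^5)$. The paper removes the logarithm by inserting the weights $|k_j|^{\pm 1/6}$ and using the sharper estimate $\|\,|k_1|^{1/6}|k_2|^{1/6}|k_3|^{1/6}(H_f+P_f^2)^{-1}\Ac\cdot\Ac\fFz\|^2 = \mathcal{O}(\alpha^3)$, also from Lemma~\ref{appendix:lem-A4}; you should make this explicit.
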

%
%
%
\begin{proof}
For $n\neq 2,3$, with the estimates from the proof of
Lemma~\ref{lem:lem-cross-1} and using that due to
Theorem~\ref{thm:thm-main2} we have
 $$
  \|H_f^\frac12\Rvar\|^2 =\mathcal{O}(\alpha^4),\
  |\eta_1|=\mathcal{O}(1), \mbox{ and } |\muvar_1|=\mathcal{O}(1),
 $$
and since $\Proj^1\rvar= \Proj^1 \gspl $ and $\Proj^{n\geq4}\rvar
=\Proj^{n\geq4} \gspl $, we obtain
\begin{equation}\label{eq:PPf-0}
 \sum_{n\neq 2,3} -4\Re \la P.P_f \Sgsa  \Proj^n \cU , \Proj^n \gsp\ra
 \geq -\epsilon \| H_f^{\frac12} \gspl \|^2 + \mathcal{O}(\alpha^5) .
\end{equation}

For $n=2$,
\begin{equation}\label{eq:PPf-1}
\begin{split}
 & -4\Re \la \nabla \Sgsa  . (\alpha \eta_2 P_f\cEpp  + P_f \Proj^2
\Rvar), \Proj^2 \gsp\ra \geq -4\Re \la \nabla \Sgsa  . \alpha
\eta_2
 P_f \cEpp , \Proj^2 \gsps \ra \\
 & - 4\Re \la \nabla \Sgsa . \alpha \eta_2 P_f \cEpp , \Proj^2
 \gspl \ra - c\alpha \|H_f^{\frac12} \Proj^2\Rvar\|^2 - c\alpha
 \|H_f^{\frac12} \Proj^2 \gsp\|^2 .
\end{split}
\end{equation}
Using Theorem~\ref{thm:thm-main2}, the last two terms on the right
hand side of \eqref{eq:PPf-1} can be estimated by
$\mathcal{O}(\alpha^5)$. For the first term on the right hand side
of \eqref{eq:PPf-1}, using from Lemma~\ref{lem:appendix-3} that
$\la P_f^i\cEpp , \cEpp \ra=0$, from Theorem~\ref{thm:thm-main2}
that $\eta_2 = 1+\mathcal{O}(\alpha)$, and from
Lemma~\ref{lem:improved-estimates} that $\muvar_{2,i} =
\mathcal{O}(1)$, holds
\begin{equation}
\begin{split}
 & - 4 \alpha \Re \la \nabla \Sgsa  . \eta_2 P_f \cEpp , \Proj^2
 \gsps \ra \\
 & = - 4 \Re \alpha \la \nabla \Sgsa  . \eta_2 P_f \cEpp , \alpha
 \muvar_2 \cEpp  \Proj^0 \gsp\ra \\
  & - 4 \Re \alpha \la \nabla \Sgsa  . \eta_2 P_f \cEpp , \sum_i
  \alpha \muvar_{2,i} (H_f+P_f^2)^{-1} W_i \frac{\partial
  \Sgsa }{\partial x_i}\ra \\
  & = - \frac43 \Re \alpha^2 \|\nabla \Sgsa \|^2
  \sum_i \overline{\muvar_{2,i}} \la (H_f+P_f^2)^{-1} P_f^i \cEpp ,
  W_i\ra \\
  & = - \frac13 \alpha^4\Re
  \sum_{i=1}^3 \overline{\muvar_{2,i}} \la (H_f+P_f^2)^{-1} P_f^i \cEpp ,
  W_i\ra.
\end{split}
\end{equation}
We also used that $\la \frac{\partial \Sgsa }{\partial x_i},
\frac{\partial \Sgsa }{\partial x_j} \ra =0$ for $i\neq j$ and
$\|\frac{\partial \Sgsa }{\partial x_i}\| = \|\frac{\partial \Sgsa
}{\partial x_j}\|$ for all $i$ and $j$.

Finally, the second term on the right hand side of
\eqref{eq:PPf-1} gives the second term on the right hand side of
\eqref{eq:lem-PPf} plus $\mathcal{O}(\alpha^5)$, using from
Theorem~\ref{thm:thm-main2} that $|\eta_2 - 1|^2 =
\mathcal{O}(\alpha^2)$ and $\| H_f^\frac12 \Proj^2 \gspl  \| =
\mathcal{O}(\alpha^2)$.

To complete the proof, we shall estimate now the term for $n=3$,
\begin{equation}\label{eq:PPf-2}
\begin{split}
 & 4\Re \la P.P_f \Sgsa  \Proj^3\cU , \Proj^3 \gsp\ra =
 4 \Re \alpha^{\frac32} 2\eta_3 \la P.P_f \Sgsa  \cEp , \Proj^3
 \gsps \ra\\
 & + 4\Re\alpha^{\frac32} 2\eta_3 \la P.P_f\Sgsa  \cEp ,
 \Proj^3 \gspl \ra
 + 4\Re \la P.P_f \Sgsa  \Proj^3\Rvar,\Proj^3 \gsp\ra .
\end{split}
\end{equation}
The inequalities $\| H_f^{\frac12}\Rvar\| \leq c\alpha^2$ and
$\|H_f^{\frac12} \Proj^3 \gsp\| \leq c\alpha^2$ (see
Theorem~\ref{thm:thm-main2}) imply that the last term on the right
hand side of \eqref{eq:PPf-2} is $\mathcal{O}(\alpha^5)$. For the
second term on the right hand side of \eqref{eq:PPf-2} holds
\begin{equation}
 \Re \alpha^{\frac32} \eta_3 \la P.P_f \Sgsa  \cEp , \Proj^3
 \gspl \ra
 \geq -\epsilon \| H_f^{\frac12}\Proj^3 \gspl \|^2 +
 \mathcal{O}(\alpha^5) ,
\end{equation}
since from Theorem~\ref{thm:thm-main2} we have $\eta_3 =
\mathcal{O}(1)$.

Finally to estimate the first term on the right hand side of
\eqref{eq:PPf-2}, we note that
 $$
  |k_1|^{- \frac16}|k_2|^{- \frac16}|k_3|^{- \frac16}
  \cEp  (k_1, k_2, k_3) \in L^2(\R^9, \C^6)\ ,
 $$
and from Lemma~\ref{appendix:lem-A4},
 $$
  \|\, |k_1|^{\frac16}|k_2|^{\frac16}|k_3|^{\frac16}
  (H_f + P_f^2)^{-1}
  \Ac .\Ac \fFz \|^2 = \mathcal{O}(\alpha^3) .
 $$
This implies, using again $|\eta_3|=\mathcal{O}(1)$, and the
explicit expression of $\Proj^3 \gsps $
\begin{equation}\label{eq:PPf-last}
  |\alpha^{\frac32} 2\eta_3 \la P.P_f \Sgsa  \cEp , \Proj^3
  \gsps \ra| \leq c\alpha^{\frac52} |\muvar_3| \alpha^{\frac32} |\eta_3|
  \, \|P\Sgsa \| \leq \epsilon \alpha^5 |\muvar_3|^2 +
  \mathcal{O}(\alpha^5) .
\end{equation}
Collecting \eqref{eq:PPf-0}-\eqref{eq:PPf-last} concludes the
proof.
\end{proof}

\begin{lemma}\label{lem:lem-A(0)}
The following estimate holds
\begin{equation}\label{eq:lem-A(0)}
\begin{split}
 & -4\sqrt {\alpha} \Re \la P. \Ac \Sgsa  \cU , \gsp\ra\\
 &  =
 - 2\Re \overline{\muvar_1} \Proj^0\cU  \|\fFz \|_{\sharp}^2
 -\frac{1}{4} M[\gspl ]
 -\alpha^5|\muvar_3|^2
 +\mathcal{O}(\alpha^5) ,
\end{split}
\end{equation}
\end{lemma}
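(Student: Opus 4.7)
\emph{Proof plan.} I would expand $-4\sqrt{\alpha}\,\Re\la P.\Ac\Sgsa\cU,\gsp\ra$ sector by sector on the photon side, using Definition~\ref{def:decomposition-2} for $\cU$ and the refined splitting $\gsp=\gsps+\gspl$ of Definition~\ref{def:G-improved}. Since $P.\Ac$ raises the photon number by one, only the pairings $\la P.\Ac\Sgsa\Proj^n\cU,\Proj^{n+1}\gsp\ra$ with $n=0,1,2,\ldots$ contribute, and the whole analysis parallels Lemma~\ref{lem:lem-PPf}, with the distinguished sector shifted from $n=2$ to $n=0$.

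The main term $-2\,\Re(\overline{\muvar_1}\Proj^0\cU)\,\|\fFz\|_\sharp^2$ comes entirely from the $n=0$ sector. Here $\Proj^1\gsp=\Proj^1\gsps=\muvar_1\fFz$ by Definition~\ref{def:G-improved}, so the pairing reduces to a scalar multiple of $\la P.\Ac\Sgsa\vac,\fFz\ra$. The defining identity $(H_f+P_f^2+h_\alpha+e_0)\fFz=2\sqrt{\alpha}\,\Ac.\nabla\Sgsa\,\vac$, combined with the self-adjointness of the bracketed operator, then collapses this pairing to the stated main term via exactly the algebraic identity already used in equation (\ref{eq:alpha3-PA+}) of the proof of Lemma~\ref{lem:lem-cross-2}.

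For $n=1$ and $n=2$ the contributions are controlled by Cauchy--Schwarz in each sector. The $n=1$ pairing involves $\Proj^1\cU=2\eta_1\alpha^{3/2}\cEm+\Proj^1\Rvar$ and $\Proj^2\gsp=\Proj^2\gsps+\Proj^2\gspl$; the $\Proj^2\gsps$-piece is of size $\mathcal{O}(\alpha^5)$ using the bounds $|\eta_1|=\mathcal{O}(1)$ and $\|H_f^{1/2}\Rvar\|^2=\mathcal{O}(\alpha^4)$ from Theorem~\ref{thm:thm-main2}, together with $|\muvar_2|\,\|\Proj^0\gsps\|=\mathcal{O}(\alpha)$ and $\muvar_{2,i}=\mathcal{O}(1)$ from Lemma~\ref{lem:improved-estimates}; the $\Proj^2\gspl$-piece is absorbed into $\tfrac14 M[\gspl]$ by a Young inequality on $\|\Proj^2\gspl\|^2$. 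The $n=2$ case, in which $\Proj^3\gsps=\alpha\muvar_3(H_f+P_f^2)^{-1}\Ac.\Ac\fFz$, cannot be bounded directly in $L^2$ because of infrared singularities; one instead applies the same infrared-weighted Schwarz estimate used in the $n=3$ sector of Lemma~\ref{lem:lem-PPf}, with weights $|k_1|^{1/6}|k_2|^{1/6}|k_3|^{1/6}$ and their inverses and the bound $\||k_1|^{1/6}|k_2|^{1/6}|k_3|^{1/6}(H_f+P_f^2)^{-1}\Ac.\Ac\fFz\|^2=\mathcal{O}(\alpha^3)$ from Lemma~\ref{appendix:lem-A4}, followed by Young's inequality, to produce $-\alpha^5|\muvar_3|^2+\mathcal{O}(\alpha^5)$. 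The $\Proj^3\gspl$-piece is absorbed into $M[\gspl]$, and for $n\geq 3$ only $\Rvar$-terms survive on the $\cU$-side: Schwarz combined with Theorem~\ref{thm:thm-main2} bounds these by $\mathcal{O}(\alpha^5)+\epsilon\|H_f^{1/2}\gspl\|^2$, the latter again fitting inside $M[\gspl]$.

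The delicate step is the $n=2$ case. A naive $L^2$ Schwarz estimate gives only $\mathcal{O}(\alpha^{9/2})|\muvar_3|$, whose Young bound produces an $\mathcal{O}(\alpha^4)$ error that would destroy the claim; the infrared-weighted estimate is essential to replace this by $\mathcal{O}(\alpha^5|\muvar_3|)$, and hence by $\alpha^5|\muvar_3|^2+\mathcal{O}(\alpha^5)$. This argument is exactly parallel to the $n=3$ analysis in Lemma~\ref{lem:lem-PPf} and depends crucially on the precise form of $\Proj^3\gsps$ fixed in Definition~\ref{def:G-improved}.
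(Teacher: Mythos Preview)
Your plan is essentially the paper's argument, reorganised sector-by-sector rather than by the paper's split into the $\Rvar$-contribution (Step~1) and the contribution of $\Proj^0\cU+2\eta_1\alpha^{3/2}\cEm+\eta_2\alpha\cEpp+2\eta_3\alpha^{3/2}\cEp$ (Step~2). The key ingredients---the algebraic identity (\ref{eq:alpha3-PA+}) for the main term, absorption of the $\gspl$-pieces into $M[\gspl]$, and an infrared-weighted Schwarz inequality in the pairing with $\Proj^3\gsps$---are all present and correctly identified.

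Two small corrections. First, your claim that $\Proj^1\gsp=\Proj^1\gsps=\muvar_1\fFz$ is wrong: Definition~\ref{def:G-improved} gives $\Proj^1\gsp=\muvar_1\fFz+\Proj^1\gspl$ with $\Proj^1\gspl$ generally nonzero. The main term is still exactly $-2\Re(\overline{\muvar_1}\Proj^0\cU)\|\fFz\|_\sharp^2$, but the reason is the $\langle\,\cdot\,,\,\cdot\,\rangle_\sharp$-orthogonality $\langle\fFz,\Proj^1\gspl\rangle_\sharp=0$, which kills the extra contribution via the identity $2\sqrt{\alpha}\,P.\Ac\Sgsa\vac=(H_f+P_f^2+h_\alpha+e_0)\fFz$. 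Second, for $n=3$ the $\cU$-side is $2\eta_3\alpha^{3/2}\cEp+\Proj^3\Rvar$, not just $\Rvar$; the $\cEp$-term still costs only $\mathcal{O}(\alpha^5)$ but must be handled.

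On the infrared weights at $n=2$: the paper actually uses two different devices. For the $\eta_2\alpha\cEpp$-part it applies weights $|k_1|^{1/4}|k_2|^{1/4}$ on the two-photon side (using $|k_1|^{-1/4}|k_2|^{-1/4}\cEpp\in L^2$ and $\||k_1|^{1/4}|k_2|^{1/4}\Aa(H_f+P_f^2)^{-1}\Ac.\Ac\fFz\|^2=\mathcal{O}(\alpha^3)$), while for the $\Proj^2\Rvar$-part it avoids weights altogether and instead splits the Young inequality asymmetrically at powers $\alpha^{3-1/16}$ and $\alpha^{2+1/16}$, exploiting $\|\Proj^2\Rvar\|^2=\mathcal{O}(\alpha^{33/16})$. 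Your three-variable weights would also work but are not what the paper does.
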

%
%
%
\begin{proof}
Obviously
\begin{equation}\label{eq:PA+-1}
\begin{split}
 & -4 \Re\alpha^{\frac12} \la P. \Ac \Sgsa  \cU , \gsp\ra
 \\
 & =
 -4 \Re \alpha^{\frac12} \la P.\Ac \Sgsa  (\Proj^0 \cU  + 2\eta_1
 \alpha^{\frac32}\cEm  +\eta_2\alpha\cEpp  +
 2\eta_3\alpha^{\frac32} \cEp  +\Rvar), \gsp\ra .
\end{split}
\end{equation}
%
%
\noindent$\diamond$ \textit{Step 1} From \eqref{eq:PA+-1}, let us
first estimate the term
\begin{equation}\label{eq:PA+-1bis}
 -4 \Re \alpha^{\frac12}\la P.\Ac \Sgsa \Rvar,\gsp\ra =
 -4 \alpha^{\frac12} \Re \sum_{n=0}^\infty \la P.\Ac
 \Proj^n \Sgsa \Rvar,
 \Proj^{n+1} \gsp\ra.
\end{equation}

For $n=0$, the corresponding term vanishes since $\Proj^0\Rvar=0$.

For $n> 2$, we can use \eqref{eq:alpha3-PA+-2} where the term
$\mathcal{O}(\alpha^5\log\alpha^{-1})$ can be replaced with
$\mathcal{O}(\alpha^5)$ because we know from
Theorem~\ref{thm:thm-main2} that $\|\Rvar\|^2 =
\mathcal{O}(\alpha^{\frac{33}{16}})$.

For $n=1$, we have
\begin{equation}\label{eq:PA+-2}
\begin{split}
 & |4\alpha^{\frac12} \Re \la P.\Ac
 \Proj^1 \Sgsa \Rvar,\Proj^2 \gsps  + \Proj^2 \gspl \ra |
 \leq c\alpha^3 \|\Proj^1\Rvar\|^2 + \epsilon \|H_f^{\frac12} \Proj^2
 \gspl \|^2 \\
 & + | 4\alpha^{\frac12} \la \nabla \Sgsa  \Proj^1\Rvar,\Aa \left(
 \alpha\muvar_2 \cEpp  \Proj^0\gsps  + \alpha \sum_{i=1}^3 \muvar_{2,i}
 (H_f + P_f^2)^{-1} W_i \frac{\partial \Sgsa }{\partial
 x_i}\right)\ra | .
\end{split}
\end{equation}
To estimate the last term on the right hand side we note that
$H_f^{-\frac12} \Aa\cEpp \in L^2$ and $H_f^{-\frac12} \Aa
(H_f+P_f^2)^{-1} W_i \in L^2$ which thus gives for this term the
bound
\begin{equation}\label{eq:PA+-3}
 c\alpha \|H_f^{\frac12}\Rvar\|^2 +\epsilon \alpha^4 |\muvar_2|^2
 \|\Proj^0 \gsps \|^2 + \epsilon \alpha^6 \sum_i |\muvar_{2,i}|^2
 = \mathcal{O}(\alpha^5) ,
\end{equation}
using Theorem~\ref{thm:thm-main2} and
Lemma~\ref{lem:improved-estimates}. The inequalities
\eqref{eq:PA+-2} and \eqref{eq:PA+-3} imply
\begin{equation}
 | \Re \alpha^{\frac12} \la P.\Ac \Proj^1 \Sgsa \Rvar,
 \Proj^2 \gsps  + \Proj^2 \gspl \ra| \leq
 \epsilon \| H_f^{\frac12} \Proj^2 \gspl \|^2 +\mathcal{O}(\alpha^5) .
\end{equation}
To complete the estimate of the term $4\alpha^{\frac12} \Re\la
P.\Ac \Sgsa \Rvar,\gsp\ra$ we have to estimate the term for $n=2$
in \eqref{eq:PA+-1bis}, namely $4\Re\alpha^\frac12\la P.\Ac \Sgsa
\Proj^2\Rvar,\Proj^3\gsps  +\Proj^3 \gspl \ra$. Obviously,
\begin{equation}
 |\Re\alpha^{\frac12} \la P. \Ac \Sgsa  \Proj^2\Rvar,\Proj^3 \gspl \ra|
 \leq \epsilon \|H_f^\frac12 \gspl \|^2 +\mathcal{O}(\alpha^5) .
\end{equation}
For the term involving $\Proj^3 \gsps $ we have
\begin{equation}\label{eq:PA+-4}
\begin{split}
 & |\Re\alpha^\frac12 \la P \Sgsa  \Proj^2\Rvar,\alpha\muvar_3
 \Aa(H_f + P_f^2)^{-1} \Ac. \Ac \fFz \ra| \\
 & \leq c\alpha^{3-\frac{1}{16}} \|\Proj^2
\Rvar\|^2 + \epsilon |\muvar_3|^2 \alpha^{2+\frac{1}{16}}
 \|\fFz \|^2 = |\muvar_3|^2
 \alpha^{5} + \mathcal{O}(\alpha^5) ,
\end{split}
\end{equation}
using Theorem~\ref{thm:thm-main2} and Lemma~\ref{appendix:lem-A4}.
Collecting \eqref{eq:PA+-1bis}-\eqref{eq:PA+-4} yields
\begin{equation}\label{eq:PA+-5}
 |\Re\alpha^\frac12 \la P.\Ac \Sgsa \Rvar,g\ra|\leq
 \epsilon\|H_f^\frac12 \gspl \|^2
 + \alpha^5 |\muvar_3|^2 +\mathcal{O}(\alpha^5) .
\end{equation}
%
%
\noindent$\diamond$ \textit{Step 2} We next estimate in
\eqref{eq:PA+-1} the term $-4\Re \alpha^\frac12 \la P.\Ac \Sgsa
(\Proj^0\cU + \alpha^\frac32 2\eta_1 \cEm  +\alpha \eta_2 \cEpp  +
\alpha^\frac32 2\eta_3\cEp ), \gsp\ra$. First using
\eqref{eq:alpha3-PA+-2} yields
\begin{equation}\label{eq:PA+-5bis}
 -4\Re \alpha^\frac12 \la P.\Ac \Sgsa
 \Proj^0\cU , \gsp\ra = - 2 \Re (\overline{\muvar_1} \Proj^0\cU )
 \|\fFz \|_{\sharp}^2 .
\end{equation}
We also have, using Theorem~\ref{thm:thm-main2}
\begin{equation}\label{eq:PA+-6}
\begin{split}
 & |\alpha^\frac12 \la P \Sgsa  (\alpha^\frac32 2\eta_1\cEm  +
 \alpha^\frac32 2\eta_3 \cEp ) , \Aa \gsp\ra | \\
 & \leq \alpha \| H_f^\frac12 \Proj^2 \gsp\|^2 + \alpha \|H_f^\frac12
 \Proj^4 \gsp\|^2 +\mathcal{O}(\alpha^5) = \mathcal{O}(\alpha^5) ,
\end{split}
\end{equation}
and
\begin{equation}\label{eq:PA+-7}
\begin{split}
 & |\alpha^\frac12 \la P \Sgsa  \alpha\eta_2\cEpp
 , \Aa (\Proj^3 \gsps  +\Proj^3 \gspl )\ra | \\
 & \leq \epsilon \| H_f^\frac12 \Proj^3 \gspl \|^2
 + |\alpha^\frac32 \la P\Sgsa  \eta_2 |k_1|^{-\frac14}
 |k_2|^{-\frac14} \cEpp , |k_1|^{\frac14} |k_2|^{\frac14}
 \Aa \Proj^3 \gsps \ra |
 +\mathcal{O}(\alpha^5) \\
 & \leq \epsilon \| H_f^\frac12 \Proj^3 \gspl \|^2 +
 \epsilon |\muvar_3|^2\alpha^5 +\mathcal{O}(\alpha^5) .
\end{split}
\end{equation}
Here we used $|k_1|^{-\frac14} |k_2|^{-\frac14}\cEpp  \in L^2$ and
$\||k_1|^\frac14 |k_2|^\frac14 \Aa (H_f+P_f^2)^{-1}\Ac .\Ac \fFz
\|^2 = \mathcal{O}(\alpha^3)$ (see Lemma~\ref{appendix:lem-A4}).

Collecting \eqref{eq:PA+-5}-\eqref{eq:PA+-7} yields
\begin{equation}\label{eq:PA+-7bis}
 -4\Re\alpha^\frac12 \la P.\Ac \Sgsa  \cU , \gsp\ra \geq
 -2\Re \overline{\muvar_1} \Proj^0\cU  \|\fFz \|_{\sharp}^2
 - \epsilon \| H_f^\frac12 \gspl \|^2 - \epsilon \alpha^5 |\muvar_3|^2
 +\mathcal{O}(\alpha^5) .
\end{equation}
\end{proof}

\begin{lemma}\label{lem:lem-A(0)-bis}
\begin{equation}
\begin{split}
 & -4\sqrt{\alpha} \Re \la P. \Aa \Sgsa  \cU , \gsp\ra \geq \\
 & -\frac23 \alpha^4  \Re
 \sum_{i=1}^3 \la\, (H_f+P_f^2)^{-\frac12} (\Aa)^i \cEpp , (H_f+P_f^2)^{-\frac12}
 (\Ac)^i \vac\ra \\
 & -\frac14 M[\gspl ]
 -\epsilon \alpha^2 \| (\gsps )^a \|^2
 -\epsilon \alpha^5\log\alpha^{-1} (\alpha |\muvar_3|^2 +1)
 - |\muvar_1 -1| c\alpha^4 + \mathcal{O}(\alpha^5) \ ,
\end{split}
\end{equation}
where $(\gsps )^a(x) := (\Proj^0 \gsps (x) - \Proj^0 \gsps
(-x))/2$ is the odd part of $\Proj^0 \gsps $.
\end{lemma}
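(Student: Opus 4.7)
\textbf{Proof plan for Lemma~\ref{lem:lem-A(0)-bis}.} The strategy mirrors Lemma~\ref{lem:lem-A(0)}, but with $\Aa$ in place of $\Ac$: since $\Aa$ lowers photon number by one, I pair $\Proj^{n+1}\cU$ with $\Proj^n\gsp$ and write
\begin{equation}\nonumber
\la P\cdot\Aa\Sgsa\cU,\gsp\ra \;=\; \sum_{n\geq 0}\la P\cdot\Aa\Sgsa\Proj^{n+1}\cU,\Proj^n\gsp\ra .
\end{equation}
Substituting the decomposition of Definition~\ref{def:decomposition-2} for $\cU$ and the refined decomposition of Definition~\ref{def:G-improved} for $\gsp$, each of the resulting terms can be classified into (i) the main $\alpha^4$ contribution, (ii) corrections absorbable into $|\muvar_1-1|c\alpha^4$, $\epsilon\alpha^2\|(\gsps)^a\|^2$, $\frac14 M[\gspl]$ or $\epsilon\alpha^5\log\alpha^{-1}(\alpha|\muvar_3|^2+1)$, and (iii) terms of order $\mathcal{O}(\alpha^5)$.

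The main term comes from the pair $(\Proj^2\cU,\Proj^1\gsp)$, specifically from $-4\sqrt{\alpha}\Re\la P\cdot\Aa\Sgsa\eta_2\alpha\cEpp,\muvar_1\fFz\ra$. Using that $P$ commutes with $\Aa(0)$, the explicit form of $\fFz$ from \eqref{def:def-F0}, and the identities $\la\partial_{x_i}\Sgsa,\partial_{x_j}\Sgsa\ra = \delta_{ij}\|\nabla\Sgsa\|^2/3$ and $\|\nabla\Sgsa\|^2 = \alpha^2/4$, this inner product evaluates at leading order to
\begin{equation}\nonumber
-\tfrac{2}{3}\alpha^4\,\Re\sum_{i=1}^3\la (\Aa)^i\cEpp,(H_f+P_f^2)^{-1}(\Ac)^i\vac\ra ,
\end{equation}
after passing from the resolvent $(H_f+P_f^2+h_\alpha+e_0)^{-1}$ to $(H_f+P_f^2)^{-1}$ via Lemma~\ref{appendix:lem-A1} (producing an $o(\alpha^5\log\alpha^{-1})$ remainder). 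The factors $\eta_2-1 = \mathcal{O}(\alpha)$ and $\overline{\muvar_1}-1 = \mathcal{O}(\alpha^{1/2})$ from Theorem~\ref{thm:thm-main2} produce exactly the error term $-|\muvar_1-1|c\alpha^4 + \mathcal{O}(\alpha^5)$.

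For the remaining pairs I would argue as follows. For $n=0$, the key observation is that the scalar $(\Aa)^j\cEm$ is a vector in $\C^3$ whose value is determined by rotation-invariant data, so it must vanish identically; consequently the dangerous $\cEm$-contribution to $\la P\cdot\Aa\Sgsa 2\eta_1\alpha^{3/2}\cEm,\Proj^0\gsps\ra$ disappears, leaving only the $\Proj^1\Rvar$-part, estimated via $\|\Aa\Proj^1\Rvar\|\lesssim\|H_f^{1/2}\Proj^1\Rvar\| = \mathcal{O}(\alpha^2)$, the odd-parity trick $\la\partial_{x_j}\Sgsa,\Proj^0\gsps\ra = \la\partial_{x_j}\Sgsa,(\gsps)^a\ra$, and Young's inequality to give $\epsilon\alpha^2\|(\gsps)^a\|^2 + \mathcal{O}(\alpha^5)$. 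For the residual $n=1$ cross terms ($\Proj^2\Rvar$ against $\fFz$, and $\eta_2\alpha\cEpp$ against $\Proj^1\gspl$) I would use Theorem~\ref{thm:thm-main2} together with the orthogonality $\la\fFz,\Proj^1\gspl\ra_\sharp=0$ and the bound $\|H_f^{-1/2}\Aa\cEpp\|<\infty$. For $n=2$, the $\Proj^3\gspl$ part contributes $\epsilon\|H_f^{1/2}\Proj^3\gspl\|^2\leq\tfrac14 M[\gspl]$; the $\Proj^3\gsps = -\alpha\muvar_3(H_f+P_f^2)^{-1}\Ac\cdot\Ac\fFz$ part, together with $2\eta_3\alpha^{3/2}\cEp$, is bounded using Lemmata~\ref{lem:improved-estimates} and~\ref{appendix:lem-A4}, with a careful extraction of the factor $|\muvar_3|^2$ by pulling out weights $|k_j|^{1/6}$, to yield $\epsilon\alpha^5\log\alpha^{-1}(\alpha|\muvar_3|^2+1) + \mathcal{O}(\alpha^5)$. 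Pairs with $n\geq 3$ are estimated directly from $\|H_f^{1/2}\Rvar\|^2 = \mathcal{O}(\alpha^4)$ and $\|H_f^{1/2}\gspl\|^2 \leq \tfrac{2}{\nu}M[\gspl]$.

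\textbf{Main obstacle.} The delicate step is the exact identification of the main term: one must approximate $\fFz$ by its formal version $\fF$ (handling the infrared-singular resolvent via Lemma~\ref{appendix:lem-A1}), track the sign/phase carefully through $P=i\nabla_x$, and verify that the $\mathcal{O}(\alpha^{5/2})$ and $\mathcal{O}(\alpha^3)$ error vectors arising from the $(h_\alpha+e_0)$-correction to the resolvent (which decomposes via the resolvent identity and $(h_\alpha+e_0)\partial_{x_j}\Sgsa = -\alpha(x_j/|x|^3)\Sgsa$) contribute only at order $o(\alpha^5\log\alpha^{-1})$. The second subtle point is to organize the $\muvar_1$-dependent part so that only the \emph{linear} correction $-|\muvar_1-1|c\alpha^4$ survives, the quadratic part being absorbed in the final quadratic remainder $-\tfrac14 M[\gspl]$.
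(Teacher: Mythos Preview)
Your approach is essentially the paper's: the same decomposition of $\cU$ and $\gsp$, the same identification of the main $\alpha^4$ contribution from $(\eta_2\alpha\cEpp,\muvar_1\fFz)$, the vanishing of $\Aa\cEm$ (the paper states this as $P\cdot\Aa\Sgsa\cEm=0$ in Lemma~\ref{lem:appendix-3}, equivalent to your rotation argument), and the odd-parity trick for $\Proj^0\gsps$. The computation of the main term via replacement of $\fFz$ by $\fF$ and the extraction of the factor $-|\muvar_1-1|c\alpha^4$ also matches.

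There are, however, bookkeeping issues in your handling of the higher-$n$ terms that you should straighten out. First, your paragraph labeled ``$n=2$'' mixes two distinct pairings: $\Proj^3\cU$ against $\Proj^2\gsp$ (which involves $2\eta_3\alpha^{3/2}\cEp$ and is $\mathcal{O}(\alpha^5)$ directly via $H_f^{-1/2}\Aa\cEp\in L^2$ and $\|H_f^{1/2}\Proj^2\gsp\|^2=\mathcal{O}(\alpha^4)$), and $\Proj^4\Rvar$ against $\Proj^3\gsp$ (which produces the $\epsilon\alpha^6\log\alpha^{-1}|\muvar_3|^2$ piece). Second, the bare $\epsilon\alpha^5\log\alpha^{-1}$ term does \emph{not} come from the $\Proj^3\gsps$ pairing; in the paper it arises from $\alpha^{1/2}\la\Aa\Rvar\cdot\nabla\Sgsa,\muvar_1\fFz\ra$, using $\|\fFz\|^2=\mathcal{O}(\alpha^3\log\alpha^{-1})$ and $\|H_f^{1/2}\Rvar\|^2=\mathcal{O}(\alpha^4)$. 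Third, the $|k_j|^{1/6}$ weight trick you invoke is used in the paper for Lemma~\ref{lem:lem-A(0)} (the $\Ac$ case), not here; for this lemma the $\Proj^3\gsps$ contribution is controlled more simply via $\|\Proj^3\gsps\|^2=\mathcal{O}(\alpha^5\log\alpha^{-1})|\muvar_3|^2$ from Lemma~\ref{appendix:lem-A4}. Finally, the $\sharp$-orthogonality $\la\fFz,\Proj^1\gspl\ra_\sharp=0$ you cite is not actually needed for the $n=1$ residuals here; a direct Schwarz estimate with $H_f^{-1/2}\Aa\cEpp\in L^2$ suffices. None of these is a fatal gap, but your write-up as it stands would not execute cleanly without fixing the attribution of terms.
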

%
%
\begin{proof}
Since from Lemma~\ref{lem:appendix-3} we have $\nabla \Sgsa  .
 \Aa\cEm =0$, we have
\begin{equation}\label{eq:cross-lem2-6}
\begin{split}
  & 4\Re \alpha^\frac12 \la P.\Aa \Sgsa  \cU ,\gsp \ra =
  4\alpha^\frac32 \Re \eta_2 \la \Aa\cEpp . P \Sgsa , \Proj^1 \gsp\ra
  \\
  & + 4\alpha^2 \Re 2\eta_3 \la \Aa \cEp . P\Sgsa , \Proj^2 \gsp\ra
  + 4\alpha^\frac12 \la \Aa \Rvar.P\Sgsa , \gsp\ra .
\end{split}
\end{equation}
For the first term on the right hand side of
\eqref{eq:cross-lem2-6} we have
\begin{equation}\label{eq:cross-lem2-7}
\begin{split}
  & 4\alpha^\frac32\Re \eta_2 \la \Aa\cEpp  .P\Sgsa , \Proj^1 \gspl
  +\muvar_1 \fFz \ra \\
  & = 4\alpha^\frac32\Re\eta_2 \la H_f^{-\frac12} \Aa
  \cEpp .P\Sgsa , H_f^\frac12 \Proj^1\gspl \ra
  + 4\alpha^\frac32 \Re\eta_2 \la \Aa\cEpp  .P \Sgsa , \muvar_1
  \fFz \ra .
\end{split}
\end{equation}
The first term on the right hand side of \eqref{eq:cross-lem2-7}
is bounded from below by $-\epsilon \|H_f^\frac12 \Proj^1 \gspl
\|^2 + \mathcal{O}(\alpha^5)$.

Applying Lemma~\ref{appendix:lem-A1}, we can replace $\fFz $ in
the second term of the right hand side of \eqref{eq:cross-lem2-7}
by $\fF$, at the expense of $\mathcal{O}(\alpha^5)$. More
precisely
\begin{equation}\nonumber
\begin{split}
 & | \alpha^\frac32 \la \eta_2 \Aa \cEpp . P \Sgsa , \muvar_1
 (\fFz -\fF)\ra| \\
 & \leq c\alpha^5 |\eta_2|^2 \|(H_f+P_f^2)^{-\frac12} \Aa\cEpp \|^2
 + |\muvar_1|^2 \|\fFz  -\fF\|_*^2 = \mathcal{O}(\alpha^5) .
\end{split}
\end{equation}

Moreover
\begin{equation}\label{eq:cross-lem2-8}
\begin{split}
 & 4\alpha^\frac32 \Re\eta_2 \overline{\muvar_1} \la \Aa \cEpp . P
 \Sgsa  , \fF\ra \\
 & =\frac83 \alpha^2 \|\nabla \Sgsa \|^2 \Re
 \eta_2\overline{\muvar_1}\sum_{i=1}^3\la(\Aa)^i\cEpp ,
 (H_f+P_f^2)^{-1} (\Ac)^i \vac\ra \\
 & = \frac23 \alpha^4 \Re
 \eta_2\overline{\muvar_1}\sum_{i=1}^3\la(\Aa)^i\cEpp ,
 (H_f+P_f^2)^{-1} (\Ac)^i \vac\ra \\
 & \geq \frac23 \alpha^4 \Re
 \sum_{i=1}^3\la(\Aa)^i\cEpp ,
 (H_f+P_f^2)^{-1} (\Ac)^i \vac\ra - |\muvar_1-1|c\alpha^4,
\end{split}
\end{equation}
where we used $\muvar_1=\mathcal{O}(1)$
(Lemma~\ref{lem:improved-estimates}) and $\eta_2= 1 +
\mathcal{O}(\alpha)$ (Theorem~\ref{thm:thm-main2}). Note that the
right hand side of \eqref{eq:cross-lem2-8} is well defined since
$(H_f+P_f^2)^{-\frac12}\Ac \vac\in \gF$ and
$(H_f+P_f^2)^{-\frac12}\Aa \cEpp \in\gF$.

Collecting the estimates for the first and the second term in the
right hand side of \eqref{eq:cross-lem2-7}, we arrive at
\begin{equation}\label{eq:PA+-7ter}
\begin{split}
  & -4\alpha^\frac32\Re \eta_2 \la \Aa\cEpp  . P\Sgsa , \Proj^1
  \gsp\ra\\
  & \geq -\frac83 \alpha^2 \|\nabla \Sgsa \|^2 \Re \overline{\muvar_1}
  \la (\Aa)^i \cEpp , (H_f+P_f^2)^{-1} \Ac \vac\ra -\epsilon\|H_f^\frac12
  \Proj^1 \gspl \|^2 +\mathcal{O}(\alpha^5) .
\end{split}
\end{equation}
Here we used also $\eta_2 =1+\mathcal{O}(\alpha)$.

As the next step, we return to \eqref{eq:cross-lem2-6} and
estimate the second term on the right hand side as
\begin{equation}\label{eq:PA+-7quatro}
\begin{split}
 & 4\alpha^2\Re 2\eta_3 \la \Aa \cEp . P \Sgsa , \Proj^2 \gsp\ra
 = 8\alpha^2\Re \eta_3 \la H_f^{-\frac12} \Aa \cEp . P\Sgsa ,
 H_f^\frac12 \Proj^2 \gsp\ra =\mathcal{O}(\alpha^5) ,
\end{split}
\end{equation}
where we used $H_f^{-\frac12} \Aa\cEp  \in L^2$ and $\| \Proj^2
H_f^\frac12\rvar\|^2 = \| \Proj^2 H_f^\frac12 \gsp\|^2 =
\mathcal{O}(\alpha^4)$ from Theorem~\ref{thm:thm-main2}. For the
last term on the right hand side of \eqref{eq:cross-lem2-6}, we
have
\begin{equation}\label{eq:cross-lem2-10}
\begin{split}
  & 4\alpha^{\frac12} \Re \langle \Aa\Rvar \cdot \nabla \Sgsa  , \gsp\rangle \\
  & =
  4 \alpha^{\frac12} \Re \langle \Aa\Rvar \cdot \nabla \Sgsa  , \Proj^0
  \gsps \rangle + 4\alpha^{\frac12} \Re \overline{\muvar_1} \langle \Aa
 \Rvar\cdot \nabla \Sgsa ,  \fFz \rangle \\
  & + 4\alpha^{\frac12} \Re
  \langle \Aa\Rvar \cdot \nabla \Sgsa , \Proj^1 \gspl \rangle +
  4 \alpha^{\frac12} \Re \langle \Aa\Rvar \cdot \nabla \Sgsa  , \Proj^2
  \gsp\rangle  \\
  & + 4\alpha^{\frac12} \Re \langle \Aa\Rvar \cdot \nabla \Sgsa ,
  \Proj^3 \gsp\rangle
  + 4\alpha^\frac12\Re\la \Aa \Rvar. \nabla \Sgsa , \Proj^{n\geq 4} \gsp\ra.
\end{split}
\end{equation}
We write the function $\Proj^0 \gsps  =  (\gsps )^s + (\gsps )^a$
where $(\gsps )^s$ (respectively $(\gsps )^a$) denotes the even
(respectively odd) part of $\Proj^0 \gsps $. Obviously, we have
\begin{equation}\label{eq:cross-lem2-11}
 |\alpha^{\frac12} \Re \langle \Aa\Rvar\cdot \nabla \Sgsa , \Proj^0
 \gsps \rangle | \leq c\alpha \|H_f^{\frac12}R\|^2 + \epsilon
 \alpha^2 \| (\gsps )^a\|^2 = \epsilon
 \alpha^2 \| (\gsps )^a\|^2 + \mathcal{O}(\alpha^5).
\end{equation}
The constant $\epsilon$ can be chosen small for large $c$.

For the second term on the right hand side of
\eqref{eq:cross-lem2-10}, we have
\begin{equation}\label{eq:cross-lem2-12}
 |\alpha^{\frac12} \overline{\muvar_1} \langle \Aa\Rvar\cdot \nabla \Sgsa ,
 \fFz  \rangle | \leq \epsilon  \alpha^{5}\log\alpha^{-1}
 |\muvar_1|^2 + c\alpha
 \| H_f^{\frac12}\Rvar\|^2 = \epsilon\alpha^5\log\alpha^{-1}
 + \mathcal{O}(\alpha^{5}).
\end{equation}

For the third term on the right hand side of
\eqref{eq:cross-lem2-10}, we have, since $\delta
=\frac{3}{32}\alpha^2$
\begin{equation}\label{eq:cross-lem2-13}
 |4\alpha^{\frac12} \langle \Aa\Rvar\cdot \nabla \Sgsa , \Proj^1
 \gspl \rangle | \leq \frac{\delta}{8} \|\Proj^1 \gspl \|^2 + c\alpha
 \|H_f^{\frac12}\Rvar\|^2 = \frac{\delta}{8} \|\Proj^1 \gspl \|^2
 + \mathcal{O}(\alpha^{5}).
\end{equation}
Similarly
\begin{equation}\label{eq:cross-lem2-14}
 | 4\alpha^{\frac12} \langle \Aa\Rvar \cdot \nabla \Sgsa , \Proj^{n\geq 4}
 \gsp\rangle | \leq \frac{\delta}{8} \|\Proj^{n\geq 4} \gspl \|^2 +
  \mathcal{O}(\alpha^{5}).
\end{equation}

To complete the estimate of the last term in
\eqref{eq:cross-lem2-10}, we have to estimate two terms:
$-4\Re\alpha^\frac12 \la \Aa \Rvar. P\Sgsa , \Proj^2 \gsp\ra$ and
$-4\Re\alpha^\frac12 \la \Aa \Rvar. P\Sgsa , \Proj^3 \gsp\ra$. For
the first one we have
\begin{equation}\nonumber
\begin{split}
 & |\Re \alpha^\frac12 \la \Aa \Rvar. P\Sgsa , \Proj^2 \gsp\ra| \leq
 c\alpha
 \|H_f^\frac12\Rvar\|^2 + \epsilon\alpha^2 \|\Proj^2 \gspl \|^2
 +\epsilon \alpha^4|\muvar_2|^2 \|\Proj^0 \gsp\|^2\\
 &  + \epsilon \alpha^6
 \sum_{i=1}^3 |\muvar_{2,i}|^2 = \epsilon \alpha^2 \|\Proj^2 \gspl \|^2
 +\mathcal{O}(\alpha^5) .
\end{split}
\end{equation}
Similarly,
\begin{equation}\label{eq:PA+-8}
\begin{split}
 & |\Re \alpha^\frac12 \la \Aa \Rvar. P\Sgsa , \Proj^3 \gsp\ra| \leq
 c\alpha
 \|H_f^\frac12\Rvar\|^2 + \epsilon\alpha^2 \|\Proj^3 \gspl \|^2
 +\epsilon \alpha^6\log\alpha^{-1}|\muvar_3|^2
 +\mathcal{O}(\alpha^5) .
\end{split}
\end{equation}
Collecting the estimates \eqref{eq:cross-lem2-10}-\eqref{eq:PA+-8}
yields
\begin{equation}\label{eq:PA+-9}
\begin{split}
|4\Re \alpha^\frac12  \la \Aa\Rvar . P\Sgsa , \gsp\ra| \leq
\frac{\delta}{8} \|\gspl \|^2 + \epsilon \|H_f^\frac12 \gspl \|^2
+ \epsilon \alpha^5\log\alpha^{-1} (1 +\alpha |\muvar_3|^2)
+\mathcal{O}(\alpha^5) .
\end{split}
\end{equation}
Collecting \eqref{eq:PA+-5}, \eqref{eq:PA+-7bis},
\eqref{eq:PA+-7ter}, \eqref{eq:PA+-7quatro} and \eqref{eq:PA+-9}
concludes the proof.
\end{proof}

\subsection{Concluding the proof of Proposition~\ref{prop:prf-Hgpsi}}

We can now prove the estimate on $\Re\la H g,\, \Sgsa  \cU \ra$
asserted in
Proposition~\ref{prop:prf-Hgpsi}.

Using the
orthogonality \eqref{eq:orthogo} of $\Sgsa $ and $\gsp$ , yields
\begin{equation}\nonumber
2\Re \la H \gsp,\cU \ra = - 4\Re\la P.P_f \Sgsa  \cU , \gsp\ra
-4\sqrt{\alpha} \Re \la P. A(0) \Sgsa  \cU , \gsp\ra .
\end{equation}
Together with Lemmata~\ref{lem:lem-PPf}-\ref{lem:lem-A(0)-bis},
this concludes the proof of Proposition~\ref{prop:prf-Hgpsi}. \qed
%
%
%
%

\section{Proof of Proposition~\ref{prop:prop-hgg}}\label{prf-hgg}

In this section, we present the proof of Proposition~\ref{prop:prop-hgg}.

To begin with, we establish the estimate
\begin{proposition}\label{prop:sub1}
We have
\begin{equation}\label{eq:n-hgg-0}
\begin{split}
 & \la H\gsp , \gsp\ra
 \\ & \geq  \la H \gsps , \gsps \ra + \la H \gspl , \gspl \ra
 +2\alpha \left( \|\Aa \gsp\|^2 - \|\Aa \gsps \|^2 - \|\Aa
 \gspl \|^2\right) \\
 & - 4\Re \la P. P_f \gspl , \gsps \ra - 4\alpha^\frac12 \Re \la P. A(0)
 \gspl , \gsps \ra
 + 4\Re \la P_f. A(0) \gspl , \gsps \ra \\
 & - \epsilon M[\gspl ] - c\alpha^6\log\alpha^{-1} |\muvar_3|^2
 - c_0\alpha \| (h_\alpha+e_0)^\frac12 \Proj^0 \gsps \|^2
 +\mathcal{O}(\alpha^5) .
\end{split}
\end{equation}
\end{proposition}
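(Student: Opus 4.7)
The plan is to expand $\la H \gsp, \gsp \ra$ according to the splitting $\gsp = \gsps + \gspl$ from Definition~\ref{def:G-improved}, writing
\begin{equation*}
 \la H \gsp, \gsp\ra \,=\, \la H \gsps, \gsps\ra \,+\, \la H \gspl, \gspl\ra \,+\, 2\Re \la H \gsps, \gspl\ra,
\end{equation*}
and then analyzing $2\Re\la H\gsps,\gspl\ra$ contribution by contribution through the decomposition
\begin{equation*}
 H = h_\alpha + (H_f+P_f^2) - 2\Re(P\cdot P_f) - 2\sqrt{\alpha}(P-P_f)\cdot A(0) + 2\alpha\, \Ac\cdot\Aa + 2\alpha\,\Re(\Aa)^2.
\end{equation*}
The cross contributions from $-2\Re(P\cdot P_f)$, $-2\sqrt\alpha\,P\cdot A(0)$ and $2\sqrt\alpha\,P_f\cdot A(0)$ are simply kept in the form in which they appear on the right-hand side of \eqref{eq:n-hgg-0}, while the number-conserving term $2\alpha\,\Ac\cdot\Aa$ gives $4\alpha\,\Re\la \Aa\gsps, \Aa\gspl\ra$, which coincides with $2\alpha(\|\Aa\gsp\|^2 - \|\Aa\gsps\|^2 - \|\Aa\gspl\|^2)$ and reproduces the announced quadratic photon rearrangement.

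The decisive simplification uses the $\sharp$-orthogonality $\la\gsps,\gspl\ra_{\sharp} = 0$, which follows from Definition~\ref{def:G-improved}: since $h_\alpha$, $H_f$ and $P_f^2$ all preserve photon number, the $\sharp$-inner product decomposes as $\sum_{n\geq 0} \la\Proj^n\gsps,\Proj^n\gspl\ra_{\sharp}$, and every summand vanishes by construction (the sectors $n=0$ and $n\geq 4$ trivially because $\Proj^0\gspl = 0$ and $\Proj^n\gsps = 0$ respectively, and the sectors $n=1,2,3$ from the explicit $\sharp$-orthogonality conditions imposed on $\Proj^n\gspl$). Consequently $2\Re\la(h_\alpha + H_f + P_f^2)\gsps, \gspl\ra = -2e_0\,\Re\la\gsps,\gspl\ra$, which is controlled sector by sector via Schwarz together with the a priori bounds of Theorem~\ref{thm:thm-main2} and Lemma~\ref{lem:improved-estimates}, and absorbed into $\epsilon M[\gspl] + c_0\alpha\,\|(h_\alpha+e_0)^{1/2}\Proj^0\gsps\|^2 + \mathcal{O}(\alpha^5)$. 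The negative penalty $-c_0\alpha\,\|(h_\alpha+e_0)^{1/2}\Proj^0\gsps\|^2$ arises at this stage when one applies a weighted Schwarz inequality to the $P\cdot \Ac\,\Proj^0\gsps$ piece of the coupling, which cannot simply be dropped because $\Proj^0\gsps$ need not be orthogonal to $\Sgsa$.

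The hard part is to handle the pair-creation/annihilation cross contribution $2\alpha\,\Re\la[(\Aa)^2 + (\Ac)^2]\gspl,\gsps\ra$ coming from $2\alpha\,\Re(\Aa)^2$, which couples photon sectors differing by two. The pairing $\Proj^2\gsps \leftrightarrow \Proj^0\gspl$ is empty because $\Proj^0\gspl = 0$, and pairings with $\Proj^{n\geq 4}\gsps$ vanish as well, so the only non-trivial contribution is $\Proj^3\gsps \leftrightarrow \Proj^1\gspl$. Here I would substitute the explicit formula $\Proj^3\gsps = -\alpha\muvar_3\,(H_f+P_f^2)^{-1}\Ac\cdot\Ac\,\fFz$ and apply Schwarz together with the weighted $\fFz$-bounds of Lemma~\ref{appendix:lem-A4} which capture its infrared-logarithmic behavior; this produces exactly the $c\alpha^6\log\alpha^{-1}|\muvar_3|^2$ term of~\eqref{eq:n-hgg-0}. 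The remaining off-diagonal pieces from the momentum-photon couplings in each photon sector are treated by routine Schwarz estimates and absorbed into $\epsilon M[\gspl] + \mathcal{O}(\alpha^5)$ via Theorem~\ref{thm:thm-main2} and Lemma~\ref{lem:improved-estimates}.
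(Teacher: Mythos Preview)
Your overall plan---split $\gsp=\gsps+\gspl$, use the $\sharp$-orthogonality to kill the $(h_\alpha+e_0+H_f+P_f^2)$ cross term, keep the $P\cdot P_f$, $P\cdot A(0)$, $P_f\cdot A(0)$ cross terms as they stand, and rewrite the $\Ac\cdot\Aa$ cross term as the $\|\Aa\cdot\|^2$ rearrangement---matches the paper. But your accounting of the $2\alpha\,\Re(\Aa)^2$ cross terms is incomplete, and this is where the two non-trivial error terms in \eqref{eq:n-hgg-0} actually come from.

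The cross contribution $4\alpha\,\Re\la(\Aa)^2\gsp_{\bullet},\gsp_{\circ}\ra$ produces \emph{two} families of pairings: $\Proj^{n+2}\gsps\leftrightarrow\Proj^n\gspl$ (which you identified, yielding only $\Proj^3\gsps\leftrightarrow\Proj^1\gspl$) \emph{and} $\Proj^{n+2}\gspl\leftrightarrow\Proj^n\gsps$ for $n=0,1,2,3$. You omitted the second family entirely, and it is precisely the $n=0$ term
\[
2\alpha\,\Re\la\Aa\cdot\Aa\,\Proj^2\gspl,\Proj^0\gsps\ra
\]
that, after using the $\sharp$-orthogonality $\la\Proj^2\gspl,\cEpp\Proj^0\gsps\ra_\sharp=0$ to convert $(H_f+P_f^2)$ into $-(h_\alpha+e_0)$, produces the penalty $-c_0\alpha\|(h_\alpha+e_0)^{1/2}\Proj^0\gsps\|^2$. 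Your attribution of this penalty to a ``$P\cdot\Ac\,\Proj^0\gsps$ piece of the coupling'' is wrong: those coupling terms are \emph{kept intact} on the right-hand side of \eqref{eq:n-hgg-0} and are not estimated in this proposition. (Also, $\Proj^0\gsps=\Proj^0\gsp$ \emph{is} orthogonal to $\Sgsa$ by \eqref{eq:orthogo}; your remark to the contrary is mistaken.) Similarly, the $n=1$ term $2\alpha\,\Re\la\Aa\cdot\Aa\,\Proj^3\gspl,\Proj^1\gsps\ra$ is handled by the same $\sharp$-orthogonality trick (against $(H_f+P_f^2)^{-1}\Ac\cdot\Ac\fFz$) and Lemma~\ref{appendix:lem-A4}, while $n=2,3$ go by direct Schwarz; the $|\muvar_3|^2\alpha^6\log\alpha^{-1}$ contribution arises from the $n=3$ term, not from $\Proj^3\gsps\leftrightarrow\Proj^1\gspl$ as you claim (that pairing is $\mathcal{O}(\alpha^5)$ via Lemma~\ref{appendix:lem-A5}).

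A minor point: by working with $H+e_0$ rather than $H$, the paper never sees the residual $-2e_0\Re\la\gsps,\gspl\ra$ that you propose to estimate separately; the $\sharp$-orthogonality disposes of the entire $(h_\alpha+e_0+H_f+P_f^2)$ cross term in one stroke.
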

%
%
%
\begin{proof}
Recall that
\begin{equation}\label{eq:n-hgg-1}
  H = P^2 - \frac{\alpha}{|x|} + T(0) -2\Re P.
  \left(P_f +\alpha^\frac12 A(0)\right) ,
\end{equation}
and
\begin{equation}\label{eq:n-hgg-2}
  T(0) = :(P_f +\alpha^\frac12 A(0))^2: + H_f .
\end{equation}
Due to the orthogonality
 $$
  \la \Proj^n \gsps , \Proj^n \gspl \ra_{\sharp} =0,\quad n=0,1,\ldots\ ,
 $$
and \eqref{eq:n-hgg-1}, \eqref{eq:n-hgg-2}, we obtain
\begin{equation}\label{eq:n-hgg-3}
\begin{split}
 & \la (H+e_0)\gsp,\gsp\ra \\
 & = \la (H+e_0) \gspl , \gspl \ra
 + \la (H+e_0) \gsps , \gsps \ra
 +\sum_{n=0}^3 2\alpha\Re \la \Aa. \Aa \Proj^{n+2} \gspl , \Proj^n
 \gsps \ra\\
 & + 2\alpha \Re \la \Aa. \Aa \Proj^3 \gsps , \Proj^1 \gspl \ra
 + 2\alpha (\|\Aa \gsp\|^2 - \|\Aa \gsps \|^2 -\|\Aa \gspl \|^2) \\
 & - 4\Re \la P.P_f \gspl , \gsps \ra - 4\alpha^\frac12 \Re \la
 P. A(0) \gspl , \gsps \ra + 4\alpha^\frac12 \Re \la P_f. A(0)
 \gspl , \gsps \ra .
\end{split}
\end{equation}
We have
\begin{equation}
\begin{split}
 & 2\alpha\Re \la \Aa .\Aa \Proj^5 \gspl , \Proj^3 \gsps \ra
 \geq -\epsilon \|H_f^\frac12 \Proj^5 \gspl \|^2 - c\alpha^2 \|\Proj^3 \gsps \|^2 \\
 & \geq -\epsilon \|H_f^\frac12 \Proj^5 \gspl \|^2
 - c\alpha^7\log\alpha^{-1} |\muvar_3|^2 .
\end{split}
\end{equation}
Similarly,
\begin{equation}\nonumber
\begin{split}
 & 2\alpha\Re \la \Aa .\Aa \Proj^4 \gspl , \Proj^2 \gsps \ra\\
 & \geq -\epsilon
 \| H_f^\frac12 \Proj^4 \gspl \|^2
 - c\alpha^4 |\muvar_2|^2 \|\Proj^0 \gsps \|^2 - \sum_{i=1}^3 c\alpha^6
 |\muvar_{2,i}|^2 \\
 & \geq -\epsilon \| H_f^\frac12 \Proj^4 \gspl \|^2
 +\mathcal{O}(\alpha^5) .
\end{split}
\end{equation}
To estimate the term $2\alpha\Re \la \Aa . \Aa \Proj^3\gspl  ,
\Proj^1 \gsps \ra$ we rewrite it as $2\alpha \Re \la \Proj^3 \gspl
, \Ac .\Ac \muvar_1\fFz \ra$ and use that $\la \Proj^3 \gspl ,
(H_f+P_f^2)^{-1} \Ac .\Ac \fFz \ra_{\sharp} =0$. This yields,
using Lemma~\ref{appendix:lem-A4}
\begin{equation}\label{eq:usedhere}
\begin{split}
 \alpha\Re \la \Aa . \Aa \Proj^3 \gspl , \Proj^1 \gsps \ra
 & = -\alpha \Re \la \Proj^3 \gspl  , (h_\alpha + e_0) (H_f+P_f^2)^{-1}
 \Ac .\Ac \muvar_1\fFz \ra \\
 & \geq -\epsilon\alpha^2 \|\Proj^3\gspl \|^2 +
 c\alpha^7\log\alpha^{-1} .
\end{split}
\end{equation}
Similarly, if $\|\Proj^0 \gsps \| > \alpha^\frac32$,
\begin{equation}
\begin{split}
 & 2\alpha\Re \la \Aa .\Aa \Proj^2 \gspl , \Proj^0 \gsps \ra
 = -2 \alpha\Re \la \Proj^2 \gspl , (h_\alpha + e_0)
 (H_f+P_f^2)^{-1} \Ac .\Ac \Proj^0 \gsps \ra \\
 & \geq - c\alpha \| (h_\alpha +e_0)^\frac12 \Proj^2 \gspl \|^2
 - c\alpha \| (h_\alpha + e_0)^\frac12 (H_f +P_f^2)^{-1} \Ac .\Ac
 \Proj^0 \gsps \|^2 \\
 & \geq -c\alpha \|P \Proj^2 \gspl \|^2 + c\alpha \|
 |x|^{-\frac12} \Proj^2 \gspl \|^2 - c\alpha e_0 \|\Proj^2
 \gspl \|^2
 - c_0 \alpha \| (h_\alpha +e_0)^\frac12 \Proj^0 \gsps \|^2 \\
 & \geq -c\alpha \| P\Proj^2 \gspl \|^2
 -\epsilon\alpha^2 \| \Proj^2 \gspl \|^2 - c_0\alpha
 \| (h_\alpha  + e_0)^\frac12 \Proj^0 \gsps \|^2\ .
\end{split}
\end{equation}
If $\|\Proj^0 \gsps \| \leq \alpha^\frac32$, we have instead
\begin{equation}
\begin{split}
 2\alpha\Re \la \Aa .\Aa \Proj^2 \gspl , \Proj^0 \gsps \ra
 & \geq
 -\epsilon \|H_f^\frac12 \Proj^2 \gspl \|^2
 -c\alpha^2\|\Proj^0\gsps \|^2 \\
 & \geq -\epsilon \|H_f^\frac12 \Proj^2 \gspl \|^2
 +\mathcal{O}(\alpha^5) .
\end{split}
\end{equation}

Finally, using Lemma~\ref{appendix:lem-A5} yields
\begin{equation}\label{eq:n-hgg-4}
 2\alpha \Re \la \Aa. \Aa \Proj^3 \gsps , \Proj^1 \gspl \ra \geq
 - \epsilon \|H_f^{\frac12} \gspl  \|^2 +\mathcal{O}(\alpha^5) .
\end{equation}

Collecting \eqref{eq:n-hgg-3}-\eqref{eq:n-hgg-4} concludes the
proof of the proposition.
\end{proof}

In the rest of this section, we estimate further terms in
\eqref{eq:n-hgg-0}.

\subsection{Estimate of crossed terms involving $\gsps $ and
$\gspl $}

\begin{lemma}\label{lem:sub-1}
\begin{equation}\label{eq:n-hgg-5}
  2\alpha ( \|\Aa \gsp\|^2 - \|\Aa \gsps \|^2 - \|\Aa \gspl \|^2 )
  \geq -\epsilon M[\gspl ] +\mathcal{O}(\alpha^5) .
\end{equation}
\end{lemma}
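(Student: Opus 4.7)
The plan is to use the polarization identity on the left-hand side and reduce to estimating a single cross term. Since $\gsp = \gsps + \gspl$,
\begin{equation*}
 \|\Aa \gsp\|^2 - \|\Aa \gsps\|^2 - \|\Aa \gspl\|^2 \;=\; 2\Re \la \Aa \gsps,\, \Aa \gspl\ra ,
\end{equation*}
so it suffices to prove $4\alpha\,|\la \Aa \gsps,\, \Aa \gspl\ra| \leq \epsilon M[\gspl] + \mathcal{O}(\alpha^5)$. Because $\Aa$ lowers the photon number by one and $\Proj^n \gsps = 0$ for $n\geq 4$ by Definition~\ref{def:G-improved}, only the sectors $n=1,2,3$ contribute and the inner product decomposes as $\sum_{n=1}^{3} \la \Aa \Proj^n \gsps,\, \Aa \Proj^n \gspl\ra$.

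For each $n\in\{1,2,3\}$ I would apply Cauchy--Schwarz, together with the standard bound $\|\Aa\psi\|^2 \leq c\|H_f^{\frac12}\psi\|^2$ on the factor carrying $\Proj^n\gspl$, to obtain
\begin{equation*}
 4\alpha\,|\la \Aa \Proj^n \gsps,\, \Aa \Proj^n \gspl\ra|
 \;\leq\; \epsilon \|H_f^{\frac12}\Proj^{n-1} \gspl\|^2
 \;+\; \frac{c\,\alpha^2}{\epsilon} \|\Aa \Proj^n \gsps\|^2 .
\end{equation*}
The first term is absorbed in $\epsilon M[\gspl]$ through the $\frac{\nu}{2}\|H_f^{1/2}\,\cdot\,\|^2$ component of $M[\,\cdot\,]$ (see Corollary~\ref{cor:cor-4-2}), so the remaining task is to verify $\alpha^2\|\Aa\Proj^n\gsps\|^2 = \mathcal{O}(\alpha^5)$ in each of the three sectors, using the explicit formulas of Definition~\ref{def:G-improved} and the a priori bounds from Theorem~\ref{thm:thm-main2} and Lemma~\ref{lem:improved-estimates}.

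For $n=1$, $\Proj^1\gsps = \muvar_1\fFz$ with $\muvar_1 = \mathcal{O}(1)$ and $\|H_f^{\frac12}\fFz\|^2 = \mathcal{O}(\alpha^3)$ directly from the definition of $\fFz$, giving $\mathcal{O}(\alpha^5)$. For $n=3$, Lemma~\ref{appendix:lem-A4} controls $\|H_f^{\frac12}(H_f+P_f^2)^{-1}\Ac\cdot\Ac\fFz\|^2$ by a quantity of order $\alpha^3\log\alpha^{-1}$, which together with $|\muvar_3| = \mathcal{O}(1)$ from Theorem~\ref{thm:thm-main2} produces $\mathcal{O}(\alpha^7\log\alpha^{-1})$. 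The delicate case is $n=2$: the second piece of $\Proj^2\gsps$, namely $\sum_i \alpha\muvar_{2,i}(H_f+P_f^2)^{-1}W_i\frac{\partial\Sgsa}{\partial x_i}$, is harmless since $\muvar_{2,i}=\mathcal{O}(1)$ and $\|\frac{\partial\Sgsa}{\partial x_i}\|^2 = \mathcal{O}(\alpha^2)$, producing $\mathcal{O}(\alpha^6)$. The main obstacle is the first piece $\alpha\muvar_2\,\cEpp\,\Proj^0\gsps$: its naive bound fails because $\cEpp$ carries the logarithmic infrared singularity and neither $\muvar_2$ nor $\Proj^0\gsps$ is separately small enough. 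The crucial input is the non-trivial joint bound $|\muvar_2|\,\|\Proj^0\gsps\| = \mathcal{O}(\alpha)$ from Lemma~\ref{lem:improved-estimates}, which yields $\alpha^2\|\Aa(\alpha\muvar_2\cEpp\Proj^0\gsps)\|^2 = \mathcal{O}(\alpha^6)$ and closes the estimate.
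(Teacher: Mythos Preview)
Your polarization identity and sector-by-sector decomposition are exactly how the paper proceeds, and the $n=1$ and $n=2$ cases go through as you describe (modulo a harmless index slip: $\|H_f^{1/2}\Proj^{n-1}\gspl\|^2$ should read $\|H_f^{1/2}\Proj^{n}\gspl\|^2$, since the bound $\|\Aa\psi\|^2\le c\|H_f^{1/2}\psi\|^2$ is applied to $\Proj^n\gspl$ itself).

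The genuine gap is in the $n=3$ sector. You invoke $|\muvar_3|=\mathcal O(1)$ ``from Theorem~\ref{thm:thm-main2}'', but that theorem does not mention $\muvar_3$ at all, and nowhere in the paper is $\muvar_3=\mathcal O(1)$ established. On the contrary, the paper deliberately carries $|\muvar_3|^2$-weighted error terms such as $-\epsilon\alpha^5|\muvar_3|^2$ in Proposition~\ref{prop:prf-Hgpsi} and the positive term $\tfrac{|\muvar_3+1|^2}{2}\alpha^2\|\cEpp\|_*^2\|\fFz\|^2$ in Proposition~\ref{prop:prop-hgg} all the way to Section~\ref{subsection:6-3}, where they are balanced against each other; this would be pointless if a direct $\mathcal O(1)$ bound on $\muvar_3$ were available.

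The paper's proof of the present lemma sidesteps this issue: for $n\neq1$ it does \emph{not} estimate $\|H_f^{1/2}\Proj^n\gsps\|$ from the explicit formula, but uses the triangle inequality $\|H_f^{1/2}\Proj^n\gsps\|\le\|H_f^{1/2}\Proj^n\gsp\|+\|H_f^{1/2}\Proj^n\gspl\|$. Since $\Proj^{n\ge2}\gsp=\Proj^{n\ge2}\rvar$ and $\|H_f^{1/2}\rvar\|^2=\mathcal O(\alpha^4)$ by Theorem~\ref{thm:thm-main2}, the first summand contributes $c\alpha\cdot\mathcal O(\alpha^4)=\mathcal O(\alpha^5)$, while the second yields $c\alpha\|H_f^{1/2}\Proj^n\gspl\|^2$, absorbed into $\epsilon M[\gspl]$ for $\alpha$ small. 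Your argument can be repaired in the same way, or alternatively by observing that $\Proj^3\gsps$ is the $\la\cdot,\cdot\ra_\sharp$-orthogonal projection of $\Proj^3\gsp$, whence $\|H_f^{1/2}\Proj^3\gsps\|^2\le\|\Proj^3\gsps\|_\sharp^2\le\|\Proj^3\gsp\|_\sharp^2=\mathcal O(\alpha^4)$.
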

%
%
%
\begin{proof}
Obviously, the left hand side of \eqref{eq:n-hgg-5} is equal to
\begin{equation}
\begin{split}
 & 4\alpha\Re \la \Aa \gsps , \Aa \gspl \ra \geq
 -c\alpha \sum_n \| H_f^\frac12 \Proj^n \gsps \|\,
 \|H_f^\frac12 \Proj^n \gspl \| \\
 & \geq -
 \epsilon \| H_f^\frac12 \Proj^1 \gspl \|^2 - c\alpha^2 |\muvar_1|^2
 \|H_f^\frac12 \fFz \|^2
 -\sum_{n\neq 1} c\alpha \left(
 3 \| H_f^\frac12\Proj^n \gspl \|^2 + 2\|H_f^\frac12 \Proj^n \gsp\|^2
 \right)\\
 & \geq -\epsilon \|H_f^\frac12 \gspl \|^2 +\mathcal{O}(\alpha^5) ,
\end{split}
\end{equation}
where in the last inequality we used \eqref{eq:app-ii} of
Lemma~\ref{appendix:lem-A0}, and \eqref{eq:thm:thm-main2-1} of
Theorem~\ref{thm:thm-main2}.
\end{proof}

\begin{lemma}\label{lem:sub-2}
\begin{equation}\nonumber
 |\la P. P_f \gspl , \gsps \ra |
 \leq \epsilon M[\gspl ]
 + c\alpha^7 \log\alpha^{-1} |\muvar_3|^2 +\mathcal{O}(\alpha^5)
\end{equation}
\end{lemma}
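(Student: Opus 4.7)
The plan is to decompose $\la P\!\cdot\! P_f\,\gspl,\gsps\ra$ according to photon-number sector. Since $P\!\cdot\! P_f$ preserves photon number and Definition~\ref{def:G-improved} sets $\Proj^n\gsps=0$ for $n\geq 4$ while $P_f\,\Proj^0\gsps=0$ trivially, only $n\in\{1,2,3\}$ contribute. In each such sector I would transfer $P$ to the $\gsps$ factor by self-adjointness,
\begin{equation}\nonumber
\la P\!\cdot\! P_f\,\Proj^n\gspl,\Proj^n\gsps\ra=\la P_f\,\Proj^n\gspl,\;P\,\Proj^n\gsps\ra,
\end{equation}
and apply Cauchy--Schwarz together with the cutoff bound $\|P_f\,\Proj^n\gspl\|^2\leq c\,\|H_f^{1/2}\Proj^n\gspl\|^2$ (valid for bounded $n$ by Remark~\ref{rem:A}), so that the small-weight term on the $\gspl$-side is absorbed into $\epsilon M[\gspl ]$.

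For $n=1$, $\Proj^1\gsps=\muvar_1\fFz $ with $|\muvar_1|=\mathcal{O}(1)$ by Lemma~\ref{lem:improved-estimates}, and the computation already used for \eqref{eq:hgg-2} in Lemma~\ref{thm:thm-direct-gg} yields $|\muvar_1|^2\|P |P_f|^{1/2}\fFz \|^2=\mathcal{O}(\alpha^5)$. For the second summand $\alpha\sum_i\muvar_{2,i}(H_f+P_f^2)^{-1}W_i\,\partial_{x_i}\Sgsa $ of $\Proj^2\gsps$, the bounds $\muvar_{2,i}=\mathcal{O}(1)$ from Lemma~\ref{lem:improved-estimates}, $\|P\,\partial_{x_i}\Sgsa \|=\mathcal{O}(\alpha^2)$, and the boundedness of $\||P_f|^{1/2}(H_f+P_f^2)^{-1}W_i\|$ produce a remainder of order $\alpha^{6}$. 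For $n=3$, since $P$ commutes with $H_f+P_f^2$ and with $\Ac\!\cdot\!\Ac$ (the photon operators being evaluated at $x=0$), one has $P(H_f+P_f^2)^{-1}\Ac\!\cdot\!\Ac\fFz =(H_f+P_f^2)^{-1}\Ac\!\cdot\!\Ac\,P\fFz $; combined with Lemma~\ref{appendix:lem-A4} and the logarithmic infrared growth of $\|\,|P_f|^{1/2}(H_f+P_f^2)^{-1}\Ac\!\cdot\!\Ac\,P\fFz \|^2$, this produces exactly the contribution $c\alpha^7\log\alpha^{-1}|\muvar_3|^2$ claimed on the right-hand side.

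The main obstacle is the first summand of $\Proj^2\gsps$, namely $\alpha\muvar_2\cEpp \,\Proj^0\gsps$, because Lemma~\ref{lem:improved-estimates} controls only the product $|\muvar_2|\,\|\Proj^0\gsps \|=\mathcal{O}(\alpha)$ and \emph{not} $\muvar_2$ in isolation, so a naive Cauchy--Schwarz does not close. The resolution is to exploit the piecewise definition of $\muvar_2$ in Definition~\ref{def:G-improved}: if $\|\Proj^0\gsp\|\leq\alpha^{3/2}$ then $\muvar_2=0$ and the contribution vanishes identically, while otherwise $\|\Proj^0\gsps \|>\alpha^{3/2}$ combined with $|\muvar_2|\,\|\Proj^0\gsps \|=\mathcal{O}(\alpha)$ forces the a priori bound $|\muvar_2|\leq c\,\alpha^{-1/2}$. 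Together with $\|\cEpp \|=\mathcal{O}(1)$ and $\|P\,\Proj^0\gsps \|=\mathcal{O}(\alpha^2)$ from Lemma~\ref{lem:improved-estimates}, Cauchy--Schwarz then yields
\begin{equation}\nonumber
|\alpha\muvar_2\la P_f\,\Proj^2\gspl ,\cEpp \,P\,\Proj^0\gsps \ra|\leq\epsilon\|H_f^{1/2}\Proj^2\gspl \|^2+c\,|\alpha\muvar_2|^2\,\|\cEpp \|^2\,\|P\,\Proj^0\gsps \|^2=\epsilon\|H_f^{1/2}\Proj^2\gspl \|^2+\mathcal{O}(\alpha^5).
\end{equation}
Summing the contributions from $n=1,2,3$ and collecting the $\epsilon\|H_f^{1/2}\Proj^n\gspl \|^2$ terms into $\epsilon M[\gspl ]$ then delivers the stated inequality.
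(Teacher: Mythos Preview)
Your proposal is correct and follows essentially the same strategy as the paper: decompose by photon number $n\in\{1,2,3\}$, use Cauchy--Schwarz with the cutoff bound $\|P_f\Proj^n\gspl\|^2\leq c\|H_f^{1/2}\Proj^n\gspl\|^2$, and invoke $\|P|P_f|^{1/2}\fFz\|^2=\mathcal{O}(\alpha^5)$ for $n=1$ and $\|P\fFz\|^2=\mathcal{O}(\alpha^5\log\alpha^{-1})$ for $n=3$. The only cosmetic differences are that for the $\muvar_2$-term the paper regroups as $c\alpha^3(|\muvar_2|\,\|\Proj^0\gsps\|)^2=\mathcal{O}(\alpha^5)$ rather than extracting $|\muvar_2|\leq c\alpha^{-1/2}$ as you do (both are equivalent consequences of the same dichotomy), and for $n=3$ the relevant input is \eqref{appendix:lem7-1} of Lemma~\ref{appendix:lem-A0} rather than Lemma~\ref{appendix:lem-A4}.
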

%
%
%
\begin{proof}
We have
\begin{equation}\label{eq:n-hgg-6}
 \la P.P_f \gspl , \gsps \ra = \la P_f \Proj^1 \gspl , P \Proj^1 \gsps \ra + \la
 P_f \Proj^2 \gspl , P \Proj^2 \gsps \ra +\la P_f \Proj^3 \gspl , P\Proj^3 \gsps \ra .
\end{equation}
Obviously, using Lemma~\ref{appendix:lem-A0} and the equality
$\muvar_1 = \mathcal{O}(1)$ from
Lemma~\ref{lem:improved-estimates}, yields
\begin{equation}\label{eq:n-hgg-7}
 \la P_f \Proj^1 \gspl  , P\Proj^1 \gsps \ra| \leq \epsilon \|H_f
 ^\frac12 \gspl \|^2 + |\muvar_1|^2 \| P |P_f|^\frac12 \fFz \|^2 \leq
 \epsilon \|H_f
 ^\frac12 \gspl \|^2  + \mathcal{O}(\alpha^5).
\end{equation}
We also have, by definition of $\Proj^2 \gsps $ and using the
estimates $\muvar_{2,i} = \mathcal{O}(1)$ from
Lemma~\ref{lem:improved-estimates},
\begin{equation}\label{eq:n-hgg-7}
 \la P_f \Proj^2 \gspl  , P\Proj^2 \gsps \ra| \leq \epsilon \|H_f
 ^\frac12 \gspl \|^2
 + c|\muvar_2|^2\alpha^2 \|P \Proj^0 \gsps \|^2
 +\mathcal{O}(\alpha^6).
\end{equation}
We next bound the second term on the right hand side of
\eqref{eq:n-hgg-7}. Notice that by definition of $\muvar_2$, this
term is nonzero only if $\|\Proj^0 \gsps \|^2
>\alpha^3$, which implies, with Lemma~\ref{lem:improved-estimates},
that $\| P\Proj^0 \gsps \|^2 \leq c\alpha \|\Proj^0 \gsps \|^2$.
The inequality \eqref{eq:n-hgg-7} can thus be rewritten as
\begin{equation}\label{eq:n-hgg-8}
 |\la P. P_f \Proj^2 \gspl , \Proj^2 \gsps \ra| \leq
 \epsilon \| H_f^\frac12 \gspl \|^2 +
 c |\muvar_2|^2 \alpha^3 \|\Proj^0 \gsps \|^2 +\mathcal{O}(\alpha^6)
 \leq \epsilon \| H_f^\frac12 \gspl \|^2 + \mathcal{O}(\alpha^5) ,
\end{equation}
using in the last inequality that $|\muvar_2|\, \|\Proj^0 \gsps \|
=\mathcal{O}(\alpha)$ (see Lemma~\ref{lem:improved-estimates}).

For the second term on the right hand side of \eqref{eq:n-hgg-6},
using \eqref{appendix:lem7-1} from Lemma~\ref{appendix:lem-A0}
yields
\begin{equation}\label{eq:n-hgg-9}
  \la \Proj^3 P_f  \gspl , \Proj^3 P  \gsps \ra
  \leq \epsilon \| \Proj^3 H_f^\frac12  \gspl \|^2
  + c\alpha^7\log\alpha^{-1} |\muvar_3|^2 .
\end{equation}
The inequalities \eqref{eq:n-hgg-6}, \eqref{eq:n-hgg-8} and
\eqref{eq:n-hgg-9} prove the lemma.
\end{proof}

\begin{lemma}\label{lem:sub-3}
\begin{equation}\nonumber
  - 4\alpha^\frac12 \Re \la P. \Ac \gspl , \gsps \ra \geq
  -\epsilon M[\gspl ]
  +\mathcal{O}(\alpha^5) .
\end{equation}
\end{lemma}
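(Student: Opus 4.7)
The plan is to expand the inner product according to photon sectors. Since $\Ac$ raises the photon number by one, and since $\Proj^0 \gspl = 0$ by Definition~\ref{def:G-improved} while $\Proj^m \gsps = 0$ for $m \geq 4$, only the contributions $n = 1, 2$ survive:
\begin{equation*}
-4\alpha^{1/2}\Re\langle P\cdot\Ac\gspl, \gsps\rangle
= -4\alpha^{1/2}\Re\bigl(\langle P\cdot\Ac\Proj^1\gspl, \Proj^2\gsps\rangle + \langle P\cdot\Ac\Proj^2\gspl, \Proj^3\gsps\rangle\bigr).
\end{equation*}
Each piece is then treated by taking the adjoint $\Ac \to \Aa$ and using the explicit structure of $\Proj^2\gsps$ and $\Proj^3\gsps$ from Definition~\ref{def:G-improved}.

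For the $n=1$ contribution, I would insert the representation
$\Proj^2\gsps = \alpha\muvar_2\cEpp\Proj^0\gsps + \alpha\sum_i\muvar_{2,i}(H_f+P_f^2)^{-1}W_i\,\partial_{x_i}\Sgsa$
and treat the two resulting summands separately. Using that $\|H_f^{-1/2}\Aa\cEpp\|$ and $\|H_f^{-1/2}\Aa(H_f+P_f^2)^{-1}W_i\|$ are finite, a Cauchy--Schwarz split of the form $\epsilon\|H_f^{1/2}\Proj^1\gspl\|^2 + c\alpha^3\bigl(|\muvar_2|^2\|P\Proj^0\gsps\|^2 + \sum_i|\muvar_{2,i}|^2\|P\partial_{x_i}\Sgsa\|^2\bigr)$ combined with the a priori bounds $|\muvar_2|\|\Proj^0\gsps\| = \mathcal{O}(\alpha)$, $\muvar_{2,i} = \mathcal{O}(1)$, and $\|P\Proj^0\gsps\| = \mathcal{O}(\alpha^2)$ from Lemma~\ref{lem:improved-estimates}, together with $\|P\,\partial_{x_i}\Sgsa\|^2 = \mathcal{O}(\alpha^4)$, reduces this contribution to $-\epsilon\|H_f^{1/2}\Proj^1\gspl\|^2 + \mathcal{O}(\alpha^5)$, the first summand being absorbed into $\epsilon M[\gspl]$.

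For the $n=2$ contribution, inserting $\Proj^3\gsps = \alpha\muvar_3(H_f+P_f^2)^{-1}\Ac\cdot\Ac\fFz$, taking adjoints, and commuting $P$ past $\Aa = \Aa(0)$ yields
\begin{equation*}
-4\alpha^{3/2}\overline{\muvar_3}\,\Re\langle P\Proj^2\gspl,\;\Aa(H_f+P_f^2)^{-1}\Ac\cdot\Ac\fFz\rangle.
\end{equation*}
Cauchy--Schwarz with the weighting $\epsilon\|P\Proj^2\gspl\|^2 + c\alpha^3|\muvar_3|^2\|\Aa(H_f+P_f^2)^{-1}\Ac\cdot\Ac\fFz\|^2$, combined with the standard estimate $\|\Aa\psi\|\leq c\|H_f^{1/2}\psi\|$ and Lemma~\ref{appendix:lem-A4} (which provides $\|(H_f+P_f^2)^{-1}\Ac\cdot\Ac\fFz\|_*^2 = \mathcal{O}(\alpha^3\log\alpha^{-1})$), produces a piece absorbed into $M[\gspl]$ via $\|P\Proj^2\gspl\|^2 \leq c\,M[\gspl]$, plus a residual of order $\alpha^6\log\alpha^{-1}|\muvar_3|^2$. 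The latter is of order $\alpha^5$ at worst once the coercive term $\tfrac{|\muvar_3+1|^2}{2}\alpha^2\|\cEpp\|_*^2\|\fFz\|^2 \sim \alpha^5\log\alpha^{-1}|\muvar_3+1|^2$ in Proposition~\ref{prop:prop-hgg} is allowed to dominate it; alternatively, one may exploit the $\sharp$-orthogonality between $\Proj^3\gspl$ and $(H_f+P_f^2)^{-1}\Ac\cdot\Ac\fFz$, in the spirit of~\eqref{eq:usedhere}, to eliminate the $|\muvar_3|^2$ factor directly.

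The main technical obstacle is the absence of any direct a priori bound on $|\muvar_3|$ in Lemma~\ref{lem:improved-estimates}, which makes the $n=2$ contribution the delicate one: naive Cauchy--Schwarz produces an $|\muvar_3|^2$-weighted error, and one must rely either on an orthogonality argument or on the subsequent absorption of that error into the positive $|\muvar_3+1|^2$ term in Proposition~\ref{prop:prop-hgg} to close the estimate.
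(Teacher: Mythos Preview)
Your approach is more intricate than the paper's, and the $n=2$ piece contains a real gap.

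For the $n=2$ contribution you land on a residual $c\alpha^6\log\alpha^{-1}|\muvar_3|^2$ and propose two ways out. Neither quite works for the lemma as stated. Absorbing the $|\muvar_3|^2$ error into the coercive term of Proposition~\ref{prop:prop-hgg} does not prove the present lemma, whose claim is $\mathcal{O}(\alpha^5)$ with no $|\muvar_3|^2$-dependence; at best this would give a weaker statement of the form $-\epsilon M[\gspl]-c\alpha^6\log\alpha^{-1}|\muvar_3|^2+\mathcal{O}(\alpha^5)$, as in Lemma~\ref{lem:sub-2}. Your orthogonality alternative also fails: the $\sharp$-orthogonality from Definition~\ref{def:G-improved} is between $\Proj^3\gspl$ and $(H_f+P_f^2)^{-1}\Ac\cdot\Ac\fFz$, i.e.\ a $3$-photon statement, whereas the term you need to control lives in the $2$-photon sector, pairing $P\Proj^2\gspl$ with $\Aa\Proj^3\gsps$. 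There is no available orthogonality relation there.

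The paper sidesteps the whole difficulty by not unfolding $\gsps$ at all. After taking the adjoint, $\la P\cdot\Ac\gspl,\gsps\ra=\la P\gspl,\Aa\gsps\ra$, a single Cauchy--Schwarz gives
\[
4\alpha^{1/2}\,|\la P\cdot\Ac\gspl,\gsps\ra|
\;\leq\;\epsilon\|\Proj^{n\leq 2}P\gspl\|^2 + c\alpha\|H_f^{1/2}\Proj^{n\geq 2}\gsps\|^2.
\]
The first term is part of $M[\gspl]$. For the second, the key observation is that one can write $\gsps=\gsp-\gspl$ and use the triangle inequality: for $n\neq 1$ one has $\Proj^n\gsp=\Proj^n\rvar$, and Theorem~\ref{thm:thm-main2} gives $\|H_f^{1/2}\rvar\|^2=\mathcal{O}(\alpha^4)$, so
\[
\|H_f^{1/2}\Proj^{n\geq 2}\gsps\|^2
\;\leq\;2\|H_f^{1/2}\Proj^{n\geq 2}\gsp\|^2+2\|H_f^{1/2}\Proj^{n\geq 2}\gspl\|^2
\;\leq\;\mathcal{O}(\alpha^4)+cM[\gspl].
\]
Multiplying by $c\alpha$ yields $\mathcal{O}(\alpha^5)+c\alpha M[\gspl]$, and the $c\alpha M[\gspl]$ is absorbed into $\epsilon M[\gspl]$ for $\alpha$ small. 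This route never sees $\muvar_3$ and produces the lemma exactly as stated. Your $n=1$ analysis is correct but unnecessary once you adopt this global bound.
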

%
%
%
\begin{proof}
Since $\Proj^{n > 3} \gsps =0$, $\Proj^0 \gspl =0$ and
 $$
 \| H_f^\frac12
 \Proj^{n\neq 1}\gsps \|^2 \leq 2 \|H_f^\frac12 \Proj^{n\neq1} \gsp\|^2 + 2
 \|H_f^\frac12 \Proj^{n\neq1} \gspl \|^2 \leq c M[\gspl ] +
 \mathcal{O}(\alpha^4) ,
 $$
(see \eqref{eq:thm:thm-main2-1} of Theorem~\ref{thm:thm-main2}) we
obtain
\begin{equation}\nonumber
\begin{split}
 4\alpha^\frac12 \Re\la P.\Ac \gspl , \gsps \ra & \leq
 \epsilon \|  \Proj^{n\leq 2} P \gspl \|^2 + c\alpha \| \Proj^{n\geq 2} H_f^\frac12
 \gsps \|^2
 \leq \epsilon M[\gspl ] +\mathcal{O}(\alpha^5) \ .
\end{split}
\end{equation}
\end{proof}

\begin{lemma}\label{lem:sub-4}
\begin{equation}\nonumber
 - 4\alpha^\frac12\Re \la P.\Aa \gspl , \gsps \ra \geq
 -\epsilon M[\gspl ]
 +\mathcal{O}(\alpha^5) .
\end{equation}
\end{lemma}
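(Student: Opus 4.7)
The plan is to mirror the structure of Lemma~\ref{lem:sub-3}, but adapted to the fact that $\Aa$ \emph{lowers} the photon number. Since $\Proj^0\gspl=0$ by Definition~\ref{def:G-improved} and $\Proj^{n\geq4}\gsps=0$, the only surviving pairings are
\begin{equation*}
 \la P\cdot\Aa\gspl,\gsps\ra
 \;=\;\sum_{n=1}^{4}\la P\cdot\Aa\Proj^n\gspl,\,\Proj^{n-1}\gsps\ra,
\end{equation*}
and I would estimate each of these four terms separately by Cauchy--Schwarz, exploiting the explicit form of $\Proj^{n-1}\gsps$ from Definition~\ref{def:G-improved} together with the smallness bounds on its building blocks coming from Theorem~\ref{thm:thm-main2} and Lemma~\ref{lem:improved-estimates}, and the standard bound $\|\Aa\psi\|^2\leq c\,\|H_f^{1/2}\psi\|^2$ from \eqref{eq:iii}. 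Throughout, since $P$ commutes with $\Aa$ and $\Ac$, I am free to move it to whichever side is more convenient.

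For $n=1$, writing $\la P\cdot\Aa\Proj^1\gspl,\Proj^0\gsps\ra=\la\Aa\Proj^1\gspl,P\Proj^0\gsps\ra$ and combining Schwarz with the key estimate $\|P\Proj^0\gsps\|=\mathcal{O}(\alpha^2)$ from Lemma~\ref{lem:improved-estimates} gives, after multiplication by $\alpha^{1/2}$, a contribution bounded by $\epsilon\|H_f^{1/2}\Proj^1\gspl\|^2+c\alpha^5$. For $n=2$, where $\Proj^1\gsps=\muvar_1\fFz$, Schwarz together with $|\muvar_1|=\mathcal{O}(1)$ (Lemma~\ref{lem:improved-estimates}) reduces matters to showing $\|P\fFz\|^2=\mathcal{O}(\alpha^4)$; this follows from the explicit form $\fFz=2\alpha^{1/2}(H_f+P_f^2+h_\alpha+e_0)^{-1}\Ac\cdot\nabla\Sgsa\vac$, since $P$ hits the $x$-dependent factor $\nabla\Sgsa$ and produces an extra factor of $\alpha$. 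For $n=3$, the explicit decomposition $\Proj^2\gsps=\alpha\muvar_2\cEpp\Proj^0\gsps+\alpha\sum_i\muvar_{2,i}(H_f+P_f^2)^{-1}W_i\partial_i\Sgsa$ combined with $|\muvar_2|\|\Proj^0\gsps\|=\mathcal{O}(\alpha)$, $\muvar_{2,i}=\mathcal{O}(1)$, and $\|P\partial_i\Sgsa\|=\mathcal{O}(\alpha^2)$ (all from Lemma~\ref{lem:improved-estimates}) yields $\|P\Proj^2\gsps\|^2=\mathcal{O}(\alpha^4)$, hence a contribution of size $\epsilon\|H_f^{1/2}\Proj^3\gspl\|^2+\mathcal{O}(\alpha^5)$.

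The delicate case is $n=4$, which pairs $\Proj^4\gspl$ with $\Proj^3\gsps=\alpha\muvar_3(H_f+P_f^2)^{-1}\Ac\cdot\Ac\fFz$: the coefficient $\muvar_3$ is only controlled through $|\muvar_3|^2\alpha^5$ being part of the overall error budget (as already appears in \eqref{simple2} of Proposition~\ref{prop:prop-hgg}), and the three-photon state has an enhanced infrared behavior. My plan here is to mimic the weighted-momentum trick used in Lemma~\ref{lem:lem-PPf} around \eqref{eq:PPf-last}: insert auxiliary weights of the form $|k_j|^{1/6}$ into the Schwarz estimate so that the singular factors are absorbed by the infrared-weighted norms of $(H_f+P_f^2)^{-1}\Ac\cdot\Ac\fFz$ controlled by Lemma~\ref{appendix:lem-A4}, while the $P$ factor is routed onto the $\gspl$-side, contributing to $\|\Proj^{n\leq 4}P\gspl\|^2\subset M[\gspl]$. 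Summing the four estimates and multiplying by $4\alpha^{1/2}$ yields the claim.

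The main obstacle is thus the $n=4$ term: one must simultaneously cope with the three-photon infrared singularity of $\Proj^3\gsps$, the a priori uncontrolled size of $\muvar_3$, and the tight requirement that the remainder be $\mathcal{O}(\alpha^5)$ rather than $\mathcal{O}(\alpha^5\log\alpha^{-1})$. The weighted-Schwarz approach from Lemma~\ref{lem:lem-PPf}, together with the infrared-weighted estimates of Lemma~\ref{appendix:lem-A4}, is tailor-made for exactly this difficulty and closes the argument.
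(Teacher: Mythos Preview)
Your treatment of the terms $n=1,2,3$ is correct and parallels what the paper does. The difficulty you flag at $n=4$ is real, but the weighted-Schwarz remedy you propose does not actually eliminate the $\muvar_3$-dependence: any Cauchy--Schwarz estimate that isolates $\Proj^3\gsps=\alpha\muvar_3(H_f+P_f^2)^{-1}\Ac\!\cdot\!\Ac\fFz$ on one side will carry a factor $|\muvar_3|$, and the infrared-weighted bounds of Lemma~\ref{appendix:lem-A4} only remove the logarithm in the \emph{norm} of this vector, not the $|\muvar_3|$ prefactor. Routing $P$ to the $\gspl$-side forces you to control either $\|\Ac\Proj^3\gsps\|$ or a weighted version of $\|\Aa P\Proj^4\gspl\|$; for the former you are back to $|\muvar_3|^2\mathcal{O}(\alpha^5\log\alpha^{-1})$ via~(\ref{appendix:lem9-1-bis}), and for the latter there is no reason to expect $|k_j|^{-1/6}$-weighted control on a generic $\gspl$. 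So as written, your $n=4$ step yields at best an error of the form $c|\muvar_3|^2\alpha^6\log\alpha^{-1}$, which is \emph{not} $\mathcal{O}(\alpha^5)$ since $\muvar_3$ is a priori unbounded.

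The paper sidesteps this entirely by a much simpler trick. It applies a single global Schwarz, routing $\Aa$ onto $\gspl$ and $P$ onto $\gsps$:
\[
 -4\alpha^{1/2}\Re\la P\!\cdot\!\Aa\gspl,\gsps\ra
 \;\geq\;-\epsilon\|H_f^{1/2}\gspl\|^2-c\alpha\|P\gsps\|^2.
\]
The key point is then \emph{not} to estimate $\|P\Proj^3\gsps\|$ through the explicit $\muvar_3$-representation, but through the triangle inequality $\Proj^n\gsps=\Proj^n\gsp-\Proj^n\gspl$ for $n\neq 1$. This gives $\|P\Proj^{n\neq1}\gsps\|^2\leq 2\|P\Proj^{n\neq1}\gsp\|^2+2\|P\Proj^{n\neq1}\gspl\|^2$; the first summand is $\mathcal{O}(\alpha^4)$ by the a~priori bounds of Theorem~\ref{thm:thm-main2} (recall $\Proj^{n\geq2}\gsp=\Proj^{n\geq2}\rvar$), while the second is $\leq c\,M[\gspl]$ and gets absorbed after multiplication by $c\alpha$. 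For $n=1$ one uses $\|P\fFz\|^2=\mathcal{O}(\alpha^5\log\alpha^{-1})$ from Lemma~\ref{appendix:lem-A0}, exactly as you do. The upshot is that the paper never touches $\muvar_3$ in this lemma and in fact obtains the stronger remainder $\mathcal{O}(\alpha^6\log\alpha^{-1})$.
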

%
%
%
\begin{proof}
\begin{equation}\nonumber
\begin{split}
 & - 4\alpha^\frac12 \Re \la P. \Aa \gspl , \gsps \ra \geq
 -\epsilon \| H_f^\frac12 \gspl \|^2 - c\alpha \|P \gsps \|^2 \\
 & \geq -\epsilon \|H_f^\frac12 \gspl \|^2 - c\alpha \left( \|
 \Proj^{n=0,2,3,4 } P \gsp\|^2 + \| \Proj^{n=0,2,3,4} P \gspl \|^2\right)
 -c\alpha |\muvar_1|^2 \| P\fFz \|^2 \\
 & \geq -\epsilon \| H_f^\frac12 \gspl \|^2
 - \epsilon \| \Proj^{n\leq 4}P \gspl \|^2 +\mathcal{O}(\alpha^6\log\alpha^{-1})
 \geq -\epsilon M[\gspl ]
 +\mathcal{O}(\alpha^6\log\alpha^{-1}),
\end{split}
\end{equation}
using \eqref{appendix:lem7-1} of Lemma~\ref{appendix:lem-A0}.
\end{proof}

\begin{lemma}\label{lem:sub-5}
\begin{equation}\nonumber
  4\alpha^\frac12 \Re \la P_f. A(0) \gspl , \gsps \ra \geq
  4\alpha^\frac12 \Re \la  \Proj^2 \gspl , P_f. \Ac\fF\ra
  - \epsilon M[\gspl ] + \mathcal{O}(\alpha^5) .
\end{equation}
\end{lemma}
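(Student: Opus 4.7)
The plan is to split $P_f\cdot A(0)=P_f\cdot A^-+P_f\cdot A^+$ and expand $\la P_f\cdot A(0)\gspl,\gsps\ra$ into photon-number sector contributions, following the pattern of Lemmas~\ref{lem:sub-1}--\ref{lem:sub-4}. Writing $\gspl=\sum_{m\geq 1}\Proj^m\gspl$ and $\gsps=\sum_{n=0}^{3}\Proj^n\gsps$, a pairing $\la P_f\cdot A^\pm\Proj^m\gspl,\Proj^n\gsps\ra$ can be nonzero only when $n=m\mp 1$; the case $(m,n)=(1,0)$ additionally vanishes because $P_f$ annihilates the vacuum. Thus only five sectors survive: $(m,n)\in\{(2,1),(3,2),(4,3)\}$ from the $A^-$ piece, and $(m,n)\in\{(1,2),(2,3)\}$ from the $A^+$ piece.

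The key ingredient for extracting the target term will be the Coulomb-gauge identity $k\cdot\epsilon_\lambda(k)=0$, which yields $\sum_{i=1}^3[A^{\pm,i},P_f^i]=0$ and hence $(P_f\cdot A^-)^*=P_f\cdot A^+$. For the main sector $(m,n)=(2,1)$, I will substitute $\Proj^1\gsps=\muvar_1\fFz$, replace $\fFz$ by $\fF$ at cost $\mathcal{O}(\alpha^5)$ via Lemma~\ref{appendix:lem-A1}, and pass to the adjoint using the gauge identity. This produces
\begin{equation*}
4\alpha^{\frac12}\Re\la P_f\cdot A^-\Proj^2\gspl,\Proj^1\gsps\ra \;=\; 4\alpha^{\frac12}\Re\muvar_1\la\Proj^2\gspl,P_f\cdot\Ac\fF\ra+\mathcal{O}(\alpha^5).
\end{equation*}
Since $\muvar_1=1+\mathcal{O}(\alpha^{1/2})$ by Lemma~\ref{lem:improved-estimates}, the error from replacing $\muvar_1$ by $1$ can be absorbed via Cauchy--Schwarz into $\epsilon\|H_f^{1/2}\Proj^2\gspl\|^2+\mathcal{O}(\alpha^5)\leq \epsilon M[\gspl]+\mathcal{O}(\alpha^5)$, using the $*$-norm estimate $\|\Proj^2\gspl\|_*=\mathcal{O}(\alpha^2)$ from Theorem~\ref{thm:thm-main2}.

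The remaining four sectors will be estimated by Cauchy--Schwarz with $H_f^{\pm 1/2}$ weights. For $(3,2)$ and $(1,2)$ I will use the decomposition of $\Proj^2\gsps$ from Definition~\ref{def:G-improved} together with the a priori bounds $|\muvar_2|\,\|\Proj^0\gsps\|=\mathcal{O}(\alpha)$ and $\muvar_{2,i}=\mathcal{O}(1)$ from Lemma~\ref{lem:improved-estimates}, combined with the $L^2$ inclusions $H_f^{-1/2}A^-\cEpp\in L^2$ and $H_f^{-1/2}A^-(H_f+P_f^2)^{-1}W_i\in L^2$, producing bounds of order $\epsilon\|H_f^{1/2}\gspl\|^2+\mathcal{O}(\alpha^5)$. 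For $(4,3)$ and $(2,3)$, I will rely on Lemma~\ref{appendix:lem-A4} to control $\Proj^3\gsps=\alpha\muvar_3(H_f+P_f^2)^{-1}\Ac\cdot\Ac\fFz$, combining those bounds with the smallness of the relevant $\gspl$ sectors from Theorem~\ref{thm:thm-main2}. The main technical subtlety, mirroring the $\muvar_3$ handling in Lemmas~\ref{lem:lem-PPf} and \ref{lem:lem-A(0)}, will be to balance the $\sqrt{\log\alpha^{-1}}$ factors from the infrared region against the $|\muvar_3|^2$ dependence so that every residual term is absorbed into $\epsilon M[\gspl]+\mathcal{O}(\alpha^5)$.
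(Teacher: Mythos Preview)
Your treatment of the main $(2,1)$ sector is correct and matches the paper's. The gap is in the two sectors carrying $\Proj^3\gsps$, namely $(4,3)$ and $(2,3)$. By inserting the explicit formula $\Proj^3\gsps=\alpha\muvar_3(H_f+P_f^2)^{-1}\Ac\cdot\Ac\fFz$ you necessarily introduce $|\muvar_3|$, for which \emph{no a priori bound is available} at this point of the argument. Concretely, any Cauchy--Schwarz split of the $(4,3)$ term produces either
\[
\epsilon\,\|H_f^{1/2}\Proj^4\gspl\|^2+\tfrac{c\alpha}{\epsilon}\|P_f\Proj^3\gsps\|^2
\;\le\;\epsilon\,M[\gspl]+\mathcal{O}\bigl(\alpha^{6}|\muvar_3|^{2}\log\alpha^{-1}\bigr),
\]
or, if you instead use the bound $\|H_f^{1/2}\Proj^4\gspl\|^2=\mathcal{O}(\alpha^4)$ directly, a term $\mathcal{O}\bigl(\alpha^{5}|\muvar_3|\sqrt{\log\alpha^{-1}}\bigr)$. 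Neither is $\mathcal{O}(\alpha^5)$, and no reshuffling of weights can remove $|\muvar_3|$ since it sits only on the $\gsps$ factor. So the lemma as stated is out of reach along this route.

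The paper's device is much simpler and sidesteps the issue entirely: for all sectors $n\ge2$ it never unpacks $\Proj^n\gsps$ but uses the triangle inequality $\Proj^n\gsps=\Proj^n\gsp-\Proj^n\gspl$, so that
\[
\|H_f^{1/2}\Proj^{n\ge2}\gsps\|^2\;\le\;2\|H_f^{1/2}\Proj^{n\ge2}\gsp\|^2+2\|H_f^{1/2}\Proj^{n\ge2}\gspl\|^2.
\]
Since $\Proj^n\gsp=\Proj^n\rvar$ for $n\neq1$, the first piece is $\mathcal{O}(\alpha^4)$ by Theorem~\ref{thm:thm-main2}, hence $\mathcal{O}(\alpha^5)$ after the $c\alpha$ prefactor; the second piece, with prefactor $c\alpha\le\epsilon$, is absorbed into $\epsilon\,M[\gspl]$. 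This disposes of $(3,2)$, $(4,3)$, $(1,2)$, $(2,3)$ in one stroke, with no appeal to $\muvar_2$, $\muvar_{2,i}$, or $\muvar_3$. Your sector-by-sector expansion of $\Proj^2\gsps$ would also work for $(3,2)$ and $(1,2)$ (since $\muvar_2,\muvar_{2,i}$ \emph{are} bounded by Lemma~\ref{lem:improved-estimates}), but it is unnecessary extra labor.
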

%
%
%
\begin{proof}
We have
\begin{equation}\nonumber
\begin{split}
 & 4\alpha^\frac12 \Re \la P_f.\Aa \gspl , \gsps \ra \\
 & \geq
 -\epsilon \|H_f^\frac12 \gspl \|^2 -c\alpha \|\Proj^{n\geq 2} P_f
 \gsps \|^2 + 4\alpha^\frac12 \Re \la P_f. \Aa \Proj^2 \gspl , \muvar_1\fFz \ra \\
 &  \geq -\epsilon \|H_f^\frac12 \gspl \|^2
 - c \alpha \| \Proj^{n=2,3} H_f^\frac12 \gsp\|^2
 + 4\alpha^\frac12 \Re \la P_f. \Aa \Proj^2 \gspl , \muvar_1\fFz \ra \\
 & \geq -\epsilon M[\gspl ] +
 4\alpha^\frac12 \Re \la P_f. \Aa \Proj^2 \gspl , \muvar_1\fFz \ra +
 \mathcal{O}(\alpha^5).
\end{split}
\end{equation}
We estimate the second term on the right hand side as follows,
\begin{equation}\nonumber
\begin{split}
 & 4\alpha^\frac12 \Re \la P_f. \Aa \Proj^2 \gspl , \muvar_1\fFz \ra \\
 & \geq
 4\alpha^\frac12 \Re \la P_f. \Aa \Proj^2 \gspl , \fFz \ra -\epsilon
 \|H_f^\frac12 \Proj^2 \gspl \|^2 - c\alpha |\muvar-1|^2 \| P_f \fFz \|^2
 \\
 & \geq 4\alpha^\frac12 \Re \la P_f. \Aa \Proj^2 \gspl , \fF\ra
 -\epsilon
 \|H_f^\frac12 \Proj^2 \gspl \|^2 + \mathcal{O}(\alpha^5) ,
\end{split}
\end{equation}
where we used $|\muvar_1-1|^2 =\mathcal{O}(\alpha)$ from
Lemma~\ref{lem:improved-estimates}, $\|\fFz \|_*^2 =
\mathcal{O}(\alpha^3)$ from Lemma~\ref{appendix:lem-A0}, and $\|
P_f(\fFz  -\fF)\|^2 = \mathcal{O}(\alpha^4)$ from
Lemma~\ref{appendix:lem-A1}.

We also have, using $P_f.\Ac = \Ac.P_f$,
\begin{equation}\nonumber
\begin{split}
 & 4\alpha^\frac12 \Re \la P_f . \Ac \gspl , \gsps \ra
 = 4\alpha^\frac12 \Re
 \la \Proj^{n\leq 2} P_f \gspl ,  \Aa \Proj^{n\geq 2}  \gsps \ra \\
 & \geq -\epsilon
 \|H_f^\frac12 \gspl \|^2
 - c\alpha \|  H_f^\frac12 \Proj^{n\geq 2} \gsps \|^2 \geq
 -\epsilon M[\gspl ] +\mathcal{O}(\alpha^5) ,
\end{split}
\end{equation}
where $\| H_f^\frac12 \Proj^{n\geq 2} \gsps \|$ has been estimated
as $\| P \gsps  \|$ in the proof of Lemma~\ref{lem:sub-4}.

\end{proof}

\subsection{Estimates of the term $\la (H+e_0) \gsps , \gsps \ra$}
$\ $

Due to \eqref{eq:n-hgg-1} and \eqref{eq:n-hgg-2}, one finds
\begin{equation}\label{eq:n-hgg-10}
\begin{split}
 & \la (H+e_0) \gsps , \gsps \ra = \la \gsps , \gsps \ra_{\sharp} - 2\Re \la
 P. (P_f +\alpha^\frac12 A(0)) \gsps , \gsps \ra \\
 & + 2\alpha^\frac12
 \Re\la P_f. A(0) \gsps , \gsps \ra
 + 2\alpha \|\Aa \gsps \|^2 + 2\alpha \Re\la \Aa . \Aa \gsps , \gsps \ra
\end{split}
\end{equation}

We estimate the terms in \eqref{eq:n-hgg-10} below.

\begin{lemma}\label{lem:g1g1-1}
We have
\begin{equation}\nonumber
\begin{split}
 & -2\Re \la P.(P_f +\alpha^\frac12 A(0)) \gsps , \gsps \ra \\
 & \geq
 -4\alpha^\frac12 \Re \la \Aa\fFz , P\Proj^0 (\gsps )^s\ra
 - \epsilon M[\gspl ] - |\muvar_1 -1| c\alpha^4 +\mathcal{O}(\alpha^5) ,
\end{split}
\end{equation}
where $\Proj^0 (\gsps )^s = \Proj^0 \gsps  - \Proj^0 (\gsps )^a$
is the even part of $\Proj^0 \gsps $.
\end{lemma}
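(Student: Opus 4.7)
The plan is to split the left-hand side into $-2\Re\langle P\cdot P_f\gsps,\gsps\rangle$ and $-2\alpha^{1/2}\Re\langle P\cdot A(0)\gsps,\gsps\rangle$. Since $P=i\nabla_x$ commutes with $A^\pm(0)$ after the Bogoliubov transformation and $(P\cdot\Ac)^\ast=P\cdot\Aa$, the second piece equals $-4\alpha^{1/2}\Re\langle P\cdot\Ac\gsps,\gsps\rangle$. By Definition~\ref{def:G-improved}, $\gsps$ is supported in the photon sectors $n=0,1,2,3$ with $\Proj^1\gsps=\muvar_1\fFz$; $P\cdot P_f$ is diagonal in photon number, while $P\cdot\Ac$ couples only consecutive sectors $(n,n+1)$.

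For the $P\cdot P_f$ piece, the $n=0$ diagonal contribution vanishes since $P_f$ annihilates the vacuum, and the $n=1$ contribution $|\muvar_1|^2\langle P\cdot P_f\fFz,\fFz\rangle$ vanishes by the $x$-symmetry argument already used in the proof of Lemma~\ref{thm:thm-direct-gg} (oddness of $\nabla\Sgsa$ in $x$ combined with the angular integration over $k$). At $n=2,3$ I would apply a Schwarz bound of the type $|\langle P\cdot P_f\Proj^n\gsps,\Proj^n\gsps\rangle|\leq\epsilon\|H_f^{1/2}\Proj^n\gspl\|^2+c\alpha\|P\Proj^n\gsps\|^2$ and then use Lemma~\ref{lem:improved-estimates} ($|\muvar_2|\,\|\Proj^0\gsps\|=\mO(\alpha)$, $\muvar_{2,i}=\mO(1)$) together with the bound on $\|H_f^{1/2}(H_f+P_f^2)^{-1}\Ac\cdot\Ac\fFz\|$ from Lemma~\ref{appendix:lem-A4} applied to the $\muvar_3$ component of $\Proj^3\gsps$, to absorb each contribution into $\epsilon M[\gspl]+\mO(\alpha^5)$.

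The main work is the $P\cdot\Ac$ piece. For the $(1,2)$ and $(2,3)$ off-diagonal couplings I would proceed exactly as in Lemmata~\ref{lem:sub-3}--\ref{lem:sub-4}, combining the standard estimates $\|\Aa v\|^2,\|\Ac v\|^2\leq c\|v\|^2+c\|H_f^{1/2}v\|^2$ with the explicit structure of $\Proj^2\gsps$ and $\Proj^3\gsps$ in Definition~\ref{def:G-improved}, to bound each by $\epsilon M[\gspl]+\mO(\alpha^5)$. The $(0,1)$ coupling is the interesting one: since $P$ commutes with $\Ac(0)$,
\begin{equation}\nonumber
 -4\alpha^{1/2}\Re\langle P\cdot\Ac\Proj^0\gsps,\muvar_1\fFz\rangle
 =-4\alpha^{1/2}\Re\bigl(\muvar_1\langle\Aa\fFz,P\Proj^0\gsps\rangle\bigr).
\end{equation}
Now $\fFz$ inherits the oddness in $x$ of $\nabla\Sgsa$ (the resolvent $(H_f+P_f^2+h_\alpha+e_0)^{-1}$ preserves $x$-parity), hence $\Aa\fFz$ is odd in $x$; pairing against $P\Proj^0\gsps=P\Proj^0(\gsps)^s+P\Proj^0(\gsps)^a$ therefore selects only the odd-in-$x$ component $P\Proj^0(\gsps)^s$. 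Finally, replacing $\muvar_1$ by $1$ is controlled via Cauchy--Schwarz by
\begin{equation}\nonumber
 \alpha^{1/2}|\muvar_1-1|\,|\langle\Aa\fFz,P\Proj^0(\gsps)^s\rangle|
 \leq c\alpha^{1/2}|\muvar_1-1|\,\|H_f^{1/2}\fFz\|\,\|P\Proj^0\gsps\|
 \leq|\muvar_1-1|\,c\alpha^4,
\end{equation}
since $\|H_f^{1/2}\fFz\|=\mO(\alpha^{3/2})$ by Lemma~\ref{appendix:lem-A0} and $\|P\Proj^0\gsps\|=\mO(\alpha^2)$ by Lemma~\ref{lem:improved-estimates}. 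The main subtlety is the parity isolation producing $(\gsps)^s$ rather than the full $\Proj^0\gsps$, and checking that the $|\muvar_1-1|$ remainder is genuinely of size $\alpha^4$ and not larger.
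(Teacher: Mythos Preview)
Your identification and treatment of the $(0,1)$ coupling in the $P\cdot\Ac$ piece is correct and matches the paper: the parity argument isolating $(\gsps)^s$, and the $|\muvar_1-1|$ estimate via $\|\Aa\fFz\|=\mO(\alpha^{3/2})$ and $\|P\Proj^0\gsps\|=\mO(\alpha^2)$, are exactly the ingredients used.

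The gap is in the remaining terms. Your Schwarz bound for $\la P\cdot P_f\Proj^n\gsps,\Proj^n\gsps\ra$ at $n=2,3$ is incoherent as written: this is diagonal in $\gsps$, so $\gspl$ cannot appear on the right without further input. Even if you intend the substitution $\Proj^n\gsps=\Proj^n\gsp-\Proj^n\gspl$, a straight Cauchy--Schwarz on the $n=3$ term yields at best $\mO(\alpha^4)$, or a $|\muvar_3|^2\alpha^6\log\alpha^{-1}$-type error, neither of which can be absorbed into $\epsilon M[\gspl]+\mO(\alpha^5)$ (no a~priori bound on $|\muvar_3|$ is available at this stage). The paper's approach is sharper and essential here: the $x$-parity of $\Sgsa$ (hence of $\fFz$ and of $\Proj^3\gsps$), together with $\la\cEpp,P_f^i\cEpp\ra=0$ from Lemma~\ref{lem:appendix-3}, makes the $n=1,3$ contributions and the diagonal $n=2$ pieces of $\la P\cdot P_f\gsps,\gsps\ra$ vanish \emph{identically}. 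Only the single $\muvar_2$--$\muvar_{2,i}$ cross term survives, and it is $\mO(\alpha^5)$ directly from Lemma~\ref{lem:improved-estimates}.

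For the $(1,2)$ and $(2,3)$ couplings in $P\cdot A(0)$, your reference to Lemmata~\ref{lem:sub-3}--\ref{lem:sub-4} is misplaced: those treat $\gsps$--$\gspl$ cross terms, not the $\gsps$--$\gsps$ diagonal. The paper again uses the oddness of $\partial\Sgsa/\partial x_i$ and of $\fFz$ to drop the $\muvar_{2,i}$ components (whose pairing with $P\fFz$ vanishes by parity), leaving only the $\muvar_2$ component. The surviving $(2,3)$ piece is then controlled by writing $\|H_f^{1/2}\Proj^3\gsps\|^2\le 2\|H_f^{1/2}\Proj^3\gsp\|^2+2\|H_f^{1/2}\Proj^3\gspl\|^2$, the first being $\mO(\alpha^4)$ by Theorem~\ref{thm:thm-main2} and the second entering $M[\gspl]$ with the small prefactor $c\alpha$. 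In short: extend the parity argument you used for the main term to the other sectors, rather than relying on blanket Schwarz bounds.
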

%
%
%
\begin{proof} Using $\la\cEpp , P_f^i\cEpp \ra=0$ (see Lemma~\ref{lem:appendix-3}),
the symmetry of $\Sgsa $, and $\la P.P_f \fFz , \fFz \ra=0$, we
obtain
\begin{equation}\label{eq:n-hgg-11}
\begin{split}
 & | 2\la P.P_f \gsps , \gsps  \ra| = |2\la P_f \alpha \muvar_2
 \cEpp \Proj^0 \gsps , P\alpha \sum_{i=1}^3 \muvar_{2,i} (H_f +
 P_f^2)^{-1} W_i \frac{\partial \Sgsa }{\partial x_i}
 \ra | \\
 & \leq c\alpha^3 |\muvar_2|^2 \|\Proj^0 \gsps \|^2 + c\alpha^5
 \sum_{i=1}^3 |\muvar_{2,i}|^2 =\mathcal{O}(\alpha^5) ,
\end{split}
\end{equation}
where in the last inequality we used
Lemma~\ref{lem:improved-estimates}.

We also have, using again the symmetry of $\Sgsa$,
\begin{equation}\label{eq:n-hgg-12}
\begin{split}
 & - 2\alpha^\frac12 \Re \la P. A(0) \gsps , \gsps \ra
 = - 4\alpha^\frac12 \Re \la P.\Aa \gsps , \gsps \ra\\
 & = -4\alpha^\frac12 \Re \la \Aa \muvar_1 \fFz , P\Proj^0 (\gsps )^s\ra
 -4\alpha^\frac12 \Re \la \Aa \muvar_2 \alpha\cEpp  \Proj^0 \gsps , P\muvar_1
 \fFz \ra  \\
 & - 4\alpha^\frac12 \Re \la \Aa \Proj^3 \gsps , P\alpha \muvar_2 \cEpp
 \Proj^0 \gsps \ra
 \\
 & \geq -4\alpha^\frac12 \Re \la \Aa \muvar_1 \fFz , P\Proj^0 (\gsps )^s\ra
 - c\alpha \| H_f^\frac12 \Proj^3 \gsps \|^2 - c\alpha^2 \|P\Proj^0
 \gsps \|^2 |\muvar_2|^2 \\
 &  - 4\alpha^\frac12 \Re \la |k|^{-\frac16}
 \Aa\muvar_2\alpha\cEpp \Proj^0 \gsps , |k|^\frac16 \muvar_1 P\fFz  \ra \\
 & \geq -4\alpha^\frac12 \Re \la \Aa \muvar_1 \fFz , P\Proj^0 (\gsps )^s\ra
 - c\alpha \| H_f^\frac12 \Proj^3 \gsp\|^2
 - c\alpha \| H_f^\frac12 \Proj^3 \gspl \|^2 \\
 & -c\alpha^5
 - c \alpha^\frac32 |\muvar_2| \, \|\Proj^0 \gsps \| \, \|\, |k|^\frac16 P \fFz \|
  \\
 & \geq -4\alpha^\frac12 \Re \la \Aa \muvar_1 \fFz , P\Proj^0 (\gsps )^s\ra
 - \epsilon M[\gspl ] - c\alpha^5 ,
\end{split}
\end{equation}
where we used Theorem~\ref{thm:thm-main2} and
Lemma~\ref{appendix:lem-A0}.

Moreover, because $\|\Aa \fF\| = \mathcal{O}(\alpha^\frac32)$ and
$\|P\Proj^0 \gsps \|=\mathcal{O}(\alpha^2)$, we obtain
\begin{equation}\label{eq:estimate-diff-mu1}
  -4\Re\alpha^\frac12 \Re \la \Aa \muvar_1\fFz , P\Proj^0 (\gsps )^s\ra
  \geq - 4 \alpha^\frac12 \Re \la \Aa \fFz , P\Proj^0 (\gsps )^s\ra
  - |\muvar_1 -1| c\alpha^4 .
\end{equation}

This estimate, together with \eqref{eq:n-hgg-11} and
\eqref{eq:n-hgg-12}, proves the lemma.
\end{proof}

\begin{lemma}\label{lem:g1g1-2}
We have
\begin{equation}\nonumber
\begin{split}
 & 2 \alpha^\frac12 \Re \la P_f . A(0) \gsps , \gsps \ra \\
 & \geq
 \frac23 \alpha^4 \Re \sum_i
  \muvar_{2,i} \la P_f.\Aa
  (H_f+P_f^2)^{-1} W_i, (H_f+P_f^2)^{-1} (\Ac)^i\vac \ra\\
 & - |\muvar_1 -1| c\alpha^4 + \mathcal{O}(\alpha^5) .
\end{split}
\end{equation}
\end{lemma}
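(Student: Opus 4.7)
The point of departure is that the Coulomb gauge condition $k\cdot\varepsilon_\lambda(k)=0$ implies the commutator identity $P_f\cdot\Aa = \Aa\cdot P_f$, and consequently $(P_f\cdot\Aa)^* = P_f\cdot\Ac$, so
\[
 2\alpha^{\frac12}\Re \la P_f\cdot A(0)\gsps,\gsps\ra \;=\; 4\alpha^{\frac12}\Re\la P_f\cdot\Aa\gsps,\gsps\ra .
\]
Since $\Aa$ lowers the photon number by one and $\Proj^{n\geq 4}\gsps=0$, only three pairings survive, namely $\la P_f\cdot\Aa \Proj^n\gsps,\Proj^{n-1}\gsps\ra$ for $n=1,2,3$. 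The $n=1$ term vanishes \emph{exactly}: by adjoint it equals $\la\Proj^1\gsps,P_f\cdot\Ac\Proj^0\gsps\ra$, and the one-photon amplitude of $P_f\cdot\Ac\vac$ is proportional to $\chi_\Lambda(|k|)(k\cdot\varepsilon_\lambda(k))/\sqrt{|k|}\equiv 0$ by the Coulomb gauge.

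For the $n=2$ term, which carries the main contribution, we substitute the explicit decomposition
\[
 \Proj^2\gsps = \alpha\muvar_2\cEpp\Proj^0\gsps + \sum_{i=1}^{3}\alpha\muvar_{2,i}(H_f+P_f^2)^{-1}W_i\,\partial_i\Sgsa,\qquad \Proj^1\gsps=\muvar_1\fFz,
\]
and use Lemma~\ref{appendix:lem-A1} to replace $\fFz$ by $\fF=2\alpha^{\frac12}\sum_j\partial_j\Sgsa(H_f+P_f^2)^{-1}(\Ac)^j\vac$, at the cost of $\mathcal{O}(\alpha^5)$ (Schwarz and the estimate on $\| P_f(\fFz-\fF)\|$). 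Pairing the $W_i$-part against $\fF$ factorises, using the orthogonality $\la\partial_i\Sgsa,\partial_j\Sgsa\ra=\delta_{ij}\|\nabla\Sgsa\|^2/3 = \delta_{ij}\alpha^2/12$, into
\[
 \tfrac{8\alpha^4}{12}\,\Re\overline{\muvar_1}\sum_{i=1}^{3}\muvar_{2,i}\la P_f\cdot\Aa(H_f+P_f^2)^{-1}W_i,(H_f+P_f^2)^{-1}(\Ac)^i\vac\ra,
\]
which produces the announced main term; replacing $\overline{\muvar_1}$ by $1$ costs at most $|\muvar_1-1|\cdot c\alpha^4$ since $\muvar_{2,i}=\mathcal{O}(1)$ and the Fock inner product is bounded. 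The cross term coming from $\alpha\muvar_2\cEpp\Proj^0\gsps$ paired with $\fF$ reduces, after the $x$-integration, to $8\alpha^{2}\muvar_2\overline{\muvar_1}\sum_j\la\Proj^0\gsps,\partial_j\Sgsa\ra\,\la P_f\cdot\Aa\cEpp,(H_f+P_f^2)^{-1}(\Ac)^j\vac\ra$. The Fock factor is of the form $\int h(|k'|)\,k'^j\,\mathrm{d}k'$ for a radial function $h$: indeed, the $\lambda'$-sum over polarisations yields the tensor $\delta^{ab}-k'^ak'^b/|k'|^2$, contracted with the $P_f$-weight $k$, produces a vector-field $J^j(k')$ in $k'$-space which by rotation covariance must be of the form $\beta(|k'|)k'^j$, and the remaining radial weight then integrates to zero by spherical symmetry.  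Hence this cross term vanishes identically at the level of $\fF$, and contributes only $\mathcal{O}(\alpha^5)$ via the $\fFz-\fF$ remainder.

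The $n=3$ pairing $\la P_f\cdot\Aa\Proj^3\gsps,\Proj^2\gsps\ra$ is handled by the Schwarz inequality, using $\Proj^3\gsps=\alpha\muvar_3(H_f+P_f^2)^{-1}\Ac\cdot\Ac\fFz$, the bounds $\|\Proj^2\gsps\|_*=\mathcal{O}(\alpha^{2})$ and $\muvar_{2,i},\,|\muvar_2|\|\Proj^0\gsps\|=\mathcal{O}(\alpha)$ from Lemma~\ref{lem:improved-estimates}, and the decay estimates on $\|(H_f+P_f^2)^{-1}\Ac\cdot\Ac\fFz\|_*$ supplied by Lemma~\ref{appendix:lem-A4}; together with the bound on $|\muvar_3|$ that emerges from Theorem~\ref{thm:thm-main2} (extracting the $|\muvar_3+1|^2\|\fFz\|^2$ coercive contribution of Proposition~\ref{prop:prop-hgg}), the resulting estimate fits inside $\mathcal{O}(\alpha^{5})$. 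Collecting the three pairings yields the claimed lower bound. The main obstacle is the Fock-space identification that forces the $\cEpp\Proj^0\gsps$ cross term to vanish; this is a purely structural argument combining Coulomb gauge, rotation covariance of the kernels, and spherical integration, and once it is available the remaining estimates are a routine application of Schwarz with the \emph{a priori} bounds already established.
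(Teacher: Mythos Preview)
Your decomposition matches the paper's: rewrite $2\alpha^{1/2}\Re\la P_f\cdot A(0)\gsps,\gsps\ra$ as $4\alpha^{1/2}\Re\la P_f\cdot\Aa\gsps,\gsps\ra$ and split into the pairings $\Proj^n\to\Proj^{n-1}$ for $n=1,2,3$. The $n=1$ vanishing and the extraction of the $\tfrac{2}{3}\alpha^4$ main term from $n=2$ (replacing $\fFz$ by $\fF$ via Lemma~\ref{appendix:lem-A1}, splitting off $|\muvar_1-1|c\alpha^4$, and using $\|\partial_i\Sgsa\|^2=\alpha^2/12$) agree with the paper. Your rotation-covariance argument for the vanishing of the $\alpha\muvar_2\cEpp\Proj^0\gsps$ cross term is the right mechanism and is exactly what the paper obtains by citing equation~\eqref{eq:Pf-phi2-F} of Lemma~\ref{lem:appendix-3}; the phrasing ``of the form $\int h(|k'|)k'^j\,dk'$'' is a bit loose (the integrand still couples $k'$ and the annihilated momentum through $|k'+k''|$), but the conclusion that the resulting vector-valued integral must vanish by rotational invariance is correct.

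There is, however, a genuine gap in your treatment of the $n=3$ pairing $4\alpha^{1/2}\Re\la P_f\cdot\Aa\Proj^3\gsps,\Proj^2\gsps\ra$. You invoke ``the bound on $|\muvar_3|$ that emerges from \ldots\ the $|\muvar_3+1|^2\|\fFz\|^2$ coercive contribution of Proposition~\ref{prop:prop-hgg}''. But Proposition~\ref{prop:prop-hgg} is proved \emph{through} Proposition~\ref{prop:n-hgg-*}, which in turn uses the present lemma; so this appeal is circular. The only non-circular information on $\muvar_3$ available at this point comes from Theorem~\ref{thm:thm-main2} via the $\sharp$-orthogonal decomposition, which yields merely $|\muvar_3|^2=\mathcal{O}(\alpha^{-1}(\log\alpha^{-1})^{-1})$, and a straight Schwarz bound with $\|\Proj^2\gsps\|_*$ and $\|(H_f+P_f^2)^{-1}\Ac\cdot\Ac\fFz\|_*$ then falls short of $\mathcal{O}(\alpha^5)$. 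The paper avoids this by \emph{not} invoking an explicit $\muvar_3$ bound: it inserts the weights $|k_1|^{\pm 1/6}|k_2|^{\pm 1/6}$ into the Schwarz pairing and uses the regularized estimate $\|\,|k_1|^{1/6}|k_2|^{1/6}\Aa\Proj^3\gsps\|^2=\mathcal{O}(\alpha^5)$ supplied by Lemma~\ref{appendix:lem-A4} (cf.~\eqref{appendix:lem9-1-ter}), together with $|k_1|^{-1/6}|k_2|^{-1/6}P_f\cEpp\in L^2$ and $|k_1|^{-1/6}|k_2|^{-1/6}P_f(H_f+P_f^2)^{-1}W_i\in L^2$ on the other factor. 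You should replace your $n=3$ argument by this weighted-Schwarz device.
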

%
%
%
\begin{proof} The following holds
\begin{equation}\label{eq:n-hgg-13}
\begin{split}
 & 2\alpha^\frac12 \Re \la P_f. A(0) \gsps , \gsps \ra
 =4\alpha^\frac12 \Re\la P_f. \Aa \Proj^3 \gsps , \Proj^2 \gsps \ra
 + 4\alpha \Re \la P_f.\Aa \Proj^2 \gsps , \muvar_1 \fFz \ra \\
 & = 4\alpha^\frac32 \Re \sum_{i=1}^3 \overline{\muvar_{2,i}}
 \la |k_1|^\frac16 |k_2|^\frac16 \Aa\Proj^3 \gsps , P_f
 |k_1|^{-\frac16} |k_2|^{-\frac16} (H_f+P_f^2)^{-1} W_i
 \frac{\partial \Sgsa }{\partial x_i}\ra\\
 & + 4\alpha^\frac32 \Re \muvar_2 \la |k_1|^\frac16 |k_2|^\frac16
 \Aa \Proj^3 \gsps , |k_1|^{-\frac16} |k_2|^{-\frac16} P_f \cEpp  \Proj^0
 \gsps  \ra \\
 &
 + 4\alpha^\frac12 \Re \la P_f.\Aa \Proj^2 \gsps , \muvar_1 \fFz \ra .
\end{split}
\end{equation}

Applying the Schwarz inequality and the estimates $\|
|k_1|^\frac16 |k_2|^\frac16 \Aa\Proj^3 \gsps
\|^2=\mathcal{O}(\alpha^5)$ (Lemma~\ref{appendix:lem-A4}),
$\muvar_{2,i} =\mathcal{O}(1)$
(Lemma~\ref{lem:improved-estimates}), and $\|\nabla \Sgsa  \|^2
=\mathcal{O}(\alpha^2)$, we see that the first term on the right
hand side of \eqref{eq:n-hgg-13} is $\mathcal{O}(\alpha^5)$.
Applying also the estimate $|\muvar_2|\, \|\Proj^0 \gsps \| =
\mathcal{O}(\alpha)$ (Lemma~\ref{lem:improved-estimates}), we
obtain that the second term on the right hand side of
 \eqref{eq:n-hgg-13}  is also $\mathcal{O}(\alpha^5)$.

Finally, we estimate $4\alpha^\frac12 \Re \la P_f.\Aa \Proj^2
\gsps , \muvar_1 \fFz \ra$. The following inequality holds,
\begin{equation}\label{eq:n-hgg-20}
  4\alpha^\frac12 \Re \la P_f.\Aa \Proj^2 \gsps , \muvar_1 \fFz \ra | \geq
  4\alpha^\frac12 \Re \la P_f.\Aa \Proj^2 \gsps , \fFz \ra
  - |\muvar_1 -1| c\alpha^4 ,
\end{equation}
whose proof is similar to the one of \eqref{eq:estimate-diff-mu1}.
Next we get
\begin{equation}\label{eq:n-hgg-21}
\begin{split}
  & | 4\alpha^\frac12 \Re \la P_f.\Aa \Proj^2 \gsps , \fFz \ra -
  4\alpha^\frac12 \Re \la P_f.\Aa \Proj^2 \gsps , \fF\ra | \\
  & \leq
  \alpha^\frac12 \|H_f^\frac12 \Proj^2 \gsps \|\, \|P_f(\fFz -\fF)\| =
  \mathcal{O}(\alpha^6|\log\alpha|^\frac12),
\end{split}
\end{equation}
using $\|P_f(\fFz  -\fF)\| =
\mathcal{O}(\alpha^\frac72|\log\alpha|^\frac12)$ (see
Lemma~\ref{appendix:lem-A1}) and $\|H_f^\frac12 \Proj^2 \gsps \| =
\mathcal{O}(\alpha^2)$. Moreover,
\begin{equation}\label{eq:n-hgg-22}
\begin{split}
  & 4\alpha^\frac12 \Re \la P_f.\Aa \Proj^2 \gsps , \fF \ra \\
  & = 4 \alpha^\frac32 \Re \la P_f.\Aa (\muvar_2\cEpp  \Proj^0 \gsps  +
  \sum_i \muvar_{2,i} (H_f+P_f^2)^{-1} W_i \frac{\partial
  \Sgsa }{\partial x_i}), \fF\ra \\
  & = 4 \alpha^\frac32 \Re \la P_f.\Aa
  \sum_i \muvar_{2,i} (H_f+P_f^2)^{-1} W_i \frac{\partial
  \Sgsa }{\partial x_i}, \fF\ra \\
  & = \frac23 \alpha^4  \Re \sum_i
  \muvar_{2,i} \la P_f.\Aa
  (H_f+P_f^2)^{-1} W_i, (H_f+P_f^2)^{-1} (\Ac)^i\vac \ra\,
\end{split}
\end{equation}
where we used \eqref{eq:Pf-phi2-F} of Lemma~\ref{lem:appendix-3}
in the second equality.

Collecting \eqref{eq:n-hgg-13}-\eqref{eq:n-hgg-22} concludes the
proof.
\end{proof}

\begin{lemma}\label{lem:g1g1-3}
We have
\begin{equation}\nonumber
\begin{split}
 & \la \gsps , \gsps \ra_{\sharp} + 2\alpha\Re \la \Aa . \Aa \gsps , \gsps \ra
 \geq \Sigma_0 \left( \|\Proj^0 \gsps \|^2 + \|\Proj^1 \gsps \|^2\right)
 \\
 & + \frac{\alpha^4}{12} \sum_{i=1}^3 |\muvar_{2,i}|^2 \| (H_f+P_f^2)^{-1}
 W_i \|_{*}^2
 + \alpha^2 |\muvar_3 +1|^2  \|\cEpp \|_*^2 \|\fFz \|^2 \\
 & + \|\Proj^1 \gsps \|_{\sharp}^2 + \|\Proj^0 \gsps \|_{\sharp}^2 +
 o(\alpha^5\log\alpha^{-1}).
\end{split}
\end{equation}
\end{lemma}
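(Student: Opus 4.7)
The strategy is to expand both quadratic forms sector-by-sector in photon number and reassemble the result to match the right-hand side. Since $H_f+P_f^2$ and $h_\alpha+e_0$ commute with every photon-number projection $\Proj^n$, the $\sharp$-form splits diagonally as $\la \gsps , \gsps \ra_{\sharp} = \sum_{n=0}^{3}\|\Proj^n \gsps \|_{\sharp}^2$, and the $(h_\alpha+e_0)$-contributions to $\|\Proj^n\gsps\|_\sharp^2$ for $n=2,3$ are non-negative and may be discarded in a lower bound. Since $\Aa.\Aa$ lowers the photon number by two, only the couplings $n=0\leftrightarrow n=2$ and $n=1\leftrightarrow n=3$ contribute to the cross form $2\alpha\Re\la \Aa.\Aa \gsps , \gsps \ra$. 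Into each of these I substitute the explicit expressions of $\Proj^n\gsps$ from Definition~\ref{def:G-improved}.

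The key algebraic identity, derived from $(H_f+P_f^2)\cEpp = -\Ac.\Ac \vac$ together with $\cEpp \in \gF_2$, is $\Aa.\Aa\, \cEpp = -\|\cEpp\|_*^2\, \vac$. Combined with the $\la\cdot,\cdot\ra_*$-orthogonality of the two pieces of $\Proj^2 \gsps $ built into Definition~\ref{def:G-improved} (cf.\ Lemma~\ref{lem:appendix-3}), the symmetry of $\Sgsa $, and $\|\partial_{x_i}\Sgsa \|^2 = \alpha^2/12$, the $n=0\leftrightarrow n=2$ combination collapses to
\begin{equation*}
\begin{split}
 & \|\Proj^2 \gsps \|_*^2 + 2\alpha\Re \la \Aa.\Aa \Proj^2 \gsps , \Proj^0 \gsps \ra \\
 & \quad = \alpha^2 \bigl( |\muvar_2|^2 - 2\Re \muvar_2 \bigr)\|\cEpp\|_*^2 \|\Proj^0 \gsps \|^2 + \frac{\alpha^4}{12}\sum_{i=1}^3 |\muvar_{2,i}|^2 \|(H_f+P_f^2)^{-1} W_i\|_*^2 + o(\alpha^5\log\alpha^{-1}).
\end{split}
\end{equation*}
Using $|\muvar_2|^2 - 2\Re\muvar_2 = |\muvar_2-1|^2 - 1$, the bound $\|\Proj^0 \gsps \|^2 = \mathcal O(\alpha^2)$ of Lemma~\ref{lem:improved-estimates}, and $\Sigma_0 = -\alpha^2 \|\cEpp\|_*^2 + \mathcal O(\alpha^3)$, the coefficient $-1$ produces $\Sigma_0 \|\Proj^0 \gsps \|^2$ up to $o(\alpha^5\log\alpha^{-1})$, while the non-negative remainder $\alpha^2 |\muvar_2-1|^2 \|\cEpp\|_*^2 \|\Proj^0 \gsps \|^2$ is simply dropped. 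Together with $\|\Proj^0 \gsps \|_\sharp^2$ this reproduces the $\Sigma_0 \|\Proj^0 \gsps \|^2$, $W_i$-, and $\|\Proj^0 \gsps \|_\sharp^2$ pieces on the right-hand side.

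For the $n=1\leftrightarrow n=3$ contribution, Lemma~\ref{appendix:lem-A4} supplies $\|(H_f+P_f^2)^{-1/2} \Ac.\Ac \fFz \|^2 = \|\cEpp\|_*^2 \|\fFz \|^2 + o(\alpha^3 \log\alpha^{-1})$, giving
\begin{equation*}
 \|\Proj^3 \gsps \|_*^2 + 2\alpha\Re \la \Aa.\Aa \Proj^3 \gsps , \Proj^1 \gsps \ra = \alpha^2 \bigl( |\muvar_3|^2 + 2\Re(\muvar_3 \overline{\muvar_1}) \bigr)\|\cEpp\|_*^2 \|\fFz \|^2 + o(\alpha^5 \log\alpha^{-1}).
\end{equation*}
Rewriting $|\muvar_3|^2 + 2\Re(\muvar_3 \overline{\muvar_1}) = |\muvar_3+\muvar_1|^2 - |\muvar_1|^2$ and pairing $-|\muvar_1|^2 \alpha^2 \|\cEpp\|_*^2 \|\fFz \|^2$ with $\Sigma_0 \|\Proj^1 \gsps \|^2 = -\alpha^2 \|\cEpp\|_*^2 |\muvar_1|^2 \|\fFz \|^2 + o(\alpha^5\log\alpha^{-1})$ leaves the gap $\alpha^2 (|\muvar_3+\muvar_1|^2 - |\muvar_3+1|^2)\|\cEpp\|_*^2 \|\fFz \|^2$. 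Since $|\muvar_3+\muvar_1|^2 - |\muvar_3+1|^2 = 2\Re((\muvar_3+1)\overline{(\muvar_1-1)}) + |\muvar_1-1|^2$ and $|\muvar_1-1|^2 = \mathcal O(\alpha)$ by Lemma~\ref{lem:improved-estimates}, this gap is $o(\alpha^5\log\alpha^{-1})$ after Cauchy–Schwarz absorbs the $|\muvar_3+1|$-proportional cross term into the $|\muvar_3+1|^2$ coefficient already present on the right-hand side.

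The main technical obstacle will be the careful bookkeeping of the many lower-order cross terms generated by the $(\muvar-1)$-slacks. In particular, the mixed contribution $2\alpha\Re \la \Aa.\Aa (H_f+P_f^2)^{-1} W_i \partial_{x_i}\Sgsa , \Proj^0 \gsps \ra$ factorises as $\la W_i, \cEpp \ra_\gF \cdot \la \partial_{x_i}\Sgsa, \Proj^0 \gsps \ra_{L^2}$; since $\la P_f^i \cEpp, \cEpp \ra = 0$ by Lemma~\ref{lem:appendix-3}, only the second piece of $W_i$ contributes, and the remaining integral is controlled by the oddness of $\partial_{x_i}\Sgsa $ together with the smallness of the odd part $(\gsps )^a$ of $\Proj^0 \gsps $. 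All subleading errors are handled uniformly by Cauchy–Schwarz combined with the a priori bounds $|\muvar_2|\|\Proj^0 \gsps \| = \mathcal O(\alpha)$, $\muvar_{2,i} = \mathcal O(1)$, $|\muvar_1-1|^2 = \mathcal O(\alpha)$, and $\|\fFz \|^2 = \mathcal O(\alpha^3\log\alpha^{-1})$ inherited from Lemma~\ref{lem:improved-estimates} and Theorem~\ref{thm:thm-main2}.
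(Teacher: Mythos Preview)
Your approach is essentially the paper's: split $\la\gsps,\gsps\ra_\sharp$ by photon number, drop the nonnegative $(h_\alpha+e_0)$-part of $\|\Proj^n\gsps\|_\sharp^2$ for $n=2,3$ to pass to the $*$-norm, use the $\la\cdot,\cdot\ra_*$-orthogonality of the two summands of $\Proj^2\gsps$, and complete the squares in $\muvar_2,\muvar_3$ against the two $\Aa.\Aa$ cross terms. Your treatment of the $n=1\leftrightarrow n=3$ block, in particular the passage from $|\muvar_3+\muvar_1|^2$ to $|\muvar_3+1|^2$ via Cauchy--Schwarz absorption into the $|\muvar_3+1|^2$ term, is in fact more carefully stated than the paper's corresponding line.

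There is, however, one genuine oversight in your plan for the mixed term
\[
2\alpha\,\Re\bigl\langle \Aa.\Aa\,\alpha\muvar_{2,i}(H_f+P_f^2)^{-1}W_i\,\partial_{x_i}\Sgsa\,,\,\Proj^0\gsps\bigr\rangle.
\]
Your factorisation $(-\la\cEpp,W_i\ra_\gF)\cdot\la\partial_{x_i}\Sgsa,\Proj^0\gsps\ra_{L^2}$ is correct, but you invoke only $\la P_f^i\cEpp,\cEpp\ra=0$ and then propose to bound the surviving ``second piece'' of $W_i$ through the smallness of the odd part $(\gsps)^a$. That fallback only produces $\mathcal{O}(\alpha^2)\cdot\|\partial_{x_i}\Sgsa\|\cdot\|(\gsps)^a\|=\mathcal{O}(\alpha^4)$, which is far larger than the $o(\alpha^5\log\alpha^{-1})$ error the lemma allows. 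The point you are missing is that Lemma~\ref{lem:appendix-3} gives the \emph{full} identity $\la\cEpp,\zeta(H_f,P_f^2)W_i\ra=0$; taking $\zeta\equiv 1$ yields $\la\cEpp,W_i\ra=0$, so the entire cross term vanishes exactly---which is precisely what the paper uses. (You already used this same vanishing implicitly when asserting the $*$-orthogonality of the two pieces of $\Proj^2\gsps$.) Once you replace your fallback by this observation, your argument matches the paper's proof.
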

%
%
%
\begin{proof}
Obviously we have
\begin{equation}\label{eq:n-hgg-14}
 \la \gsps , \gsps \ra_{\sharp} = \sum_{i=0}^3 \la \Proj^i \gsps , \Proj^i
 \gsps \ra_{\sharp} ,
\end{equation}
and using Lemma~\ref{lem:appendix-3}
\begin{equation}\label{eq:n-hgg-14bis}
 \la \Proj^2 \gsps , \Proj^2 \gsps \ra_{\sharp}
 = \alpha^2 |\muvar_2|^2 \|\cEpp  \Proj^0 \gsps \|_{\sharp}^2 +\alpha^2
 \sum_{i=1}^3 |\muvar_{2,i}|^2 \| (H_f + P_f^2)^{-1} W_i
 \frac{\partial \Sgsa }{\partial x_i}\|_{\sharp}^2 ,
\end{equation}

Moreover, from the inequality $\|\cEpp  \Proj^0 \gsps
\|_{\sharp}^2 > \|\cEpp \|_*^2 \|\Proj^0 \gsps \|^2$ we obtain,
\begin{equation}\label{eq:n-hgg-15}
\begin{split}
 & \alpha^2 |\muvar_2|^2 \|\cEpp  \Proj^0 \gsps \|_{\sharp}^2
 + 2\alpha\Re \la \Aa .\Aa \Proj^2 \gsps , \Proj^0 \gsps  \ra\\
 & \geq \alpha^2 |\muvar_2|^2 \|\cEpp \|_*^2 \|\Proj^0 \gsps \|^2
 + 2\alpha\Re \la \Aa . \Aa \Proj^2 \gsps , \Proj^0 \gsps \ra \\
 & = \alpha^2 |\muvar_2|^2 \|\cEpp \|_*^2 \|\Proj^0\gsps \|^2
 - 2\alpha^2 \Re \muvar_2 \|\cEpp \|_*^2 \|\Proj^0 \gsps \|^2\\
 & +
 2\alpha\Re\sum_{i=1}^3 \alpha \la \Aa . \Aa \muvar_{2,i}
 (H_f+P_f^2)^{-1} W_i \frac{\partial \Sgsa }{\partial x_i},
 \Proj^0 \gsps \ra \\
 & \geq \Sigma_0 \|\Proj^0 \gsps \|^2 + |\muvar_2 -1|^2 c\alpha^2 \|\Proj^0
 \gsps \|^2
 +\mathcal{O}(\alpha^5)
 \geq \Sigma_0 \|\Proj^0 \gsps \|^2
 +\mathcal{O}(\alpha^5).
\end{split}
\end{equation}
where we used $\Sigma_0 = - \alpha^2 \|\cEpp \|_*^2
+\mathcal{O}(\alpha^3)$ and $\la \Aa . \Aa (H_f+P_f^2)^{-1} W_i ,
\Proj^0 \gsps \ra = -\overline{\Proj^0 \gsps } \la W_i, \cEpp \ra
=0$.

Similarly $\|\Proj^3 \gsps \|_{\sharp} > \|\Proj^3 \gsps \|_*$
yields
\begin{equation}
\begin{split}
 & \|\Proj^3 \gsps \|_{\sharp}^2 +2\alpha\Re \la \Aa . \Aa \Proj^3 \gsps , \Proj^1
 \gsps \ra
 \geq \alpha^2 |\muvar_3|^2 \| (H_f+P_f^2)^{-1} \Ac .\Ac \fFz \|_*^2 \\
 & + 2\Re \alpha^2 \muvar_3 \la \Aa .\Aa (H_f+P_f^2)^{-1} \Ac .\Ac
 \fFz , \muvar_1 \fFz \ra \\
 & \geq -\alpha^2 \|(H_f+P_f^2)^{-1} \Ac .\Ac \muvar_1 \fFz \|_*^2
 + |\muvar_3 +1 |^2\alpha^2 \|(H_f+P_f^2)^{-1} \Ac .\Ac \fFz \|_*^2
 \\
 & \geq \Sigma_0 \|\Proj^1 \gsps \|^2 + \alpha^2 |\muvar_3+1|^2
 \|\cEpp \|_*^2 \|\fFz \|^2 + o(\alpha^5\log\alpha^{-1}) ,
\end{split}
\end{equation}
where in the last inequality, we used \eqref{appendix:lem9-1} from
Lemma~\ref{appendix:lem-A4}, $\muvar_1
=1+\mathcal{O}(\alpha^\frac12)$ from
Lemma~\ref{lem:improved-estimates}, $\|\fFz
\|^2=\mathcal{O}(\alpha^3\log\alpha^{-1})$ from
Lemma~\ref{appendix:lem-A0} and $-\alpha^2 \|\cEpp \|_*^2 =
\Sigma_0 + \mathcal{O}(\alpha^3)$.

The second term on the right hand side of \eqref{eq:n-hgg-14bis}
is estimated as
\begin{equation}\label{eq:n-hgg-16}
\begin{split}
 & \alpha^2
 \sum_{i=1}^3 |\muvar_{2,i}|^2 \| (H_f + P_f^2)^{-1} W_i
 \frac{\partial \Sgsa }{\partial x_i}\|_{\sharp}^2
 \geq \alpha^2
 \sum_{i=1}^3 |\muvar_{2,i}|^2 \| (H_f + P_f^2)^{-1} W_i
 \frac{\partial \Sgsa }{\partial x_i}\|_{*}^2 \\
 & = \alpha^2 \frac13\|\nabla \Sgsa \|^2
 \sum_{i=1}^3 |\muvar_{2,i}|^2 \| (H_f + P_f^2)^{-1} W_i
 \|_{*}^2 = \frac{\alpha^4}{12} \sum_{i=1}^3
 |\muvar_{2,i}|^2 \| (H_f + P_f^2)^{-1} W_i
 \|_{*}^2 .
\end{split}
\end{equation}

Collecting \eqref{eq:n-hgg-14}-\eqref{eq:n-hgg-16} concludes the
proof.
\end{proof}

\begin{proposition}\label{prop:n-hgg-*}
We have
\begin{equation}\label{eq:n-hgg-*}
\begin{split}
  & \la (H + e_0) \gsps , \gsps \ra \\
  & \geq -4\alpha  \|
  (h_\alpha + e_0)^{-\frac12} Q_\alpha^\perp P. \Aa \fF\|^2
  + \| (h_\alpha + e_0)^\frac12 \Proj^0 (\gsps )^a\|^2\\
  & + c_0 \alpha \| (h_\alpha +e_0)^\frac12 \Proj^0 (\gsps )^s\|^2
  + \Sigma_0 (\|\Proj^0 \gsps \|^2 + \|\Proj^1 \gsps \|^2) \\
  & + \frac{\alpha^4}{12} \sum_{i=1}^3 |\muvar_{2,i}|^2 \| (H_f+P_f^2)^{-1}
  W_i \|_{*}^2
  + 4 \alpha^\frac12\Re \la \Proj^2 \gsps , \Ac. P_f \fFz \ra \\
  & + \|\muvar_1 \fFz \|_{\sharp}^2 + 2 \alpha  \|\Aa \fF\|^2
  + \alpha^2 |\muvar_3+1|^2 \|\cEpp \|_*^2 \|\fFz \|^2 \\
  & - \epsilon M[\gspl ] - |\muvar_1-1| c\alpha^4+ o(\alpha^5\log\alpha^{-1}) ,
\end{split}
\end{equation}
where $c_0$ is the same positive constant as in
Proposition~\ref{prop:sub1} and
\begin{itemize}
 \item $Q_\alpha^\perp$ is the orthogonal projection onto
 $\mathrm{Span} (\Sgsa )^\perp$,
 \item $(\gsps )^a$ is the odd part
 of $\Proj^0 \gsps $.
\end{itemize}

\end{proposition}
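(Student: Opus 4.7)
The natural starting point is the five-term decomposition \eqref{eq:n-hgg-10} of $\la(H+e_0)\gsps,\gsps\ra$. The plan is to substitute the bounds of Lemmas~\ref{lem:g1g1-1}--\ref{lem:g1g1-3} for the $\sharp$-form, the $P\cdot(P_f+\alpha^{1/2}A(0))$ cross term and the $P_f\cdot A(0)$ term, and to handle $2\alpha\|\Aa\gsps\|^2$ by restricting to its dominant one-photon contribution $2\alpha|\muvar_1|^2\|\Aa\fFz\|^2$, which using $\muvar_1 = 1 + O(\alpha^{1/2})$ (Lemma~\ref{lem:improved-estimates}) and Lemma~\ref{appendix:lem-A1} becomes $2\alpha\|\Aa\fF\|^2$ modulo a $|\muvar_1-1|c\alpha^4$ error and a further $o(\alpha^5\log\alpha^{-1})$ term. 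The contributions of $\Proj^{n\geq 2}\gsps$ to $\|\Aa\gsps\|^2$ are $O(\alpha^5)$ or fit into $\epsilon M[\gspl]$. Once this is done, four of the target terms in \eqref{eq:n-hgg-*}, namely $\|\muvar_1\fFz\|_\sharp^2$, $\Sigma_0(\|\Proj^0\gsps\|^2+\|\Proj^1\gsps\|^2)$, the $|\muvar_{2,i}|^2$-sum, and the $|\muvar_3+1|^2$ term, will come directly out of Lemma~\ref{lem:g1g1-3}, while the cross term $4\alpha^{1/2}\Re\la\Proj^2\gsps,\Ac\cdot P_f\fFz\ra$ will fall out of Lemma~\ref{lem:g1g1-2}.

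The delicate piece is the cross term $-4\alpha^{1/2}\Re\la\Aa\fFz,P\Proj^0(\gsps)^s\ra$ produced by Lemma~\ref{lem:g1g1-1}, which I plan to match against a positive square to recover both $-4\alpha\|(h_\alpha+e_0)^{-1/2}Q_\alpha^\perp P\cdot\Aa\fF\|^2$ and the leftover $c_0\alpha\|(h_\alpha+e_0)^{1/2}\Proj^0(\gsps)^s\|^2$ on the right-hand side of \eqref{eq:n-hgg-*}. Since $\Proj^0\gsps=\Proj^0\gsp$ is $L^2$-orthogonal to $\Sgsa$ by Definition~\ref{def:decomposition-1} and $h_\alpha$ is parity-invariant, both $(\gsps)^s$ and $(\gsps)^a$ lie in $\mathrm{Range}(Q_\alpha^\perp)$ and the $\sharp$-norm splits orthogonally as
$$\|\Proj^0\gsps\|_\sharp^2 = \|(h_\alpha+e_0)^{1/2}(\gsps)^s\|^2 + \|(h_\alpha+e_0)^{1/2}(\gsps)^a\|^2.$$
The odd part then contributes the term $\|(h_\alpha+e_0)^{1/2}\Proj^0(\gsps)^a\|^2$ in \eqref{eq:n-hgg-*} verbatim. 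For the even part, I would set aside a fraction $c_0\alpha\|(h_\alpha+e_0)^{1/2}(\gsps)^s\|^2$, with $c_0$ the constant from Proposition~\ref{prop:sub1} so the later bookkeeping cancels the matching negative term, and complete the square in the remaining $(1-c_0\alpha)$-fraction together with the cross term, extracting
$$-\tfrac{4\alpha}{1-c_0\alpha}\|(h_\alpha+e_0)^{-1/2}Q_\alpha^\perp P\cdot\Aa\fFz\|^2$$
and a nonnegative square that is discarded.

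It then remains to replace $\fFz$ by $\fF$ inside the completed square, which by Lemma~\ref{appendix:lem-A1} costs only $o(\alpha^5\log\alpha^{-1})$, and to note that $(1-c_0\alpha)^{-1} = 1+O(\alpha)$ produces a further $O(\alpha)\cdot O(\alpha^3)=O(\alpha^4)$ discrepancy in the coefficient (since $\|(h_\alpha+e_0)^{-1/2}Q_\alpha^\perp P\cdot\Aa\fF\|^2 = O(\alpha^3)$), but this $O(\alpha^4)$ times the prefactor $\alpha$ again fits inside $o(\alpha^5\log\alpha^{-1})$. The hard part I anticipate is the quantitative bookkeeping of all the $\fFz\leftrightarrow\fF$ and $\muvar_1\leftrightarrow 1$ substitutions scattered across Lemmas~\ref{lem:g1g1-1}--\ref{lem:g1g1-3}: each must respect the $o(\alpha^5\log\alpha^{-1})$ remainder while accommodating the weak bound $\|\fFz\|^2 = O(\alpha^3\log\alpha^{-1})$. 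This matching is precisely what forces the specific forms of the auxiliary estimates in Lemma~\ref{appendix:lem-A1} and in Lemma~\ref{lem:improved-estimates}, and is the reason the three preparatory lemmas are formulated with the exact error terms they contain.
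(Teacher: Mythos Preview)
Your proposal is correct and follows essentially the same route as the paper: collect Lemmata~\ref{lem:g1g1-1}--\ref{lem:g1g1-3} into the five-term decomposition \eqref{eq:n-hgg-10}, drop all but the one-photon part of $2\alpha\|\Aa\gsps\|^2$ (these extra pieces are simply nonnegative, so no $\epsilon M[\gspl]$ absorption is even needed), split $\|\Proj^0\gsps\|_\sharp^2$ by parity, set aside the $c_0\alpha$ fraction on the even part and complete the square against $-4\alpha^{1/2}\Re\la\Aa\fFz,P\Proj^0(\gsps)^s\ra$, then perform the $\fFz\to\fF$ and $(1-c_0\alpha)^{-1}\to 1$ replacements via Lemma~\ref{appendix:lem-A1}. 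The paper records the term coming from Lemma~\ref{lem:g1g1-2} in the equivalent form $4\alpha^{1/2}\Re\la\Proj^2\gsps,\Ac\cdot P_f\fFz\ra$ rather than the explicit $\muvar_{2,i}$-expression, but this is only a rewriting, not a different argument.
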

%
%
%
\begin{proof}
Collecting Lemmata~\ref{lem:g1g1-1}, \ref{lem:g1g1-2}, and
\ref{lem:g1g1-3} yields
\begin{equation}\label{eq:n-hgg-17}
\begin{split}
 & \la (H+e_0)\gsps , \gsps \ra \geq
   - 4\alpha^\frac12 \Re \la \Aa \fFz , P\Proj^0 (\gsps )^s\ra
   + 4 \alpha\Re \la \Proj^2 \gsps , \Ac. P_f \fFz \ra\\
 &  + \Sigma_0 (\|\Proj^0 \gsps \|^2 +\|\Proj^1 \gsps \|^2)
 + \alpha^2 |\muvar_3 +1|^2
 \|\cEpp \|_*^2 \|\fFz \|^2  \\
 & +  \| (h_\alpha + e_0)^\frac12 \Proj^0 \gsps \|^2
 + \| \Proj^1 \gsps \|^2_{\sharp} +2 \alpha \|\Aa \muvar_1 \fFz \|^2
 - |\muvar_1 - 1|c\alpha^4 \\
 & + \frac{\alpha^4}{12} \sum_{i=1}^3 |\muvar_{2,i}|^2 \| (H_f+P_f^2)^{-1}
 W_i \|_{*}^2
 - \epsilon M[\gspl ] + o(\alpha^5\log\alpha^{-1}) .
\end{split}
\end{equation}
Obviously,
\begin{equation}\label{eq:n-hgg-18}
 \| (h_\alpha +e_0)^\frac12 \Proj^0 \gsps \|^2 = \| (h_\alpha
 +e_0)^\frac12 \Proj^0 (\gsps )^s \|^2 + \| (h_\alpha +e_0)^\frac12 \Proj^0
 (\gsps )^a\|^2 .
\end{equation}
As before, we write $\Proj^0 \gsps $ as the sum of its odd part
$(\gsps )^a$ and its even part $(\gsps )^s$. Since $\Proj^0 \gsps
$ is orthogonal to $\Sgsa $ by definition of $\gsp$ , and $(\gsps
)^a$ is orthogonal to $\Sgsa $ by symmetry of $\Sgsa $, we also
have $(\gsps )^s$ orthogonal to $\Sgsa $. Therefore, one can
replace $\Proj^0 (\gsps )^s$ by $Q_\alpha^\perp \Proj^0 (\gsps
)^s$ in \eqref{eq:n-hgg-17} and \eqref{eq:n-hgg-18}. Thus, as the
next step, given a constant $c_0>0$, we minimize
\begin{equation}\label{eq:n-hgg-18-a}
\begin{split}
 & (1-c_0\alpha) \|(h_\alpha +e_0)^\frac12 Q_\alpha^\perp\Proj^0 (\gsps )^s\|^2
 - 4\alpha^\frac12 \Re \la Q_\alpha^\perp P. \Aa \fFz ,
 Q_\alpha^\perp\Proj^0 (\gsps )^s\ra \\
 & \geq -\frac{4\alpha}{1 - c_0\alpha}
 \| (h_\alpha +e_0)^{-\frac12} Q_\alpha^\perp P. \Aa \fFz \|^2 .
\end{split}
\end{equation}
Obviously,
\begin{equation}\label{eq:n-hgg-19}
\begin{split}
  &  -\frac{4\alpha}{1 - c_0\alpha}
   \|(h_\alpha + e_0)^{-\frac12} Q_\alpha^\perp P.\Aa \fFz \|^2\\
  & \geq  - \frac{4(1+\alpha)\alpha}{1 - c_0\alpha}
  \|(h_\alpha + e_0)^{-\frac12} Q_\alpha^\perp P.\Aa \fF\|^2 \\
  &  - \frac{4(1+\alpha^{-1}) \alpha}{1 - c_0\alpha}
  \| (h_\alpha +e_0)^{-\frac12} Q_\alpha^\perp P. \Aa (\fFz  - \fF)\|^2.
\end{split}
\end{equation}
There exist $\gamma_1$ and $\gamma_2$ positive, independent of
$\alpha$, such that
 $$
  Q_\alpha^\perp (h_\alpha + e_0)^{-1} Q_\alpha^\perp
  \leq ( \gamma_1 P^2 + \gamma_2 \alpha^2)^{-1},
 $$
and thus $P Q_\alpha^\perp (h_\alpha +e_0)^{-1}Q_\alpha^\perp P$
is a bounded operator. In addition, since $\|\Aa (\fFz
-\fF)\|^2=\mathcal{O}(\alpha^7\log\alpha^{-1})$
(Lemma~\ref{appendix:lem-A0}), this shows that
 \begin{equation}\label{eq:prop-6.2-1}
  \frac{4(1+\alpha^{-1}) \alpha }{1 - c_0\alpha}
  \| (h_\alpha +e_0)^{-\frac12} Q_\alpha^\perp P. \Aa (\fFz  - \fF)\|^2
  = \mathcal{O}(\alpha^6\log\alpha^{-1}) .
 \end{equation}
In addition, using $\|\Aa \fF\| \leq \|\Aa (\fF-\fFz )\| + \|\Aa
\fFz \| \leq c\alpha^\frac32$ (Lemma~\ref{appendix:lem-A0} and
Lemma~\ref{appendix:lem-A1}), and the fact that $PQ_\alpha^\perp
(h_\alpha +e_0)^{-1}Q_\alpha^\perp P$ is bounded, yields
 \begin{equation}\label{eq:prop-6.2-2}
 \begin{split}
  &  - \frac{(1+\alpha)\alpha}{1 - c_0\alpha}
  \|(h_\alpha + e_0)^{-\frac12} Q_\alpha^\perp P.\Aa \fF\|^2 \\
  & = -4 \|(h_\alpha + e_0)^{-\frac12} Q_\alpha^\perp P.\Aa \fF\|^2
  +\mathcal{O}(\alpha^5).
 \end{split}
 \end{equation}

Collecting \eqref{eq:n-hgg-19}-\eqref{eq:prop-6.2-2}, one gets
\begin{equation}\label{eq:n-hgg-19-a}
\begin{split}
  & -\frac{4\alpha }{1 - c_0\alpha}
   \|(h_\alpha + e_0)^{-\frac12} Q_\alpha^\perp P.\Aa \fFz \|^2
  \geq -4
  \|(h_\alpha + e_0)^{-\frac12} Q_\alpha^\perp P.\Aa \fF\|^2
  +\mathcal{O}(\alpha^5).
\end{split}
\end{equation}

Finally, using $\|\Aa \fF\|^2=\mathcal{O}(\alpha^3)$, we obtain
\begin{equation}\label{eq:n-hgg-19-b}
 \begin{split}
  2\alpha |\muvar_1|^2 \|\Aa \fF\|^2 \geq
  2\alpha \|\Aa \fF\|^2 - c|\muvar_1 -1|\alpha^4 .
 \end{split}
\end{equation}

Substituting \eqref{eq:n-hgg-18}, \eqref{eq:n-hgg-18-a},
\eqref{eq:n-hgg-19-a} and \eqref{eq:n-hgg-19-b} into
\eqref{eq:n-hgg-17} concludes the proof.
\end{proof}

We can now collect the above results to prove Proposition~\ref{prop:prop-hgg}.

\subsection{Concluding the proof of Proposition~\ref{prop:prop-hgg}}

Collecting the
results of Proposition~\ref{prop:sub1},
Lemmata~\ref{lem:sub-1}-\ref{lem:sub-5} and
Proposition~\ref{prop:n-hgg-*} yields directly the following
bound,
\begin{equation}\label{eq:prop-hgg-0}
\begin{split}
 & \la H \gsp , \gsp\ra \geq
 \la H \gspl , \gspl \ra
 -\epsilon M[\gspl ] - c |\muvar_2|^2 \alpha^4 \|\Proj^0 \gsps \|^2
 - c\alpha^6\log\alpha^{-1} |\muvar_3|^2 - e_0\|\gsps \|^2 \\
 & - 4\alpha \|(h_\alpha+e_0)^{-\frac12} Q_\alpha^\perp P.\Aa
 \fF\|^2 + (1-c_0\alpha) \|(h_\alpha +e_0)^\frac12 \Proj^0
 (\gsps )^a\|^2\\
 & + \Sigma_0 (\|\Proj^0 \gsps \|^2 + \|\Proj^1 \gsps \|^2)
 + \frac{\alpha^4}{12} \sum_{i=1}^3 |\muvar_{2,i}|^2 \| (H_f+P_f^2)^{-1}
 W_i \|_{*}^2 \\
 & + 4 \alpha^\frac12\Re \la \Proj^2 \gspl , \Ac. P_f \fF\ra
 + 4 \alpha^\frac12\Re \la \Proj^2 \gsps , \Ac. P_f \fFz \ra
 + |\muvar_1|^2 \|\fFz \|_{\sharp}^2 + 2\alpha  \|\Aa\fF\|^2 \\
 & + \alpha^2 |\muvar_3+1|^2
 \|\cEpp \|_*^2 \|\fFz \|^2 - |\muvar_1 -1| c\alpha^4
 + o(\alpha^5\log\alpha^{-1}) .
\end{split}
\end{equation}

Comparing this expression with the statement of the Proposition we
see that it suffices to show that
\begin{equation}\label{eq:prop-hgg-3}
\begin{split}
 & -\epsilon M[\gspl ] - c |\muvar_2|^2 \alpha^4 \|\Proj^0 \gsps \|^2
 - c\alpha^6\log\alpha^{-1} |\muvar_3|^2 -e_0\|\gsps \|^2
 +  \la H \gspl , \gspl \ra\\
 & + \Sigma_0 (\|\Proj^0 \gsps \|^2 + \|\Proj^1 \gsps \|^2)
 + \alpha^2 |\muvar_3+1|^2
 \|\cEpp \|_*^2 \|\fFz \|^2 \\
 & \geq (\Sigma_0 - e_0) \|\gsp\| ^2 + (1-\epsilon) M[\gspl ]
 + \frac{|\muvar_3+1|^2}{2} \alpha^2 \|\cEpp \|_*^2 \|\fFz \|^2
 + \mathcal{O}(\alpha^5) .
\end{split}
\end{equation}

Using from Corollary~\ref{cor:cor-4-2} that $\la H \gspl , \gspl
\ra \geq (\Sigma_0 - e_0)\|\gspl \|^2 + M[\gspl ]$, we first
estimate the following terms in \eqref{eq:prop-hgg-3},
\begin{equation}\label{eq:prop-hgg-4}
\begin{split}
 & \Sigma_0 \left(\|\Proj^0 \gsps \|^2 + \|\Proj^1 \gsps \|^2\right)
 -e_0\|\gsps \|^2 +  \la H \gspl , \gspl \ra   -\epsilon M[\gspl ]\\
 & \geq (1 - \epsilon) M[\gspl ]
 + (\Sigma_0 - e_0) (\|\gsps \|^2 + \|\gspl \|^2) -
 \Sigma_0 \| \Proj^{n\geq 2} \gsps \|^2 \\
 & \geq (1-\epsilon) M[\gspl ] + (\Sigma_0 - e_0) \|\gsp\| ^2 \\
 & - (\Sigma_0 - e_0) (\|\gsp\| ^2 -\|\gsps \|^2 -\|\gspl \|^2) -
 \Sigma_0 \| \Proj^{n\geq 2} \gsps \|^2 .
\end{split}
\end{equation}

We have obviously
\begin{equation}\label{eq:n-hgg-4a}
 \|\gsp\| ^2 - \|\gsps \|^2 - \|\gspl \|^2 = 2\Re (\la \Proj^1 \gsps , \Proj^1
 \gspl \ra + \la \Proj^2 \gsps , \Proj^2 \gspl \ra + \la \Proj^3 \gsps , \Proj^3
 \gspl \ra ).
\end{equation}
Since $|\Sigma_0 -e_0|\leq c\alpha^2$, by definition of $\Proj^3
\gsps $ and Lemma~\ref{appendix:lem-A4}, we obtain
\begin{equation}\label{eq:n-hgg-21}
\begin{split}
 & | (\Sigma_0 -e_0) 2\Re \la \Proj^3 \gsps , \Proj^3 \gspl \ra|
 \leq \epsilon\alpha^2 \|\Proj^3 \gspl \|^2
 + c\alpha^4 |\muvar_3|^2 \|(H_f+P_f^2)^{-1} \Ac .\Ac \fFz  \|^2 \\
 & \leq \epsilon\alpha^2 \|\Proj^3 \gspl \|^2
 + c|\muvar_3|^2 \alpha^7\log\alpha^{-1} .
\end{split}
\end{equation}
Similarly, for the two-photon sector, we find
\begin{equation}\label{eq:n-hgg-22}
\begin{split}
 & |(\Sigma_0 -e_0) \la \Proj^2 \gsps , \Proj^2 \gspl \ra|\leq\epsilon\alpha^2
 \|\Proj^2 \gspl \|^2
 + c\alpha^4 |\muvar_2|^2 \|\Proj^0 \gsps \|^2
 + c\sum_{i=1}^3 |\muvar_{2,i}|^2 \alpha^6 \\
 & = \epsilon \alpha^2 \|\Proj^2 \gspl \|^2 + \mathcal{O}(\alpha^6) ,
\end{split}
\end{equation}
where we used Lemma~\ref{lem:improved-estimates}. For the term
$\la \Proj^1 \gsps , \Proj^1 \gspl \ra$, one gets
\begin{equation}\label{eq:n-hgg-23}
\begin{split}
 & |(\Sigma_0-e_0)\la \Proj^1 \gspl , \muvar_1\fFz \ra|
 = |(\Sigma_0 -e_0) (|k|^\frac12 \Proj^1 \gspl , |k|^{-\frac12} \muvar_1
 \fFz \ra|\\
 & \leq \epsilon\| H_f^\frac12 \Proj^1 \gspl \|^2 +
 c\alpha^4|\muvar_1|^2 \| \, |k|^{-\frac12} \fFz \|^2
 \leq \epsilon\| H_f^\frac12 \Proj^1 \gspl \|^2 +
 \mathcal{O}(\alpha^5),
\end{split}
\end{equation}
since  $\|\,|k|^{-\frac12} \fFz \|^2 = \mathcal{O}(\alpha)$
(Lemma~\ref{appendix:lem-A0}) and $\muvar_1=\mathcal{O}(1)$
(Lemma~\ref{lem:improved-estimates}).

Collecting \eqref{eq:n-hgg-4a}, \eqref{eq:n-hgg-21},
\eqref{eq:n-hgg-22} and \eqref{eq:n-hgg-23} yields
\begin{equation}\label{eq:n-hgg-24}
 |\Sigma_0 -e_0|\, \left| \, \|\gsp\| ^2 -\|\gsps \|^2 -\|\gspl \|^2\right|
 \leq \epsilon M[\gspl ] + c|\muvar_3|^2\alpha^7\log\alpha^{-1}
 +\mathcal{O}(\alpha^5) .
\end{equation}
Therefore, together with \eqref{eq:prop-hgg-4}, one finds
\begin{equation}\label{eq:prop-hgg-5}
\begin{split}
& \Sigma_0 \left(\|\Proj^0 \gsps \|^2 + \|\Proj^1 \gsps
\|^2\right)
 -e_0\|\gsps \|^2 +  \la H \gspl , \gspl \ra   -\epsilon M[\gspl ] \\
& \geq (1-2\epsilon) M[\gspl ] + (\Sigma_0 - e_0)\|\gsp\| ^2
 - c|\muvar_3|^2\alpha^7\log\alpha^{-1} - \Sigma_0\|\Proj^{n\geq
 2}\gsps \|^2
 +\mathcal{O}(\alpha^5) ,
\end{split}
\end{equation}
By definition of $\gsps $ and using
$\Sigma_0=\mathcal{O}(\alpha^2)$, $|\muvar_2| \|\Proj^0 \gsps
\|=\mathcal{O}(\alpha)$ (Lemma~\ref{lem:improved-estimates}) and
Inequality \eqref{appendix:lem9-1-bis} of
Lemma~\ref{appendix:lem-A4}, we straightforwardly obtain
 $$
  \Sigma_0\|\Proj^{n\geq
 2}\gsps \|^2 \leq
 c \alpha^{6} + c |\muvar_3|^2 \alpha^{7}\log\alpha^{-1}.
 $$
Substituting this in \eqref{eq:prop-hgg-5} yields
\begin{equation}\label{eq:prop-hgg-6}
\begin{split}
& \Sigma_0 \left(\|\Proj^0 \gsps \|^2 + \|\Proj^1 \gsps
\|^2\right)
 -e_0\|\gsps \|^2 +  \la H \gspl , \gspl \ra   -\epsilon M[\gspl ] \\
& \geq (1-2\epsilon) M[\gspl ] + (\Sigma_0 - e_0)\|\gsp\| ^2
 - 2c|\muvar_3|^2\alpha^7\log\alpha^{-1} + \mathcal{O}(\alpha^5) .
\end{split}
\end{equation}

To conclude the proof of \eqref{eq:prop-hgg-3}, and thus of the
Proposition, we first note that according to
Lemma~\ref{lem:improved-estimates},
\begin{equation}\label{eq:prop-hgg-1}
 - c |\muvar_2|^2 \alpha^4 \|\Proj^0 \gsps \|^2
 = \mathcal{O}(\alpha^6) .
\end{equation}
Similarly, taking into account that $\|\fFz \|^2=c
\alpha^3\log\alpha^{-1}$ (see \eqref{eq:app-i} in
Lemma~\ref{appendix:lem-A0}), we get for some $c_2>0$,
\begin{equation}\label{eq:prop-hgg-2}
  \alpha^2 \frac{|\muvar_3+1|^2}{2}
 \|\cEpp \|_*^2 \|\fFz \|^2 - c\alpha^6\log\alpha^{-1} |\muvar_3|^2
 - 2 c \alpha^7\log\alpha^{-1} |\muvar_3|^2
 \geq -c_2 \alpha^6\log\alpha^{-1} .
\end{equation}

Collecting \eqref{eq:prop-hgg-6}, \eqref{eq:prop-hgg-1} and
\eqref{eq:prop-hgg-2} yields the bound \eqref{eq:prop-hgg-3}, and
thus concludes the proof of the Proposition~\ref{prop:prop-hgg}.
\qed

\section{Proof of Theorem~\ref{thm:main}: Auxiliary Lemmata}
\label{app-Thm21prf-1}

The proof of Theorem~\ref{thm:main} uses Lemmata~\ref{lem:appendix-3} $\sim$
~\ref{appendix:lem-A5} below.


\begin{lemma}\label{appendix:lem-A10}
Let $\Psi_0$ be the ground state of $T(0)$, with
$|\Proj^0\Psi_0|=1$. Then we have
\begin{equation}\nonumber
\begin{split}
 &\Psi_0 = \Omega_f + 2\eta_1 \alpha^\frac32 \cEm  +
 \eta_2 \alpha \cEpp  + 2\eta_3 \alpha^\frac32 \cEp  +\Rnvar,
 \\
 & \mbox{with }\la\Rnvar,\cEi\ra_* =0 \ (i=1,2,3) , \Proj^0\Rnvar=0 \
\end{split}
\end{equation}
and
\begin{equation}\nonumber
 \|\Rnvar\|^2 =\mathcal{O}(\alpha^3) .
\end{equation}
\end{lemma}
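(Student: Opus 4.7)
\bigskip

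\noindent\textbf{Proof plan for Lemma \ref{appendix:lem-A10}.}

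The plan is to mimic the variational analysis carried out in Section~\ref{S7} for the ground state $\gs$ of the Pauli--Fierz Hamiltonian $H$, but now for the translation--invariant operator $T(0)$ at fixed total momentum $p=0$. Because there is no Coulomb term and no electronic ground state to peel off, the argument is a simpler version of the proof of Theorem~\ref{thm:thm-main2}: one decomposes $\Psi_0$ into a leading part spanned by $\vac,\cEm,\cEpp,\cEp$ and a remainder $\Rnvar$ that satisfies the stated $\la\cdot,\cdot\ra_*$-orthogonality, estimates the quadratic form $\la T(0)\Psi_0,\Psi_0\ra=\Sigma_0\|\Psi_0\|^2$ from below, and compares the result with the upper bound on $\Sigma_0$ already known from \cite[Theorem~3.1]{BCVVi}.

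Concretely, I would first fix a phase so that $\Proj^0\Psi_0=\vac$ and define $\eta_1,\eta_2,\eta_3$ by projecting $\Psi_0-\vac$ onto $\cEm,\cEpp,\cEp$ with respect to the bilinear form $\la\cdot,\cdot\ra_*$; the $*$-orthogonalities $\la\Rnvar,\cEi\ra_*=0$ then hold by construction, and $\Proj^0\Rnvar=0$ is automatic. Next, expanding
\[
  \la T(0)\Psi_0,\Psi_0\ra
  \;=\;\la(H_f+P_f^2-2\Re P_f\!\cdot\! A(0)+2\alpha\Ac\!\cdot\!\Aa+2\alpha\,\Re(\Aa)^2)\Psi_0,\Psi_0\ra
\]
and using the orthogonality conditions to kill the cross terms that carry the operator $H_f+P_f^2$, I would collect the leading contributions (which reproduce the $-\alpha^2\|\cEpp\|_*^2+\mathcal{O}(\alpha^3)$ expansion of $\Sigma_0$), while all remaining cross terms between $\Rnvar$ and $\vac,\cEi$ are controlled by Cauchy--Schwarz together with the standard estimates
\[
  \|\Aa\Psi\|\le c\|H_f^{1/2}\Psi\|,\qquad
  \|\Ac\Psi\|\le c\|(H_f+1)^{1/2}\Psi\|,
\]
and the explicit norms of $\Aa\cEpp,\Ac\vac,\cEi$ that are already catalogued in~\cite{BCVVi}. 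Comparing with the trial--function upper bound on $\Sigma_0$ yields both $|\eta_i-1|^2=\mathcal{O}(\alpha)$ and the key inequality
\[
  \tfrac12\|\Rnvar\|_*^2\;\le\;\mathcal{O}(\alpha^4),
\]
so $\|\Rnvar\|_*^2=\mathcal{O}(\alpha^4)$, which is the analogue here of \eqref{eq:thm:thm-main2-1}.

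The remaining and genuinely delicate step is the passage from $\|\Rnvar\|_*^2=\mathcal{O}(\alpha^4)$ to the $L^2$-bound $\|\Rnvar\|^2=\mathcal{O}(\alpha^3)$; this is the main obstacle, because in the one--photon sector $\|\cdot\|_*^2$ weighs by $|k|+|k|^2$ and an $L^2$-bound cannot be obtained by simply dividing by a uniform constant once infrared frequencies are present. To handle it I would use the pull--through formula for $\Psi_0$ in the same spirit as \eqref{alk-psi-GLL-1}, namely
\[
  a_\lambda(k)\Psi_0
  \;=\;-\frac{\sqrt\alpha\,\chi_\Lambda(|k|)}{2\pi|k|^{1/2}}\,
  \frac{2}{T(0)+|k|-\Sigma_0}\bigl((P_f+\sqrt\alpha A(0))\!\cdot\!\epsilon_\lambda(k)\bigr)\Psi_0,
\]
which yields $\|a_\lambda(k)\Psi_0\|^2\le c\,\alpha\,\chi_\Lambda(|k|)/|k|^3$. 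Splitting the integral over $k$ at a scale $\delta=\alpha^{\beta}$ (with $\beta$ to be optimised exactly as in Lemma~\ref{Nf-exp-lemma-1}/Lemma~\ref{qmqm}) and combining this pointwise bound with the $\|\cdot\|_*^2=\mathcal{O}(\alpha^4)$ control on the high--momentum part, the $L^2$--norm of $\Rnvar$ is shown to be $\mathcal{O}(\alpha^3)$. The higher photon sectors ($n\ge2$) contribute only $\mathcal{O}(\alpha^4)$ via the trivial estimate $\|\Proj^n\Rnvar\|^2\lesssim\|\Proj^n\Rnvar\|_*^2$ (no infrared obstruction once $\int|k_1|+\dots+|k_n|\ge c>0$ holds on the bulk of the support), so the one--photon sector provides the dominant $\mathcal{O}(\alpha^3)$ contribution and the lemma follows.
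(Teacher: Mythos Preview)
Your overall strategy for obtaining $\|\Rnvar\|_*^2=\mathcal{O}(\alpha^4)$ is fine and in fact is exactly the content of \cite[Theorem~3.2]{BCVVi}, which the paper simply quotes. The gap is in the infrared step, where you pass from $\|\Rnvar\|_*^2=\mathcal{O}(\alpha^4)$ to $\|\Rnvar\|^2=\mathcal{O}(\alpha^3)$.

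The pull-through applied to $\Psi_0$ gives only
\[
  \|a_\lambda(k)\Psi_0\|^2 \;\le\; \frac{c\,\alpha^2\,\chi_\Lambda(|k|)}{|k|^3},
\]
since $\|(P_f+\sqrt\alpha A(0))\Psi_0\|^2=\mathcal{O}(\alpha)$ and $\|(T(0)+|k|-\Sigma_0)^{-1}\|\le|k|^{-1}$. This $|k|^{-3}$ bound is \emph{not} integrable near $k=0$: $\int_{|k|<\delta}|k|^{-3}\,\d k$ diverges logarithmically for every $\delta>0$, so no choice of splitting scale $\delta=\alpha^\beta$ rescues the argument. Your appeal to Lemma~\ref{Nf-exp-lemma-1} is misplaced: there the divergent $|k|^{-3}$ bound is paired with a \emph{second} bound $\|a_\lambda(k)\Psi\|\le c\alpha^{-3/4}|k|^{-1/2}$ coming from exponential localisation in the Coulomb potential, and it is this second bound that handles the soft photons. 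For $T(0)$ there is no confining potential, hence no such localisation estimate, and your mechanism is absent. Your remark that the $n\ge2$ sectors are controlled ``trivially'' by $\|\Proj^n\Rnvar\|^2\lesssim\|\Proj^n\Rnvar\|_*^2$ is also wrong: $H_f+P_f^2$ has no positive lower bound on any photon sector, since all photon momenta may be soft simultaneously.

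What the paper actually uses is the sharper pointwise bound
\[
  \|a_\lambda(k)\Rnvar\|\;\le\;\frac{c\,\alpha}{|k|},
\]
on the \emph{remainder} $\Rnvar$ rather than on $\Psi_0$, imported from \cite[Proposition~5.1]{CF2007}. The extra factor $|k|^{1/2}$ of infrared regularity arises because subtracting the leading one- and two-photon pieces removes the most singular contribution in the pull-through identity; establishing this requires an iterated soft-photon argument, not the single application you propose. With this bound in hand the paper splits at $\delta=\alpha$: the low-frequency part gives $\int_{|k|<\alpha}c^2\alpha^2|k|^{-2}\,\d k=\mathcal{O}(\alpha^3)$, and the high-frequency part gives $\alpha^{-1}\|H_f^{1/2}\Rnvar\|^2=\mathcal{O}(\alpha^3)$. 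So the missing ingredient in your plan is precisely this improved infrared estimate on $\Rnvar$, and the reference to \cite{CF2007} is where the real work lies.
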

%
%
%
\begin{proof}
It follows from a similar argument as in
\cite[Proposition~5.1]{CF2007} that
 $$
   \| a_\lambda(k)\Rnvar\| \leq \frac{c\alpha}{| k |} .
 $$
This yields
\begin{equation}\nonumber
\begin{split}
 \|\Rnvar\|^2 & \leq \int \|a_\lambda(k)\Rnvar\|^2 \d k \leq \int_{|k|\leq
 \alpha} \frac{c^2\alpha^2}{|k|^2} \d k + \int_{|k|>\alpha}
 \frac{|k|\, \|a_\lambda(k)\Rnvar\|^2 \chi_\Lambda(|k|)}{|k|} \d
 k \\
 & \leq c^2\alpha^3 + c'\alpha^{-1} \|H_f^\frac12\Rnvar\|^2 =
 \mathcal{O}(\alpha^3) \ ,
\end{split}
\end{equation}
where in the last equality we used the estimate $\|\Rnvar\|_*^2
 = \|(H_f+P_f^2)^\frac12\Rnvar\|^2 =\mathcal{O}(\alpha^4)$ proved in
\cite[Theorem~3.2]{BCVVi}.
\end{proof}


\begin{lemma}\label{lem:appendix-3}
We have
\begin{equation}\nonumber
 P.\Aa \Sgsa  \cEm  =0,
\end{equation}
and
\begin{equation}\label{eq:0-with-Pf}
\begin{split}
      \la \cEpp , \zeta(H_f, P_f^2) P_f^i\cEpp \ra = 0\ ,
      \ \mbox{and}\ \la \cEpp , \zeta(H_f,P_f^2) W_i\ra =0 \ ,
\end{split}
\end{equation}
for any function $\zeta$ for which the scalar products are
defined. Similarly, we have
\begin{equation}\label{eq:Pf-phi2-F}
 \la \cEpp  \Proj^0 \gsps , \Ac . P_f \fF \ra = 0 .
\end{equation}
\end{lemma}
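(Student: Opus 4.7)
The plan is to exploit rotational covariance of the photon Fock-space operations appearing in each identity. Throughout, let $U_R$ denote the canonical action on $\gF$ (and on $\gH$) of any $R\in SO(3)$. Under this action, $(H_f+P_f^2)^{-1}$, $\Ac\cdot\Ac$, $P_f\cdot\Aa$, $\Ac\cdot P_f$, and every $\zeta(H_f,P_f^2)$ are scalars, while $P_f^i$ and $(A^\pm)^i$ transform as vectors, i.e.\ $U_R(A^\pm)^iU_R^{-1}=R^{ij}(A^\pm)^j$ and similarly for $P_f^i$. The vacuum $\vac$ is invariant. One checks at once that $\cEpp=-(H_f+P_f^2)^{-1}\Ac\cdot\Ac\,\vac$ and $\cEm=-(H_f+P_f^2)^{-1}P_f\cdot\Aa\,\cEpp$ are rotation-invariant elements of $\gF$.

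First I would establish the stronger statement $\Aa\cEm=0$ in $\gF$, from which $P\cdot\Aa(\Sgsa\cEm)=-i(\nabla_x\Sgsa)\cdot(\Aa\cEm)=0$ follows because $\Aa$ acts trivially on the electron variable. Since $\cEm$ is a rotation-invariant $1$-photon state and $\Aa^i$ is a vector operator lowering the photon number by one, $\Aa\cEm$ is a vector-valued element of the $0$-photon subspace $\C\vac$. Writing $\Aa^i\cEm=c^i\vac$, rotation covariance yields $c^i=R^{ij}c^j$ for every $R\in SO(3)$, which forces $c=0$.

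The identities in \eqref{eq:0-with-Pf} fall to the same idea. The c-number $c^i:=\la\cEpp,\zeta(H_f,P_f^2)P_f^i\cEpp\ra$ satisfies $c^i=R^{ij}c^j$ (use unitarity of $U_R$, the invariance of $\cEpp$ and $\zeta(H_f,P_f^2)$, and the vector transformation of $P_f^i$), hence $c=0$. The same reasoning gives $\la\cEpp,\zeta(H_f,P_f^2)W_i\ra=0$ once one observes that both summands of $W_i=P_f^i\cEpp-2\Ac\cdot P_f(H_f+P_f^2)^{-1}(\Ac)^i\vac$ transform as vectors with free index $i$: in each summand the dot products $\Ac\cdot P_f$ are contracted to scalars and only the outer $(A^+)^i$ or $P_f^i$ carries the free index.

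For \eqref{eq:Pf-phi2-F}, I would use the factorized form
\[
 \fF = 2\alpha^{1/2}\sum_j(\partial_j\Sgsa)(x)\cdot(H_f+P_f^2)^{-1}(\Ac)^j\vac,
\]
to split the inner product as
\[
 \la\cEpp\Proj^0\gsps,\Ac\cdot P_f\fF\ra
 = 2\alpha^{1/2}\sum_j\la\Proj^0\gsps,\partial_j\Sgsa\ra_{L^2(\R^3)}\,\la\cEpp,\Ac\cdot P_f(H_f+P_f^2)^{-1}(\Ac)^j\vac\ra.
\]
The photon factor on the right is once more a c-number vector in $j$ and vanishes by the same rotation argument. (Alternatively, rotational invariance of the ground state $\gs$ of the Pauli--Fierz Hamiltonian makes $\Proj^0\gsps$ radial, so the electron factor dies by parity of $\partial_j\Sgsa$; either route works.) There is no substantial obstacle beyond the careful bookkeeping of tensor indices; the only real step is, at each turn, to recognize that the alleged-zero quantity is a rotation-covariant vector in a space whose only invariant element is~$0$.
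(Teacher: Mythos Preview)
Your proof is correct and follows essentially the same route as the paper, which disposes of the lemma in one sentence (``straightforward computations using the symmetries of $\Aa\cEm$ and $\cEpp$''). Your presentation via rotational covariance of the Fock-space operators is a clean and systematic way to formalize exactly those symmetry considerations; in particular, your observation that $\Aa\cEm=0$ (a slightly stronger statement than $P\cdot\Aa\,\Sgsa\cEm=0$) because a rotation-invariant vector in the $0$-photon sector must vanish is a nice way to package the argument.
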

%
%
%
\begin{proof}
Straightforward computations using the symmetries of $\Aa \cEm $
and $\cEpp $.
\end{proof}

\begin{lemma}\label{appendix:lem-A0-ter}
We have
\begin{equation}\nonumber
      P.\Aa \fF = \sqrt{\alpha} a_0
     \Delta \Sgsa   ,
\end{equation}
where
\begin{equation}\nonumber
     a_0 = \int
     \frac{k_1^2+k_2^2}{4\pi^2|k|^3} \frac{2}{|k|^2
     +|k|} \chi_\Lambda (|k|)\, \d k_1 \d k_2 \d k_3.
\end{equation}
\end{lemma}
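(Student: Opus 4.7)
The identity $P\cdot A^{-}\fF=\sqrt{\alpha}\,a_{0}\,\Delta\Sgsa$ is a direct computation in Fock space, so the plan is simply to unfold the definitions of $\fF$ and $A^{\pm}(0)$, apply the canonical commutation relations, and then carry out the resulting angular integral over $k$. First I would write $\fF$ explicitly as the one-photon state
\begin{equation}\nonumber
 \fF=2\sqrt{\alpha}\sum_{j=1}^{3}(\partial_{j}\Sgsa)(x)\,
 \sum_{\lambda=1,2}\int\frac{\chi_{\Lambda}(|k|)\,\varepsilon_{\lambda}(k)_{j}}
 {2\pi|k|^{1/2}(|k|+|k|^{2})}\,a_{\lambda}^{*}(k)\,\vac\,\mathrm{d}k,
\end{equation}
using that $(H_{f}+P_{f}^{2})^{-1}$ acts on the one-photon sector as multiplication by $(|k|+|k|^{2})^{-1}$.

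Next I would apply the $i$-th component of $A^{-}(0)$ and use $[a_{\mu}(k'),a_{\lambda}^{*}(k)]=\delta_{\lambda\mu}\delta(k-k')$ together with $a_{\mu}(k')\vac=0$ to collapse the two-fold integral to a single one:
\begin{equation}\nonumber
 (A^{-})_{i}\fF=2\sqrt{\alpha}\,\sum_{j=1}^{3}(\partial_{j}\Sgsa)\cdot
 \left(\int\frac{\chi_{\Lambda}(|k|)^{2}}{(2\pi)^{2}|k|(|k|+|k|^{2})}
 \sum_{\lambda}\varepsilon_{\lambda}(k)_{i}\varepsilon_{\lambda}(k)_{j}\,\mathrm{d}k\right)\vac.
\end{equation}
Inserting the standard identity $\sum_{\lambda}\varepsilon_{\lambda}(k)_{i}\varepsilon_{\lambda}(k)_{j}=\delta_{ij}-\hat{k}_{i}\hat{k}_{j}$ and exploiting the fact that the remaining weight is radial in $k$, so that $\int\hat{k}_{i}\hat{k}_{j}\,\mathrm{d}\Omega(\hat{k})=\tfrac{4\pi}{3}\delta_{ij}$, the matrix inside the parentheses is a scalar multiple of $\delta_{ij}$. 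Hence $(A^{-})_{i}\fF$ is proportional to $\partial_{i}\Sgsa\,\vac$.

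Then I would apply $P=i\nabla_{x}$ by summing over $i$, so that $P\cdot A^{-}\fF=i\sqrt{\alpha}\,\bigl(\text{scalar}\bigr)\,\Delta\Sgsa\,\vac$. The final step is to verify that the scalar produced by the radial and angular integrations matches the constant $a_{0}$ given in the statement: with $k_{1}^{2}+k_{2}^{2}=|k|^{2}(1-\hat{k}_{3}^{2})$ and the analogous identity $\delta_{ij}-\hat{k}_{i}\hat{k}_{j}=\sum_{\lambda}\varepsilon_{\lambda}(k)_{i}\varepsilon_{\lambda}(k)_{j}$, rewriting the integrand shows that the proportionality constant is precisely the one encoded by the expression defining $a_{0}$.

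There is no real obstacle here; the only point requiring care is the bookkeeping of numerical factors ($2\pi$'s, the explicit form of $\varepsilon_{\lambda}(k)$, and the radial measure $|k|^{2}\mathrm{d}|k|$), and making sure that the conventions for $P$, for the normal-ordering, and for the ultraviolet cutoff $\chi_{\Lambda}$ match those used in the definition of $a_{0}$ in Theorem~\ref{thm:main}. Once these are lined up, the identity is immediate from the computation above.
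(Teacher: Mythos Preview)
Your proposal is correct and is precisely what the paper has in mind: the paper's own proof is the single sentence ``Straightforward computations using the symmetries of $A^{-}\fF$,'' and you have spelled out exactly that computation (annihilation operator on the one-photon state, the identity $\sum_{\lambda}\varepsilon_{\lambda}(k)_{i}\varepsilon_{\lambda}(k)_{j}=\delta_{ij}-\hat k_{i}\hat k_{j}$, and the angular average reducing the tensor to a scalar times $\delta_{ij}$). Your caveat about bookkeeping is also apt, since the paper's stated $a_{0}$ carries $\chi_{\Lambda}$ rather than the $\chi_{\Lambda}^{2}$ that the computation naturally produces, and an overall factor of $i$ from $P=i\nabla_{x}$ is suppressed; these are cosmetic conventions of the paper, not defects in your argument.
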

%
%
%
\begin{proof}
Straightforward computations using the symmetries of $\Aa \fF$.
\end{proof}

\begin{lemma}\label{appendix:lem-A0}
\begin{eqnarray}
 & & \forall\varphi\in\mathfrak{F} ,\
 \la P.P_f \Proj^1\varphi \Sgsa , \fFz \ra = 0\ ,
 \label{eq:app-0} \\
 & & \|\fFz \|^2 =\mathcal{O}(\alpha^3\log\alpha^{-1})\ ,\label{eq:app-i}\\
 & & \|\fFz \|_*^2 = \mathcal{O}(\alpha^3) \ ,\label{eq:app-ii}\\
 & & \|\, |k|^{-\frac12} \fFz \,\|^2 =
 \mathcal{O}(\alpha)\label{eq:app-iii}\ .
\end{eqnarray}
\begin{eqnarray}
 & & \| P\fFz  \|^2 = \mathcal{O}(\alpha^5\log\alpha^{-1})\ , \label{appendix:lem7-1}\\
 & & \| P \fFz \|_*^2 = \mathcal{O}(\alpha^5)\
 ,\label{appendix:lem7-2} \\
 & & \|\, |k|^{\frac16} P \fFz \|^2 =\mathcal{O}(\alpha^5)\ .
 \label{appendix:lem7-2}
\end{eqnarray}

\end{lemma}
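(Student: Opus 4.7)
My plan is to work explicitly in the one-photon sector, where
\begin{equation*}
 \fFz(x,k,\lambda) \;=\; \frac{\sqrt\alpha\,\chi_\Lambda(|k|)}{\pi\sqrt{|k|}}\,
 G_k(x,\lambda),\quad G_k(x,\lambda):=\bigl(|k|+|k|^2+h_\alpha+e_0\bigr)^{-1}\bigl(\varepsilon_\lambda(k)\!\cdot\!\nabla\Sgsa\bigr)(x).
\end{equation*}
The crucial structural observation is that each component $\partial_i\Sgsa$ lies entirely in the $\ell=1$ spectral subspace of $h_\alpha$, since $\Sgsa$ is radial and $\partial_i\Sgsa\propto \Sgsa'(|x|)\,x_i/|x|$. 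On this subspace the first eigenvalue of $h_\alpha$ equals $-e_1=-\alpha^2/16$, hence $h_\alpha+e_0\geq e_0-e_1=\tfrac{3}{32}\alpha^2$. Combined with the hydrogenic scaling $\Sgsa(x)=\alpha^{3/2}\Sgso(\alpha x)$, which gives $\|\nabla\Sgsa\|^2=\alpha^2/4$, this reduces every norm estimate to a scalar one-dimensional integral in $r=|k|$.

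For the four statements concerning $\fFz$ itself, substituting the above gap estimate into the $L^2$ representation yields, for instance,
\begin{equation*}
 \|\fFz\|^2\;\lesssim\;\alpha^3\int_0^\Lambda\!\frac{r\,dr}{(r+r^2+\alpha^2)^2},\qquad
 \|\fFz\|_*^2\;\lesssim\;\alpha^3\int_0^\Lambda\!\frac{r(r+r^2)\,dr}{(r+r^2+\alpha^2)^2},
\end{equation*}
and analogously for $\|\,|k|^{-1/2}\fFz\|^2$. Splitting each integral into the three regions $r\lesssim\alpha^2$, $\alpha^2\lesssim r\lesssim 1$, and $r\gtrsim 1$ produces the stated orders: the logarithm in $\|\fFz\|^2=\mathcal{O}(\alpha^3\log\alpha^{-1})$ comes from the middle range, where the integrand behaves like $1/r$, while the extra $|k|$ factor in $\|\fFz\|_*^2$ removes this logarithm and the weight $|k|^{-1}$ reproduces $\mathcal{O}(\alpha)$. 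For the orthogonality \eqref{eq:app-0}, the one-photon reduction expresses the inner product as
\begin{equation*}
 -i\!\sum_\lambda\!\int\!dk\,\frac{\sqrt\alpha\,\chi_\Lambda(|k|)}{\pi\sqrt{|k|}}\,\overline{\varphi^{(1)}(k,\lambda)}\sum_{i,j} k^i\varepsilon_\lambda^j(k)\,M^{ij}(|k|),
\end{equation*}
where $M^{ij}(|k|):=\la\partial_i\Sgsa,(|k|+|k|^2+h_\alpha+e_0)^{-1}\partial_j\Sgsa\ra$. A rotational change of variables, using that $h_\alpha$ commutes with all rotations and $\Sgsa$ is radial, shows that $M^{ij}(|k|)=c(|k|)\,\delta^{ij}$; the transversality relation $k\cdot\varepsilon_\lambda(k)=0$ then annihilates the entire sum.

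The bounds on $P\fFz$ constitute the main technical obstacle: a naive operator estimate of the form $P^2\leq 2(h_\alpha+e_0)+c\alpha^2$ combined with the previous bounds produces only $\|P\fFz\|^2=\mathcal{O}(\alpha^3)$, falling short by two powers of $\alpha$. I would recover the sharp order through the hydrogenic rescaling $y=\alpha x$, under which $h_\alpha+e_0=\alpha^2(h_1+\tfrac14)$, $P_x=\alpha P_y$, and $\partial_i\Sgsa(x)=\alpha^{5/2}(\partial_i\Sgso)(\alpha x)$. A direct change of variables gives
\begin{equation*}
 \bigl\|P\,(|k|+|k|^2+h_\alpha+e_0)^{-1}\varepsilon_\lambda(k)\!\cdot\!\nabla\Sgsa\bigr\|_{L^2_x}^2 \;=\;\alpha^4\,\|\nabla_y\tilde w_k\|^2,
\end{equation*}
with $\tilde w_k:=(|k|+|k|^2+\alpha^2(h_1+\tfrac14))^{-1}(\varepsilon_\lambda\!\cdot\!\nabla\Sgso)$. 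I would then bound $\|\nabla_y\tilde w_k\|^2$ piecewise: for $r\gtrsim\alpha^2$ the scalar $C=r+r^2$ dominates, so $\tilde w_k\approx C^{-1}(\varepsilon_\lambda\!\cdot\!\nabla\Sgso)$ with $\|\nabla_y\tilde w_k\|^2\lesssim C^{-2}$; for $r\lesssim\alpha^2$ the denominator is uniformly $\gtrsim\tfrac{3}{16}\alpha^2$ on the p-wave subspace, giving $\|\nabla_y\tilde w_k\|^2\lesssim\alpha^{-4}$. Assembling these pieces produces
\begin{equation*}
 \|P\fFz\|^2 \;\lesssim\; \alpha^5\!\int_{\alpha^2}^1\!\frac{dr}{r}\;+\;\alpha\!\int_0^{\alpha^2}\!r\,dr \;\approx\; \alpha^5\log\alpha^{-1},
\end{equation*}
and the additional weights $H_f+P_f^2=|k|+|k|^2$ and $|k|^{1/3}$ in the remaining two bounds improve $r^{-1}$ at the middle range to an integrable factor, removing the logarithm and leaving $\mathcal{O}(\alpha^5)$ in both cases.
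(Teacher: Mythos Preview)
Your treatment of \eqref{eq:app-0}--\eqref{eq:app-iii} matches the paper's: same spectral gap $h_\alpha+e_0\geq\frac{3}{16}\alpha^2$ on the odd ($\ell=1$) subspace (you wrote $\frac{3}{32}$, a harmless slip), same reduction to scalar radial integrals, and the same rotational-invariance argument $M^{ij}=c(|k|)\delta^{ij}$ followed by transversality for the vanishing in \eqref{eq:app-0}.

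For \eqref{appendix:lem7-1}--\eqref{appendix:lem7-2} you take a genuinely different route. The paper does not rescale: it observes that on antisymmetric functions one has the operator inequality $P^2\leq\gamma_0^{-1}(h_\alpha+e_0)$ for some $\gamma_0\in(0,1)$ (since $-(1-\gamma_0)\Delta-\alpha/|x|\geq -e_0$ there), and applies it directly to get
\[
\|P\,(|k|+k^2+h_\alpha+e_0)^{-1}\partial_i\Sgsa\|^2
\;\leq\;\gamma_0^{-1}\,\frac{\|(h_\alpha+e_0)^{1/2}\partial_i\Sgsa\|^2}{(|k|+k^2+\tfrac{3}{16}\alpha^2)^2}.
\]
The extra two powers of $\alpha$ then come in one stroke from $\|(h_\alpha+e_0)^{1/2}\nabla\Sgsa\|^2=\alpha^4\|(h_1+\tfrac14)^{1/2}\nabla\Sgso\|^2$, and a single radial integral finishes the job with no region-splitting. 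Your rescaling argument is correct and arrives at the same orders, but the step $\|\nabla_y\tilde w_k\|^2\lesssim C^{-2}$ is not free: since $\nabla_y$ does not commute with the resolvent $(C+\alpha^2(h_1+\tfrac14))^{-1}$, making ``$\tilde w_k\approx C^{-1}\psi$'' rigorous requires exactly the $\alpha=1$ version of the paper's inequality, $P_y^2\leq\gamma_0^{-1}(h_1+\tfrac14)$ on odd functions (or the equivalent relative bound of $1/|y|$ with respect to $-\Delta_y$). So the two proofs share the same analytic core; the paper's packaging is shorter and makes the origin of the $\alpha^4$ gain transparent. Incidentally, your claim that the ``naive'' bound $P^2\leq 2(h_\alpha+e_0)+c\alpha^2$ yields only $\mathcal{O}(\alpha^3)$ is pessimistic: if you keep track of $\|(h_\alpha+e_0)^{1/2}\nabla\Sgsa\|^2=\mathcal{O}(\alpha^4)$ rather than discarding the $(h_\alpha+e_0)^{1/2}$, that estimate already delivers $\mathcal{O}(\alpha^5\log\alpha^{-1})$.
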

%
%
%
\begin{proof}
The proof of \eqref{eq:app-0} is as follows
\begin{equation}
\begin{split}
 & \la P.P_f \Proj^1 \varphi \Sgsa , \fFz \ra\\
 & = \int \sum_{i=1}^3 k^i \frac{\partial \Sgsa }{\partial x_i}
 \varphi(k) \frac{1}{k^2 + |k| +h_\alpha+e_0}\sum_{j=1}^3 \sum_{\lambda=1,2}
 \frac{\epsilon_\lambda^j(k) \chi_\Lambda(|k|)}{2\pi |k|^\frac12}
 \overline{\frac{\partial \Sgsa }{\partial x_j}} \mathrm{d}x \mathrm{d}k\\
 & =  \sum_{i=1}^3  \int \mathrm{d} k \left(\int\mathrm{d} x
 \frac{\partial \Sgsa }{\partial x_i}\frac{1}{k^2 + |k| +h_\alpha+e_0}
 \overline{\frac{\partial \Sgsa }{\partial x_i}}\right)
 \sum_{\lambda=1,2}
 \frac{k^i \epsilon_\lambda^i(k) \chi_\Lambda(|k|)}{2\pi
 |k|^\frac12}\varphi(k)
  \mathrm{d}k =0,
\end{split}
\end{equation}
using that the integral over $x$ is independent of the value of
$i$, and since $k.\epsilon_\lambda(k)=0$.

To prove \eqref{eq:app-iii}, we note that
\begin{equation}\nonumber
\begin{split}
 \| \, |k|^{-\frac12} \fFz  \|^2 & \leq c\alpha \int\left|
 \frac{\chi_\Lambda(|k|)}{|k| \, (|k| + k^2 + h_\alpha + e_0)}
 \nabla \Sgsa \right|^2 \d k \d x \\
 & \leq c\alpha^3 \int \frac{\chi_\Lambda(|k|)^2}{|k|^2  (|k| + \frac{3}{16}
 \alpha^2)^2}\d k =\mathcal{O}(\alpha) .
\end{split}
\end{equation}

The proofs of \eqref{eq:app-i} and \eqref{eq:app-ii} are similar,
but simpler.

We next prove \eqref{appendix:lem7-1}.
\begin{equation}\label{appendix:lem7-3}
 \| P \fFz \|^2
 = c\alpha \sum_{i=1}^3 \int \|P
 \frac{\chi_\Lambda(|k|)}{ |k|^\frac12 (|k| + k^2 + h_\alpha + e_0)}
 \frac{\partial \Sgsa }{\partial x_i} \|_{L^2(\R^3)}^2 \d k .
\end{equation}
The function $\partial \Sgsa /\partial x_i$ is odd. On the
subspace of  antisymmetric functions on $L^2(\R^3)$, one has that
$-(1-\gamma_0) \Delta -\frac{\alpha}{|x|} > - e_0$ for some
$\gamma_0>0$, which implies on this subspace that $h_\alpha + e_0
> \gamma_0 P^2$, and thus,
\begin{equation}\label{appendix:lem7-4}
  P^2 < \gamma_0^{-1} (h_\alpha + e_0) .
\end{equation}
The relation \eqref{appendix:lem7-4} yields, for all $k$
\begin{equation}\label{appendix:lem7-5}
\begin{split}
 & \| P\frac{\chi_\Lambda(|k|)}{|k|^\frac12 (|k| + k^2 +h_\alpha
 +e_0)}\frac{\partial \Sgsa }{\partial x_i} \|^2 \\
 & \leq \gamma_0^{-1} \| \frac{\chi_\Lambda(|k|)}{|k|^\frac12
 (|k| + k^2 +h_\alpha
 +e_0)} (h_\alpha +e_0)^\frac12
 \frac{\partial \Sgsa } {\partial x_i}\|^2 \\
 & \leq \gamma_0^{-1} c \alpha^4 \left(
 \frac{\chi_\Lambda(|k|)}
 {|k|^\frac12 (|k| + k^2 +\frac{3}{16}\alpha^2)}\right)^2.
\end{split}
\end{equation}
Substituting \eqref{appendix:lem7-5} into \eqref{appendix:lem7-3}
and integrating over $k$ proves \eqref{appendix:lem7-1}.

The proof of \eqref{appendix:lem7-2} is similar.
\end{proof}

\begin{lemma}\label{appendix:lem-A1}We have
\begin{equation}\nonumber
\begin{split}
 & \| \fF\|_*^2 - \|\fFz \|_{\sharp}^2 =
 \frac{1}{3\pi} \|(h_1+\frac14)^\frac12 \nabla \Sgso\|^2
 \alpha^5\log\alpha^{-1} + o(\alpha^5\log\alpha^{-1})\ , \\
 & \| \fF - \fFz \|_*^2 =\mathcal{O} (\alpha^5)\ ,\\
 & \| \Aa (\fFz  -\fF)\|^2 =\mathcal{O}(\alpha^7\log\alpha^{-1}) .
\end{split}
\end{equation}
\end{lemma}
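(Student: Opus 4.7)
The strategy is to reduce all three estimates to one-dimensional integrals in the photon momentum $t=|k|$, and then extract the logarithm via a two-scale analysis. Since $h_\alpha+e_0$ commutes with $H_f+P_f^2$, the resolvent identity gives
\[
\fFz-\fF=-2\alpha^{1/2}(H_f+P_f^2)^{-1}(h_\alpha+e_0)(H_f+P_f^2+h_\alpha+e_0)^{-1}\Ac\cdot\nabla\Sgsa\,\vac,
\]
and combined with $(H_f+P_f^2+h_\alpha+e_0)\fFz=2\alpha^{1/2}\Ac\cdot\nabla\Sgsa\,\vac$ this leads to
\[
\|\fF\|_*^2-\|\fFz\|_\sharp^2=4\alpha\,\la(H_f+P_f^2)^{-1}(h_\alpha+e_0)(H_f+P_f^2+h_\alpha+e_0)^{-1}\Ac\cdot\nabla\Sgsa\,\vac,\ \Ac\cdot\nabla\Sgsa\,\vac\ra.
\]
Writing this in the one-photon variable $(k,\lambda)$, using the polarization identity $\sum_\lambda\epsilon_\lambda^i(k)\epsilon_\lambda^j(k)=\delta_{ij}-k^ik^j/|k|^2$ together with the parity of $\Sgsa$ (which kills the cross terms $i\ne j$) and rotational isotropy, and then passing to spherical coordinates in $k$, the expression reduces to a single $t$-integral of the form $c\,\alpha\int_0^\infty \chi_\Lambda^2(t)(1+t)^{-1}\la\nabla\Sgsa,(h_\alpha+e_0)(h_\alpha+e_0+t+t^2)^{-1}\nabla\Sgsa\ra\,dt$.

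To extract the log I would rescale the position $y=\alpha x$ (so $\Sgsa(x)=\alpha^{3/2}\Sgso(\alpha x)$ and $h_\alpha+e_0$ becomes $\alpha^2(h_1+\tfrac14)$ on rescaled functions) and the photon momentum $t=\alpha^2 s$. The inner product becomes $\alpha^2\mathcal M(s,\alpha)$ with $\mathcal M(s,\alpha)=\la\nabla\Sgso,(h_1+\tfrac14)(h_1+\tfrac14+s+\alpha^2 s^2)^{-1}\nabla\Sgso\ra$, and the integral takes the form $c\,\alpha^5\int_0^\infty\chi_\Lambda^2(\alpha^2 s)\mathcal M(s,\alpha)(1+\alpha^2 s)^{-1}\,ds$. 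The operator identity $(h_1+\tfrac14)(h_1+\tfrac14+s)^{-1}=1-s(h_1+\tfrac14+s)^{-1}$ gives the large-$s$ expansion
\[
\mathcal M(s,0)=\|(h_1+\tfrac14)^{1/2}\nabla\Sgso\|^2/s+O(s^{-2}),
\]
so the dominant contribution comes from the dyadic window $1\lesssim s\lesssim \alpha^{-2}$ (equivalently $\alpha^2\lesssim t\lesssim 1$), producing the factor $\int_1^{\alpha^{-2}}ds/s=2\log\alpha^{-1}$. Dominated convergence dispatches the tails $s\lesssim 1$ and $s\gtrsim\alpha^{-2}$, pinning down the coefficient $\frac{1}{3\pi}\|(h_1+\tfrac14)^{1/2}\nabla\Sgso\|^2$ after the bookkeeping of numerical constants.

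The bound $\|\fF-\fFz\|_*^2=\mathcal O(\alpha^5)$ follows by the same reduction, but the electron-side integrand gains an extra factor $(h_\alpha+e_0+t+t^2)^{-1}$; the corresponding rescaled integrand $\mathcal N(s)=\|(h_1+\tfrac14)(h_1+\tfrac14+s)^{-1}\nabla\Sgso\|^2$ decays like $s^{-3/2}$ (via the spectral-measure estimate $d\rho(E)/dE\sim E^{-5/2}$ for $\nabla\Sgso$ at infinity), so $\int\mathcal N(s)\,ds$ converges, the logarithm disappears, and only the prefactor $\alpha^5$ survives. For $\|\Aa(\fFz-\fF)\|^2$ I would first annihilate the photon via the same angular reduction, obtaining
\[
\Aa^j(\fFz-\fF)=-\frac{4\alpha^{1/2}}{3\pi}(h_\alpha+e_0)\int_0^\infty\frac{\chi_\Lambda^2(t)\,dt}{(1+t)(h_\alpha+e_0+t+t^2)}\,\partial_j\Sgsa,
\]
i.e.\ a scalar spectral function $H(h_\alpha+e_0)$ applied to $\partial_j\Sgsa$ with $H(\mu)\sim\mu\log\mu^{-1}$ for small $\mu$. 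The main obstacle, and where I expect to spend most effort, is sharpening $\|H(h_\alpha+e_0)\partial_j\Sgsa\|^2$ to the required $\mathcal O(\alpha^6\log\alpha^{-1})$: a naive pointwise bound on $H$ loses a logarithm, so one must split the spectral range of $h_\alpha+e_0$ restricted to $\mathrm{Ran}\,Q_\alpha^\perp$ (where it is bounded below by the gap $3\alpha^2/16$) into low, intermediate, and UV pieces, and exploit the rescaled Plancherel representation of $\nabla\Sgso$ in each regime to avoid double-counting a logarithm.
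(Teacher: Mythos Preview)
Your reduction for the first identity is exactly what the paper does: your resolvent-difference formula
\[
\|\fF\|_*^2-\|\fFz\|_\sharp^2
=4\alpha\,\la(H_f+P_f^2)^{-1}(h_\alpha+e_0)(H_f+P_f^2+h_\alpha+e_0)^{-1}\Ac\cdot\nabla\Sgsa\,\vac,\ \Ac\cdot\nabla\Sgsa\,\vac\ra
\]
is the same expression as the paper's integral representation after rescaling $x\mapsto\alpha x$. The only difference is bookkeeping for the logarithm: the paper splits in the \emph{electron} spectral variable via $\chi(h_1<c)$ versus $\chi(h_1>c)$ and sandwiches the resolvent between constant spectral values, whereas you rescale the \emph{photon} variable $t=\alpha^2 s$ and read off the $1/s$ tail of $\mathcal M(s,0)$. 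Both isolate the same window $\alpha^2\lesssim t\lesssim 1$ and give the same coefficient.

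For $\|\fF-\fFz\|_*^2=\mathcal O(\alpha^5)$ the paper says only ``similar but simpler''. Your argument via the high-energy spectral density $d\rho_1(E)\sim E^{-5/2}$ (equivalently $\widehat{\nabla\Sgso}(p)\sim|p|^{-3}$) is correct and yields the convergent integral $\int\mathcal N(s)\,ds<\infty$. A lighter alternative, avoiding the explicit decay rate, is the interpolation $B^2(A+B)^{-2}\le B^{1+\epsilon}A^{-\epsilon}(A+B)^{-1}$ with $A=H_f+P_f^2$, $B=h_\alpha+e_0$; either route works.

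For the third bound, your formula for $\Aa^j(\fFz-\fF)$ is correct, and your instinct that the naive pointwise bound on $H$ loses control is right. One caution: your proposed target $\|H(h_\alpha+e_0)\partial_j\Sgsa\|^2=\mathcal O(\alpha^6\log\alpha^{-1})$ is sharper than what this route actually delivers. The intermediate spectral range $M\le E\le\alpha^{-2}$ contributes
\[
\alpha^4\int_M^{\alpha^{-2}}E^{-1/2}\big(\log(\alpha^{-2}/E)\big)^2\,dE
=\alpha^3\int_{M\alpha^2}^{1}u^{-1/2}(\log u^{-1})^2\,du=\mathcal O(\alpha^3)
\]
to $\int G(\alpha^2 E)^2\,d\rho_1(E)$, giving $\|H(h_\alpha+e_0)\partial_j\Sgsa\|^2=\mathcal O(\alpha^5)$ and hence $\|\Aa(\fFz-\fF)\|^2=\mathcal O(\alpha^6)$. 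This already far exceeds what is required (the only application, in the proof of Proposition~\ref{prop:prop-hgg}, needs merely $o(\alpha^5\log\alpha^{-1})$ after multiplication by $O(1)$), so your three-regime splitting will succeed; just do not over-invest in chasing the exponent~$7$.
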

%
%
%
\begin{proof}
We have
\begin{equation}\label{eq:appendix-lem7-1}
\begin{split}
 & \| \fF\|_*^2 - \| \fFz \|_{\sharp}^2 \\
 & =
 \frac{\alpha^5}{(2\pi)^2} \frac43\left\|
 \frac{\chi_\Lambda(|k|)}
 {|k|^\frac12 (|k| + k^2)^\frac12 (|k| + k^2 + \alpha^2(h_1
 +\frac14))^\frac12} (h_1 +\frac14)^\frac12 \nabla
 \Sgso\right\|^2\ ,
\end{split}
\end{equation}
where the norm here is obviously taken on $L^2(\R^3,\, \d
x)\otimes L^2(\R^3,\, \d k)$. Since $(h_1 +\frac14) \nabla \Sgso
\in L^2$, it implies that for sufficiently large $c>0$ independent
of $\alpha$,
\begin{equation}\nonumber
 \| \chi( h_1 > c ) (h_1+\frac14)^\frac12 \nabla \Sgso\|^2 <\epsilon
 ,
\end{equation}
and
\begin{equation}\label{appendix:lem6-1}
\begin{split}
 & \left\| \frac{\chi_\Lambda(|k|) }{ |k|^\frac12 (|k| + k^2)^\frac12
 (|k| + k^2 +\alpha^2 (h_1+\frac14))^\frac12} \chi(h_1 >c)
 (h_1 + \frac14)^\frac12 \nabla \Sgso \right\|^2 \\
 & \leq \epsilon \int_0^\infty \frac{ \chi_\Lambda^2( t )} {t +
 c\alpha^2} \d t =\epsilon\log\alpha^{-1} +\mathcal{O}(1) .
\end{split}
\end{equation}
For the contribution of $\chi(h_1\leq c) (h_1 +\frac14)^\frac12
\nabla \Sgso$ in \eqref{eq:appendix-lem7-1}, the following
inequalities hold,
\begin{equation}\label{appendix:lem6-2}
\begin{split}
 & (\, \frac{1}{\pi} \log\alpha^{-1} + \mathcal{O}(1) \,) \frac13 \|
 \chi(h_1<c) (h_1+\frac14)^\frac12\nabla \Sgso\|^2 \\
 & = \frac{1}{(2\pi)^2} \frac13 \left\| \frac{\chi_\Lambda(|k|) }{
 |k|^\frac12 (|k| + k^2)^\frac12
 (|k| + k^2 + (c+\frac14)  \alpha^2 ))^\frac12} \chi(h_1 <c)
 (h_1 + \frac14)^\frac12 \nabla \Sgso
 \right\|^2 \\
 & \leq
 \frac{1}{(2\pi)^2} \frac13 \left\| \frac{\chi_\Lambda(|k|) }{
 |k|^\frac12 (|k| + k^2)^\frac12
 (|k| + k^2 + (h_1 + \frac14)  \alpha^2 ))^\frac12} \chi(h_1 <c)
 (h_1 + \frac14)^\frac12 \nabla \Sgso
 \right\|^2 \\
 & \leq \frac{1}{(2\pi)^2} \frac13 \left\| \frac{\chi_\Lambda(|k|) }{
 |k|^\frac12 (|k| + k^2)^\frac12
 (|k| + k^2 + \frac{3}{16}  \alpha^2 ))^\frac12} \chi(h_1 <c)
 (h_1 + \frac14)^\frac12 \nabla \Sgso
 \right\|^2 \\
 & \leq (\, \frac{1}{\pi} \log\alpha^{-1} + \mathcal{O}(1) \,)
 \frac13 \| \chi(h_1<c) (h_1+\frac14)^\frac12\nabla \Sgso\|^2 .
\end{split}
\end{equation}
The inequalities \eqref{appendix:lem6-1} and
\eqref{appendix:lem6-2} prove the first equality of the Lemma.

The proofs of the last two equalities are similar but simpler.
\end{proof}

\begin{lemma}\label{appendix:lem-A4}
\begin{eqnarray}
 & & \| (H_f + P_f^2)^{-1} \Ac .\Ac \fFz \|_{*}^2
 - \|\cEpp \|_*^2 \|\fFz \|^2 = o(\alpha^3\log\alpha^{-1}) ,\label{appendix:lem9-1}\\
 & & \| (H_f + P_f^2)^{-1} \Ac .\Ac \fFz \|^2 =
 \mathcal{O}(\alpha^3\log\alpha^{-1}) \label{appendix:lem9-1-bis} \\
 & & \|\, |k_1|^{\frac16}|k_2|^{\frac16}|k_3|^{\frac16}
  (H_f + P_f^2)^{-1}
  \Ac .\Ac \fFz \|^2 = \mathcal{O}(\alpha^3)
  \label{appendix:lem9-1-ter} \\
 & & \|  (h_\alpha + e_0)
 (H_f+P_f^2)^{-1}\Ac .\Ac \fFz \|^2 =
 \mathcal{O}(\alpha^7\log\alpha^{-1}) \label{appendix:lem9-1-last}.
\end{eqnarray}
\end{lemma}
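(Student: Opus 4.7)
All four estimates follow from the explicit 3-photon representation of $G := (H_f+P_f^2)^{-1}\Ac\cdot\Ac\fFz$. Writing $g(x;k,\lambda) := \fFz(x,k,\lambda)$ for the 1-photon wavefunction of $\fFz$ and $F(k_1\lambda_1,k_2\lambda_2) := \sum_j \chi_\Lambda(|k_1|)\chi_\Lambda(|k_2|)\varepsilon^j_{\lambda_1}(k_1)\varepsilon^j_{\lambda_2}(k_2)/(4\pi^2|k_1|^{1/2}|k_2|^{1/2})$ for the two-photon creation kernel, the wavefunction of $G$ takes (up to combinatorial factors) the form
\begin{equation*}
G(x;k_1\lambda_1,k_2\lambda_2,k_3\lambda_3) \;=\; \frac{F_{12}g_3 + F_{13}g_2 + F_{23}g_1}{|k_1|+|k_2|+|k_3|+(k_1+k_2+k_3)^2}
\end{equation*}
with $F_{ij} := F(k_i\lambda_i,k_j\lambda_j)$ and $g_l := g(x,k_l,\lambda_l)$. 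The guiding heuristic is that $\fFz$ is infrared-concentrated: the resolvent $(|k|+k^2+h_\alpha+e_0)^{-1}$ in its definition localizes the mass of $g$ at $|k|\lesssim\alpha^2$, producing the logarithmic scaling $\|\fFz\|^2 = \mathcal{O}(\alpha^3\log\alpha^{-1})$ of \eqref{eq:app-i}. Therefore, in the diagonal contributions to $\|G\|_*^2$ and $\|G\|^2$, one photon (the one carrying $g$) is soft, and the 3-photon denominator is essentially the 2-photon denominator appearing in $\cEpp$.

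For (1) and (2), expand $|F_{12}g_3 + F_{13}g_2 + F_{23}g_1|^2$ into nine contributions. The three diagonal terms $|F_{ij}|^2|g_k|^2$ yield equal integrals by symmetry; in the contribution with $g$ in the third slot, approximate the 3-photon denominator by $|k_1|+|k_2|+(k_1+k_2)^2$. The resulting integral factorizes as $\|\cEpp\|_*^2\|\fFz\|^2$ for (1) (respectively $\|\cEpp\|^2\|\fFz\|^2$ for (2)), which is the claimed leading term. The replacement error yields, after $k_1,k_2$-integration, a multiple of $\|H_f^{1/2}\fFz\|^2 + \|P_f\fFz\|^2 = \mathcal{O}(\alpha^3)$ by \eqref{eq:app-ii} of Lemma~\ref{appendix:lem-A0} (no logarithm), hence $o(\alpha^3\log\alpha^{-1})$. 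The six cross terms are estimated via Cauchy--Schwarz together with the transverse-projector identity $\sum_\lambda \varepsilon^j_\lambda \bar\varepsilon^{j'}_\lambda = \delta_{jj'}-k_jk_{j'}/|k|^2$, and vanish to the same order by $k_3$-parity cancellations. Estimate (3) proceeds analogously with the extra weight $|k_1|^{1/6}|k_2|^{1/6}|k_3|^{1/6}$: the $|k_3|^{1/6}$ factor weights $|g_3|^2$ by a power that removes the logarithmic divergence at $|k_3|\sim\alpha^2$ and yields $\mathcal{O}(\alpha^3)$; the $|k_j|^{1/6}$ factors ($j=1,2$) are absorbed by the UV cutoff in $F$.

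For (4), since $\Ac(0)$, $\Aa(0)$, $H_f$ and $P_f$ are independent of the electron variable $x$, they commute with $h_\alpha+e_0$, so that
\begin{equation*}
(h_\alpha+e_0)G \;=\; (H_f+P_f^2)^{-1}\Ac\cdot\Ac\,(h_\alpha+e_0)\fFz .
\end{equation*}
Decompose $\nabla\Sgsa$ spectrally on $h_\alpha$: by parity only $p$-type eigenstates appear, and the lowest excited $p$-state $\psi_{2p}$ (with $\mu_{2p} := e_0 - e_1 = 3\alpha^2/16$) carries overlap $|\langle\psi_{2p},\nabla\Sgsa\rangle|^2 = \mathcal{O}(\alpha^2)$ by direct computation. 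The $2p$-channel contribution to $\|(h_\alpha+e_0)\fFz\|^2$ is
\begin{equation*}
\mathcal{O}(\alpha)\int \chi_\Lambda^2(|k|)\,|k|\,\frac{\mu_{2p}^2\,|\langle\psi_{2p},\nabla\Sgsa\rangle|^2}{(|k|+\mu_{2p})^2}\,d|k| \;=\; \mathcal{O}(\alpha^7\log\alpha^{-1}),
\end{equation*}
with the logarithm arising from the range $\alpha^2\lesssim|k|\lesssim 1$. Higher-$n$ discrete states and the continuous spectrum of $h_\alpha$ are controlled by the $|p|^{-6}$-decay of $|\widehat{\nabla\Sgsa}(p)|^2$ at large $p$ and yield contributions of the same or lower order. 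Finally, $(H_f+P_f^2)^{-1}\Ac\cdot\Ac$ is bounded on 1-photon states in the UV-cutoff regime by the analogue of (2), so the 3-photon norm of the result is $\mathcal{O}(\alpha^7\log\alpha^{-1})$.

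The main obstacle is the cross-term cancellation in (1): several contributions of nominal size $\alpha^3\log\alpha^{-1}$ must be shown to be only $o(\alpha^3\log\alpha^{-1})$ via the transverse projector and $k_3$-parity arguments. A secondary difficulty lies in (4): the crude pointwise bound $(\mu/(|k|+\mu))^2 \leq \min(1,\mu/|k|)$ only gives the sub-sharp estimate $\|(h_\alpha+e_0)\fFz\|^2 = \mathcal{O}(\alpha^5)$, so one must identify the dominant $2p$-channel contribution in order to capture the correct $\alpha^7\log\alpha^{-1}$ scaling.
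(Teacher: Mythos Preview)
Your overall framework—the explicit 3-photon representation and the diagonal/off-diagonal splitting—matches the paper's. The treatment of the diagonal terms in (1) and (2) is in the right spirit: the paper likewise subtracts $\|\cEpp\|_*^2\|\fFz\|^2$ from the diagonal contribution and bounds the difference (via a $\delta$-splitting in $|k_1|$ rather than your $\|H_f^{1/2}\fFz\|^2$ heuristic, but to the same effect).

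The gap is in your handling of the six cross terms. There is no $k_3$-parity structure here: the one-photon function $g(x,k,\lambda)$ of $\fFz$ involves $\varepsilon_\lambda(k)\cdot(|k|+k^2+h_\alpha+e_0)^{-1}\nabla\Sgsa$, which has no clean symmetry under $k\to -k$, and the transverse-projector identity produces no cancellation in a product such as $F_{12}g_3\,\overline{F_{13}g_2}$. Cauchy--Schwarz alone only bounds a cross term by the geometric mean of two diagonal terms, hence by $\mathcal{O}(\alpha^3\log\alpha^{-1})$, which is not good enough. The paper's argument is different and much simpler: in any cross term with $g$ sitting in two \emph{distinct} slots $l\neq l'$, there are two independent resolvent factors $(|k_l|+k_l^2+h_\alpha+e_0)^{-1}$ and $(|k_{l'}|+k_{l'}^2+h_\alpha+e_0)^{-1}$. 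Using only $h_\alpha+e_0\geq 0$ on each, the resulting $k$-integrand is already integrable at $k_1=k_2=k_3=0$ \emph{without} any $\alpha^2$-regularization. Hence each cross term is $c\,\alpha\|\nabla\Sgsa\|^2=\mathcal{O}(\alpha^3)$, with no logarithm and no cancellation needed. So what you flagged as ``the main obstacle'' is actually the easy part; the only genuine work in (1) is the diagonal difference.

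For (4), your commutation step is correct, but the claim that the continuous spectrum and higher bound states ``yield contributions of the same or lower order'' than the $2p$-channel is not substantiated by the $|p|^{-6}$ decay you invoke: that decay controls only the tail $\mu\gg\Lambda$, not the range $\mu\sim 1$ where $\mu/(|k|+\mu)$ is of order one and the spectral weight of $\nabla\Sgsa$ is still $\mathcal{O}(\alpha^5)$. The paper does not go through a spectral decomposition at all; it treats (4) by the same direct integral estimation as (1)--(3).
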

%
%
%
\begin{proof}
Denoting by $\sigma_n$ the set of all permutations of
$\{1,2,...,n\}$, we have
\begin{equation}\nonumber
\begin{split}
  & \| (H_f + P_f^2)^{-1} \Ac .\Ac \fFz \|_*^2 \\
  & = \frac{4\alpha}{(2\pi)^3} \Big\| \frac{1}{\sqrt{6}} \sum_{(i,j,n)\in\sigma_n}
  \frac{\sum_{\lambda=1,2} \varepsilon_\lambda(k_i) .
  \sum_{\nu=1,2} \varepsilon_\nu(k_j)}
  {\left(\sum_{p=1}^3 |k_p| + (\sum_{p=1}^3 k_p)^2\right)^\frac12}
  \\
  & \times \frac{\sum_{\muvar=1,2}\varepsilon_\muvar(k_n)
  \chi_\Lambda(|k_1|)
  \chi_\Lambda(|k_2|) \chi_\Lambda(|k_3|)}{|k_i|^\frac12 |k_j|^\frac12 |k_n|^\frac12
  (|k_n| + k_n^2 +(h_\alpha+e_0))} \nabla \Sgsa \Big\|^2
\end{split}
\end{equation}
If we pick two triples $(i,j,n)$ and $(i',j',n')$, such that
$n\neq n'$, then we get a product which is integrable at $k_1 =
k_2 = k_3 =0$, even without the term $(h_\alpha +e_0)$. The
contribution of such terms is $c\alpha \|\nabla \Sgsa \|^2 =
\mathcal{O}(\alpha^3)$. Moreover, using the symmetry in $k_1$,
$k_2$, $k_3$, the twelve remaining  terms give the same
contribution. This yields
\begin{equation}\label{appendix:lem9-3}
\begin{split}
  & \| (H_f + P_f^2)^{-1} \Ac .\Ac \fFz \|_*^2 \\
  & = \frac{8 \alpha}{(2\pi)^3} \Big\|
  \frac{\sum_{\lambda} \varepsilon_\lambda(k_3) .
  \sum_{\nu} \varepsilon_\nu(k_2)
  \sum_{\muvar}\varepsilon_\muvar(k_1)
  \chi_\Lambda(|k_1|)
  \chi_\Lambda(|k_2|) \chi_\Lambda(|k_3|)}
  {\left(\sum_{i=1}^3 |k_i| + (\sum_{i=1}^3 k_i)^2\right)^\frac12
  |k_3|^\frac12 |k_2|^\frac12 |k_1|^\frac12
  (|k_1| + k_1^2 +(h_\alpha+e_0))}
  .\nabla \Sgsa \Big\|^2 \\
  & + \mathcal{O}(\alpha^3).
\end{split}
\end{equation}
On the other hand, we have
\begin{equation}\nonumber
\begin{split}
  & \| \cEpp  \|_*^2 \|\fFz \|^2 \\
  & = \frac{8 \alpha}{(2\pi)^3} \Big\|
  \frac{\sum_{\lambda} \varepsilon_\lambda(k_3) .
  \sum_{\nu} \varepsilon_\nu(k_2)
  \sum_{\muvar}\varepsilon_\muvar(k_1)
  \chi_\Lambda(|k_1|)
  \chi_\Lambda(|k_2|) \chi_\Lambda(|k_3|)}
  {\left(|k_2| +|k_3|  + (k_2 + k_3)^2\right)^\frac12
  |k_3|^\frac12 |k_2|^\frac12 |k_1|^\frac12
  (|k_1| + k_1^2 +(h_\alpha+e_0))}
  .\nabla \Sgsa \Big\|^2 .
\end{split}
\end{equation}
Therefore, we obtain
\begin{equation}\label{eq:appendix-fin_1}
\begin{split}
  & \| (H_f + P_f^2)^{-1} \Ac.\Ac \fFz \|^2_* - \| \cEpp  \|_*^2 \|\fFz \|^2 \\
  & = - \frac{8 \alpha}{(2\pi)^3} \int \bigg( \frac{(|k_1|^2 + 2|k_1|\,
  |k_3| + 2 |k_1|\, |k_2|)}{|k_2|\, |k_3|\,
  (|k_1|+|k_2|+|k_3| + |k_1+k_2+k_3|^2) (|k_2|+|k_3| + |k_2+k_3|^2)
  } \\
  & \times \sum_{\lambda} \varepsilon_\lambda(k_3) .
  \sum_{\nu} \varepsilon_\nu(k_2)\chi_\Lambda(|k_1|) \chi_\Lambda(|k_2|)^2
  \chi_\Lambda(|k_3|)^2 |u(k_1,x)|^2\d k_1 \d k_2 \d k_3 \d x\bigg) \ ,
\end{split}
\end{equation}
where
\begin{equation}\nonumber
\begin{split}
 u(k_1,x) =  \frac{\sum_{\muvar}\varepsilon_\muvar(k_1)
  \chi_\Lambda(|k_1|)^\frac12}
  {|k_1|^\frac12 (|k_1| + k_1^2 +(h_\alpha+e_0))}
  .\nabla \Sgsa \ .
\end{split}
\end{equation}
For fixed $\delta$, we first compute the integral in
\eqref{eq:appendix-fin_1} over the regions $|k_1|>\delta$. This
yields a term $c_\delta \alpha^3$, where $c_\delta$ is independent
on $\alpha$.

Next, integrating \eqref{eq:appendix-fin_1} over the regions
$|k_1|\leq \delta$ yields a bound $\mathcal{O}(\delta)
\alpha^3\log\alpha$, with $\mathcal{O}(\delta)$ independent of
$\alpha$.

This concludes the proof of \eqref{appendix:lem9-1}.

The proof of \eqref{appendix:lem9-1-bis} is a straightforward
computation showing
\begin{equation}\nonumber
\begin{split}
  & \| (H_f + P_f^2)^{-1} \Ac .\Ac \fFz \|^2 \\
  & = \frac{8 \alpha}{(2\pi)^3} \Big\|
  \frac{\sum_{\lambda} \varepsilon_\lambda(k_3) .
  \sum_{\nu} \varepsilon_\nu(k_2)
  \sum_{\muvar}\varepsilon_\muvar(k_1)
  \chi_\Lambda(|k_1|)
  \chi_\Lambda(|k_2|) \chi_\Lambda(|k_3|)}
  {\left(\sum_{i=1}^3 |k_i| + (\sum_{i=1}^3 k_i)^2\right)
  |k_3|^\frac12 |k_2|^\frac12 |k_1|^\frac12
  (|k_1| + k_1^2 +(h_\alpha+e_0))}
  .\nabla \Sgsa \Big\|^2 \\
  & = \mathcal{O}(\alpha^3\log\alpha^{-1}) \ .
\end{split}
\end{equation}

The proofs of \eqref{appendix:lem9-1-ter} and
\eqref{appendix:lem9-1-last} are similar to the above.
\end{proof}

\begin{lemma}\label{appendix:lem-A5}
For $\gsps $ and $\gspl $ given in
Definition~\ref{def:G-improved}, we have
\begin{equation}
  |2\alpha\Re\la\Aa.\Aa \Proj^3 \gsps , \Proj^1 \gspl \ra|
  \leq c\alpha^5 + \epsilon \|H_f^\frac12 \gspl \|^2\ .
\end{equation}
\end{lemma}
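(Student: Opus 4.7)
The plan is to apply Cauchy--Schwarz with $H_f^{\pm 1/2}$ weights together with Young's inequality, after substituting the explicit form of $\Proj^3\gsps$ from Definition~\ref{def:G-improved}. Writing $\Proj^3\gsps = \alpha\muvar_3(H_f+P_f^2)^{-1}\Ac.\Ac\fFz$, the inner product becomes $\alpha\muvar_3\la\Aa.\Aa(H_f+P_f^2)^{-1}\Ac.\Ac\fFz,\Proj^1\gspl\ra$. Cauchy--Schwarz then yields
\begin{equation*}
 2\alpha|\la\Aa.\Aa\Proj^3\gsps,\Proj^1\gspl\ra| \leq 2\alpha^2|\muvar_3|\cdot\|H_f^{-1/2}\Aa.\Aa(H_f+P_f^2)^{-1}\Ac.\Ac\fFz\|\cdot\|H_f^{1/2}\Proj^1\gspl\|,
\end{equation*}
and Young's inequality separates this into a term to be matched against $c\alpha^5$ and an absorbed term $\epsilon\|H_f^{1/2}\Proj^1\gspl\|^2 \leq \epsilon\|H_f^{1/2}\gspl\|^2$.

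To bound the coefficient $\alpha|\muvar_3|$, the plan is to exploit the $\sharp$-projection structure of $\Proj^3\gsps$: since $\Proj^3\gsps$ is the $\sharp$-projection of $\Proj^3\gsp$ onto $(H_f+P_f^2)^{-1}\Ac.\Ac\fFz$, and since $\Proj^3\gsp = \Proj^3\rvar$ (because $\fFz$ has no 3-photon component), we have $\alpha^2|\muvar_3|^2\|(H_f+P_f^2)^{-1}\Ac.\Ac\fFz\|_{\sharp}^2 = \|\Proj^3\gsps\|_{\sharp}^2 \leq \|\Proj^3\rvar\|_{\sharp}^2$. The bounds $\|H_f^{1/2}\rvar\|^2, \|(P-P_f)\rvar\|^2 = \mathcal{O}(\alpha^4)$ from Theorem~\ref{thm:thm-main2} give $\|\Proj^3\rvar\|_{\sharp}^2 = \mathcal{O}(\alpha^4)$, and Lemma~\ref{appendix:lem-A4} provides the lower bound $\|(H_f+P_f^2)^{-1}\Ac.\Ac\fFz\|_{\sharp}^2 \gtrsim \|\cEpp\|_*^2\|\fFz\|^2 = \Theta(\alpha^3\log\alpha^{-1})$, yielding $\alpha^2|\muvar_3|^2 = \mathcal{O}(\alpha/\log\alpha^{-1})$.

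The operator norm $\|H_f^{-1/2}\Aa.\Aa(H_f+P_f^2)^{-1}\Ac.\Ac\fFz\|^2$ is controlled by a Wick expansion in momentum space: the ``direct'' contraction (pairing the two $\Aa$'s with the two $\Ac$'s) produces a multiplier $g(k)\fFz(k)$ with $g$ bounded, contributing $\mathcal{O}(\|H_f^{-1/2}\fFz\|^2) = \mathcal{O}(\alpha)$ by Lemma~\ref{appendix:lem-A0}~\eqref{eq:app-iii}, while the ``crossed'' contractions integrate $\fFz$ against bounded kernels and contribute $\mathcal{O}(\alpha^3)$ via $\|H_f^{1/2}\fFz\|^2 = \mathcal{O}(\alpha^3)$. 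The hard part will be that the naive combination of these estimates only delivers $\mathcal{O}(\alpha^4/\log\alpha^{-1})$, which misses the target $c\alpha^5$ by a logarithmic-weighted factor of $\alpha$. Closing this gap requires exploiting the $\sharp$-orthogonality $\la\Proj^1\gspl,\fFz\ra_{\sharp} = 0$ to kill the leading $g(0) = \|\cEpp\|_*^2$ contribution of the direct Wick piece, after which the Lipschitz vanishing $|g(k)-g(0)| = \mathcal{O}(|k|)$ at the origin supplies an extra $H_f^{1/2}$ factor that, via Young's inequality, converts the shortfall into the desired $c\alpha^5$ term absorbed together with the $\epsilon\|H_f^{1/2}\gspl\|^2$ remainder.
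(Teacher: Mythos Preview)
Your $\sharp$-orthogonality fix does not work. The condition $\la\Proj^1\gspl,\fFz\ra_{\sharp}=0$ means $\la\Proj^1\gspl,(H_f+P_f^2+h_\alpha+e_0)\fFz\ra=0$, and since $(H_f+P_f^2+h_\alpha+e_0)\fFz=2\alpha^{1/2}\Ac\cdot\nabla\Sgsa\,\vac$, this is merely the single scalar constraint $\int\overline{(\Proj^1\gspl)(k,x)}\,|k|^{-1/2}\epsilon(k)\cdot\nabla\Sgsa(x)\,\chi_\Lambda\,dk\,dx=0$. It does \emph{not} imply $\la\Proj^1\gspl,\fFz\ra=0$, because $\fFz$ carries the additional resolvent $(|k|+k^2+h_\alpha+e_0)^{-1}$, which is not a scalar multiple of the identity. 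Hence the leading $g(0)\la\Proj^1\gspl,\fFz\ra$ piece of your direct Wick term is not killed, and the Lipschitz remainder argument has nothing to act on.

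The paper never computes the one-photon norm $\|H_f^{-1/2}\Aa\cdot\Aa(H_f+P_f^2)^{-1}\Ac\cdot\Ac\fFz\|$. Instead it moves $\Aa\cdot\Aa$ across as $\Ac\cdot\Ac$ and writes the resulting three-photon pairing explicitly as momentum-space integrals $I_1$ (``direct'': the surviving photon in $\Proj^1\gspl$ is the $\fFz$-photon) and $I_2$ (``crossed''). Cauchy--Schwarz is then applied in the full $(k,k',k'',x)$ space with weights that shift powers between the contracted variables $k',k''$ and the surviving one $k$. For $I_1$ the weight is $|k'|^{2-\epsilon}|k''|^{2-\epsilon}/|k|$: the numerator is absorbed pointwise by $(|k|+|k'|+|k''|+\cdots)^{-2}$, while the extra $1/|k|$ combines with the $\fFz$-kernel to produce $\int |k|^{-2}(|k|+\tfrac{3}{16}\alpha^2)^{-2}\chi_\Lambda^2\,dk\cdot\|\nabla\Sgsa\|^2=\mathcal{O}(\alpha^{-2})\cdot\mathcal{O}(\alpha^2)=\mathcal{O}(1)$, yielding $|I_1|\leq c\alpha^5+\epsilon\|H_f^{1/2}\gspl\|^2$ with no orthogonality needed. (Strictly, the ``linear combination'' coefficient carries $\muvar_3$, so what both the paper and your pre-fix estimate actually deliver is $c|\muvar_3|^2\alpha^5+\epsilon\|H_f^{1/2}\gspl\|^2$; this is harmless in the subsequent assembly since it is absorbed by the positive $|\muvar_3+1|^2\alpha^5\log\alpha^{-1}$ term, so your initial Cauchy--Schwarz estimate was already sufficient and the orthogonality detour was unnecessary.)
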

%
%
%
\begin{proof}
Using $2\alpha\Re\la\Aa.\Aa \Proj^3 \gsps , \Proj^1 \gspl \ra =
2\alpha\Re\la \Proj^3 \gsps , \Ac.\Ac \Proj^1 \gspl \ra$, we
obtain that $2\alpha\Re\la\Aa.\Aa \Proj^3 \gsps , \Proj^1 \gspl
\ra$ can be rewritten as a linear combination of the following two
integrals $I_1$ and $I_2$
\begin{equation}
\begin{split}
  & I_1 = \alpha^\frac52 \int \frac{\d k\, \d k'\,
  \d k''\, \d x}{|k|\!+\!|k'|\!+\!|k''| + |k\!+\!k'\!+\!k''|^2}
  \sum_{\muvar, \nu} \frac{ \epsilon_\muvar (k') . \epsilon_\nu (k'') }
  {|k'|^{\frac12} \, |k''|^\frac12}\\
  & \times  \Big( \frac{1}{|k|\!+\!|k|^2 +
  (h_\alpha\! + \!e_0)} \sum_\lambda
  \frac{\epsilon_\lambda(k)}{|k|^\frac12} \nabla \Sgsa  \Big)
  \sum_{\kappa, \eta}
  \frac{ \epsilon_\kappa (k') . \epsilon_\eta (k'') }
  {|k'|^{\frac12} \, |k''|^\frac12} \overline{(\Proj^1 \gspl )(k, x)} ,
\end{split}
\end{equation}
and $I_2$ is defined as $I_1$, except that in the last sum, we
reverse the role of $k$ and $k''$, namely
\begin{equation}
\begin{split}
  & I_2 = \alpha^\frac52\int \frac{\d k\, \d k'\,
  \d k''\, \d x \chi_\Lambda(|k|) \chi_\Lambda(|k'|) \chi_\Lambda(|k''|)}
  {|k|\!+\!|k'|\!+\!|k''| + |k\!+\!k'\!+\!k''|^2}
  \sum_{\muvar, \nu} \frac{ \epsilon_\muvar (k') . \epsilon_\nu (k'') }
  {|k'|^{\frac12} \, |k''|^\frac12}\\
  & \times  \Big( \frac{1}{|k|\!+\!|k|^2 +
  (h_\alpha\! + \!e_0)} \sum_\lambda
  \frac{\epsilon_\lambda(k)}{|k|^\frac12} \nabla \Sgsa  \Big)
  \sum_{\kappa, \eta}
  \frac{ \epsilon_\kappa (k') . \epsilon_\eta (k) }
  {|k'|^{\frac12} \, |k|^\frac12} \overline{(\Proj^1 \gspl )(k'', x)} .
\end{split}
\end{equation}
To bound $I_2$, we use the Schwarz inequality, $|ab| \leq
\frac{1}{2\epsilon}a^2 + \frac{\epsilon}{2} b^2$. This yields
\begin{equation}\nonumber
\begin{split}
  & |I_2| \leq \alpha^\frac52 \Big(\int
  \frac{\d k\, \d k'\,
  \d k''\, \d x\chi_\Lambda^2(|k|)\chi_\Lambda^2(|k'|)\chi_\Lambda^2(|k''|)}
  {(|k|\!+\!|k'|\!+\!|k''| + |k\!+\!k'\!+\!k''|^2)^2}
  \frac{ 2 }
  {|k'| \, |k''|}\\
  & \times  \Big| \frac{1}{|k|\!+\!|k|^2 +
  (h_\alpha\! + \!e_0)}
  \frac{1}{ |k|^\frac12 } \nabla \Sgsa  \Big|^2 |k|^{2-\epsilon}
  |k'|^{2-\epsilon} \frac{1}{|k''|} \Big)^\frac12 \\
  & \times \Big( 2 \int \frac{\d k\, \d k'\,
  \d k''\, \d x}{ |k'||k|} |(\Proj^1 \gspl )(k'', x)|^2
  |k''| \frac{\chi_\Lambda^2(|k|)\chi_\Lambda^2(|k'|)\chi_\Lambda^2(|k''|)}
  {|k|^{2-\epsilon}
  |k'|^{2-\epsilon}} \Big)^\frac12 \\
  & \leq c\alpha^7 + \epsilon
  \|H_f^\frac12 \gspl \|^2 ,
\end{split}
\end{equation}
Similarly, we bound $I_1$ as follows,
\begin{equation}\nonumber
\begin{split}
  & |I_1| \leq \alpha^\frac52 \Big(\int
  \frac{\d k\, \d k'\,
  \d k''\, \d x}{(|k|\!+\!|k'|\!+\!|k''| + |k\!+\!k'\!+\!k''|^2)^2}
  \frac{ 2 }
  {|k'| \, |k''|}\\
  & \times  \Big| \frac{1}{|k|\!+\!|k|^2 +
  (h_\alpha\! + \!e_0)}
  \frac{1}{ |k|^\frac12 } \nabla \Sgsa  \Big|^2 |k'|^{2-\epsilon}
  |k''|^{2-\epsilon} \frac{\chi_\Lambda^2(|k|)\chi_\Lambda^2(|k'|)
  \chi_\Lambda^2(|k''|)}{|k|} \Big)^\frac12 \\
  & \times \Big( 2 \int \frac{\d k\, \d k'\,
  \d k''\, \d x}{ |k'||k''|} |(\Proj^1\gspl )(k, x)|^2 |k|
  \frac{\chi_\Lambda^2(|k|)\chi_\Lambda^2(|k'|)\chi_\Lambda^2(|k''|)}
  {|k'|^{2-\epsilon}
  |k''|^{2-\epsilon}} \Big)^\frac12  \\
  & \leq c\alpha^5 \int \d k' \, \d k''
  |k'|^{-\epsilon} |k''|^{-\epsilon}\chi_\Lambda^2(|k'|)\chi_\Lambda^2(|k''|)
  \int \d k \, \frac{1}{|k|^2} \frac{\chi_\Lambda^2(|k|)}{(|k| + \frac{1}{16}
  \alpha^2)^2} \|\nabla \Sgsa \|^2 \\
  & + \epsilon \|H_f^\frac12 \Proj^1 \gspl \|^2
  \leq c\alpha^5 + \epsilon \|H_f^\frac12 \gspl \|^2 ,
\end{split}
\end{equation}
where we took into account that $\frac{\partial \Sgsa }{\partial
x_i}$ is orthogonal to $\Sgsa $, and on the subspace of such
functions we have $(h-\alpha + e_0)\geq \frac{1}{16} \alpha^2$.

\end{proof}

\end{appendix}

$\;$\\

\subsection*{Acknowledgements}
J.-M. B. thanks V. Bach for fruitful discussions. The authors
gratefully acknowledge financial support from the following
institutions: The European Union through the IHP network Analysis
and Quantum HPRN-CT-2002-00277 (J.-M. B., T. C., and S. V.), the
French Ministry of Research through the ACI jeunes chercheurs
(J.-M. B.), and the DFG grant WE 1964/2 (S. V.). The work of T.C.
was supported by NSF grant DMS-0704031.


\bibliographystyle{plain}

\end{document}